\titlespacing*{\paragraph}{0pt}{4pt}{3pt}
\newif\ifQIP
\newif\ifFUTURE
\newif\ifSUBMISSION
\newcommand{\agnote}[1]{}
\newcommand{\abnote}[1]{}
\newcommand{\agnote}[1]{\textcolor{olive}{\small (Alex: #1)}}
\newcommand{\abnote}[1]{\textcolor{blue}{\small (Anne: #1)}}
\definecolor{darkgray}{rgb}{0.25, 0.25, 0.25}
\newcommand{\reg}[1]{{\color{darkgray}#1}}
\newcommand{\myparagraph}[1]{\paragraph{#1}.}
\renewcommand{\hat}[1]{\widehat{#1}} %
\renewcommand{\tilde}[1]{\widetilde{#1}} %
\newcommand{\comp}[1]{\overline{#1}}
\newcommand{\COL}{3-\textsf{COL}}
\newcommand{\calC}{\mathcal{C}}
\newcommand{\Id}{\mathsf{I}}
\newcommand{\T}{\mathsf{T}}
\newcommand{\X}{\mathsf{X}}
\newcommand{\Y}{\mathsf{Y}}
\newcommand{\Z}{\mathsf{Z}}
\newcommand{\Pg}{\mathsf{P}}
\newcommand{\CNOT}{\mathsf{CNOT}}
\newcommand{\CZ}{\mathsf{C}\text{-}\mathsf{Z}}
\newcommand{\Had}{\mathsf{H}}
\newcommand{\real}{\mathbb{R}} %
\let\epsilon=\varepsilon %
\newcommand{\eps}{\epsilon} %
\newcommand{\microspace}{\mspace{.5mu}} %
\newcommand{\ket}[1]{\ensuremath{\lvert\microspace #1
    \microspace\rangle}} %
\newcommand{\bra}[1]{\ensuremath{\langle\microspace #1
    \microspace\rvert}} %
\newcommand{\ketbra}[2]{\lvert #1 \rangle \! \langle #2 \rvert} %
\newcommand{\kb}[1]{\ketbra{#1}{#1}} %
\newcommand{\Paren}[1]{\left(#1\right)}
\newcommand{\abs}[1]{\lvert#1\rvert}
\newcommand{\set}[1]{\{#1\}}
\newcommand{\norm}[1]{\lVert#1\rVert}
\newcommand{\trnorm}[1]{\lVert#1\rVert_{\mathrm{tr}}}
\newcommand{\trNorm}[1]{\left\lVert#1\right\rVert_{\mathrm{tr}}}
\newcommand{\comment}[1]{}
\newcommand{\class}[1]{\textup{#1}\xspace} %
\newcommand{\CLDM}{\class{CLDM}} %
\newcommand{\NP}{\class{NP}} %
\newcommand{\QMA}{\class{QMA}} %
\newcommand{\SimQMA}{\class{SimQMA}} %
\newcommand{\QAM}{\class{QAM}} %
\newcommand{\PP}{\class{PP}} %
\newcommand{\MIP}{\class{MIP}} %
\newcommand{\QNIZK}{\class{QNIZK}} %
\newcommand{\QZK}{\class{QZK}}
\newcommand{\XiQZK}{\Xi\text{-}\class{QZK}}
\newtheorem{theorem}{Theorem}[section] %
\newtheorem*{mainresult}{Main Result}%
\newtheorem{application}{Application}%
\newtheorem{example}{Example} %
\newtheorem{lemma}[theorem]{Lemma} %
\newtheorem{corollary}[theorem]{Corollary} %
\newtheorem{definition}[theorem]{Definition} %
\newtheorem{remark}[theorem]{Remark} %
\DeclareMathOperator{\Tr}{Tr}
\newcommand{\tr}[1]{\Tr\left(#1\right)}
\DeclareMathOperator{\poly}{poly}
\DeclareMathOperator{\negl}{negl}
\definecolor{White}{rgb}{1,1,1} %
\definecolor{Black}{rgb}{0,0,0} %
\definecolor{LightGray}{rgb}{.8,.8,.8} %
\colorlet{ChannelColor}{LightGray} %
\colorlet{ChannelTextColor}{Black} %
\colorlet{ReadoutColor}{White} %
\newcommand{\cR}{\mathcal R}
\newcommand{\ayes}{A_{yes}}
\newcommand{\ano}{A_{no}}
\newcommand{\Enc}{{\mathsf{Enc}}}
\newcommand{\Dec}{{\mathsf{Dec}}}
\newcommand{\MS}{{\mathsf{MS}}}
\def\01{\{0,1\}}
\newcommand{\unary}{\mathsf{unary}}
\newcommand{\hist}{\Phi}
\newcommand{\commit}[2]{\mathsf{comm}^{#2}_{#1}}
\newcommand{\inr}{\in_\$}
\newcommand{\VerOtpX}{V^{\mathsf{otp}}_x}
\newcommand{\WitOtp}{\psi^{\mathsf{otp}}}
\newcommand{\VerSim}{V^{(s)}}
\newcommand{\VerSimX}{V^{(s)}_x}
\newcommand{\WitSim}{\psi^{(s)}}
\newcommand{\Sim}{\mathsf{Sim}_{\VerSim}}
\newcommand{\SimC}{\mathsf{Sim}_{\calC}}
\newcommand{\SimZK}{\mathsf{Sim}}
\newcommand{\SimInt}{\Sim^{\mathsf{Int}}}
\newcommand{\SimSnap}{\Sim^{\mathsf{snap}}}
 \title{
$\QMA$-hardness of Consistency of Local Density Matrices\\ with Applications to Quantum Zero-Knowledge
}
\date{}
	\author{
    Anne Broadbent~\thanks{University of Ottawa, Department of Mathematics and Statistics \texttt{abroadbe@uottawa.ca}}
\and
Alex B.~Grilo~\thanks{Sorbonne Universit\'e, CNRS, LIP6
\texttt{Alex.Bredariol-Grilo@lip6.fr}}
}
\begin{document}

\maketitle
\ifQIP
\else

\begin{abstract}
  We provide several advances to the understanding of the class of
  Quantum Merlin-Arthur proof systems~($\QMA$), the quantum analogue of \NP{}. Our central contribution is proving a longstanding conjecture that
  the \emph{Consistency of Local Density Matrices} (CLDM) problem is $\QMA$-hard under Karp reductions. The input of CLDM consists of local reduced density matrices on sets of at most~$k$ qubits, and the problem asks if there is an $n$-qubit global quantum state that is locally consistent with all of the $k$-qubit local density matrices. The containment of this problem
  in~$\QMA$ and the $\QMA$-hardness under Turing reductions were proved by Liu [APPROX-RANDOM 2006]. Liu also  conjectured  that CLDM is $\QMA$-hard under Karp reductions, which is desirable for applications, and we finally prove this conjecture.
  We establish this result  using the techniques of {\em simulatable codes} of Grilo, Slofstra, and Yuen [FOCS~2019], simplifying their proofs and tailoring them to the context of $\QMA$.

  In order to develop  applications of CLDM, we propose a framework that we call \emph{locally simulatable proofs} for $\QMA$: this provides %
   $\QMA$ proofs that can be efficiently verified by probing only $k$ qubits and, furthermore, the reduced density matrix of any $k$-qubit subsystem of an accepting
   witness can be computed in polynomial time, independently of the witness. Within this framework, we show several advances in zero-knowledge in the quantum setting.
  We show for the first time  a commit-and-open computational zero-knowledge proof system for all of $\QMA$,
  as a quantum analogue of a ``sigma'' protocol.
  We then define a  \emph{Proof of Quantum Knowledge}, which guarantees that a prover is effectively in possession of a quantum witness in an interactive proof, and show that our zero-knowledge proof system satisfies this definition. Finally, we show that our proof system can
be used to establish that $\QMA$ has a quantum \emph{non-interactive} zero-knowledge proof system  in
  the secret parameter setting.
\end{abstract}
\pagenumbering{gobble}
\clearpage

\setcounter{tocdepth}{2}
\tableofcontents
\clearpage
\pagenumbering{arabic}
\setcounter{page}{1}

\ifQIP

   \label{sec:Intro}

The complexity class $\QMA$ is the quantum analogue of
$\NP$, the class of problems whose solutions can be verified in deterministic
polynomial time. More precisely,  in $\QMA$, an all-powerful prover produces a
quantum proof that is verified by a quantum polynomially-bounded verifier. Given
the probabilistic nature of quantum computation, we require that for true
statements, there exists a quantum proof that makes the verifier accept with high probability (this
is called {\em completeness}),
whereas all ``proofs'' for  false statements are rejected with high
probability (which is called {\em soundness}).

The class $\QMA$ was first defined by Kitaev \cite{KSV02}, who also showed that deciding if a $k$-local
Hamiltonian problem has low-energy states  is $\QMA$-complete. The importance of this result is two-fold: first, from a theoretical computer science perspective, it is  the
 quantum analogue of the Cook-Levin theorem, since it establishes the first non-trivial $\QMA$-complete problem. Secondly, it shows deep links between physics and complexity theory, since the $k$-local Hamiltonian problem is an important  problem in many-body
physics. Thus, a better understanding of \QMA{} would lead
to a better understanding of the power of quantum resources in proof
verification, as we well as the role of {\em quantum entanglement}  in
low-energy states.

Follow-up work strengthened our
understanding of this important complexity class, \emph{e.g.},  by showing that $\QMA$ is contained in the complexity
class $\PP$~\cite{KW00}\footnote{\PP{} is the complexity class of
decision problems that can be solved by probabilistic polynomial-time algorithms
with error strictly smaller than $\frac{1}{2}$.};
that it is possible to reduce completeness and soundness errors without increasing the length of the
witness~\cite{MW05}; understanding the difference between quantum and classical
proofs~\cite{AK07,GKS16,FK18}; the possibility of perfect
completeness~\cite{Aar09b}; and, more recently, the relation of \QMA{} with non-local
games~\cite{NV17,NV18,CGJV19}.

Also, much follow-up work focused on understanding the complete problems for
$\QMA$, mostly by improving the parameters of the  \QMA{}-hard Local Hamiltonian
problem, or making it closer to models more physically
relevant~\cite{KR03,Liu06,KKR06,OT08,CM14,HNN13,BC18}.  In 2014, a
survey of $\QMA$-complete languages~\cite{Boo14} contained a list of 21 general
problems that are known to be $\QMA$-complete\footnote{We remark that these
problems can be clustered as variations of a handful of base problems.}, and since then,
the situation has not drastically changed. This contrasts with the development
of $\NP$, where only a few years after the developments surrounding
\textsf{3--SAT},  Karp published a theory of reducibility, including a list of
21 $\NP$-complete problems~\cite{Kar72}; while 7 years later, a celebrated book
by Garey and Johnson surveyed over 300 $\NP$-complete problems~\cite{GJ90}.\footnote{The first edition of Garey and Johnson~\cite{GJ90} was published in 1979.}

Recently, the role of \QMA{} in quantum cryptography has also been explored. For instance, several results used ideas of the
\QMA{}-completeness of the Local Hamiltonian problem in order to perform verifiable
delegation of quantum computation~\cite{FHM18,Mah18,Gri19}.
Furthermore, another line of work studies {\em zero-knowledge protocols} for
\QMA{}~\cite{BJSW16,BJSW20,VZ19arxiv}; which is extremely relevant, given the fundamental importance  of
 zero-knowledge protocols for \NP{} in cryptography .

Despite the multiple advances in our understanding of $\QMA$ and related techniques,
 a number of fundamental open questions remain. %
In this work, we solve some of these open problems by showing:
\begin{enumerate*}[label=(\roman*)]
\item \QMA{}-hardness of the
Consistency of Local Density Matrix (CLDM) problem under Karp reductions;
\item  ``commit-and-open''
  Zero-Knowledge (ZK) proof of quantum knowledge (PoQ) protocols for \QMA{}; and
\item  a non-interactive zero-knowledge (NIZK)
protocol in the secret parameter scenario.
\end{enumerate*}
Our main technical contribution consists in showing that every problem in \QMA{} admits a verification algorithm whose history state\footnote{See \Cref{eq:history-state-ex}.} is {\em  locally simulatable}, meaning that the reduced density matrices on any small set of qubits is efficiently computable (without knowledge of the quantum witness).
In order to be able to explain our results in more details and appreciate their contribution to a better
understanding of \QMA{}, we first give an overview of these areas and how they relate to these particular problems.

\subsection{Background}
\label{sec:background}
In this section, we discuss the background on the topics that are relevant to this work, summarizing their current state-of-the-art.

\myparagraph{Consistency of Local Density Matrices (CLDM)}
The Consistency of Local Density
matrices problem ($\CLDM$) is as follows: given the classical description of local density matrices
$\rho_1,\ldots,\rho_m$, each on a set of at most $k$ qubits and for a global system of $n$ qubits, is
there a state $\tau$ that is consistent with such reduced states? Liu~\cite{Liu06}
showed that this problem is in \QMA{} and that it is \QMA{}-hard under Turing
reductions, \emph{i.e.}, a deterministic polynomial time algorithm with access to an
oracle that solves $\CLDM$ in unit time can solve any problem in \QMA{}.

We remark that this type of reduction is rather troublesome for \QMA{}, since
the class is not known (nor expected) to be closed under complement, \emph{i.e.}, it is
widely believed that $\QMA{} \ne \class{co}\QMA{}$. If this is indeed the case, then Turing
reductions do not allow a black-box generalization of results regarding the $\CLDM$ problem  to all
problems in~\QMA{}.
This highlights the open problem of establishing the $\QMA{}$-hardness of the $\CLDM$ problem under Karp reductions, \emph{i.e.}, to show  an efficient mapping between yes- and no-instances of any \QMA{} problem  to yes- and
no-instances of $\CLDM$, respectively.

\myparagraph{Zero-Knowledge (ZK) Proofs for $\QMA$}
In an \emph{interactive} proof, a limited party, the \emph{verifier}, receives the
 help of some untrusted powerful party, the \emph{prover}, in order to decide if some
statement is true.
This is  a generalization of a \emph{proof}, where we allow multiple rounds of interaction.
As usual, we require that the  \emph{completeness} and \emph{soundness} properties hold.
For cryptographic applications, the \emph{zero-knowledge (ZK)} property is often
desirable: here, we require that the verifier learn nothing
from the interaction with the prover. This property is formalized by
showing the existence of an efficient \emph{simulator},
which is able to reproduce (\emph{i.e., simulate}) the output of any given
verifier on a \emph{yes} instance (without having direct access to the actual
prover or witness)\footnote{Different definitions of ``reproduce'' result in
different definitions of zero-knowledge protocols. A protocol is \emph{perfect
zero-knowledge} if the distribution of the output of the simulator is exactly the
same as the distribution of output of transcripts of the protocol.
A protocol is \emph{statistical
zero-knowledge} if such distributions are statistically close. Finally, a
protocol is \emph{computational zero-knowledge} if no efficient algorithm can
distinguish both distributions. The convention is that in the absence of such specification,
we are considering the case of computational zero-knowledge.}.

As paradoxical as it sounds, statistical zero-knowledge
interactive proofs  are known to be possible for a host of languages, including the Quadratic
Non-Residuosity, Graph Isomorphism, and Graph Non-Isomorphism
problems~\cite{GMW91,GMR89}; furthermore, all languages that can be proven by
multiple provers ($\MIP$) admits perfect zero-knowledge MIPs~\cite{BGKW88}.
What is more, by introducing computational assumptions, it was shown that all languages that admit an
 interactive proof system also admit a zero-knowledge interactive proof
 system~\cite{BOGG88}. Zero-knowledge interactive proof systems have had a
 profound impact in multiple areas, including cryptography \cite{GMW87} and complexity
 theory \cite{Vad07}.

We now briefly review the zero-knowledge interactive proof system for the
$\NP$-complete problem of Graph 3-colouring (\COL{}). This is a
3-message proof system, and has the additional property that, given a witness,
the prover is efficient. As a first message, the prover \emph{commits} to a
\emph{permutation} of the given 3-colouring (meaning that the prover randomly
permutes the colours to obtain colouring $c$, and produces a list $(v_i,
\mathsf{commit}(c(v_i)))$, using a cryptographic primitive $\mathsf{commit}$
which is a \emph{commitment scheme}). In the second message, the verifier
chooses uniformly at random an edge $\{v_i, v_j\}$ of the graph. The prover
responds with the information that allows the verifier to open the commitments to
the colouring of the vertices of this edge (and nothing more). The verifier \emph{accepts} if and
only if the revealed colours are different. It is easy to see that the protocol is complete and sound. For the zero-knowledge property, the simulator consists in a process that \emph{guesses} which edge will be requested by the verifier and commits to a colouring that satisfies the prover in case this guess is correct. If the guess is incorrect, the technique of \emph{rewinding} allows the simulator to re-initialize the interaction until it is eventually successful. Protocols that follow the \emph{commit-challenge-response} structure of this proof system are called \emph{$\Sigma$-protocols}\footnote{The Greek letter $\Sigma$  visualizes the flow of the protocol.} and, due to their simplicity, they play a very important role, for instance in the celebrated Fiat-Shamir transformation~\cite{FS87}.

The foundations of zero-knowledge in the quantum world were established by
Watrous, who showed a technique called \emph{quantum rewinding}~\cite{Wat09b}
which is used to show the security of some classical zero-knowledge proofs
(including the protocol for \COL{} described above), even against
quantum adversaries. The importance of this technique is that quantum
measurements typically \emph{disturb} the measured state. When we consider
quantum adversaries, such difficulties concern even \emph{classical} proof
systems, due to the rewinding technique that is ubiquitous (see example in the
case of \COL{} above).
Indeed, in the quantum setting, intermediate measurements (such as checking if
the guess is correct) may compromise the success of future executions, since it
is not possible {\em a priori} to ``rewind'' to a previous point in the execution in a
black-box way.

Another dimension where quantum information poses new challenges is in the study
of interactive proof systems for \emph{quantum} languages.
 We point out that Liu~\cite{Liu06}
observed very early on that the $\CLDM$ problem should admit a simple zero-knowledge proof
system following the ``commit-and-open'' approach, as in the \COL{} protocol.
Inspired by this observation,
recent progress has established the existence of zero-knowledge protocols
for all of $\QMA$~\cite{BJSW16,BJSW20}. We note that although the proof system used there is
reminiscent of a $\Sigma$-protocol, 
there are a number of reasons why it is not a ``natural'' quantum analogue of a $\Sigma$ protocol. These include:
\begin{enumerate*}[label=(\roman*)]
\item the use of a coin-flipping protocol, which makes the communication cost higher than 3 messages;
\item the fact that the verifier's message is not a random challenge; and
\item the final answer from the prover is not only the opening of some committed
values.
\end{enumerate*}

Recently, Vidick and Zhang~\cite{VZ19arxiv} showed how to make classical all of the
interaction between the verifier and the prover in~\cite{BJSW16,BJSW20}, by
considering {\em argument systems}\footnote{
Argument systems are a relaxation of proof systems where
the prover is also bounded to  polynomial-time computation, and, for positive instances, the
prover is provided a witness to the \NP{} instance. This model allows much more
efficient protocols which enables it to be used in practice
\cite{BSCG+14,PHGR16,BSCR+19}.
}
instead of
proof systems. In their protocol, they compose the result of
Mahadev~\cite{Mah18} for verifiable delegation of quantum computation by
classical clients with the zero-knowledge protocol of~\cite{BJSW16,BJSW20}.

\myparagraph{Zero-Knowledge Proofs of Knowledge (PoK)} In a zero-knowledge proof,
the verifier becomes convinced of the \emph{existence} of a witness, but this a
priori has no bearing on the prover actually having in her possession  such a
witness. In some circumstances, it is important to guarantee that the prover
actually has a witness. This is the realm of a   \emph{zero-knowledge proof of
knowledge (PoK)}~\cite{GMR89,BG93}.

We give an example to depict this subtlety. Let us consider the task of anonymous
credentials~\cite{Cha83}. In this setting, Alice wants to authenticate into
some online service using her private credentials. In order to protect her credentials, she could engage in a
zero-knowledge proof; this, however would be unsatisfactory, since the verifier in this scenario would be become convinced of the \emph{existence} of accepting credentials, which
does not necessarily translate to Alice actually being in the \emph{possession} of these credentials.
To remedy this situation,  the PoK property establishes an
``if-and-only-if'' situation: if the verifier accepts, then we can guarantee that
the  prover actually \emph{knows} a witness. This notion is formally defined by
requiring the existence of an \emph{extractor},
which is polynomial-time process $K$ that outputs a valid witness when given
oracle access to some prover $P^*$ that makes
the verifier accept with high enough probability.

In the quantum case, there has been some positive results in terms of the
security of classical proofs of knowledge for $\NP$ against quantum
adversaries~\cite{Unr12}. However, in the fully quantum case (that is,
proofs of quantum knowledge for $\QMA$), no scheme has been
proposed.  One of the possible reasons why no such proof of quantum knowledge
protocols was proposed is the lack of a {\em simple} zero-knowledge proof for~\QMA{}.

\myparagraph{Non-Interactive Zero-Knowledge Proofs (NIZK)}
The interactive nature of zero knowledge proof systems (for instance, in $\Sigma$-protocols) means that in some situations they are not applicable since they require the parties to be simultaneously online.
Therefore, another desired property  of such
proof systems is that they are \emph{non-interactive}, which means the whole
protocol consists in a single message from the prover to the
verifier. \emph{Non-interactive zero-knowledge
proofs} (NIZK) is a fundamental construction in modern cryptography and has far-reaching
applications, for instance to cryptocurrencies~\cite{BSCG+14}.

We note that NIZK is known to be impossible in the standard
model~\cite{GO94}, \emph{i.e.}, without extra assumptions, and therefore NIZK has been considered in different
models. In one of the models most relevant in cryptography,
we assume a common reference string (CRS)~\cite{BFM88}, which can be seen as a trusted
party sending a random string to both the prover and the verifier.
In another model,
the trusted party is allowed to send different (but correlated) messages to the
prover and the verifier; this is called the secret parameter
setup~\cite{PS05}. Classically, this model has been shown to be very powerful,
since even its {\em statistical} zero-knowledge version is equivalent to all of
the problems in the complexity class \class{AM} (this is the class that
contains problem that can be verified by public-coin polynomial-time
verifiers). As mentioned in~\cite{PS05}, this model
encompasses another model for NIZK where the prover and the verifier perform an
{\em offline} pre-processing phase (which is independent of the input) and then
the prover provides the ZK proof~\cite{KMO89}. This inclusion holds since the parties could
perform secure multi-party computation to compute the trusted party's operations.

In the quantum case, very little is known on non-interactive zero-knowledge.
Chailloux, Ciocan, Kerenidis and Vadhan studied this problem in a setup where the
message provided by the trusted party can depend on the instance of the
problem~\cite{CCKV08}. Recently, some results also showed that the
Fiat-Shamir transformation for classical protocols is still safe in the quantum
setting, in the quantum random oracle model~\cite{LZ19,DFMS19,Cha19iacr}.
One particular and intriguing open question is the possibility of NIZKs for \QMA{}.

\subsection{Results}
\label{sec:results}
As we have shown so far, the state-of-the-art in the study of $\QMA$ is that the
body of knowledge is still developing, and that there are some specific goals
that, if achieved, would help us better understand $\QMA$ and devise
new protocols for quantum cryptography. Given this context,  we present
now our results in more detail.

Our first result (\Cref{sec:consistency}) is to
show that the CLDM  problem is \QMA{}-hard under
Karp reductions, solving the  $14$-year-old problem proposed  by Liu~\cite{Liu06}.

\begin{mainresult} \label{result:cldm}
  The $\CLDM$ problem is $\QMA$-complete under Karp reductions.
\end{mainresult}
We capture the techniques used in establishing the above into a new
characterization of \QMA{} that provides the best-of-both worlds in terms of two
proof systems for \QMA{} in an abstract way:
we define \SimQMA{} as the complexity class with proof
systems that are \begin{enumerate*}[label=(\roman*)]
\item  locally verifiable (as in the Local Hamiltonian problem), and
\item  every reduced density matrix of the witness can be efficiently computed (as in the $\CLDM$ problem).
\end{enumerate*} This results is the basis for our applications to quantum cryptography:

\begin{application}\label{result:simqma}
  $\SimQMA = \QMA$.
\end{application}

Next, we define a quantum notion of a classical $\Sigma$-protocol, which we call
a $\Xi$-protocol\footnote{Besides being an excellent symbolic reminder of the interaction in a 3-message proof system, $\Xi$ is chosen  as a shorthand for what we might otherwise call a $q\Sigma$ protocol, due to the resemblance with the pronunciation as ``\textbf{csi}gma''.} (please note, both a $\Sigma$ and $\Xi$ protocol is also referred to
throughout as  ``commit-and-open'' protocols.)  Using our characterization
given in \Cref{result:simqma}, we show a $\QMA$-complete language that admits a  $\Xi$-protocol. Taking into account the importance of $\Sigma$ protocols for zero-knowledge proofs, we are able to show (\Cref{sec:xizk-protocol}) a quantum analogue of the celebrated $\cite{GMW91}$ paper:
\begin{application}\label{result:xi-proof}
  All problems in $\QMA$ admit a computational zero-knowledge $\Xi$-proof system.
\end{application}

Then we provide the definition of
Proof of Quantum Knowledge (PoQ).\footnote{This definition is  joint work with Coladangelo, Vidick and Zhang \cite{CVZ19}.} In short, we say that a proof system is a PoQ if there exists a quantum polynomial-time
\emph{extractor}~$K$ that has oracle access to a quantum prover  which makes the verifier accept with high enough probability, and  the extractor is able to output a sufficiently good
witness for a ``\QMA{}-relation''. We note that this definition for a PoQ is not a
straightforward adaptation of the classical definition; this is because \NP{} has many properties such as perfect completeness,
perfect soundness and even that proofs can be copied, that are not expected to
hold in the \QMA{} case. More details are given in~\Cref{sec:PoQ}.
We are then able to show that our $\Xi$
protocol for \QMA{} described in Result~\ref{result:xi-proof} is PoQ.
This is the first proof
of knowledge for~\QMA{}.\footnote{See also independent and concurrent work by Coladangelo, Vidick and Zhang~\cite{CVZ19}.}

\begin{application}\label{result:NIZK}
All problems in $\QMA$ admit a zero-knowledge proof of quantum knowledge  proof system
  and a statistical zero-knowledge proof of quantum knowledge argument system.
\end{application}
We remark that using techniques for post-hoc delegation of quantum
computation~\cite{FHM18}, our PoQ  for \QMA{} may be understood as a
\emph{proof-of-work} for quantum computations, since it could
be used to convince a verifier that the prover has indeed created the {\em history state} of some
pre-defined computation. This is very relevant in the scenario of
testing small-scale quantum computers in the most adversarial model possible: the zero-knowledge property ensures that the verifier learns nothing but the truth of the statement, while the PoQ property means that the prover has indeed prepared a ground state with the given properties.
Comparatively, all currently known protocols either make assumptions on the devices, or certify only the answer of the computation, but not the knowledge of the prover.

Finally, using the techniques of \Cref{result:xi-proof}, we show that every problem
in~\QMA{} has a non-interactive {\em statistical}
zero-knowledge proof in the secret parameter model.
We are even able to
strengthen
our result to the complexity class \class{QAM} (recall that in a~\class{QAM} proof system, the verifier first sends a random string to the
prover, who answers with a quantum proof). Note that $\class{QAM}$ trivially
contains~$\QMA$.

\begin{application}
  All problems in \class{QAM}  have a
  non-interactive statistical zero-knowledge protocol in the secret parameter
  model.
\end{application}

  Note that, as in the classical case~\cite{PS05}, our result also implies a QNIZK protocol
where the prover and the verifier run an offline (classical) pre-processing phase
(independent of the witness) and then the prover sends the quantum ZK proof to
the verifier.
We note also that even though these models are less relevant to the cryptographic applications of
NIZK, we think that our result moves us towards a QNIZK protocol for \QMA{}
in a more standard model.

\subsection{Techniques}
\label{sec:intro-simulatable}
\label{sec:techniques}

The starting point for our results are  {\em locally simulatable codes}, as
defined in \cite{GSY19}. We give now a rough intuition on the properties
of such codes and leave the details to \Cref{sec:simulatable-history}.

First, a quantum error correcting code is
\emph{$s$-simulatable} if there exists an efficient classical algorithm that
outputs the reduced density matrices of codewords on every subset of at most~$s$
qubits.
Importantly, this algorithm is oblivious of the logical state that
is encoded.
We note
that it was already known that  the reduced density matrices
of codewords hide the encoded information, since
quantum error correcting codes can be used in secret sharing
protocols~\cite{CGL99}, and in~\cite{GSY19} they show that there exist
codes such that the classical description of the reduced density matrices of the
codewords can be efficiently computed.
Next, \cite{GSY19} extends the notion of simulatability of {\em
logical operations} on encoded data as follows.
Recalling the theory of fault-tolerant quantum computation, according to which some quantum
error-correcting codes allow computations over \emph{encoded} data by using ``transversal'' gates and
encoded magic states. The definition of $s$-simulatability is extended to require that the simulator
also efficiently computes the reduced
density matrix on at most $s$ qubits of intermediate steps of the
{\em physical} operations that implement a logical gate on the encoded data
(again, by transversal gates and magic states).

\begin{example}\label{ex:simulatability}
 Let us suppose that the encoding map $\Enc$ admits transversal application of the one-qubit gate $G$,\emph{i.e.}, $G^{\otimes N}\Enc(\ket{\psi}) = \Enc(G\ket{\psi})$.
  The simulatability property requires that the density matrices on at most~$s$ qubits of $(G^{\otimes t} \otimes I^{\otimes (N -t)})\Enc(\ket{\psi})$ should be efficiently computed, for every $0 \leq t \leq N$.
\end{example}

In~\cite{GSY19}, the authors show that the concatenated Steane code is a locally simulatable code.
With this tool, in~\cite{GSY19}, it is shown that  every
$\MIP^*$ protocol\footnote{$\MIP^*$ is the set of languages that admit
a classical \emph{multi-prover} interactive proof, where, in addition, the
provers share entanglement} can be made zero-knowledge, thus quantizing the celebrated
result of~\cite{BGKW88}.
Here, we provide an alternative proof for the simulatability of concatenated Steane codes. Our new proof is much simpler than the proof provided in \cite{GSY19}, but it holds for a slightly weaker statement (but which is already sufficient to derive the results in \cite{GSY19}).
Then, for the first time,  we apply the
techniques of simulatable codes from~\cite{GSY19} to~$\QMA$, which enables us to solve many open problems as previously described.

In order to explain our approach to achieving our main result,  we first recall the  quantum
Cook-Levin theorem proved by~Kitaev \cite{KSV02}. In his proof, Kitaev uses the
circuit-to-Hamiltonian construction~\cite{Fey82}, mapping an arbitrary \QMA{}
verification circuit $V = U_T \ldots U_1$ to a local Hamiltonian $H_V$ that enforces
that low energy states are {\em history states} of the computation,
\emph{i.e.}, a \emph{superposition} of the snapshots of $V$ for every timestep $0 \leq
t\leq T$:
\begin{equation}\label{eq:history-state-ex}
    \ket{\hist} =
    \frac{1}{\sqrt{T+1}}\sum_{t=0 \ldots T+1} \ket{t} \otimes
    U_t\ldots U_1\ket{\psi_{init}}.
\end{equation}
In the above, the first register is called the {\em clock} register, and it
encodes the timestep of the computation, while the second register contains the
snapshot of the computation at time $t$, \emph{i.e.}, the quantum gates $U_1, \ldots ,U_t$
applied to the initial state $\ket{\psi_{init}} = \ket{\phi}\ket{0}^{\otimes A}$, that consists of the quantum
witness and auxiliary qubits. The Hamiltonian $H_V$ also
guarantees that $\ket{\psi_{init}}$ has the correct form at $t = 0$, and that
the final step {\em accepts}, \emph{i.e.}, the output qubit is close to~$\ket{1}$.

In \cite{GSY19}, they note that an important obstacle to making a state similar to $\ket{\hist}$\footnote{In \cite{GSY19}, they are simulating history states for $\mathsf{MIP}^*$ computation and therefore they need to deal also with arbitrary Provers' operations.}
locally simulatable is its dependence on the witness state $\ket{\phi}$.
The solution is to consider
a different verification algorithm $V'$ that implements $V$ on {\em
encoded data}, much like in the theory of fault-tolerant quantum computing. In more details, for a fixed locally simulatable code,
$V'$  expects the encoding of the original witness $\Enc(\ket{\phi})$ and then,
with her raw auxiliary states, she creates encodings of auxiliary states $\Enc(\ket{0})$ and
magic states $\Enc(\ket{\MS})$, and then performs the computation $V$ through
transversal gates and magic state
gadgets, and finally decodes the output qubit. This gives rise to a new history state:

\begin{equation}
    \ket{\hist'} =
    \frac{1}{\sqrt{T'+1}}\sum_{t=0 \ldots T'+1} \ket{t} \otimes
    U'_t\ldots U'_1\ket{\psi_{init}'},
\end{equation}
where $\ket{\psi_{init}'} = \Enc(\ket{\phi})\ket{0}^{\otimes A'}$ and
$U'_1,\ldots,U_{T'}$ are the gates of $V'$ described above.
Using the techniques from \cite{GSY19},\footnote{We remark that we also need to fix a small bug in their proof.  The bug fix deals with technicalities regarding $V'$ and~$\ket{\psi'}$ that are beyond the scope of this overview. See \Cref{S:new-verification} and \Cref{R:bug} for more details.}   we can show that from the properties of the locally
simulatable codes, the reduced density matrix on every set of $5$ qubits of $\ket{\hist'}$ can be
efficiently computed.  In this work, we prove that these reduced density
matrices are in fact \QMA{}-hard instances of \CLDM. More concretely, we show
that these reduced density matrices of a hypothetical history state of an
accepting \QMA{}-verification can  always be computed, and there exists a global
state (namely the history state) consistent with these reduced density matrices
if and only if the original \QMA{} verification accepts with overwhelming probability (and
therefore we are in the case of a yes-instance).

Our main result opens up a number of possible applications to cryptographic
settings. However, as we discussed in ~\Cref{sec:results} we face a tradeoff. In $\CLDM$,
we have
the description of the local density matrices, which yields a zero-knowledge $\Xi$ protocol.
On the other hand, the $\QMA$ verification for CLDM is non-local: we need multiple copies of the global state to
perform tomography on the reduced states,\footnote{See \Cref{lem:containment-qma}.} instead of a single
copy that is needed in the Local Hamiltonian problem.

In order to combine these two desired properties in a single object, we describe
a powerful technique that we call \emph{locally simulatable proofs}.
In a
locally simulatable proof system for some problem $A = (\ayes,\ano)$, we require that:
\begin{enumerate*}[label=(\roman*)]
\item the
verification test performed by the verifier acts on at most $k$ out of the $n$
qubits of the proof, and
\item for every $x \in \ayes$,  there exists a
locally simulatable witness $\ket{\psi}$, \emph{i.e.}, a state $\ket{\psi}$ that passes all the
local tests and such that for every $S\subseteq [n]$ with $|S| \leq k$,  it is possible to compute the reduced state of the
$\ket{\psi}$ on $S$ efficiently (without the help of the prover).
\end{enumerate*}
Notice that we have no extra restrictions on $x \in \ano$, since any quantum witness
should make this verifier  reject with high probability.

We then show that  all problems in $\QMA$ admit a locally simulatable proof
system. In order to achieve this, we use the local tests on the encoded version of the $\QMA$ verification algorithm that come from the Local Hamiltonian
problem, together with the fact that the
history state of such computation is a low-energy state and is simulatable (which is used to establish the \QMA{}-hardness of \CLDM{}).

We remark that a
direct classical version of locally simulatable proofs as we define them is impossible. This is because,
given the local values of a classical proof, it is always possible to reconstruct the full proof by gluing these pieces together.
The fact that this operation is hard to perform quantumly is intrinsically related to entanglement: given the local density
matrices, it is not a priori possible to know which parts are entangled in order to glue
them together. As discussed in the next section, this allows us to achieve a
type of simple zero-knowledge protocol that defies all classical intuition.

\subsubsection{Locally Simulatable Proofs in Action}
\label{subsection:intro-locally-simulatable-proofs}
We now sketch how each of \Cref{result:xi-proof}--\Cref{result:NIZK} is obtained via the lens of locally simulatable proofs.

\myparagraph{Zero Knowledge}
We use the characterization $\QMA=\SimQMA$ to  give a new zero-knowledge
proof system for~$\QMA$. Our protocol is much simpler than previous
results~\cite{BJSW16,BJSW20}, and it follows the ``commit-challenge-response'' structure of a
 $\Sigma$-protocol. Since our commitment is a quantum state (the challenge and
response are classical), we call this type of protocol a ``\emph{$\Xi$-protocol}'' (see \Cref{sec:results}).

The main idea is to use the quantum one-time pad to split the first message in
the protocol
into a quantum and a classical part. More concretely,
the prover sends $X^aZ^b\ket{\psi}$ and commitments to each bit of $a$ and $b$ to
the verifier, where $\ket{\psi}$ is  a locally simulatable quantum witness for some
instance~$x$ and $a$ and
$b$ are uniformly random strings.
The
verifier picks some $c \in [m]$, which corresponds to one of the tests  of the
simulatable proof system, and asks the prover to open the commitment of the
encryption keys to the corresponding qubits.  The honest prover opens the
commitment corresponding to the one-time pad keys of the qubits involved in
test $c$.   The verifier then checks if:
\begin{enumerate*}[label=(\roman*)]
\item the openings are correct and,
\item  the decrypted
reduced state passes test $c$.
\end{enumerate*}

Assuming the existence of unconditionally binding and computationally hiding
commitment schemes, we show that our protocol
is a computational zero-knowledge proof system for \QMA{}.  Completeness and
soundness follow trivially, whereas the zero-knowledge property is established
by constructing a simulator that exploits the properties of the locally simulatable proof
system and the rewinding technique of Watrous~\cite{Wat09b}.

To the best of our knowledge, this is the first time that quantum techniques are used in zero-knowledge to achieve a
commit-and-open protocol that \emph{requires no randomization of the witness}.
Indeed, for reasons already discussed, all classical zero-knowledge $\Sigma$
protocols require a \emph{mapping} or \emph{randomization} of the witness
(\emph{e.g.} in the $\COL$ protocol, this is the permutation that is applied to
the colouring before the commitment is made).
We thus conclude that quantum
information enables a new level of  encryption that is
not possible classically:
the ``juicy'' information is present in the global state, whose local
parts are {\em fully} known~\cite{GSY19}.

\myparagraph{Proof of Quantum Knowledge for~$\QMA$}

As discussed in \Cref{sec:results}, our first challenge here is to define a Proof of Quantum Knowledge (PoQ). We recall that in the
classical setting, we require an extractor that outputs some witness that passes
the $\NP$ verification with probability~$1$, whenever the verifier accepts with
probability greater than some parameter $\kappa$, known as the knowledge
error.

In the quantum case, given:
\begin{enumerate*}[label=(\roman*)]
\item that we are not able to clone quantum states and
\item \QMA{} is not known to be closed under perfect completeness, the best that
we can hope for is to extract some quantum state that would pass the $\QMA$
verification with some probability to be related to the acceptance probability
in the interactive protocol, whenever this latter value is above some
threshold~$\kappa$.
\end{enumerate*}

To define a PoQ, we first  fix the verification algorithm $V_x$ for some instance
of a problem in  \QMA{}.
We also assume   $P^*$ to be a prover that makes the verifier accept with probability at least
$\eps > \kappa$ in the $\Xi$ protocol.\footnote{Note that we reserve the word
``verifier'' here for the $\Xi$ protocol and refer to $V_x$ as the \QMA{}  verification
algorithm.} We assume that
$P^*$ only performs unitary operations on a private and message
registers.
We then define a quantum polynomial-time algorithm $K$ that has oracle access to
$P^*$, meaning that $K$ can execute the unitary operations of $P^*$, their
inverse operations and has access to the message register of
$P^*$.\footnote{This model is already considered by \cite{Unr12} in his work of
quantum proofs of knowledge for \NP{}.}
The
protocol is said to be a Proof of Quantum Knowledge if $K$ outputs, with
non-negligible probability, some quantum state $\rho$ that  would make $V_x$
accept with probability at least $q(\eps,n)$, where $q$ is known as the quality function,
or aborts otherwise.

The difficulty in showing that our $\Xi$ protocols are PoQs lies in the fact
that any measurement performed by the
extractor disturbs the state held by $P^*$, and therefore
when we rewind $P^*$ by applying the inverse of his operation, we do not come back to the original state. We
overcome this difficulty in the following way.  We set $\kappa$ to be some value
very close to $1$, namely $\kappa = 1 - \frac{1}{p(n)}$ for some  large enough
polynomial $p$. Our extractor starts by simulating $P^*$ on the first message of
the $\Xi$ protocol, and then holds the (supposed) one-time-padded state and the
commitments to the one-time-pad keys.
$K$ follows by iterating over all possible challenges of the
$\Xi$ protocol, runs $P^*$ on this challenge,  perform the verifier's check and
then rewinds~$P^*$. By the assumption that $P^*$ has a very high acceptance
probability, the measurements performed by $K$ do not disturb the state too
much, and in this case, $K$ can retrieve the correct one-time pads for every
qubit of the witness. If $K$ is successful (meaning that $k$ is able to open every
committed bit), then $K$ can decode the original one-time-padded state
and it is a good witness for $V_x$ with high probability.

We then analyse the sequential repetition of the protocol, that allows us to
have a PoQ with {\em exponentially small} knowledge error $\kappa$, and
extracts one good witness from  $P^*$ (out of the polynomially many copies that
$P^*$ should have in order to cause the verifier to accepted in the multiple runs of the protocol).

\myparagraph{Non-Interactive zero knowledge proof for QMA in the secret parameter model}

Finally, in \Cref{sec:NIZK}, we achieve our non-interactive statistical
zero-knowledge protocol for \QMA{} in the secret parameter setting using
 techniques similar to our $\Xi$ protocol: the trusted party
chooses the one-time pad key and a random (and small) subset of these values
that are reported to the verifier. Since the prover does not know which are the
values that were given to the verifier, he should act as in the
$\Xi$-protocol, but now the verifier does not actually need to ask for the
openings, since the trusted dealer has already sent them.
Although this is a less natural model, we hope that this result will shed some
light in developing $\QNIZK$ proofs for $\QMA$ in more commonly-used models.

\subsection{Open problems}

\myparagraph{Further $\QMA$-complete languages} We note that a number of
problems are currently known to be $\QMA$-complete under Turing reductions,
including the $N$-representability~\cite{LCV07}~\footnote{
  In \cite{LCV07}, the authors reduce the Local Hamiltonian problem on qubits
  into the Local Hamiltonian problem on fermions, and then they propose a Turing
  reduction from LH on fermions to the $N$-representability problem. The missing
  step is reducing CLDM directly to the $N$-representability problem, which
  might be straightforward, but needs a formal proof.  }
and bosonic
$N$-representability problems~\cite{WMN10} as well as the universal functional
of density function theory (DFT)~\cite{SV09}. It is an open question if these
problems can be shown to be $\QMA$-complete under Karp reductions using
the techniques presented in our work.

\myparagraph{Complexity of $k$ CLDM for $k < 5$} We prove in this work that
$5$-CLDM
is \QMA{}-hard under Karp reductions. We leave as an open problem proving if the
problem is still \QMA{}-complete for $k < 5$.

\myparagraph{Marginal reconstruction problem}
We remark that the classical version of CLDM is defined as follows: given the
description of $m$ marginal distributions on sets of bits $C_1,\ldots,C_m$, such
that $|C_i| \leq k$, decide if there is a probability distribution that is close
to those marginals, or such a distribution does not exist.  This problem was
proven $\NP$-complete by Pitowsky~\cite{Pit91}, and its containment
in $\NP$ is proved  by using the fact that such distribution can be seen as
a point $p$ in the {\em correlation polytope} in a polynomial-size Hilbert space. In
this case, by Caratheodory's theorem, $p$ is a convex combination of
polynomially many vertices of such polytope, and therefore these
vertices serve as the $\NP$-proof and a linear program verifies if there is a
convex combination of them that is consistent with the marginals of the
problem's instance.

The difference here is
that the proof and the marginals are different (but connected) objects.
We leave as an open
problem if we can extract a  notion of a locally simulatable classical proof from this (or any other)
problem, and its applications to cryptography and complexity theory. In
particular, we wonder if there is a natural zero-knowledge protocol for this problem.

\myparagraph{Applications of quantum ZK protocols} In classical cryptography, ZK
and PoK protocols are a fundamental primitive since they are crucial ingredients in
a plethora of applications. We discussed in  \Cref{sec:results}
that our quantum
ZK PoQ for \QMA{} could be used as a proof-of-work for quantum
computations. An interesting open problem
is finding other settings in which the benefits of our simple  ZK protocols for \QMA{}
can be applied. We list now some
possibilities that could be explored in future work:
authentication with uncloneable credentials~\cite{CDS94};
proof of quantum ownership~\cite{BJM19iacr}; or
ZK PoQ verification for  quantum money~\cite{AC12}.

\myparagraph{Practical ZK protocols for QMA} Even if we reach a conceptually much simpler ZK protocol
for QMA, the resources needed for it are still very far
from practical. We leave as an open problem if one could devise other protocols
that are more feasible from a physical implementation viewpoint, which could
include classical communication protocols based on the protocols proposed by
Vidick and Zhang~\cite{VZ19arxiv}, or
device-independent ones based on the ideas of Grilo~\cite{Gri19}.

\myparagraph{Non-interactive Zero-knowledge protocols for $\QMA$ in the CRS model} In this work, we propose a
QNIZK protocol where the information provided by the trusted dealer is
asymmetric. We leave as an open problem if one could devise a protocol where the
dealer distributes a common reference string (CRS)(or shared EPR pairs) to the prover
and the verifier.

A possible way of achieving such non-interactive protocol would be to explore
the properties of $\Xi$-protocols, as done classically with $\Sigma$-protocols.
For instance, the well-known Fiat-Shamir transformation~\cite{FS87}  allows us
to make $\Sigma$-protocols non-interactive (in the Random Oracle model). We
wonder if there is a version of this theorem when the first message can be
quantum.

\myparagraph{Witness indistinguishable/hiding protocols for $\QMA$}
Classically, there are two weaker notions that can substitute for ZK in
different applications. In Witness Indistinguishable (WI) proofs, we require that the
verifier cannot distinguish if she is interacting with a prover holding a witness
$w_1$ or~$w_2$, for any $w_1 \ne w_2$. In Witness Hiding (WH), we require that the
verifier is not able to cook-up a witness for the input herself. We note that
zero-knowledge
implies both  such definitions, and we leave as an open problem
finding WI/WH protocols for $\QMA$ with more desirable properties than the known ZK
protocols.

\myparagraph{Computational Zero-Knowledge proofs vs.~Statistical Zero-Knowledge
arguments}
Classically, it is known that the class of problems
with computational ZK proofs is closely related to the class of problems with statistical
ZK arguments~\cite{OV07}. We wonder if this relation is also true in the quantum setting.

\subsection{Concurrent and subsequent works}
Concurrently to this work, Bitansky and Shmueli~\cite{BS19}  proposed
the first quantum zero-knowledge argument system  for \QMA{} with constant
rounds and negligible soundness. Their main building block is a {\em non
black-box quantum extractor} for a post-quantum commitment scheme.

Also concurrently to this work,
Coladangelo, Vidick and Zhang~\cite{CVZ19} proposed a non-interactive argument of quantum knowledge for \QMA{}
with a quantum setup phase (that is independent of the witness) and a classical online phase.
Subsequently to our work, Alagic, Childs, Grilo and Hung~\cite{ACGH20} proposed
the first non-interactive zero-knowledge argument system for \QMA{} where the
communication is purely {\em classical}. Their protocol works in the random
oracle model with setup.

All of these protocols achieve only computational soundness and their security
relies on stronger cryptographic assumptions
(namely the Learning with Errors assumption, as well as post-quantum fully homomorphic encryption). Their proof structure follows by combining
powerful cryptographic constructions based on these primitives to achieve their
results.
On the
other hand, the key idea in our protocols is to use structural properties of
\QMA{} and with that, we can achieve a zero-knowledge protocol with statistical
soundness under the very
weak assumption that  post-quantum
one-way functions exist.

\subsection{Differences with previous version}

In a previous version of this work\footnote{available at \url{https://arxiv.org/abs/1911.07782v1}}, we used the results of \cite{GSY19} almost in a black-box way. In contrast, in the current version,  we provide a new proof for the technical results that we need from~\cite{GSY19}; this  not only makes this work self-contained, but it also provides a conceptually much simpler proof. More concretely, the sketch that was presented in Appendix~A of the previous version has now become a full proof in \Cref{sec:simulatable-history}.
We note that this also allowed us to find a small bug in the proof of~\cite{GSY19}, and provide a relatively easy fix (see~\Cref{R:bug}).

Fermi Ma pointed out a bug in the proof sketch of a proposal of statistical
zero-knowledge argument for \QMA{} that we have in previous versions of this
paper.
\subsection{Structure}

The remainder of this document is structured as follows:
\Cref{sec:preliminaries} presents Preliminaries and Notation. In
\Cref{sec:consistency}, we prove our results on the CLDM problem  and we present our framework of simulatable proofs, with the technical portion of this contribution appearing in \Cref{sec:simulatable-history}.
\Cref{sec:xizk-protocol} establishes the zero-knowledge~$\Xi$ protocol
for~$\QMA$, while in \Cref{sec:PoQ}, we define a proof of quantum knowledge and show that the interactive proof system
satisfies the definition. Finally, in \Cref{sec:qniszk}, we show a
non-interactive zero-knowledge proof for~\QMA in the secret parameter model.

\subsection*{Acknowledgements}
We thank Dorit Aharonov, Thomas Vidick, and the anonymous reviewers for help in improving the presentation of this work.
We thank Andrea Coladangelo, Thomas Vidick and Tina Zhang for discussions on
the definition of proofs of quantum knowledge. A.G.~thanks Christian Majenz for
discussions on a suitable title for this work. We thank Fermi Ma for pointing
out a bug in the proof sketch of the statistical zero-knowledge argument result
which was withdrawn.
A.B.~is supported by the U.S. Air Force Office of
Scientific Research under award number FA9550-17-1-0083,  Canada's  NFRF and NSERC, an Ontario ERA, and the University of Ottawa's Research Chairs program.

 \else

\clearpage
   \section{Introduction}

   \label{sec:Intro}

The complexity class $\QMA$ is the quantum analogue of
$\NP$, the class of problems whose solutions can be verified in deterministic
polynomial time. More precisely,  in $\QMA$, an all-powerful prover produces a
quantum proof that is verified by a quantum polynomially-bounded verifier. Given
the probabilistic nature of quantum computation, we require that for true
statements, there exists a quantum proof that makes the verifier accept with high probability (this
is called {\em completeness}),
whereas all ``proofs'' for  false statements are rejected with high
probability (which is called {\em soundness}).

The class $\QMA$ was first defined by Kitaev \cite{KSV02}, who also showed that deciding if a $k$-local
Hamiltonian problem has low-energy states  is $\QMA$-complete. The importance of this result is two-fold: first, from a theoretical computer science perspective, it is  the
 quantum analogue of the Cook-Levin theorem, since it establishes the first non-trivial $\QMA$-complete problem. Secondly, it shows deep links between physics and complexity theory, since the $k$-local Hamiltonian problem is an important  problem in many-body
physics. Thus, a better understanding of \QMA{} would lead
to a better understanding of the power of quantum resources in proof
verification, as we well as the role of {\em quantum entanglement}  in
low-energy states.

Follow-up work strengthened our
understanding of this important complexity class, \emph{e.g.},  by showing that $\QMA$ is contained in the complexity
class $\PP$~\cite{KW00}\footnote{\PP{} is the complexity class of
decision problems that can be solved by probabilistic polynomial-time algorithms
with error strictly smaller than $\frac{1}{2}$.};
that it is possible to reduce completeness and soundness errors without increasing the length of the
witness~\cite{MW05}; understanding the difference between quantum and classical
proofs~\cite{AK07,GKS16,FK18}; the possibility of perfect
completeness~\cite{Aar09b}; and, more recently, the relation of \QMA{} with non-local
games~\cite{NV17,NV18,CGJV19}.

Also, much follow-up work focused on understanding the complete problems for
$\QMA$, mostly by improving the parameters of the  \QMA{}-hard Local Hamiltonian
problem, or making it closer to models more physically
relevant~\cite{KR03,Liu06,KKR06,OT08,CM14,HNN13,BC18}.  In 2014, a
survey of $\QMA$-complete languages~\cite{Boo14} contained a list of 21 general
problems that are known to be $\QMA$-complete\footnote{We remark that these
problems can be clustered as variations of a handful of base problems.}, and since then,
the situation has not drastically changed. This contrasts with the development
of $\NP$, where only a few years after the developments surrounding
\textsf{3--SAT},  Karp published a theory of reducibility, including a list of
21 $\NP$-complete problems~\cite{Kar72}; while 7 years later, a celebrated book
by Garey and Johnson surveyed over 300 $\NP$-complete problems~\cite{GJ90}.\footnote{The first edition of Garey and Johnson~\cite{GJ90} was published in 1979.}

Recently, the role of \QMA{} in quantum cryptography has also been explored. For instance, several results used ideas of the
\QMA{}-completeness of the Local Hamiltonian problem in order to perform verifiable
delegation of quantum computation~\cite{FHM18,Mah18,Gri19}.
Furthermore, another line of work studies {\em zero-knowledge protocols} for
\QMA{}~\cite{BJSW16,BJSW20,VZ19arxiv}; which is extremely relevant, given the fundamental importance  of
 zero-knowledge protocols for \NP{} in cryptography .

Despite the multiple advances in our understanding of $\QMA$ and related techniques,
 a number of fundamental open questions remain. %
In this work, we solve some of these open problems by showing:
\begin{enumerate*}[label=(\roman*)]
\item \QMA{}-hardness of the
Consistency of Local Density Matrix (CLDM) problem under Karp reductions;
\item  ``commit-and-open''
  Zero-Knowledge (ZK) proof of quantum knowledge (PoQ) protocols for \QMA{}; and
\item  a non-interactive zero-knowledge (NIZK)
protocol in the secret parameter scenario.
\end{enumerate*}
Our main technical contribution consists in showing that every problem in \QMA{} admits a verification algorithm whose history state\footnote{See \Cref{eq:history-state-ex}.} is {\em  locally simulatable}, meaning that the reduced density matrices on any small set of qubits is efficiently computable (without knowledge of the quantum witness).
In order to be able to explain our results in more details and appreciate their contribution to a better
understanding of \QMA{}, we first give an overview of these areas and how they relate to these particular problems.

\subsection{Background}
\label{sec:background}
In this section, we discuss the background on the topics that are relevant to this work, summarizing their current state-of-the-art.

\myparagraph{Consistency of Local Density Matrices (CLDM)}
The Consistency of Local Density
matrices problem ($\CLDM$) is as follows: given the classical description of local density matrices
$\rho_1,\ldots,\rho_m$, each on a set of at most $k$ qubits and for a global system of $n$ qubits, is
there a state $\tau$ that is consistent with such reduced states? Liu~\cite{Liu06}
showed that this problem is in \QMA{} and that it is \QMA{}-hard under Turing
reductions, \emph{i.e.}, a deterministic polynomial time algorithm with access to an
oracle that solves $\CLDM$ in unit time can solve any problem in \QMA{}.

We remark that this type of reduction is rather troublesome for \QMA{}, since
the class is not known (nor expected) to be closed under complement, \emph{i.e.}, it is
widely believed that $\QMA{} \ne \class{co}\QMA{}$. If this is indeed the case, then Turing
reductions do not allow a black-box generalization of results regarding the $\CLDM$ problem  to all
problems in~\QMA{}.
This highlights the open problem of establishing the $\QMA{}$-hardness of the $\CLDM$ problem under Karp reductions, \emph{i.e.}, to show  an efficient mapping between yes- and no-instances of any \QMA{} problem  to yes- and
no-instances of $\CLDM$, respectively.

\myparagraph{Zero-Knowledge (ZK) Proofs for $\QMA$}
In an \emph{interactive} proof, a limited party, the \emph{verifier}, receives the
 help of some untrusted powerful party, the \emph{prover}, in order to decide if some
statement is true.
This is  a generalization of a \emph{proof}, where we allow multiple rounds of interaction.
As usual, we require that the  \emph{completeness} and \emph{soundness} properties hold.
For cryptographic applications, the \emph{zero-knowledge (ZK)} property is often
desirable: here, we require that the verifier learn nothing
from the interaction with the prover. This property is formalized by
showing the existence of an efficient \emph{simulator},
which is able to reproduce (\emph{i.e., simulate}) the output of any given
verifier on a \emph{yes} instance (without having direct access to the actual
prover or witness)\footnote{Different definitions of ``reproduce'' result in
different definitions of zero-knowledge protocols. A protocol is \emph{perfect
zero-knowledge} if the distribution of the output of the simulator is exactly the
same as the distribution of output of transcripts of the protocol.
A protocol is \emph{statistical
zero-knowledge} if such distributions are statistically close. Finally, a
protocol is \emph{computational zero-knowledge} if no efficient algorithm can
distinguish both distributions. The convention is that in the absence of such specification,
we are considering the case of computational zero-knowledge.}.

As paradoxical as it sounds, statistical zero-knowledge
interactive proofs  are known to be possible for a host of languages, including the Quadratic
Non-Residuosity, Graph Isomorphism, and Graph Non-Isomorphism
problems~\cite{GMW91,GMR89}; furthermore, all languages that can be proven by
multiple provers ($\MIP$) admits perfect zero-knowledge MIPs~\cite{BGKW88}.
What is more, by introducing computational assumptions, it was shown that all languages that admit an
 interactive proof system also admit a zero-knowledge interactive proof
 system~\cite{BOGG88}. Zero-knowledge interactive proof systems have had a
 profound impact in multiple areas, including cryptography \cite{GMW87} and complexity
 theory \cite{Vad07}.

We now briefly review the zero-knowledge interactive proof system for the
$\NP$-complete problem of Graph 3-colouring (\COL{}). This is a
3-message proof system, and has the additional property that, given a witness,
the prover is efficient. As a first message, the prover \emph{commits} to a
\emph{permutation} of the given 3-colouring (meaning that the prover randomly
permutes the colours to obtain colouring $c$, and produces a list $(v_i,
\mathsf{commit}(c(v_i)))$, using a cryptographic primitive $\mathsf{commit}$
which is a \emph{commitment scheme}). In the second message, the verifier
chooses uniformly at random an edge $\{v_i, v_j\}$ of the graph. The prover
responds with the information that allows the verifier to open the commitments to
the colouring of the vertices of this edge (and nothing more). The verifier \emph{accepts} if and
only if the revealed colours are different. It is easy to see that the protocol is complete and sound. For the zero-knowledge property, the simulator consists in a process that \emph{guesses} which edge will be requested by the verifier and commits to a colouring that satisfies the prover in case this guess is correct. If the guess is incorrect, the technique of \emph{rewinding} allows the simulator to re-initialize the interaction until it is eventually successful. Protocols that follow the \emph{commit-challenge-response} structure of this proof system are called \emph{$\Sigma$-protocols}\footnote{The Greek letter $\Sigma$  visualizes the flow of the protocol.} and, due to their simplicity, they play a very important role, for instance in the celebrated Fiat-Shamir transformation~\cite{FS87}.

The foundations of zero-knowledge in the quantum world were established by
Watrous, who showed a technique called \emph{quantum rewinding}~\cite{Wat09b}
which is used to show the security of some classical zero-knowledge proofs
(including the protocol for \COL{} described above), even against
quantum adversaries. The importance of this technique is that quantum
measurements typically \emph{disturb} the measured state. When we consider
quantum adversaries, such difficulties concern even \emph{classical} proof
systems, due to the rewinding technique that is ubiquitous (see example in the
case of \COL{} above).
Indeed, in the quantum setting, intermediate measurements (such as checking if
the guess is correct) may compromise the success of future executions, since it
is not possible {\em a priori} to ``rewind'' to a previous point in the execution in a
black-box way.

Another dimension where quantum information poses new challenges is in the study
of interactive proof systems for \emph{quantum} languages.
 We point out that Liu~\cite{Liu06}
observed very early on that the $\CLDM$ problem should admit a simple zero-knowledge proof
system following the ``commit-and-open'' approach, as in the \COL{} protocol.
Inspired by this observation,
recent progress has established the existence of zero-knowledge protocols
for all of $\QMA$~\cite{BJSW16,BJSW20}. We note that although the proof system used there is
reminiscent of a $\Sigma$-protocol, 
there are a number of reasons why it is not a ``natural'' quantum analogue of a $\Sigma$ protocol. These include:
\begin{enumerate*}[label=(\roman*)]
\item the use of a coin-flipping protocol, which makes the communication cost higher than 3 messages;
\item the fact that the verifier's message is not a random challenge; and
\item the final answer from the prover is not only the opening of some committed
values.
\end{enumerate*}

Recently, Vidick and Zhang~\cite{VZ19arxiv} showed how to make classical all of the
interaction between the verifier and the prover in~\cite{BJSW16,BJSW20}, by
considering {\em argument systems}\footnote{
Argument systems are a relaxation of proof systems where
the prover is also bounded to  polynomial-time computation, and, for positive instances, the
prover is provided a witness to the \NP{} instance. This model allows much more
efficient protocols which enables it to be used in practice
\cite{BSCG+14,PHGR16,BSCR+19}.
}
instead of
proof systems. In their protocol, they compose the result of
Mahadev~\cite{Mah18} for verifiable delegation of quantum computation by
classical clients with the zero-knowledge protocol of~\cite{BJSW16,BJSW20}.

\myparagraph{Zero-Knowledge Proofs of Knowledge (PoK)} In a zero-knowledge proof,
the verifier becomes convinced of the \emph{existence} of a witness, but this a
priori has no bearing on the prover actually having in her possession  such a
witness. In some circumstances, it is important to guarantee that the prover
actually has a witness. This is the realm of a   \emph{zero-knowledge proof of
knowledge (PoK)}~\cite{GMR89,BG93}.

We give an example to depict this subtlety. Let us consider the task of anonymous
credentials~\cite{Cha83}. In this setting, Alice wants to authenticate into
some online service using her private credentials. In order to protect her credentials, she could engage in a
zero-knowledge proof; this, however would be unsatisfactory, since the verifier in this scenario would be become convinced of the \emph{existence} of accepting credentials, which
does not necessarily translate to Alice actually being in the \emph{possession} of these credentials.
To remedy this situation,  the PoK property establishes an
``if-and-only-if'' situation: if the verifier accepts, then we can guarantee that
the  prover actually \emph{knows} a witness. This notion is formally defined by
requiring the existence of an \emph{extractor},
which is polynomial-time process $K$ that outputs a valid witness when given
oracle access to some prover $P^*$ that makes
the verifier accept with high enough probability.

In the quantum case, there has been some positive results in terms of the
security of classical proofs of knowledge for $\NP$ against quantum
adversaries~\cite{Unr12}. However, in the fully quantum case (that is,
proofs of quantum knowledge for $\QMA$), no scheme has been
proposed.  One of the possible reasons why no such proof of quantum knowledge
protocols was proposed is the lack of a {\em simple} zero-knowledge proof for~\QMA{}.

\myparagraph{Non-Interactive Zero-Knowledge Proofs (NIZK)}
The interactive nature of zero knowledge proof systems (for instance, in $\Sigma$-protocols) means that in some situations they are not applicable since they require the parties to be simultaneously online.
Therefore, another desired property  of such
proof systems is that they are \emph{non-interactive}, which means the whole
protocol consists in a single message from the prover to the
verifier. \emph{Non-interactive zero-knowledge
proofs} (NIZK) is a fundamental construction in modern cryptography and has far-reaching
applications, for instance to cryptocurrencies~\cite{BSCG+14}.

We note that NIZK is known to be impossible in the standard
model~\cite{GO94}, \emph{i.e.}, without extra assumptions, and therefore NIZK has been considered in different
models. In one of the models most relevant in cryptography,
we assume a common reference string (CRS)~\cite{BFM88}, which can be seen as a trusted
party sending a random string to both the prover and the verifier.
In another model,
the trusted party is allowed to send different (but correlated) messages to the
prover and the verifier; this is called the secret parameter
setup~\cite{PS05}. Classically, this model has been shown to be very powerful,
since even its {\em statistical} zero-knowledge version is equivalent to all of
the problems in the complexity class \class{AM} (this is the class that
contains problem that can be verified by public-coin polynomial-time
verifiers). As mentioned in~\cite{PS05}, this model
encompasses another model for NIZK where the prover and the verifier perform an
{\em offline} pre-processing phase (which is independent of the input) and then
the prover provides the ZK proof~\cite{KMO89}. This inclusion holds since the parties could
perform secure multi-party computation to compute the trusted party's operations.

In the quantum case, very little is known on non-interactive zero-knowledge.
Chailloux, Ciocan, Kerenidis and Vadhan studied this problem in a setup where the
message provided by the trusted party can depend on the instance of the
problem~\cite{CCKV08}. Recently, some results also showed that the
Fiat-Shamir transformation for classical protocols is still safe in the quantum
setting, in the quantum random oracle model~\cite{LZ19,DFMS19,Cha19iacr}.
One particular and intriguing open question is the possibility of NIZKs for \QMA{}.

\subsection{Results}
\label{sec:results}
As we have shown so far, the state-of-the-art in the study of $\QMA$ is that the
body of knowledge is still developing, and that there are some specific goals
that, if achieved, would help us better understand $\QMA$ and devise
new protocols for quantum cryptography. Given this context,  we present
now our results in more detail.

Our first result (\Cref{sec:consistency}) is to
show that the CLDM  problem is \QMA{}-hard under
Karp reductions, solving the  $14$-year-old problem proposed  by Liu~\cite{Liu06}.

\begin{mainresult} \label{result:cldm}
  The $\CLDM$ problem is $\QMA$-complete under Karp reductions.
\end{mainresult}
We capture the techniques used in establishing the above into a new
characterization of \QMA{} that provides the best-of-both worlds in terms of two
proof systems for \QMA{} in an abstract way:
we define \SimQMA{} as the complexity class with proof
systems that are \begin{enumerate*}[label=(\roman*)]
\item  locally verifiable (as in the Local Hamiltonian problem), and
\item  every reduced density matrix of the witness can be efficiently computed (as in the $\CLDM$ problem).
\end{enumerate*} This results is the basis for our applications to quantum cryptography:

\begin{application}\label{result:simqma}
  $\SimQMA = \QMA$.
\end{application}

Next, we define a quantum notion of a classical $\Sigma$-protocol, which we call
a $\Xi$-protocol\footnote{Besides being an excellent symbolic reminder of the interaction in a 3-message proof system, $\Xi$ is chosen  as a shorthand for what we might otherwise call a $q\Sigma$ protocol, due to the resemblance with the pronunciation as ``\textbf{csi}gma''.} (please note, both a $\Sigma$ and $\Xi$ protocol is also referred to
throughout as  ``commit-and-open'' protocols.)  Using our characterization
given in \Cref{result:simqma}, we show a $\QMA$-complete language that admits a  $\Xi$-protocol. Taking into account the importance of $\Sigma$ protocols for zero-knowledge proofs, we are able to show (\Cref{sec:xizk-protocol}) a quantum analogue of the celebrated $\cite{GMW91}$ paper:
\begin{application}\label{result:xi-proof}
  All problems in $\QMA$ admit a computational zero-knowledge $\Xi$-proof system.
\end{application}

Then we provide the definition of
Proof of Quantum Knowledge (PoQ).\footnote{This definition is  joint work with Coladangelo, Vidick and Zhang \cite{CVZ19}.} In short, we say that a proof system is a PoQ if there exists a quantum polynomial-time
\emph{extractor}~$K$ that has oracle access to a quantum prover  which makes the verifier accept with high enough probability, and  the extractor is able to output a sufficiently good
witness for a ``\QMA{}-relation''. We note that this definition for a PoQ is not a
straightforward adaptation of the classical definition; this is because \NP{} has many properties such as perfect completeness,
perfect soundness and even that proofs can be copied, that are not expected to
hold in the \QMA{} case. More details are given in~\Cref{sec:PoQ}.
We are then able to show that our $\Xi$
protocol for \QMA{} described in Result~\ref{result:xi-proof} is PoQ.
This is the first proof
of knowledge for~\QMA{}.\footnote{See also independent and concurrent work by Coladangelo, Vidick and Zhang~\cite{CVZ19}.}

\begin{application}\label{result:NIZK}
All problems in $\QMA$ admit a zero-knowledge proof of quantum knowledge  proof system
  and a statistical zero-knowledge proof of quantum knowledge argument system.
\end{application}
We remark that using techniques for post-hoc delegation of quantum
computation~\cite{FHM18}, our PoQ  for \QMA{} may be understood as a
\emph{proof-of-work} for quantum computations, since it could
be used to convince a verifier that the prover has indeed created the {\em history state} of some
pre-defined computation. This is very relevant in the scenario of
testing small-scale quantum computers in the most adversarial model possible: the zero-knowledge property ensures that the verifier learns nothing but the truth of the statement, while the PoQ property means that the prover has indeed prepared a ground state with the given properties.
Comparatively, all currently known protocols either make assumptions on the devices, or certify only the answer of the computation, but not the knowledge of the prover.

Finally, using the techniques of \Cref{result:xi-proof}, we show that every problem
in~\QMA{} has a non-interactive {\em statistical}
zero-knowledge proof in the secret parameter model.
We are even able to
strengthen
our result to the complexity class \class{QAM} (recall that in a~\class{QAM} proof system, the verifier first sends a random string to the
prover, who answers with a quantum proof). Note that $\class{QAM}$ trivially
contains~$\QMA$.

\begin{application}
  All problems in \class{QAM}  have a
  non-interactive statistical zero-knowledge protocol in the secret parameter
  model.
\end{application}

  Note that, as in the classical case~\cite{PS05}, our result also implies a QNIZK protocol
where the prover and the verifier run an offline (classical) pre-processing phase
(independent of the witness) and then the prover sends the quantum ZK proof to
the verifier.
We note also that even though these models are less relevant to the cryptographic applications of
NIZK, we think that our result moves us towards a QNIZK protocol for \QMA{}
in a more standard model.

\subsection{Techniques}
\label{sec:intro-simulatable}
\label{sec:techniques}

The starting point for our results are  {\em locally simulatable codes}, as
defined in \cite{GSY19}. We give now a rough intuition on the properties
of such codes and leave the details to \Cref{sec:simulatable-history}.

First, a quantum error correcting code is
\emph{$s$-simulatable} if there exists an efficient classical algorithm that
outputs the reduced density matrices of codewords on every subset of at most~$s$
qubits.
Importantly, this algorithm is oblivious of the logical state that
is encoded.
We note
that it was already known that  the reduced density matrices
of codewords hide the encoded information, since
quantum error correcting codes can be used in secret sharing
protocols~\cite{CGL99}, and in~\cite{GSY19} they show that there exist
codes such that the classical description of the reduced density matrices of the
codewords can be efficiently computed.
Next, \cite{GSY19} extends the notion of simulatability of {\em
logical operations} on encoded data as follows.
Recalling the theory of fault-tolerant quantum computation, according to which some quantum
error-correcting codes allow computations over \emph{encoded} data by using ``transversal'' gates and
encoded magic states. The definition of $s$-simulatability is extended to require that the simulator
also efficiently computes the reduced
density matrix on at most $s$ qubits of intermediate steps of the
{\em physical} operations that implement a logical gate on the encoded data
(again, by transversal gates and magic states).

\begin{example}\label{ex:simulatability}
 Let us suppose that the encoding map $\Enc$ admits transversal application of the one-qubit gate $G$,\emph{i.e.}, $G^{\otimes N}\Enc(\ket{\psi}) = \Enc(G\ket{\psi})$.
  The simulatability property requires that the density matrices on at most~$s$ qubits of $(G^{\otimes t} \otimes I^{\otimes (N -t)})\Enc(\ket{\psi})$ should be efficiently computed, for every $0 \leq t \leq N$.
\end{example}

In~\cite{GSY19}, the authors show that the concatenated Steane code is a locally simulatable code.
With this tool, in~\cite{GSY19}, it is shown that  every
$\MIP^*$ protocol\footnote{$\MIP^*$ is the set of languages that admit
a classical \emph{multi-prover} interactive proof, where, in addition, the
provers share entanglement} can be made zero-knowledge, thus quantizing the celebrated
result of~\cite{BGKW88}.
Here, we provide an alternative proof for the simulatability of concatenated Steane codes. Our new proof is much simpler than the proof provided in \cite{GSY19}, but it holds for a slightly weaker statement (but which is already sufficient to derive the results in \cite{GSY19}).
Then, for the first time,  we apply the
techniques of simulatable codes from~\cite{GSY19} to~$\QMA$, which enables us to solve many open problems as previously described.

In order to explain our approach to achieving our main result,  we first recall the  quantum
Cook-Levin theorem proved by~Kitaev \cite{KSV02}. In his proof, Kitaev uses the
circuit-to-Hamiltonian construction~\cite{Fey82}, mapping an arbitrary \QMA{}
verification circuit $V = U_T \ldots U_1$ to a local Hamiltonian $H_V$ that enforces
that low energy states are {\em history states} of the computation,
\emph{i.e.}, a \emph{superposition} of the snapshots of $V$ for every timestep $0 \leq
t\leq T$:
\begin{equation}\label{eq:history-state-ex}
    \ket{\hist} =
    \frac{1}{\sqrt{T+1}}\sum_{t=0 \ldots T+1} \ket{t} \otimes
    U_t\ldots U_1\ket{\psi_{init}}.
\end{equation}
In the above, the first register is called the {\em clock} register, and it
encodes the timestep of the computation, while the second register contains the
snapshot of the computation at time $t$, \emph{i.e.}, the quantum gates $U_1, \ldots ,U_t$
applied to the initial state $\ket{\psi_{init}} = \ket{\phi}\ket{0}^{\otimes A}$, that consists of the quantum
witness and auxiliary qubits. The Hamiltonian $H_V$ also
guarantees that $\ket{\psi_{init}}$ has the correct form at $t = 0$, and that
the final step {\em accepts}, \emph{i.e.}, the output qubit is close to~$\ket{1}$.

In \cite{GSY19}, they note that an important obstacle to making a state similar to $\ket{\hist}$\footnote{In \cite{GSY19}, they are simulating history states for $\mathsf{MIP}^*$ computation and therefore they need to deal also with arbitrary Provers' operations.}
locally simulatable is its dependence on the witness state $\ket{\phi}$.
The solution is to consider
a different verification algorithm $V'$ that implements $V$ on {\em
encoded data}, much like in the theory of fault-tolerant quantum computing. In more details, for a fixed locally simulatable code,
$V'$  expects the encoding of the original witness $\Enc(\ket{\phi})$ and then,
with her raw auxiliary states, she creates encodings of auxiliary states $\Enc(\ket{0})$ and
magic states $\Enc(\ket{\MS})$, and then performs the computation $V$ through
transversal gates and magic state
gadgets, and finally decodes the output qubit. This gives rise to a new history state:

\begin{equation}
    \ket{\hist'} =
    \frac{1}{\sqrt{T'+1}}\sum_{t=0 \ldots T'+1} \ket{t} \otimes
    U'_t\ldots U'_1\ket{\psi_{init}'},
\end{equation}
where $\ket{\psi_{init}'} = \Enc(\ket{\phi})\ket{0}^{\otimes A'}$ and
$U'_1,\ldots,U_{T'}$ are the gates of $V'$ described above.
Using the techniques from \cite{GSY19},\footnote{We remark that we also need to fix a small bug in their proof.  The bug fix deals with technicalities regarding $V'$ and~$\ket{\psi'}$ that are beyond the scope of this overview. See \Cref{S:new-verification} and \Cref{R:bug} for more details.}   we can show that from the properties of the locally
simulatable codes, the reduced density matrix on every set of $5$ qubits of $\ket{\hist'}$ can be
efficiently computed.  In this work, we prove that these reduced density
matrices are in fact \QMA{}-hard instances of \CLDM. More concretely, we show
that these reduced density matrices of a hypothetical history state of an
accepting \QMA{}-verification can  always be computed, and there exists a global
state (namely the history state) consistent with these reduced density matrices
if and only if the original \QMA{} verification accepts with overwhelming probability (and
therefore we are in the case of a yes-instance).

Our main result opens up a number of possible applications to cryptographic
settings. However, as we discussed in ~\Cref{sec:results} we face a tradeoff. In $\CLDM$,
we have
the description of the local density matrices, which yields a zero-knowledge $\Xi$ protocol.
On the other hand, the $\QMA$ verification for CLDM is non-local: we need multiple copies of the global state to
perform tomography on the reduced states,\footnote{See \Cref{lem:containment-qma}.} instead of a single
copy that is needed in the Local Hamiltonian problem.

In order to combine these two desired properties in a single object, we describe
a powerful technique that we call \emph{locally simulatable proofs}.
In a
locally simulatable proof system for some problem $A = (\ayes,\ano)$, we require that:
\begin{enumerate*}[label=(\roman*)]
\item the
verification test performed by the verifier acts on at most $k$ out of the $n$
qubits of the proof, and
\item for every $x \in \ayes$,  there exists a
locally simulatable witness $\ket{\psi}$, \emph{i.e.}, a state $\ket{\psi}$ that passes all the
local tests and such that for every $S\subseteq [n]$ with $|S| \leq k$,  it is possible to compute the reduced state of the
$\ket{\psi}$ on $S$ efficiently (without the help of the prover).
\end{enumerate*}
Notice that we have no extra restrictions on $x \in \ano$, since any quantum witness
should make this verifier  reject with high probability.

We then show that  all problems in $\QMA$ admit a locally simulatable proof
system. In order to achieve this, we use the local tests on the encoded version of the $\QMA$ verification algorithm that come from the Local Hamiltonian
problem, together with the fact that the
history state of such computation is a low-energy state and is simulatable (which is used to establish the \QMA{}-hardness of \CLDM{}).

We remark that a
direct classical version of locally simulatable proofs as we define them is impossible. This is because,
given the local values of a classical proof, it is always possible to reconstruct the full proof by gluing these pieces together.
The fact that this operation is hard to perform quantumly is intrinsically related to entanglement: given the local density
matrices, it is not a priori possible to know which parts are entangled in order to glue
them together. As discussed in the next section, this allows us to achieve a
type of simple zero-knowledge protocol that defies all classical intuition.

\subsubsection{Locally Simulatable Proofs in Action}
\label{subsection:intro-locally-simulatable-proofs}
We now sketch how each of \Cref{result:xi-proof}--\Cref{result:NIZK} is obtained via the lens of locally simulatable proofs.

\myparagraph{Zero Knowledge}
We use the characterization $\QMA=\SimQMA$ to  give a new zero-knowledge
proof system for~$\QMA$. Our protocol is much simpler than previous
results~\cite{BJSW16,BJSW20}, and it follows the ``commit-challenge-response'' structure of a
 $\Sigma$-protocol. Since our commitment is a quantum state (the challenge and
response are classical), we call this type of protocol a ``\emph{$\Xi$-protocol}'' (see \Cref{sec:results}).

The main idea is to use the quantum one-time pad to split the first message in
the protocol
into a quantum and a classical part. More concretely,
the prover sends $X^aZ^b\ket{\psi}$ and commitments to each bit of $a$ and $b$ to
the verifier, where $\ket{\psi}$ is  a locally simulatable quantum witness for some
instance~$x$ and $a$ and
$b$ are uniformly random strings.
The
verifier picks some $c \in [m]$, which corresponds to one of the tests  of the
simulatable proof system, and asks the prover to open the commitment of the
encryption keys to the corresponding qubits.  The honest prover opens the
commitment corresponding to the one-time pad keys of the qubits involved in
test $c$.   The verifier then checks if:
\begin{enumerate*}[label=(\roman*)]
\item the openings are correct and,
\item  the decrypted
reduced state passes test $c$.
\end{enumerate*}

Assuming the existence of unconditionally binding and computationally hiding
commitment schemes, we show that our protocol
is a computational zero-knowledge proof system for \QMA{}.  Completeness and
soundness follow trivially, whereas the zero-knowledge property is established
by constructing a simulator that exploits the properties of the locally simulatable proof
system and the rewinding technique of Watrous~\cite{Wat09b}.

To the best of our knowledge, this is the first time that quantum techniques are used in zero-knowledge to achieve a
commit-and-open protocol that \emph{requires no randomization of the witness}.
Indeed, for reasons already discussed, all classical zero-knowledge $\Sigma$
protocols require a \emph{mapping} or \emph{randomization} of the witness
(\emph{e.g.} in the $\COL$ protocol, this is the permutation that is applied to
the colouring before the commitment is made).
We thus conclude that quantum
information enables a new level of  encryption that is
not possible classically:
the ``juicy'' information is present in the global state, whose local
parts are {\em fully} known~\cite{GSY19}.

\myparagraph{Proof of Quantum Knowledge for~$\QMA$}

As discussed in \Cref{sec:results}, our first challenge here is to define a Proof of Quantum Knowledge (PoQ). We recall that in the
classical setting, we require an extractor that outputs some witness that passes
the $\NP$ verification with probability~$1$, whenever the verifier accepts with
probability greater than some parameter $\kappa$, known as the knowledge
error.

In the quantum case, given:
\begin{enumerate*}[label=(\roman*)]
\item that we are not able to clone quantum states and
\item \QMA{} is not known to be closed under perfect completeness, the best that
we can hope for is to extract some quantum state that would pass the $\QMA$
verification with some probability to be related to the acceptance probability
in the interactive protocol, whenever this latter value is above some
threshold~$\kappa$.
\end{enumerate*}

To define a PoQ, we first  fix the verification algorithm $V_x$ for some instance
of a problem in  \QMA{}.
We also assume   $P^*$ to be a prover that makes the verifier accept with probability at least
$\eps > \kappa$ in the $\Xi$ protocol.\footnote{Note that we reserve the word
``verifier'' here for the $\Xi$ protocol and refer to $V_x$ as the \QMA{}  verification
algorithm.} We assume that
$P^*$ only performs unitary operations on a private and message
registers.
We then define a quantum polynomial-time algorithm $K$ that has oracle access to
$P^*$, meaning that $K$ can execute the unitary operations of $P^*$, their
inverse operations and has access to the message register of
$P^*$.\footnote{This model is already considered by \cite{Unr12} in his work of
quantum proofs of knowledge for \NP{}.}
The
protocol is said to be a Proof of Quantum Knowledge if $K$ outputs, with
non-negligible probability, some quantum state $\rho$ that  would make $V_x$
accept with probability at least $q(\eps,n)$, where $q$ is known as the quality function,
or aborts otherwise.

The difficulty in showing that our $\Xi$ protocols are PoQs lies in the fact
that any measurement performed by the
extractor disturbs the state held by $P^*$, and therefore
when we rewind $P^*$ by applying the inverse of his operation, we do not come back to the original state. We
overcome this difficulty in the following way.  We set $\kappa$ to be some value
very close to $1$, namely $\kappa = 1 - \frac{1}{p(n)}$ for some  large enough
polynomial $p$. Our extractor starts by simulating $P^*$ on the first message of
the $\Xi$ protocol, and then holds the (supposed) one-time-padded state and the
commitments to the one-time-pad keys.
$K$ follows by iterating over all possible challenges of the
$\Xi$ protocol, runs $P^*$ on this challenge,  perform the verifier's check and
then rewinds~$P^*$. By the assumption that $P^*$ has a very high acceptance
probability, the measurements performed by $K$ do not disturb the state too
much, and in this case, $K$ can retrieve the correct one-time pads for every
qubit of the witness. If $K$ is successful (meaning that $k$ is able to open every
committed bit), then $K$ can decode the original one-time-padded state
and it is a good witness for $V_x$ with high probability.

We then analyse the sequential repetition of the protocol, that allows us to
have a PoQ with {\em exponentially small} knowledge error $\kappa$, and
extracts one good witness from  $P^*$ (out of the polynomially many copies that
$P^*$ should have in order to cause the verifier to accepted in the multiple runs of the protocol).

\myparagraph{Non-Interactive zero knowledge proof for QMA in the secret parameter model}

Finally, in \Cref{sec:NIZK}, we achieve our non-interactive statistical
zero-knowledge protocol for \QMA{} in the secret parameter setting using
 techniques similar to our $\Xi$ protocol: the trusted party
chooses the one-time pad key and a random (and small) subset of these values
that are reported to the verifier. Since the prover does not know which are the
values that were given to the verifier, he should act as in the
$\Xi$-protocol, but now the verifier does not actually need to ask for the
openings, since the trusted dealer has already sent them.
Although this is a less natural model, we hope that this result will shed some
light in developing $\QNIZK$ proofs for $\QMA$ in more commonly-used models.

\subsection{Open problems}

\myparagraph{Further $\QMA$-complete languages} We note that a number of
problems are currently known to be $\QMA$-complete under Turing reductions,
including the $N$-representability~\cite{LCV07}~\footnote{
  In \cite{LCV07}, the authors reduce the Local Hamiltonian problem on qubits
  into the Local Hamiltonian problem on fermions, and then they propose a Turing
  reduction from LH on fermions to the $N$-representability problem. The missing
  step is reducing CLDM directly to the $N$-representability problem, which
  might be straightforward, but needs a formal proof.  }
and bosonic
$N$-representability problems~\cite{WMN10} as well as the universal functional
of density function theory (DFT)~\cite{SV09}. It is an open question if these
problems can be shown to be $\QMA$-complete under Karp reductions using
the techniques presented in our work.

\myparagraph{Complexity of $k$ CLDM for $k < 5$} We prove in this work that
$5$-CLDM
is \QMA{}-hard under Karp reductions. We leave as an open problem proving if the
problem is still \QMA{}-complete for $k < 5$.

\myparagraph{Marginal reconstruction problem}
We remark that the classical version of CLDM is defined as follows: given the
description of $m$ marginal distributions on sets of bits $C_1,\ldots,C_m$, such
that $|C_i| \leq k$, decide if there is a probability distribution that is close
to those marginals, or such a distribution does not exist.  This problem was
proven $\NP$-complete by Pitowsky~\cite{Pit91}, and its containment
in $\NP$ is proved  by using the fact that such distribution can be seen as
a point $p$ in the {\em correlation polytope} in a polynomial-size Hilbert space. In
this case, by Caratheodory's theorem, $p$ is a convex combination of
polynomially many vertices of such polytope, and therefore these
vertices serve as the $\NP$-proof and a linear program verifies if there is a
convex combination of them that is consistent with the marginals of the
problem's instance.

The difference here is
that the proof and the marginals are different (but connected) objects.
We leave as an open
problem if we can extract a  notion of a locally simulatable classical proof from this (or any other)
problem, and its applications to cryptography and complexity theory. In
particular, we wonder if there is a natural zero-knowledge protocol for this problem.

\myparagraph{Applications of quantum ZK protocols} In classical cryptography, ZK
and PoK protocols are a fundamental primitive since they are crucial ingredients in
a plethora of applications. We discussed in  \Cref{sec:results}
that our quantum
ZK PoQ for \QMA{} could be used as a proof-of-work for quantum
computations. An interesting open problem
is finding other settings in which the benefits of our simple  ZK protocols for \QMA{}
can be applied. We list now some
possibilities that could be explored in future work:
authentication with uncloneable credentials~\cite{CDS94};
proof of quantum ownership~\cite{BJM19iacr}; or
ZK PoQ verification for  quantum money~\cite{AC12}.

\myparagraph{Practical ZK protocols for QMA} Even if we reach a conceptually much simpler ZK protocol
for QMA, the resources needed for it are still very far
from practical. We leave as an open problem if one could devise other protocols
that are more feasible from a physical implementation viewpoint, which could
include classical communication protocols based on the protocols proposed by
Vidick and Zhang~\cite{VZ19arxiv}, or
device-independent ones based on the ideas of Grilo~\cite{Gri19}.

\myparagraph{Non-interactive Zero-knowledge protocols for $\QMA$ in the CRS model} In this work, we propose a
QNIZK protocol where the information provided by the trusted dealer is
asymmetric. We leave as an open problem if one could devise a protocol where the
dealer distributes a common reference string (CRS)(or shared EPR pairs) to the prover
and the verifier.

A possible way of achieving such non-interactive protocol would be to explore
the properties of $\Xi$-protocols, as done classically with $\Sigma$-protocols.
For instance, the well-known Fiat-Shamir transformation~\cite{FS87}  allows us
to make $\Sigma$-protocols non-interactive (in the Random Oracle model). We
wonder if there is a version of this theorem when the first message can be
quantum.

\myparagraph{Witness indistinguishable/hiding protocols for $\QMA$}
Classically, there are two weaker notions that can substitute for ZK in
different applications. In Witness Indistinguishable (WI) proofs, we require that the
verifier cannot distinguish if she is interacting with a prover holding a witness
$w_1$ or~$w_2$, for any $w_1 \ne w_2$. In Witness Hiding (WH), we require that the
verifier is not able to cook-up a witness for the input herself. We note that
zero-knowledge
implies both  such definitions, and we leave as an open problem
finding WI/WH protocols for $\QMA$ with more desirable properties than the known ZK
protocols.

\myparagraph{Computational Zero-Knowledge proofs vs.~Statistical Zero-Knowledge
arguments}
Classically, it is known that the class of problems
with computational ZK proofs is closely related to the class of problems with statistical
ZK arguments~\cite{OV07}. We wonder if this relation is also true in the quantum setting.

\subsection{Concurrent and subsequent works}
Concurrently to this work, Bitansky and Shmueli~\cite{BS19}  proposed
the first quantum zero-knowledge argument system  for \QMA{} with constant
rounds and negligible soundness. Their main building block is a {\em non
black-box quantum extractor} for a post-quantum commitment scheme.

Also concurrently to this work,
Coladangelo, Vidick and Zhang~\cite{CVZ19} proposed a non-interactive argument of quantum knowledge for \QMA{}
with a quantum setup phase (that is independent of the witness) and a classical online phase.
Subsequently to our work, Alagic, Childs, Grilo and Hung~\cite{ACGH20} proposed
the first non-interactive zero-knowledge argument system for \QMA{} where the
communication is purely {\em classical}. Their protocol works in the random
oracle model with setup.

All of these protocols achieve only computational soundness and their security
relies on stronger cryptographic assumptions
(namely the Learning with Errors assumption, as well as post-quantum fully homomorphic encryption). Their proof structure follows by combining
powerful cryptographic constructions based on these primitives to achieve their
results.
On the
other hand, the key idea in our protocols is to use structural properties of
\QMA{} and with that, we can achieve a zero-knowledge protocol with statistical
soundness under the very
weak assumption that  post-quantum
one-way functions exist.

\subsection{Differences with previous version}

In a previous version of this work\footnote{available at \url{https://arxiv.org/abs/1911.07782v1}}, we used the results of \cite{GSY19} almost in a black-box way. In contrast, in the current version,  we provide a new proof for the technical results that we need from~\cite{GSY19}; this  not only makes this work self-contained, but it also provides a conceptually much simpler proof. More concretely, the sketch that was presented in Appendix~A of the previous version has now become a full proof in \Cref{sec:simulatable-history}.
We note that this also allowed us to find a small bug in the proof of~\cite{GSY19}, and provide a relatively easy fix (see~\Cref{R:bug}).

Fermi Ma pointed out a bug in the proof sketch of a proposal of statistical
zero-knowledge argument for \QMA{} that we have in previous versions of this
paper.
\subsection{Structure}

The remainder of this document is structured as follows:
\Cref{sec:preliminaries} presents Preliminaries and Notation. In
\Cref{sec:consistency}, we prove our results on the CLDM problem  and we present our framework of simulatable proofs, with the technical portion of this contribution appearing in \Cref{sec:simulatable-history}.
\Cref{sec:xizk-protocol} establishes the zero-knowledge~$\Xi$ protocol
for~$\QMA$, while in \Cref{sec:PoQ}, we define a proof of quantum knowledge and show that the interactive proof system
satisfies the definition. Finally, in \Cref{sec:qniszk}, we show a
non-interactive zero-knowledge proof for~\QMA in the secret parameter model.

\subsection*{Acknowledgements}
We thank Dorit Aharonov, Thomas Vidick, and the anonymous reviewers for help in improving the presentation of this work.
We thank Andrea Coladangelo, Thomas Vidick and Tina Zhang for discussions on
the definition of proofs of quantum knowledge. A.G.~thanks Christian Majenz for
discussions on a suitable title for this work. We thank Fermi Ma for pointing
out a bug in the proof sketch of the statistical zero-knowledge argument result
which was withdrawn.
A.B.~is supported by the U.S. Air Force Office of
Scientific Research under award number FA9550-17-1-0083,  Canada's  NFRF and NSERC, an Ontario ERA, and the University of Ottawa's Research Chairs program.

\fi

\ifQIP
\else
\section{Preliminaries}
\label{sec:preliminaries}

\label{sec:prelim}

\subsection{Notation}
  For $n \in \mathbb{N}$, we define  $[n] := \{0, \ldots , n-1\}$. For some finite
  set $S$, we denote $s \inr S$ as an element $s$ picked uniformly at random
  from $S$. We say that a
  function $f$ is negligible ($f(n) = \negl(n)$), if for every constant $c$, we
  have
  $f(n) = o\left(\frac{1}{n^c}\right)$. Given two discrete probability distributions $P$
  and $Q$ over the
  domain $\mathcal{X}$, we define its statistical distance as $d(P,Q) = \sum_{x
  \in \mathcal{X}} |P(x) - Q(x)|$.

\subsection{Quantum computation}
We assume familiarity with  quantum computation, and refer to~\cite{NC00} for
the definition of basic concepts such as qubits, quantum states (pure and mixed), unitary
operators, quantum circuits and quantum channels.

For an $n$-qubit state $\rho$ and an $m$-qubit state $\sigma$, we define
$\rho^{\reg{S}} \otimes \sigma^{\reg{\overline{S}}}$ as the $m + n$-qubit
quantum state~$\tau$ that consists of the tensor product of $\rho$ and $\sigma$
where the qubits of~$\rho$ are in the positions indicated by $S \subseteq [m+n]$,
$|S| = n$, and the qubits of $\sigma$ are in the positions indicated by
$\overline{S}$, with the ordering of the qubits consistent with the ordering in
$\rho$ and $\sigma$, as well as the integer ordering in $S$ and $\overline{S}$.
We extend this notation  and write $A^{\reg{S}} \otimes
B^{\reg{\overline{S}}}$ for operators
$A$ and $B$ acting on $|S|$ and $|\overline{S}|$ qubits, respectively.

We define quantum gates with sans-serif font ($\X, \Z$,\ldots), and we define
$\Id$, $\X$, $\Y$ and $\Z$ to be the Pauli matrices, $\mathcal{P}_k=
\{\Id,\X,\Y,\Z\}^{\otimes k}$, $\Had$ to be the Hadamard gate,
$\CNOT$ to be the controlled-Not gate, $\Pg = \left(\begin{matrix} 1 & 0 \\ 0
  &
i \end{matrix}\right)$ and $\T = \left(\begin{matrix}  1 & 0 \\ 0 &
e^{\frac{i \pi}{4}} \end{matrix}\right)$. A Clifford circuit is a quantum circuit
composed of Clifford gates: $\Id$, $\X$, $\Y$, $\Had$, $\CNOT$ and $\Pg$. It is well-known
that universal quantum computation can be achieved with Clifford and $\T$ gates.

For an operator $A$, the trace norm is $\trNorm{A} := \tr{\sqrt{A^\dag A}}$,
which is the sum of the singular values of $A$. For two quantum states $\rho$
and $\sigma$, the trace distance between them is \[D(\rho,\sigma) =
\frac{1}{2} \trNorm{\rho - \sigma} = \max_P \tr{P(\rho-\sigma)}, \] where the
maximization is taken over all possible projectors~$P$.

  If $D(\rho,
  \sigma) \leq \eps$, we say that $\rho$
  and $\sigma$ are $\eps$-close. If $\eps = \negl(n)$, then we say that
  $\rho$ and $\sigma$ are statistically indistinguishable, and we write $\rho
  \approx_s \sigma$.

  If for every polynomial time algorithm $\mathcal{A}$, we have that
  \[\left| \Pr[\mathcal{A}(\rho) = 1] - \Pr[\mathcal{A}(\sigma) = 1] \right| \leq
  \negl(n),\]
  then we say that $\rho$ and  $\sigma$ are
  computationally indistinguishable, and we write $\rho \approx_c \sigma$.

  For some $S \subseteq \01^*$,  let $\{\Psi_x\}_{x \in S}$ and $\{\Phi_x\}_{x
  \in S}$ be two families of quantum channels from
  $q(|x|)$ qubits to $r(|x|)$ qubits, for some polynomials $q$ and $r$. We say
  that these two families are {\em computationally indistinguishable}, and
  denote it by $\Psi_x \approx_c \Phi_x$, if for
  every $x \in S$ and  polynomial $s$ and $k$  and every state $\sigma$ on $q(|x|) + k(|x|)$
  and every polynomial-size circuit acting on $r(|x|) + k(|x|)$ qubits, it
  follows that
  \[\left|\Pr[Q((\Psi_x \otimes \Id)(\sigma)) = 1] -
  \Pr[Q((\Phi_x \otimes \Id)(\sigma)) = 1]
  \right| \leq \negl(n).\]

Finally, we state a result on rewinding by \cite{Wat09b}.

\begin{lemma}[Lemma $9$ of \cite{Wat09b}]\label{lem:rewinding}
  Let $Q$ be an quantum circuit that acts on an $n$-qubit state~$\ket{\psi}$ and $m$ auxiliary systems $\ket{0}$. Let
  \[ p(\psi) = \norm{(\bra{0} \otimes
  I)Q(\ket{\psi}\otimes \ket{0}^{\otimes m})}^2 \text{  and  } \ket{\phi(\psi)} =
  \frac{1}{\sqrt{p(\psi)}}(\bra{0} \otimes
  I)Q(\ket{\psi}\otimes \ket{0}^{\otimes m}). \]
  Let
  $p_0, q\in (0,1)$ and $\eps \in (0,\frac{1}{2})$ such that
  $i)$  $|p(\psi) - q| < \eps$, $p_0(1-p_0) \leq 1(1-q)$, and $p_0 \leq p(\psi)$.
  Then there is a quantum circuit $R$ of size at most
  \[O\left(\frac{log(1/\eps)size(Q)}{p_0(1-p_0)}\right),\]
  such that  on input $\ket{\psi}$, $R$ computes the quantum state
  $\rho(\psi)$ that
  satisfies
  \[\bra{\phi(\psi)}\rho(\psi)\ket{\phi(\psi)} \geq 1- 16 \eps
  \frac{\log^2\frac{1}{\eps}}{p_0^2(1-p_0)^2}.\]
\end{lemma}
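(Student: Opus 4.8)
The statement is the quantum rewinding lemma of Watrous~\cite{Wat09b}, and the plan is to reprove it by the amplitude-amplification-with-measurements argument that underlies it. Fix $\psi$ and abbreviate $p = p(\psi)$; let $\Pi_0$ denote the projector flagging the ``success'' outcome on the designated output register (so $p = \|\Pi_0 Q(\ket{\psi}\otimes\ket{0}^{\otimes m})\|^2$) and $\Pi_1$ the analogous projector onto the fresh $\ket{0}^{\otimes m}$ register of the \emph{input}. Write $Q(\ket{\psi}\otimes\ket{0}^{\otimes m}) = \sqrt{p}\,\ket{g} + \sqrt{1-p}\,\ket{b}$ with $\ket{g}$ the ``good branch'' (satisfying $\Pi_0\ket{g}=\ket{g}$ and, after tracing out the flag, equal to $\ket{\phi(\psi)}$) and $\ket{b}$ the ``bad branch'' ($\Pi_0\ket{b}=0$). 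The circuit $R$ is a loop that (i) applies $Q$; (ii) measures $\{\Pi_0, I-\Pi_0\}$: on outcome ``good'' it halts and outputs the output register, on outcome ``bad'' it applies the \emph{rewinding unitary} $W := Q(2\Pi_1 - I)Q^\dagger$ and returns to (ii); the loop runs $N := \Theta\!\big(\log(1/\eps)/(p_0(1-p_0))\big)$ times, after which $R$ outputs whatever is in the output register. Deferring the intermediate measurements turns $R$ into a genuine circuit of size $O(N\cdot\mathrm{size}(Q)) = O\!\big(\log(1/\eps)\,\mathrm{size}(Q)/(p_0(1-p_0))\big)$. Crucially, $R$ depends only on $Q$, $\eps$, $p_0$ — not on $q$ or on $p$ — which is what gives the robustness; a fully coherent amplitude-amplification loop would use only $O(1/\sqrt{p_0})$ rounds but would need to know $p$ precisely in order to stop, whereas the measure-and-rewind variant pays the extra factor in exchange for obliviousness to $p$.

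Next I would run the \emph{idealized analysis}, pretending the dynamics stay inside the two-dimensional subspace $V := \mathrm{span}\{\ket{g},\ket{b}\}$. On $V$ one has $\Pi_0|_V = \ketbra{g}{g}$, so a ``good'' measurement collapses the state to $\ket{g}$ \emph{exactly}, and $R$ then outputs exactly $\ket{\phi(\psi)}$. The vector $\ket{v_0} := Q(\ket{\psi}\otimes\ket{0}^{\otimes m}) = \sqrt{p}\,\ket{g} + \sqrt{1-p}\,\ket{b}$ lies in $Q\cdot\mathrm{range}(\Pi_1)$, so $W\ket{v_0} = \ket{v_0}$; if $W$ preserved $V$ it would be a reflection on $V$ with the single fixed direction $\ket{v_0}$, giving $W\ket{b} = 2\sqrt{p(1-p)}\,\ket{g} + (1-2p)\,\ket{b}$. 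Under this idealization the first measurement succeeds with probability $p$, and — since a failure collapses the state back to exactly $\ket{b}$, making the process memoryless — each subsequent measurement succeeds with probability $4p(1-p)$. Hence the probability that all $1+N$ measurements fail is $(1-p)(1-2p)^{2N}$, and the stated conditions relating $p_0$, $q$, $p(\psi)$ force $p(1-p) = \Omega(p_0(1-p_0))$ (for $\eps$ small enough), so with $N = \Theta(\log(1/\eps)/(p_0(1-p_0)))$ this is at most $\eps$; thus in the idealized world $R$ outputs $\ket{\phi(\psi)}$ with fidelity $\ge 1-\eps$.

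The heart of the argument — and the step I expect to be the main obstacle — is the \emph{perturbation analysis} that upgrades the idealized bound to the true one with loss $16\eps\log^2(1/\eps)/(p_0^2(1-p_0)^2)$. The point is that $W = 2Q\Pi_1 Q^\dagger - I$ is a reflection about the \emph{large} subspace $Q\cdot\mathrm{range}(\Pi_1)$, not about the line $\mathbb{C}\ket{v_0}$, so it need not preserve $V$: it maps $\ket{b}$ to $2\sqrt{p(1-p)}\ket{g} + (1-2p)\ket{b}$ plus a leakage term orthogonal to $V$, and the state produced by a ``bad'' outcome is only approximately $\ket{b}$. I would track the true state in the low-dimensional subspace generated by the orbit of $\{\ket{g},\ket{b}\}$ under $W$ and $\Pi_0$, show that each of the $N$ rounds perturbs the relevant amplitudes by $O(\eps)$ relative to their idealized values — this is exactly where the hypotheses $|p-q|<\eps$, $p_0 \le p$, and the $p_0(1-p_0)$–$q(1-q)$ relation are used, to control the pertinent ratios uniformly over the uncertainty window for $p$ — and then sum the deviations over the $N = \Theta(\log(1/\eps)/(p_0(1-p_0)))$ rounds via the triangle inequality for trace distance, obtaining an accumulated error of order $\eps\cdot N^2$ in the worst case. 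Combining this with the $\le\eps$ total-failure probability yields the claimed fidelity bound. Since the whole construction is uniform in $\psi$, $R$ is a single circuit satisfying the lemma, which completes the proof.
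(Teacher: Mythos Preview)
The paper does not prove this lemma: it is quoted as Lemma~9 of~\cite{Wat09b} and used as a black box, so there is no ``paper's own proof'' to compare against.

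Your sketch correctly identifies Watrous's construction (measure; on failure apply $W = Q(2\Pi_1 - I)Q^\dagger$; repeat) and the right two-stage structure (idealized two-dimensional analysis plus a perturbation argument), and your idealized calculation is correct. But the perturbation step --- where all the content lies --- is both vague and slightly off. The deviation of $W\ket{b}$ from its idealized value is not ``plus a term orthogonal to $V$'': writing $Q^\dagger\ket{g} = \sqrt{p}\,\ket{\psi,0} + \sqrt{1-p}\,\ket{\gamma}$ one computes that the error is $-2\sqrt{p}\,Q\Pi_1\ket{\gamma}$, which generically has nonzero overlap with both $\ket{g}$ and $\ket{b}$. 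In particular a subsequent ``good'' outcome need not collapse to $\ket{g}$; it can land on $\ket{0}\ket{\phi_0(\psi')}$ for the $\psi'$ arising from $\Pi_1\ket{\gamma}$. Controlling \emph{this} drift in the good branch is the crux, and your plan to ``track the orbit in a low-dimensional subspace'' does not obviously work, since the orbit under $W$ and $\Pi_0$ need not stay low-dimensional.

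The related and more serious gap is the role of the hypothesis $|p(\psi)-q|<\varepsilon$. In Watrous's lemma this is meant to hold for \emph{every} input state, not only the one you start from; each rewind effectively feeds a fresh $\psi'$ into $Q$, and it is precisely the uniformity of $p(\cdot)$ that makes the per-round deviation $O(\varepsilon)$. Your phrasing (``uniformly over the uncertainty window for $p$'') reads as if $p$ is a single unknown scalar, which would not suffice --- without the uniform bound there is no mechanism forcing the error per round to be small. Watrous's actual argument is a direct inductive fidelity estimate exploiting this uniformity; if you want to complete the proof, that is the calculation you need to carry out.
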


\subsection{Complexity classes}
In this section, we define several complexity classes that are considered in
this work.

\begin{definition}[\QMA]
A promise problem $A=(\ayes,\ano)$ is in \class{QMA}  if there exist
polynomials $p$, $q$ and a polynomial-time uniform family of quantum circuits
$\set{Q_n}$, where $Q_n$ takes as input a string $x\in\Sigma^*$ with
$\abs{x}=n$, a $p(n)$-qubit quantum state $\ket{\psi}$, and $q(n)$ auxiliary qubits in state
$\ket{0}^{\otimes q(n)}$, such that:
  \begin{description}
    \item[] \textbf{Completeness:}  If $x\in\ayes$, there exists some $\ket{\psi}$
      such that $Q_n$ accepts $(x,\ket{\psi})$ with probability at least
      $1-\negl(n)$.
    \item[] \textbf{Soundness:} If $x\in\ano$, for any state $\ket{\psi}$, $Q_n$
      accepts $(x,\ket{\psi})$ with probability at most $\negl(n)$.
  \end{description}
\end{definition}

We say that a witness for $x$ is  {\em good} if it makes the verification algorithm accept with probability $1 - \negl(|x|)$.

We define now a quantum interactive protocol between two parties.

\begin{definition}[Quantum interactive protocol between $A$ and $B$ $(A
  \leftrightarrows B)$]
  Let $\mathtt{A}$ and $\mathtt{B}$  be the private registers of parties $A$ and
  $B$, respectively, and $\mathtt{M}$ be the message register.
A quantum interactive protocol between $A$ and $B$ is a sequence of unitaries $U_0,\ldots,U_m$
  where $U_{i}$ acts on registers $\mathtt{A}$ and $\mathtt{M}$ for even $i$,
  and on registers $\mathtt{B}$ and $\mathtt{M}$ for odd $i$. The size of the
  register $\mathtt{A}, \mathtt{B}$ and $\mathtt{M}$, the number $m$ of messages
  and the complexity of the allowed $U_i$ is defined by each
  instance of the protocol. We can also consider interactive protocols where $A$
  outputs some value after interacting with $B$, and we also denote such
  output as $(A \leftrightarrows B)$.
\end{definition}

\begin{definition}
  A promise problem $A=(\ayes,\ano)$ is in $\QZK$ if there is an interactive
  protocol $(V \leftrightarrows P)$, where $V$ is polynomial-time and is given
  some input $x \in A$ and outputs a classical bit indicating acceptance or
  rejection of $x$, $P$ is unbounded, and the following holds
  \begin{description}
    \item[] \textbf{Completeness:}
      { If $x\in\ayes$,
      $\Pr [ (V \leftrightarrows P) = 1] \geq 1 - \negl(n)$.}
    \item[] \textbf{Soundness:} {If $x\in\ano$, for all $P^*$, we have that
      $\Pr [(V \leftrightarrows P^*) = 1)] \leq \frac{1}{\poly(n)}$.}
    \item[]  \textbf{Computational zero-knowledge:} For any $x \in \ayes$ and
      any polynomial-time $V'$ that receives the inputs $x$ and some state
      $\zeta$, there
      exists a polynomial-time quantum channel $\mathcal{S}_{V'}$ that also
      receives~$x$ and $\zeta$ as input such that
      $(V' \leftrightarrows P) \approx_c \mathcal{S}_{V'}$.
  \end{description}
\end{definition}

Following the result by Pass and Shelat~\cite{PS05}, we define now the notion of
non-interactive zero-knowledge proofs in the {\em secret} parameters
model.

\begin{definition}[Non-interactive statistical zero-knowledge proofs in the secret parameter model]
  A triple of algorithms $(D,P,V)$ is a non-interactive statistical
  zero-knowledge proof in
  the secret parameter model for a promise problem $A =
  (A_{yes},A_{no})$ where $D$ is a probabilistic polynomial time algorithm, $V$
  is a probabilistic polynomial time algorithm and $P$ is an unbounded algorithm such that there exists a negligible
  function $\eps$ such that the following conditions follow:
  \begin{description}
    \item[] \textbf{Completeness:} for every $x \in A_{yes}$, there exists some $P$
      \[Pr[(r_P, r_V) \leftarrow D(1^{|x|}); \pi \leftarrow P(x,r_P); V(x,r_V,\pi) =
      1] \geq 1 - \eps(n).\]
    \item[] \textbf{Soundness:} for every $x \in A_{no}$ and every $P$
      \[Pr[(r_P, r_V) \leftarrow D(1^{|x|}); \pi \leftarrow P(x,r_P); V(x,r_V,\pi) =
      1] \leq \eps(n).\]
    \item[] \textbf{Statistical zero-knowledge:} there is a probabilistic polynomial time
      algorithm $\mathcal{S}$ such that for every $x \in A_{yes}$,
      the statistical distance of the distribution of  the output
      of $\mathcal{S}(x)$ and the distribution of $(r_V,\pi)$ for
     $(r_P, r_V) \leftarrow D(1^{|x|})$ and $ \pi \leftarrow P(x,r_P)$ is
      $\negl(n)$.
  \end{description}
\end{definition}

\subsection{Local Hamiltonian problem}
\label{sec:circuit-to-ham}

We discuss now the Local Hamiltonian problem, the quantum analog of MAX-SAT problem.

\begin{definition}
  \label{def:local Hamiltonian}
  For
  $k \in \mathbb{N}$, $a,b \in \real$ with $a < b$, the $k$-\emph{Local Hamiltonian} problem with parameters $a$ and $b$
  is the following promise problem.
  Let $n$ be the number of qubits of a quantum system.
  The input is a set of $m(n)$ Hamiltonians $H_0, \ldots, H_{m(n)-1}$
  where $m$ is a polynomial in $n$, for all $i$ we have that $\norm{H_i} \leq 1 $
  and each $H_i$ acts on $k$ qubits out of the $n$ qubit system.
  For $H = \sum_{j = 1}^{m(n)} H_j$ the following two conditions hold.
    \begin{itemize}
\item[\textbf{Yes.}] There exists a state $n$-qubit state $\ket{\varphi}$ such that
      $
        \bra{\varphi} H \ket{\varphi}
        \leq a \cdot m(n) .
      $
\item[\textbf{Yes.}] For every $n$-qubit state $\ket{\varphi}$
      it holds that
      $
        \bra{\varphi} H \ket{\varphi}
        \geq b \cdot m(n) .
      $
    \end{itemize}
\end{definition}

In the proof of containment in \QMA{} for the $k$-Local Hamiltonian problem for $b -a > \frac{1}{\poly(n)}$, Kitaev showed how to estimate the energy of a local term. The verification procedure consists of picking one term  uniformly at random, and then measuring the energy of the that term. Notice that this verification procedure can be seen as $m$ POVMs, each one acting on at most a $k$-qubit system. We record this in the following lemma.

\begin{lemma}\label{L:local-verification}
  There exists a verification algorithm for the $k$-Local Hamiltonian problem with parameters $a,b \in \mathbb{R}$, $b - a \geq \frac{1}{\poly(n)}$, consisting of picking one of $m = \poly(n)$ $k$-qubit POVMs $\{\Pi_1, I-\Pi_1\}, \ldots, \{\Pi_{m(|x|)}, I -\Pi_{m(|x|)}  \}$ and accepting if and only if the witness projects onto $\Pi_i$.
\end{lemma}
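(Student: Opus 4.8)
The plan is to write down Kitaev's energy-estimation procedure explicitly and to check that it already has the claimed shape. First I would normalize each local term: since $\norm{H_i}\leq 1$ we have $-I \preceq H_i \preceq I$, so the operator $\Pi_i := \tfrac{1}{2}(I - H_i)$ is positive semidefinite with $0 \preceq \Pi_i \preceq I$, and it acts on exactly the same $k$ qubits as $H_i$. Hence $\{\Pi_i,\, I - \Pi_i\}$ is a legitimate two-outcome POVM on a $k$-qubit subsystem; being supported on only $k$ qubits, it is implementable by an efficient quantum circuit on those $k$ qubits together with one ancilla qubit (a Naimark dilation of the $2^k\times 2^k$ matrix $\Pi_i$). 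The verifier then receives the $n$-qubit witness, samples $i\in[m]$ uniformly at random, applies $\{\Pi_i, I-\Pi_i\}$ to the corresponding qubits, and accepts precisely when the outcome associated with $\Pi_i$ occurs.

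Next I would compute the acceptance probability. For any $n$-qubit witness state $\rho$, linearity of the trace gives
\[
  \Pr[\text{accept}] \;=\; \frac{1}{m}\sum_{i}\tr{\Pi_i\,\rho}
  \;=\; \frac{1}{2} \;-\; \frac{1}{2m}\,\tr{H\,\rho}, \qquad H := \sum_i H_i ,
\]
so the acceptance probability is an affine, strictly decreasing function of the normalized energy $\tfrac{1}{m}\tr{H\rho}$. In a yes-instance the state $\ket{\varphi}$ with $\bra{\varphi}H\ket{\varphi}\leq a\,m$ is accepted with probability at least $\tfrac{1}{2}-\tfrac{a}{2}$; in a no-instance every state has energy at least $b\,m$ (the promise is stated for pure states but extends to mixed $\rho$ by convexity, $\tr{H\rho}$ being a convex combination of pure-state energies, equivalently $\lambda_{\min}(H)\geq b\,m$), so the verifier accepts with probability at most $\tfrac{1}{2}-\tfrac{b}{2}$. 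The resulting completeness--soundness gap $\tfrac{b-a}{2}\geq\tfrac{1}{\poly(n)}$ is exactly what is needed for a verification of a promise problem with an inverse-polynomial promise gap; should one want $1-\negl(n)$ versus $\negl(n)$, one simply runs this single-shot test on several independent copies of the witness and thresholds the number of accepting runs, and each elementary test still touches only $k$ qubits.

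I do not expect a genuine obstacle: the substance of the statement is just the observation that ``measure the energy of a uniformly random local term'' is literally a mixture of $m$ two-outcome measurements, each living on $k$ qubits, whose accept probability is affine in the energy. The only points needing a moment of care are (i) that $\tfrac{1}{2}(I-H_i)$ is genuinely a positive operator bounded by $I$, which is precisely where $\norm{H_i}\leq 1$ is used, and (ii) that a two-outcome POVM on $k$ qubits is an efficiently implementable quantum operation, so that ``measure the term $H_i$'' is a bona fide polynomial-time operation rather than an idealized measurement of an unbounded observable. Everything else is the one-line computation above.
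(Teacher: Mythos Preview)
Your proposal is correct and follows exactly the approach the paper alludes to: the paper does not give an explicit proof but simply records Kitaev's energy-estimation procedure as ``pick one term uniformly at random and measure its energy,'' which is precisely what you have written out with $\Pi_i = \tfrac{1}{2}(I - H_i)$ and the affine relation between acceptance probability and $\tfrac{1}{m}\Tr(H\rho)$.
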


In Kitaev's proof of \QMA{}-hardness of the Local Hamiltonian problem, he uses
the circuit-to-Hamiltonian construction proposed by Feynman~\cite{Fey82} in order to reduce
arbitrary \QMA{} verification procedures to time-independent Hamiltonians in a
way that the Local Hamiltonian has low-energy states if and only if the \QMA{} verification
accepts with high probability.

More concretely, Kitaev shows a reduction from a quantum circuit $V$
consisting of $T$ gates $U_1,\ldots,U_T$ acting on
a $p$-qubit state $\ket{\psi}$ provided by the prover and an auxiliary register
$\ket{0}^{\otimes q}$ to some Hamiltonian $H_V = \sum_{i \in [m]} H_i$, where the terms $H_1,\ldots,H_m$ act on $T + p + q$ qubits,
 and they range between the following types:
  \begin{description}
    \item[clock consistency]
    $H^{clock}_t =  \kb{01}_{t,t+1}$, for $0 \leq t \leq T-1$
    \item[initialization]  For $j \in [q]$, $H^{init}_j = \kb{0}_{0} \otimes
      \kb{1}_{T+p+j}$,
    \item[propagation] Let $J_t$ be the set of qubits on which $U_t$ acts non-trivially.
       ~\\$H^{prop}_0 = \frac{1}{2} \Big(\kb{0}_{0} +
    \kb{10}_{0,1} - \ketbra{1}{0}_{0} \otimes
    (U_0)_{J_0}
    - \ketbra{0}{1})_c(0) \otimes
    (U_t^\dagger)_{J_0} \Big)$
  ~\\ $H^{prop}_T = \frac{1}{2} \Big(\kb{10}_{T-1,T} +
    \kb{1}_{T} - \ketbra{1}{0}_{T} \otimes
    (U_0)_{J_T}
    - \ketbra{0}{1})_c(T) \otimes
    (U_t^\dagger)_{J_T} \Big)$
       \begin{align*}
         \text{For } 1 \leq t \leq T-1,
         H^{prop}_t = \frac{1}{2} \Big(&\kb{100}_{t-1,t,t+1} +
    \kb{110}_{c(t-1,t,t+1)} \\
         &- \ketbra{110}{100}_{t-1,t,t+1} \otimes
    (U_t)_{J_t}
    - \ketbra{100}{110}_{t-1,t,t+1} \otimes
    (U_t^\dagger)_{J_t} \Big)
       \end{align*}
    \item[output] $H^{out} = \kb{1}_{T} \otimes \kb{0}_{T+1}$
  \end{description}

The facts that we need from this reduction are summarized in the following~lemma.
\begin{lemma}[\cite{KSV02}]\label{lem:kitaev}
  If there exists some state $\ket{\psi}$ that makes $V$ accept with probability
  $1 - \negl(n)$, then the {\em history state}
  \[
    \frac{1}{\sqrt{T+1}}\sum_{t \in [T+1]} \ket{\unary(t)} \otimes
    U_t\ldots U_1(\ket{\psi}\ket{0}^{\otimes q}),
  \]
  has energy $\negl(n)$ according to $H_V$.
  If every quantum state $\ket{\psi}$ makes $V$ reject with probability at
  least~$\eps$, then the groundenergy of $H_V$ is at least
  $\Omega\left(\frac{\eps}{T^3}\right)$.
\end{lemma}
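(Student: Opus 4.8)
The plan is to follow Kitaev's original analysis of the circuit-to-Hamiltonian construction, treating the two directions separately; throughout write $H_V = H_{\mathrm{clock}} + H_{\mathrm{prop}} + H_{\mathrm{in}} + H_{\mathrm{out}}$ for the four groups of (positive semidefinite) terms listed above. For the completeness statement I would check that the history state $\ket{\hist} = \tfrac{1}{\sqrt{T+1}}\sum_{t}\ket{\unary(t)}\otimes U_t\cdots U_1(\ket{\psi}\ket{0}^{\otimes q})$ lies in the kernel of every term except the output term. The clock-consistency terms $\kb{01}_{t,t+1}$ annihilate $\ket{\hist}$ because it is supported only on unary clock strings $\unary(t) = 1^{t}0^{T-t}$, none of which has a $0$ in position $t$ followed by a $1$ in position $t{+}1$. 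The initialization terms $\kb{0}_0 \otimes \kb{1}_{T+p+j}$ vanish because on the $t=0$ branch the auxiliary register is $\ket{0}^{\otimes q}$. Each propagation term, which up to clock bookkeeping has the form $\tfrac12\bigl(\ket{t}\bra{t} + \ket{t-1}\bra{t-1} - \ket{t}\bra{t-1}\otimes U_t - \ket{t-1}\bra{t}\otimes U_t^{\dagger}\bigr)$, annihilates $\ket{\hist}$ because its time-$t$ branch equals $U_t$ applied to its time-$(t{-}1)$ branch. The only surviving term is $H_{\mathrm{out}}$, whose expectation on $\ket{\hist}$ equals $\tfrac{1}{T+1}$ times the probability that $V$ outputs $0$ on $\ket{\psi}\ket{0}^{\otimes q}$, i.e.\ the rejection probability, which is $\negl(n)$ by completeness of $V$. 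Hence $\bra{\hist} H_V \ket{\hist} = \negl(n)$.

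For the soundness statement I would split $H_V = H_1 + H_2$ with $H_1 = H_{\mathrm{clock}} + H_{\mathrm{prop}}$ and $H_2 = H_{\mathrm{in}} + H_{\mathrm{out}}$, and conjugate by the unitary $W = \sum_t \ketbra{\unary(t)}{\unary(t)} \otimes U_t\cdots U_1$ (extended off the valid-clock subspace in the usual way). After this conjugation, $H_1$ becomes, on the span of the unary clock strings, the Laplacian of the path graph on $T{+}1$ vertices tensored with the identity on the computational register, while any weight on non-unary clock strings already pays $\Omega(1)$ energy from the clock-consistency terms. From the known spectrum of the path-graph Laplacian, $\ker H_1$ is spanned by the states $\tfrac{1}{\sqrt{T+1}}\sum_t \ket{\unary(t)}\otimes\ket{\xi}$ and the spectral gap of $H_1$ is $\Delta = \Omega(1/T^2)$. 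On $\ker H_1$ the conjugated $H_2$ acts as an initialization penalty forcing the last $q$ qubits of $\ket{\xi}$ to be $\ket{0}$, together with an output penalty equal to $\tfrac{1}{T+1}$ times the probability that $V$ rejects on the first $p$ qubits of $\ket{\xi}$; since every witness makes $V$ reject with probability at least $\eps$, the restriction of $H_2$ to $\ker H_1$ has smallest eigenvalue $\delta = \Omega(\eps/T)$. Applying the geometric lemma of~\cite{KSV02} for a sum of two positive-semidefinite operators ($H_1$ with ground energy $0$ and gap $\Delta$, and $H_2$ at least $\delta$ on $\ker H_1$ and $O(1)$ in operator norm) then yields $\lambda_{\min}(H_V) = \Omega(\eps/T^3)$, as claimed.

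The completeness direction is routine; the soundness direction is the technical heart, and the step I expect to require the most care is the quantitative combination of the $\Omega(1/T^2)$ propagation gap with the $\Omega(\eps/T)$ initialization-and-output penalty so that it lands exactly on $\Omega(\eps/T^3)$, together with the bookkeeping showing that weight on invalid clock configurations cannot be exploited to lower the energy. I would handle the first point by invoking the precise form of the geometric lemma from~\cite{KSV02} rather than re-deriving it, and the second by the explicit $\Omega(1)$ penalty of the clock-consistency terms on any non-unary clock string.
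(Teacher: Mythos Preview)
The paper does not provide its own proof of this lemma: it is stated with the citation~\cite{KSV02} and taken as a known result from Kitaev's book. Your proposal is a correct sketch of exactly the standard argument from~\cite{KSV02} (history state in the kernel of clock/initialization/propagation, only the output term survives for completeness; for soundness, conjugate by $W$, use the path-Laplacian gap $\Omega(1/T^2)$ and apply the geometric lemma), so there is nothing to compare against and your approach is the intended one.
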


\subsection{Quantum error correcting codes and fault-tolerant quantum computing}
\label{sec:codes}
For $n > k$, an $[[N,K]]$ quantum error correcting code (QECC) is a mapping
from a  $K$-qubit state~$\ket{\psi}$ into an $N$-qubit
state $\Enc(\ket{\psi})$.  The distance of an $[[N,K]]$ QECC is $D$
if for an arbitrary quantum operation $\mathcal{E}$ acting on
$(D-1)/2$ qubits, the original state $\ket{\psi}$ can be recovered from
$\mathcal{E}(Enc(\ket{\psi}))$, and in this case we call it an $[[N,K,D]]$~QECC.
For a $[[N,1,D]]$-QECC and its encoding $\Enc$, we overload notation and for an $k$-qubit system $\phi$ we write  $\Enc(\phi) := \Enc^{\otimes k}(\phi)$.

Given a fixed QECC, we say that some $k$-qubit gate $U$ can be applied \emph{transversally}, if
we apply $U^{\otimes n} Enc(\ket{\psi}) = Enc(U\ket{\psi})$ where $\ket{\psi}$ is a $k$-qubit state and the $i$-th tensor of $U$ acts on the $i$-th qubit of the encodings of each qubit of $\ket{\psi}$.
It is known that no code admits a universal set of
transversal gates~\cite{EK09}. In order to overcome this difficulty to
achieve fault-tolerant computation, one can use tools from computation by
teleportation. In this case,
provided a resource called {\em  magic states}, one can simulate the non-transversal
gate by applying some procedure on the target qubit and the magic state and such
procedure contains only (classically controlled) transversal gates.

In this work, we consider the $k$-fold concatenation of the Steane code. We simply state the properties we need from it in this work and we refer to \cite{GSY19} for further details.
The $k$-fold concatenation of the Steane code is $[[7^k,1,3^k]]$ QECC such that
all Clifford operations can be performed transversally and the $\mathsf{T}$
gates can be applied using the magic state $\mathsf{T}\ket{+}$.

\subsection{Commitment schemes}

A commitment scheme is a two-phase  protocol between two parties, the Sender and
Receiver. In the first phase, the Sender, who holds some message $m$, unknown by
the Receiver, sends a {\em commitment } $c$ to the Receiver. In the second
phase, the Receiver will reveal the committed value $m$. We require two
properties of the protocol: hiding, meaning that the Receiver cannot guess $m$
from $c$, and binding, which stays that the Receiver cannot decide to open
value $m' \ne m$ when $c$ was committed for $m$.

It is well-known that there are no commitment schemes with unconditionally hiding
\emph{and} unconditionally binding properties, but such schemes can be achieved
if either of the properties
holds only {\em computationally}.\footnote{Where {\em unconditionally} means
that the property is guaranteed even against unbounded adversaries, in contrast
to {\em computational} case, when the property is only guaranteed against quantum
polynomial-time adversaries.}

In this work, we consider commitment schemes which
are
unconditionally
binding but computationally hiding.
There are recent quantum-secure instantiations of these schemes, assuming
the hardness of Learning Parity
with Noise~\cite{JKPT12,XXW13,BKLP15}.

We present now the formal definition of such schemes.

\begin{definition}[Computationally hiding and unconditionally binding commitment
  schemes]\label{def:comp-hiding-commitment}
  Let~$\eta$ be some security parameter, $p$ be some polynomial, and
  $\mathcal{M}, \mathcal{C}, \mathcal{D} \subseteq \01^{p(\eta)}$ be the message space, commitment space and opening space,
  respectively.
A computationally hiding and unconditional binding commitment
  scheme consists of a pair of algorithms $(\textsf{commit}, \textsf{verify})$,
  where
  \begin{itemize}
    \item $\textsf{commit}$ takes as input a value from~$\mathcal{M}$ and some value in
      $\mathcal{C} \times \mathcal{D}$,
    \item $\textsf{verify}$ takes as input a value from
      $\mathcal{M} \times \mathcal{C} \times \mathcal{D}$ and outputs some value in
      $\01$
  \end{itemize}
  with the following properties:
  \begin{description}
    \item[] \textbf{Correctness:} If $(c,d) \leftarrow \textsf{commit}(m)$, then
      $\textsf{verify}(m,c,d) = 1$.
    \item[] \textbf{Computationally hiding:} For any polynomial-time quantum adversary
      $\mathcal{A}$ and  $(c,d) \leftarrow
      \textsf{commit}(m)$
      \[\Pr[ \mathcal{A}(c) = m]  \leq \frac{1}{|\mathcal{M}|} + \negl(\eta).\]
    \item[] \textbf{Unconditionally binding:}
      For any $(m,d)$ and $(m',d')$, it follows that
      \[\textsf{verify}(m,c,d) = \textsf{verify}(m',c,d') = 1 \implies m = m'.      \] \end{description}
\end{definition}

\section{Consistency of local density matrices and locally simulatable proofs}
\label{sec:consistency}
In this section,  prove that the CLDM problem is in \QMA{} (\Cref{sec:CLDM-in-QMA}), and we present our Theorem on  the \QMA{}-hardness of the CLDM problem (\Cref{sec:cldm})\footnote{Note that, together with the technical details in \Cref{sec:simulatable-history}, our proof is self-contained}.

One drawback of the containment of CLDM in \QMA{} is that the verification procedure must check a super-constant number of qubits of the witness. Notice that for the local-Hamiltonian problem,
checking a constant number of qubits is sufficient for an inverse polynomial completeness/soundness gap, but then we do not have the full knowledge of reduced density matrices of the witness as in CLDM. Our techniques allow us to define a new object called Locally Simulatable proofs, where we have the best of both worlds: full knowledge of the reduced density matrices of the witness and local verification.  We present our framework of Locally Simulatable proofs in \Cref{sec:simulatable-proof} (see \Cref{subsection:intro-locally-simulatable-proofs} for an overview on how these locally simulatable proofs are used and \Cref{sec:xizk-protocol,sec:NIZK} for  details).

Let us start by formally defining the CLDM problem.

\begin{definition}[Consistency of local density matrices problem (CLDM)~\cite{Liu06}]
Let $n \in \mathbb{N}$.  The input to the consistency of local density matrices
  problem consists of $((C_1,\rho_1),\ldots,(C_m,\rho_m))$  where
 $C_i \subseteq [n]$ and $|C_i| \leq k$; and
  $\rho_i$ is a density matrix on $|C_i|$ qubits and
  each
  matrix entry of $\rho_i$ has precision $poly(n)$. Given two parameters
  $\alpha$ and $\beta$, assuming that one of the following conditions is true,
  we have to decide which of them holds.
  \begin{itemize}
    \item[\textbf{Yes.}] There exists some $n$-qubit quantum state $\tau$ such
      that for every $i \in [m]$, $\trNorm{\Tr_{\overline{C_i}}(\tau) - \rho_i} \leq
      \alpha$.
    \item[\textbf{No.}]
      For every  $n$-qubit quantum state $\tau$, there exists some $i \in [m]$ such
      that $\trNorm{\Tr_{\overline{C_i}}(\tau) - \rho_i} \geq \beta$.
  \end{itemize}
\end{definition}
\begin{remark}
  Note that in \cite{Liu06}, the definition of the problem sets $\alpha =0$.
  In our case, we define the problem more generally, otherwise we would only
  achieve $\QMA_1$ hardness (the version of \QMA{} with perfect completeness)
  rather than $\QMA$-hardness.
\end{remark}

\subsection{Consistency of local density matrices is in \QMA.}
\label{sec:CLDM-in-QMA}

In the proof of containment of CLDM in $\QMA$, Liu uses a characterization of $\QMA$
called $\QMA+$~\cite{AR03}. For completeness, we start by showing
the containment of
CLDM in $\QMA$, by presenting a standard verifier for the problem, which is
a straightforward composition of the results from~\cite{AR03} and~\cite{Liu06}.

\begin{lemma}\label{lem:containment-qma}
  The consistency of local density matrices problem is in $\QMA$
  for any $k = O(\log n)$, and \mbox{$\alpha, \beta$} such that $\eps :=
  \frac{\beta}{4^k}-\alpha \geq \frac{1}{poly(n)}$.
\end{lemma}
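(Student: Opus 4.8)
The plan is to follow Liu's argument, which routes through the ``super-verifier'' class $\QMA+$ of Aharonov and Regev~\cite{AR03}: a verifier that receives polynomially many copies of a purported witness state, obtains $\poly(n)$ estimates --- each to additive precision $1/\poly(n)$ --- of expectation values of $O(\log n)$-local observables on the witness, and accepts or rejects according to a two-sided promise on those values; by~\cite{AR03}, $\QMA+ \subseteq \QMA$. It then suffices to design such a verifier for $\CLDM$. Given an instance $((C_1,\rho_1),\dots,(C_m,\rho_m))$, the bound $|C_i|\le k=O(\log n)$ makes each local dimension $d_i:=2^{|C_i|}\le 2^k$ polynomial, so there are at most $m\cdot 4^k=\poly(n)$ pairs $(i,P)$ with $P$ a Pauli operator on the qubits $C_i$. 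The honest witness for a yes-instance with consistent state $\tau$ is $\tau^{\otimes N}$ for a polynomial $N$ fixed below; for each pair $(i,P)$ the verifier obtains an estimate $\hat p_{i,P}$ of $\Tr\!\big(P\,\Tr_{\overline{C_i}}(\tau)\big)$, computes $r_{i,P}:=\Tr(P\rho_i)$ from the input, and accepts iff $|\hat p_{i,P}-r_{i,P}|\le\theta$ for all $(i,P)$.

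For completeness, writing $\sigma_i:=\Tr_{\overline{C_i}}(\tau)$, the yes-promise gives $\tfrac12\|\sigma_i-\rho_i\|_1\le\alpha$, hence $|\Tr(P\sigma_i)-r_{i,P}|\le\|\sigma_i-\rho_i\|_1\le 2\alpha$, so choosing $\theta$ slightly above $2\alpha$ (with an inverse-polynomial estimation margin) makes all checks pass with overwhelming probability. For soundness, expand $\sigma^{\ast}-\rho_i=\sum_P a_P P$ in the Pauli basis, where $a_P=\Tr(P(\sigma^{\ast}-\rho_i))/d_i$; then $\|\sigma^{\ast}-\rho_i\|_1\le\sum_P|a_P|\,\|P\|_1=d_i\sum_P|a_P|\le 4^{|C_i|}\max_P|\Tr(P(\sigma^{\ast}-\rho_i))|$ for any state $\sigma^{\ast}$ on $C_i$, so if all $|\Tr(P\sigma^{\ast})-r_{i,P}|\le\theta$ then $D(\sigma^{\ast},\rho_i)\le\tfrac12 4^k\theta$. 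Choosing $\theta$ so that additionally $\tfrac12 4^k\theta<\beta$ --- possible precisely because $\beta/4^k-\alpha=\eps\ge 1/\poly(n)$ --- a state that passes all the $\theta$-checks for every $i$ would be the $C_i$-marginal (for all $i$) of a single global state lying within $\beta$ of every $\rho_i$, contradicting the no-case promise; hence any witness is rejected with overwhelming probability.

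It remains to justify extracting the estimates $\hat p_{i,P}$ from a possibly adversarial witness, which need not be $\tau^{\otimes N}$; this is exactly what $\QMA+\subseteq\QMA$ packages, but a self-contained route is available. Partition the $N$ witness blocks into $m\cdot 4^k$ \emph{disjoint} batches via a uniformly random partition, assign batch $\mathcal{B}_{i,P}$ to the pair $(i,P)$, measure $P$ on the $C_i$-qubits of every block in $\mathcal{B}_{i,P}$, and let $\hat p_{i,P}$ be the mean of the $\pm1$ outcomes. For an arbitrary witness $\Omega$ with per-block reduced states $\rho_\ell^{\mathrm{blk}}$, set $\tau^{\ast}:=\tfrac1N\sum_\ell \rho_\ell^{\mathrm{blk}}$, a legitimate $n$-qubit state with $C_i$-marginal $\sigma_i^{\ast}$. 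A Hoeffding bound over the measurement outcomes gives $\hat p_{i,P}\approx\Tr(P\bar\sigma_{i,P})$, where $\bar\sigma_{i,P}$ is the batch-average $C_i$-marginal; and a matrix concentration bound over the random partition gives $\bar\sigma_{i,P}\approx\sigma_i^{\ast}$ for all $(i,P)$ simultaneously. Taking all approximation parameters to be suitable inverse polynomials forces the batch size, and hence $N$, to be polynomial, and the soundness analysis of the previous paragraph applies with $\sigma^{\ast}=\sigma_i^{\ast}$.

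The main obstacle is this last point: unlike the classical marginal-reconstruction problem, where local pieces of a proof can simply be glued, a cheating quantum prover need not supply i.i.d.\ copies, so the verifier cannot naively ``do tomography'' block by block --- the randomized batching (equivalently, the $\QMA+\subseteq\QMA$ machinery of~\cite{AR03}) is what forces the local measurements to probe one well-defined global state $\tau^{\ast}$. A secondary but genuinely necessary feature is the $4^k$ loss: reading a trace-distance constraint off individual Pauli expectation values costs a factor $4^{|C_i|}$ through $\|\cdot\|_1\le d_i\sum_P|a_P|$, which is why the hypothesis is $\beta/4^k-\alpha\ge 1/\poly(n)$ rather than merely $\beta-\alpha\ge 1/\poly(n)$.
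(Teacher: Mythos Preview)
Your proposal is correct and shares the paper's core ingredients---expand $\sigma-\rho_i$ in the Pauli basis to convert a trace-norm gap into a gap in some single Pauli expectation (this is where the $4^k$ loss enters), estimate Pauli expectations by averaging $\pm1$ outcomes across many copies, and invoke~\cite{AR03} to handle a non-i.i.d.\ witness. The route differs in one respect: the paper's verifier draws a \emph{single} uniformly random pair $(i,P)$, measures \emph{all} $p(n)$ copies with that one observable, and accepts iff $|\hat p-\Tr(P\rho_i)|\le\alpha+\eps/2$. Soundness is then that for a no-instance the bad $(i,P)$ is hit with probability $1/(m\,4^k)$, giving acceptance at most $1-\Omega(1/(m4^k))$. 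This sidesteps your random-partition-into-batches step and the associated concentration over the partition; the price is only an inverse-polynomial soundness gap rather than the negligible one you obtain by testing every $(i,P)$ simultaneously. In short, the paper trades your stronger one-shot soundness for a shorter argument, and both reduce the entanglement subtlety to the $\QMA+\subseteq\QMA$ machinery of~\cite{AR03}.
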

\begin{proof}
  Let $((C_1,\rho_1),\ldots,(C_m,\rho_m))$ be an instance for CLDM.
  Let  $p$ be a polynomial such that $p(n)\eps^2 = \Omega(n)$, the verification
  system expects some state $\psi$ consisting of $p(n)$ copies of the state
  $\tau$ that is supposed to be consistent with the local density matrices. The
  verifier then picks $i \in [m]$ and $P \in \mathcal{P}_{|C_i|}$ uniformly at
  random. The verifier then measures each (supposed) one of the $p(n)$ copies
  according to the observable $P^{\reg{C_i}} \otimes I^{\reg{\overline{C_i}}}$, and let $\hat{p}$ be the
  average of its outcomes.
  The verifier accepts
  if and only if $\left|\hat{p} - \Tr(P \rho_i)\right|\leq \alpha + \frac{\eps}{2}$.

  \medskip
  In the completeness case, we have that $\psi = \tau^{\otimes \ell}$ and in this
  case,
   each of the $p(n)$ measurements is $1$ with probability
  $\Tr\left(\left(P^{\reg{C_i}}\otimes I^{\reg{\overline{C_i}}}\right) \tau\right)$ for every $i$.
  By Hoeffding's inequality, we have that with
  probability at least $1-2 \exp(p(n)\eps^2/8) = 1 - \negl(n)$,
   \begin{align*}
     \left|\Tr((P^{\reg{C_i}}\otimes I^{\reg{\overline{C_i}}})\tau) - \hat{p} \right|\leq \frac{\eps}{2}.
    \end{align*}
    Using the fact that $\tau$ is consistent with $\rho_i$, we also have that
    \begin{align*}
   \left|\Tr((P^{\reg{C_i}}\otimes I^{\reg{\overline{C_i}}})\tau) - \Tr(P\rho_i)\right|\leq \alpha,
    \end{align*}
   and therefore by the triangle inequality, the verifier accepts with probability at least
   $1 - \negl(n)$.

   \medskip
   For soundness, let $\psi_j$ be the reduced density
   state considering the register of the $j$th copy of the state. Let $\tau=
   \frac{1}{p(n)} \sum_{j} \psi_j$.
   Since we have a no-instance,  there exists some $i \in [m]$ such
   that
   $\trNorm{\Tr_{\overline{C_i}}(\tau) - \rho_i} \geq \beta$. Let us write
   \[\Tr_{\overline{C_i}}(\tau) - \rho_i = \sum_{P \in
   \mathcal{P}_{|C_i|}} \gamma_P P,\]
   for $\gamma_P = \Tr((P^{\reg{C_i}}\otimes I^{\reg{\overline{C_i}}}) \tau) - \Tr(P\rho_i)$.
   There must be some choice of $P \in \mathcal{P}_{|C_i|}$ such that $|\gamma_P| \geq
   \frac{\beta}{4^{|C_i|}}$.
   Notice that by the definition of $\tau$,
   \[\Tr((P^{\reg{C_i}} \otimes I^{\reg{\overline{C_i}}})\tau) = \frac{1}{p(n)}\sum_{j} \Tr((P^{\reg{C_i}} \otimes I^{\reg{\overline{C_i}}})\psi_j).\]
   In this case, if we measure each register corresponding to the $j$th copy of
   the state, then the expected value of its average is $\Tr((P^{\reg{C_i}} \otimes I^{\reg{\overline{C_i}}})\tau)$. Let
   $\hat{p}$ be the average of the outcomes of the performed measurements.
   Again
   using Hoeffding's inequality, we have that
  with probability at least $1-2 \exp(p(n)\eps^2/8) = 1 - \negl(n)$,
   \begin{align*}
     \left|\Tr((P^{\reg{C_i}} \otimes I^{\reg{\overline{C_i}}})\tau) - \hat{p}\right|\leq \frac{\eps}{2}.
    \end{align*}
   and therefore, we have that for this fixed $i$ and $P$, the prover accepts with
   probability $\negl(n)$. Since such $i$ is picked with probability
   $\frac{1}{m}$ and such a $P$ is picked with probability at least
   $\frac{1}{4^k}$, the overall acceptance is at most $1 -
   O\left(\frac{1}{m4^k}\right)$ (where we account for the negligible factors inside
   the $O$-notation).
\end{proof}

\subsection{Consistency of local density matrices is \QMA{}-hard}
\label{sec:cldm}

We show  now that CLDM is $\QMA$-hard under standard Karp reductions.

\begin{theorem}\label{thm:CLDM-hard}
  The consistency of local density matrices problem is $\QMA$-hard under Karp
  reductions.
\end{theorem}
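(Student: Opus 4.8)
The plan is to chain two constructions: Kitaev's circuit-to-Hamiltonian reduction (\Cref{lem:kitaev}) and the locally simulatable encoding of a $\QMA$ verifier. Fix $A=(\ayes,\ano)\in\QMA$ with verifier $\{Q_n\}$ of completeness $1-\negl$ and soundness $\negl$, and let $x$, $|x|=n$, have verification circuit $V_x$. First I would replace $V_x$ by the ``encoded'' circuit $V'_x$ of \Cref{S:new-verification}: $V'_x$ takes as witness $\Enc(\ket{\phi})$ (the $k$-fold concatenated Steane encoding of a witness $\ket{\phi}$ for $V_x$), then (i) runs a check $\mathsf{Chk}$ projecting the witness onto the code space, rejecting on failure; (ii) from raw qubits prepares encoded auxiliary states $\Enc(\ket{0})$ and encoded magic states $\Enc(\ket{\MS})$; (iii) simulates $V_x$ using transversal Clifford gates and magic-state $\T$-gadgets; (iv) decodes and reads the output qubit. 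Because the code admits transversal Cliffords and $\T$-gadgets, $V'_x$ on $\Enc(\ket{\phi})$ accepts with exactly the acceptance probability of $V_x$ on $\ket{\phi}$; and since an arbitrary witness is first projected onto the code space, $V'_x$ still has soundness $\negl$. Write $V'_x=U'_{T'}\cdots U'_1$ with $T'=\poly(n)$, each $U'_t$ acting on at most two qubits.

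Next I would apply \Cref{lem:kitaev} to $V'_x$, obtaining $H_{V'_x}=\sum_{i\in[m]}H_i$ with $m=\poly(n)$. With a unary clock, the clock-consistency, initialization, and output terms are $2$-local, while each propagation term $H^{prop}_t$ acts on the three clock qubits $t-1,t,t+1$ together with the $\le 2$ qubits in the support of $U'_t$, hence is $\le 5$-local; thus $k=5$. Let $\ket{\hist'}$ be the history state of $V'_x$. The key input from \Cref{sec:simulatable-history} is that, since the witness is always encoded and the code is $5$-simulatable, for every $S$ with $|S|\le 5$ the marginal $\Tr_{\comp{S}}(\kb{\hist'})$ is (a) independent of $\ket{\phi}$ for the clock-consistency, initialization and propagation terms, and (b) computable classically in $\poly(n)$ time: only $O(T')$ off-diagonal clock pairs survive the partial trace, and each contributes a $\le 5$-qubit marginal of an intermediate encoded state, which the code's simulator produces efficiently. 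The Karp reduction then outputs the $\CLDM$ instance $\big((C_i,\rho_i)\big)_{i\in[m]}$ where $C_i=\supp(H_i)$, each $\rho_i$ for a clock/initialization/propagation term is the simulator's output (rounded to $\poly(n)$ bits of precision), and $\rho_{out}$ is set by hand to the accepting configuration (logical output $\ket{1}$ at the final clock time, so $\tr{H^{out}\rho_{out}}=0$). I pick $\alpha=\negl(n)$ and $\beta=1/q(n)$ for a polynomial $q$ large enough that $m\beta$ is below the soundness bound $\Omega(\eps/(T')^3)=\Omega(1/\poly(n))$ from \Cref{lem:kitaev} (using $\eps\ge 1/2$) and simultaneously $\beta/4^5-\alpha\ge 1/\poly(n)$, so that \Cref{lem:containment-qma} applies.

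For completeness, if $x\in\ayes$ pick a good witness $\ket{\phi}$; running $V'_x$ on $\Enc(\ket{\phi})$ yields a genuine history state whose $\le 5$-qubit marginals equal the simulated $\rho_i$ exactly on clock/initialization/propagation terms (by witness-independence) and are $\negl$-close to $\rho_{out}$ on the output term (acceptance probability $1-\negl$). So this state witnesses a \textbf{Yes} instance with parameter $\alpha$. For soundness, suppose $x\in\ano$ yet some $\tau$ satisfies $\trNorm{\Tr_{\comp{C_i}}(\tau)-\rho_i}<\beta$ for all $i$. Each $\rho_i$ is a marginal of a genuine history segment on the terms it constrains, together with the hand-chosen accepting output marginal, so $\tr{H_i\rho_i}=0$ for every $i$; hence $\tr{H_{V'_x}\tau}=\sum_i\tr{H_i\Tr_{\comp{C_i}}(\tau)}\le\sum_i\big(\tr{H_i\rho_i}+\norm{H_i}\trNorm{\Tr_{\comp{C_i}}(\tau)-\rho_i}\big)<m\beta$. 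But \Cref{lem:kitaev} gives groundenergy of $H_{V'_x}$ at least $\Omega(\eps/(T')^3)>m\beta$, a contradiction, so the instance is a \textbf{No} instance. This makes the map a valid Karp reduction, and together with \Cref{lem:containment-qma} it yields $\QMA$-completeness.

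The main obstacle is the simulatability statement of \Cref{sec:simulatable-history} itself, and in particular the $\mathsf{Chk}$ and $\mathsf{Dec}$ phases of $V'_x$: decoding produces a marginal that \emph{a priori} depends on $\ket{\phi}$ (it records whether $V_x$ accepted), so both the definition of $V'_x$ and the simulator must be arranged so that the marginals used in the reduction are those of the ``ideal accepting'' branch while still agreeing with the true history-state marginals up to $\negl$ in the \textbf{Yes} case. This is precisely the point where the bug in \cite{GSY19} arises and needs to be fixed (see \Cref{R:bug}). A secondary, routine point is verifying that the $\negl$ approximation and rounding errors accumulated over the $\poly(n)$ terms stay within the $\alpha$ budget.
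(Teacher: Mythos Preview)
Your proposal is correct and follows essentially the same approach as the paper: reduce via \Cref{lem:kitaev} applied to the encoded verifier of \Cref{S:new-verification}, output the simulated $5$-qubit marginals as the $\CLDM$ instance, use the history state for completeness, and for soundness derive an energy bound $\Tr(H_{V'_x}\tau)\le m\beta$ from $\Tr(H_i\rho_i)=0$ to contradict the $\Omega(1/(T')^3)$ groundenergy lower bound. The only cosmetic differences are that the paper outputs marginals on \emph{all} size-$\le 5$ subsets (not just the supports $C_i$) and packages the simulator together with the zero-energy property into a single black-box \Cref{lem:gsy}, whereas you separate the ``witness-independent'' terms from the hand-set output marginal; both formulations yield the same argument, and you correctly flag that making the $\mathsf{Chk}$ phase simulatable is exactly where the one-time-pad fix of \Cref{R:bug} is required.
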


At a high level, the proof consists in showing
 a verification algorithm for every problem in \QMA{} such that the reduced density matrices of the history state of the verification procedure for yes-instances is simulatable. More precisely, we show that  for any fixed $s$,  there exists a verification algorithm $\VerSimX = U_T \cdots U_1$ such that there is a classical algorithm that outputs the classical description of the reduced density matrix of
 \begin{align}
\label{eq:simulatable-history}
  \Phi = \frac{1}{T+1}  \sum_{t,t' \in [T+1]} \ketbra{\unary(t)}{\unary(t')} \otimes U_t\cdots U_1 \left(\WitSim \otimes \kb{0}^{\otimes A}\right)U_1^{\dagger}\cdots U_{t'}^\dagger,
 \end{align}
on any subset of $s$ qubits in time $\poly(|x|,2^s)$.  Here, $\WitSim$ is some witness that makes $\VerSimX$ accept with probability $1-\negl(|x|)$\footnote{In particular, notice that this can only be true if $x$ is a yes-instance.}. This is formalized in the following Lemma.

\newcommand{\bodygsy}{
  For any problem $A = (\ayes,\ano)$ in $\QMA$ and $s \in \mathbb{N}$, there is a uniform family of verification algorithms $\VerSimX = U_T \cdots U_1$ for $A$ that acts on a witness of size $p(|x|)$ and $q(|x|)$ auxiliary qubits
such that there exists a polynomial-time deterministic algorithm $\Sim$ that on input $x \in A$
and $S \subseteq [T+p+q]$  with $|S| \leq 3s+2$, $\Sim(x,S)$ outputs the
  classical description of an $|S|$-qubit  density matrix $\rho(x,S)$
  with the following properties
  \begin{enumerate}
    \item If $x$ is a yes-instance, then there exists a witness $\WitSim$ that makes $\VerSimX$ accept with probability at least $1 - \negl(n)$ such that
      $\trnorm{\rho(x,S) -
      \Tr_{\overline{S}}(\hist)} \leq \negl(n)$, where
      \[\hist =
  \frac{1}{T+1}  \sum_{t,t' \in [T+1]} \ketbra{\unary(t)}{\unary(t')} \otimes U_t \cdots U_1 \left(\WitSim \otimes \kb{0}^{\otimes q}\right)U_1^{\dagger} \cdots U_{t'}^\dagger,
      \] is the history state of $\VerSimX$ on the witness $\WitSim$.
    \item Let $H_i$ be one term from the circuit-to-local Hamiltonian construction from $\VerSimX$ and $S_i$ be the set of qubits on which $H_i$ acts non-trivially. Then for every $x \in A$,  $\Tr(H_i\rho(x,S_i)) = 0$.
  \end{enumerate}
}

\begin{lemma}[Simulation of history states]\label{lem:gsy}
  \bodygsy
\end{lemma}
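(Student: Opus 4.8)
The plan is to build, for each instance $x$, an ``encoded'' verification circuit $\VerSimX$ whose history state is locally simulatable essentially by construction, and then to read off both required properties from the simulatability of the underlying quantum code. I would start from a $\QMA$ verification circuit $V_x = W_{T_0}\cdots W_1$ for $A$ over the Clifford$+\T$ gate set and, invoking the amplification of~\cite{MW05}, assume its completeness and soundness errors are $\negl(n)$ with no increase in witness length. Fix the $\ell$-fold concatenated Steane code $\calC$ with $\ell$ large enough that $\calC$ is $s'$-simulatable in the sense of~\cite{GSY19} (as reproved in \Cref{sec:simulatable-history}), where $s' := 2(3s+2)$; here $\ell$, and hence the block length $N := 7^{\ell}$, depends only on $s$. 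Let $\VerSimX$ be the circuit that receives a witness purported to be $\Enc(\ket{\phi})$; performs a unitarily-implemented code-space check on the witness registers --- the point at which the correction of \Cref{R:bug} to~\cite{GSY19} enters; prepares, from its raw ancillas, an $\Enc(\ket{0})$ for every logical ancilla of $V_x$ and an $\Enc(\T\ket{+})$ for every $\T$-gate of $V_x$; simulates $V_x$ on the encoded data using transversal Clifford steps and the unitary $\T$-gadgets of \Cref{fig:new-t-gadget}; decodes the logical output qubit via $V_{\mathrm{enc}}^{\dagger}$; and lets the output term of the Hamiltonian inspect the decoded bit. Every gate of $\VerSimX$ acts on at most two qubits, so the circuit-to-Hamiltonian construction of \Cref{sec:circuit-to-ham} yields local terms supported on at most $3+2 = 5 \le 3s+2$ qubits. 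Write $\VerSimX = U_T\cdots U_1$ with $T = \poly(|x|)$, so that $\hist$ is the corresponding history state.

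The crux is to show that for any $S$ with $|S|\le 3s+2$ the reduced state $\Tr_{\overline{S}}(\hist)$ is (i) independent of the logical witness and (ii) computable in time $\poly(|x|,2^{s})$. Split $S = S_{\mathrm{cl}}\sqcup S_{\mathrm{co}}$ into clock and computation qubits, and write $\ket{\xi_t} := U_t\cdots U_1(\ket{\WitSim}\otimes\ket{0}^{\otimes q})$ (we take the witness pure, which suffices for $\QMA$). Expanding $\hist$ and using $\Tr_{\overline{S_{\mathrm{cl}}}}\big(\ketbra{\unary(t)}{\unary(t')}\big) = \langle \unary(t')\,|\,\unary(t)\rangle_{\overline{S_{\mathrm{cl}}}}\cdot\ketbra{\unary(t)|_{S_{\mathrm{cl}}}}{\unary(t')|_{S_{\mathrm{cl}}}}$, the $(t,t')$ cross term survives only when $\unary(t)$ and $\unary(t')$ agree on $\overline{S_{\mathrm{cl}}}$, which for $t\le t'$ forces every integer in $[t,t')$ to lie in $S_{\mathrm{cl}}$, hence $|t-t'|\le |S_{\mathrm{cl}}|\le 3s+2$. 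For such a pair, $\ket{\xi_{t'}} = W_{t,t'}\ket{\xi_t}$ with $W_{t,t'} := U_{t'}\cdots U_{t+1}$ supported on a set $R_{t,t'}$ of at most $2|S_{\mathrm{cl}}|$ qubits, so $\Tr_{\overline{S_{\mathrm{co}}}}(\ketbra{\xi_t}{\xi_{t'}}) = \Tr_{\overline{S_{\mathrm{co}}}}\big(\ketbra{\xi_t}{\xi_t}\,W_{t,t'}^{\dagger}\big)$ is a fixed linear function of the reduced density matrix of $\ket{\xi_t}$ on $S_{\mathrm{co}}\cup R_{t,t'}$, a set of at most $|S_{\mathrm{co}}|+2|S_{\mathrm{cl}}|\le |S|+|S_{\mathrm{cl}}|\le s'$ qubits. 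Now $\ket{\xi_t}$ is a partially-executed snapshot of $\VerSimX$, and every primitive step of $\VerSimX$ --- transversal encoding/Clifford steps, unitary magic-state gadget steps, ancilla preparation, the code-space check, and (partial) decoding --- is of a type covered by the simulatability of $\calC$, so by \Cref{sec:simulatable-history} the reduced density matrix of $\ket{\xi_t}$ on any $\le s'$ qubits is computable in $\poly(|x|,2^{s'}) = \poly(|x|,2^{s})$ time and does \emph{not} depend on $\ket{\phi}$, with the single caveat that a subset containing the decoded output qubit at a late time is not oblivious (that qubit carries the logical output), and there the simulator substitutes $\ket{1}$. Thus $\Sim(x,S)$ iterates over all $t\in[T+1]$ and the $O(s)$ admissible $t'$ per $t$, queries the code simulator for each $\ket{\xi_t}$ on the $\le s'$ relevant qubits, forms $\Tr_{\overline{S_{\mathrm{co}}}}(\ketbra{\xi_t}{\xi_t}W_{t,t'}^{\dagger})$, tensors with the clock factor $\ketbra{\unary(t)|_{S_{\mathrm{cl}}}}{\unary(t')|_{S_{\mathrm{cl}}}}$, sums over all ordered pairs, and normalizes by $T+1$; the output $\rho(x,S)$ is deterministic and runs in $\poly(|x|,2^{s})$ time. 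Property~1 follows: for $x\in\ayes$ take $\ket{\phi}$ a good witness for $V_x$ and set $\WitSim := \kb{\Enc(\ket{\phi})}$; then $\VerSimX$ accepts with probability $1-\negl(n)$ (the encoded circuit simulates $V_x$ exactly up to the code-space check, which the honest witness passes), and $\rho(x,S)$ differs from $\Tr_{\overline S}(\hist)$ only through the $\ket{1}$-substitution at the decoded output, which costs $\negl(n)$ in trace distance since the true final snapshot has that qubit $\negl(n)$-close to $\ket{1}$.

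For Property~2, fix a local term $H_i$ with support $S_i$, $|S_i|\le 5\le 3s+2$, and note $\Tr(H_i\rho(x,S_i)) = \bra{\Phi}H_i\ket{\Phi}$, where $\ket{\Phi} = \tfrac{1}{\sqrt{T+1}}\sum_t \ket{\unary(t)}\ket{\xi_t}$ is the history state for \emph{any} fixed initial state of the prescribed form (auxiliary register $\ket{0}^{\otimes q}$, witness arbitrary). The clock-consistency and propagation terms annihilate $\ket{\Phi}$ exactly for every such initial state: the clock register is always a valid unary string, and $H^{\mathrm{prop}}_t$ is the standard $\tfrac12(\ket{a}-\ket{b}U_t)(\bra{a}-\bra{b}U_t^{\dagger})$-type operator, which cancels on $\ket{\unary(t-1)}\ket{\xi_{t-1}}+\ket{\unary(t)}\ket{\xi_t}$ because $\ket{\xi_t}=U_t\ket{\xi_{t-1}}$. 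The initialization terms touch only auxiliary qubits, which equal $\ket{0}$ at time $0$ regardless of $\ket{\phi}$, so they too vanish. The output term is annihilated because $\Sim$ places $\ket{1}$ on the decoded output at the final time by construction. This proves the lemma modulo routine computation. I expect the main obstacle to be the bookkeeping needed to check that every primitive step of $\VerSimX$ --- especially the code-space check and the decoding stage --- is genuinely covered by the simulatability guarantee of $\calC$ (this is precisely where the~\cite{GSY19} bug lives and \Cref{R:bug}'s fix is needed), together with handling the fact that the decoded output qubit is not oblivious, which is absorbed here into the $\negl(n)$ slack already present in Property~1.
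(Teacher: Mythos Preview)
Your high-level architecture --- build an encoded verifier, use the unary clock to kill cross terms with $|t-t'|$ larger than $|S_{\mathrm{cl}}|$, and reduce everything to simulating snapshots on $O(s)$ qubits --- matches the paper's proof exactly (the paper phrases the cross-term argument as an ``interval decomposition'' in \Cref{L:simulatable-interval}, but it is the same computation). Property~2 is also handled the same way.

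There is, however, a genuine gap in your construction of $\VerSimX$, and it is precisely the content of \Cref{R:bug} that you cite but misread. You take the witness to be $\WitSim = \kb{\Enc(\ket{\phi})}$ and then claim that the code-space check $\mathsf{ChkEnc}$ is ``of a type covered by the simulatability of~$\calC$''. It is not. During $\mathsf{ChkEnc}$ on the $q$-th logical block, the circuit acts on \emph{all} $N$ physical qubits of that block (plus ancillas); to simulate an intermediate step you must know the full $N$-qubit state of that block at the start of the check, namely $\Enc\big(\Tr_{\overline{\{q\}}}(\kb{\phi})\big)$. The simulatability guarantee of $\calC$ (\Cref{D:simulatable}, \Cref{lem:simulatable}) only gives you marginals on fewer than $D/2$ physical qubits of a codeword, and only for partial \emph{logical} gates --- it says nothing about the full block under an arbitrary (non-transversal) checking circuit. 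Since the one-qubit marginal $\Tr_{\overline{\{q\}}}(\kb{\phi})$ depends on the unknown witness, your simulator cannot compute this state. The paper's fix is exactly the piece you omitted: insert an intermediate verifier $\VerOtpX$ that expects a quantum one-time-padded witness together with the pad keys, so the honest $\WitOtp = \tfrac{1}{2^{2p'}}\sum_{a,b}\kb{a,b}\otimes X^aZ^b\kb{\phi}Z^bX^a$ has \emph{every} single-qubit marginal equal to $I/2$; then the block state during $\mathsf{ChkEnc}$ is $\Enc(I/2)$, a fixed state computable in $\poly(2^N)$ time regardless of $\ket{\phi}$ (see the proof of \Cref{L:simulatable-snapshot}, which flags this as ``the exact (and only) part of the proof where we need the properties of the intermediate verifier $\VerOtpX$''). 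Your $\VerSimX$ needs this extra one-time-pad layer; without it the $\mathsf{ChkEnc}$ phase is not simulatable.
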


\Cref{sec:simulatable-history} is devoted to a self-contained proof of \Cref{lem:gsy}. This proof is inspired by Lemmas $15$, $16$ and $17$ of~\cite{GSY19}. However, here we provide a simpler proof for a simpler statement,\footnote{In \cite{GSY19}, they are simulating history states for $\mathsf{MIP}^*$ computation and therefore they need to deal also with arbitrary
provers' operations} and we also fix a small bug in their proof.

Assuming the above Lemma, we are now ready to prove \Cref{thm:CLDM-hard}.

\begin{proof}[Proof of \Cref{thm:CLDM-hard}]
  Let $A = (\ayes, \ano)$ be a promise problem in $\QMA$ and let
  $\VerSimX$ be the \QMA{} verification circuit for $A$ stated in \Cref{lem:gsy}.

  Given input $x \in A$, our reduction consists of using $\Sim$ to compute the CLDM instance
  \begin{align} \label{eq:density-matrices}
    \{(S,\rho(x,S)) : \text{ $S$ is a subset of the qubits in the history state with } |S| \leq 5 \}.
  \end{align}

  We show now that if $x\in\ayes$, there exists a state that is consistent with all $\rho(x,S)$.
 Let $\WitSim$ and~$\VerSimX$ be given by \Cref{lem:gsy} for $s = 5$ (and thus $\VerSimX$ accepts $\WitSim$ with probability $1 - \negl(n)$).
  Then, also by
\Cref{lem:gsy}, we have that the history state
  $\hist$ of the computation of $\VerSimX$ on $\WitSim$ is consistent with
the reduced density matrices defined in
  \Cref{eq:density-matrices}.

  We show now that if $x \in \ano$, for every state $\tau$, there exists some $S \subseteq [T+p+q]$ with $|S| \leq
5$, such that
  $\trNorm{\Tr_{\overline{S}}(\tau)
  - \rho(x,S)} > \frac{1}{T^5}$.
  Let us assume, by way of contradiction, that there exists a $\tau$
  such that for every $S \subseteq [T+p+q]$ with $|S| \leq 5$, $\trNorm{\Tr_{\overline{S}}(\tau) - \rho(x,S)} \leq \frac{1}{T^5}$.
   Let~$H_{\VerSimX}$ be the $5$-local Hamiltonian resulting from the circuit-to-Hamiltonian construction on circuit $\VerSimX$.
We show that in this case,
      $\tau$ has energy $O\left(\frac{1}{T^4}\right)$ with respect to $H_{\VerSimX}$, which is a
      contradiction, since  $x \in \ano$ and therefore  $H_{\VerSimX}$ has groundenergy
      $\Omega(\frac{1}{T^3})$ by \Cref{lem:kitaev}. This finishes the proof.

Let $S_i$ be the set of at most $5$-qubits on which the $i$-th term of
  $H_{\VerSimX}$ acts and $\rho_i = \rho(x,S_i)$. We have that  the energy of such $\tau$ is at most
\[\Tr(H\tau) = \sum_{i} \Tr(H_i \Tr_{\overline{S_{i}}}(\tau)) \leq \sum_i \left(
\Tr(H_i \rho_i) + \frac{1}{T^5} \right)
\leq O\left(\frac{1}{T^4}\right),\]
where the first inequality
comes from the assumption that
$\trNorm{\Tr_{S_i}(\tau) - \rho_i} < \frac{1}{T^5}$ for all $i$, and
the second inequality follows since there are $O(T)$ terms and from
\Cref{lem:gsy} we  have that \mbox{$\Tr(H_i\rho_i) = 0$}.
\end{proof}

\subsection{Locally simulatable proofs}
\label{sec:simulatable-proof}

As previously mentioned, we present here the framework of \emph{locally simulatable proofs}, which combines in an abstract way the strong points of the local-Hamiltonian problem and CLDM. More concretely,
locally simulatable proofs allow us to perform \QMA{} verification by only checking a constant number of qubits (as in the local-Hamiltonian problem), while having full knowledge of the reduced density matrices of small subsets of the qubits of a good witness (as in CLDM).

\begin{definition}[$k$-$\SimQMA$]\label{def:simulatable-proof}
A promise problem $A=(\ayes,\ano)$ is in the complexity class $k$-$\SimQMA$ with
  soundness $\beta(n) \leq 1- \frac{1}{\poly(n)}$
if there exist
  polynomials $m$, $p$ such that
  given $x \in A$,  there is an efficient deterministic algorithm that computes
  $m(|x|)$
  $k$-qubit POVMs $\{\Pi_1, I-\Pi_1\}, \ldots , \{\Pi_{m(|x|)}, I -\Pi_{m(|x|)}  \}$, that act on some quantum state of size $p(|x|)$, such that:
  \begin{description}
    \item[] \textbf{Simulatable completeness:}  If $x\in\ayes$, there exist
      a $p(|x|)$-qubit
      state $\tau$, that we call a {\em simulatable witness},
      and a set of $k$-qubit density matrices $\{\rho(x,S)\}_{\substack{S \subseteq
  [p(n)] \\ |S| = k}}$ that can be computed in polynomial time from $x$,
      such that for all $c \in [m]$
      \[\Tr(\Pi_c \tau) \geq
      1-\negl(|x|),\]
      and for every $S \subseteq [p(n)]$ of size $k$
      \[\trNorm{\Tr_{\overline{S}}(\tau) - \rho(x,S)} \leq \negl(|x|). \]
    \item[] \textbf{Soundness:} If $x\in\ano$, for any $p(|x|)$-qubit state
      $\tau$ we have that
      \[\frac{1}{m}\sum_{c \in [m]} \Tr(\Pi_c \tau) \leq \beta(|x|).\]
  \end{description}
\end{definition}

\begin{lemma}\label{lem:simulatable-proof}
  Every problem in \QMA{} is in $5$-$\SimQMA$.
\end{lemma}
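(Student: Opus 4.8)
The plan is to instantiate \Cref{lem:gsy} with $s = 1$, so that $3s+2 = 5$: this yields a uniform family of verification algorithms $\VerSimX = U_T\cdots U_1$ for $A$, acting on a $p(|x|)$-qubit witness together with $q(|x|)$ auxiliary qubits, and a deterministic polynomial-time simulator $\Sim$. Let $H_{\VerSimX} = \sum_{c \in [m]} H_c$ be the $5$-local Hamiltonian obtained by applying the circuit-to-Hamiltonian construction of \Cref{sec:circuit-to-ham} to $\VerSimX$, where $m = \poly(|x|)$, each $H_c$ satisfies $0 \le H_c \le I$, and $H_c$ acts non-trivially on a set $S_c$ of at most $5$ qubits. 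I would take the witness of the $5$-$\SimQMA$ verifier to be the $N$-qubit register carrying the clock, the $p(|x|)$-qubit witness, and the $q(|x|)$ auxiliary qubits, where $N = \poly(|x|)$, and define the $c$-th POVM by $\Pi_c := I - H_c$ (padded with identities, if $|S_c| < 5$, to act on exactly $5$ qubits). This family of POVMs is plainly computable in deterministic polynomial time from $x$, and each $\{\Pi_c, I - \Pi_c\} = \{I - H_c, H_c\}$ is a legitimate binary POVM supported on at most $5$ qubits.

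For \textbf{simulatable completeness}, take $x \in \ayes$ and let $\WitSim$ be the witness guaranteed by \Cref{lem:gsy}, which makes $\VerSimX$ accept with probability $1 - \negl(|x|)$, with associated history state $\hist$. By \Cref{lem:kitaev}, $\Tr(H_{\VerSimX}\,\hist) = \negl(|x|)$; since each $H_c \ge 0$, this forces $0 \le \Tr(H_c\,\hist) \le \Tr(H_{\VerSimX}\,\hist) = \negl(|x|)$, and hence $\Tr(\Pi_c\,\hist) \ge 1 - \negl(|x|)$ for every $c \in [m]$. Setting the simulatable witness $\tau := \hist$ and $\rho(x,S) := \Sim(x,S)$ for each $5$-qubit set $S$, property~(1) of \Cref{lem:gsy} gives $\trNorm{\Tr_{\overline{S}}(\tau) - \rho(x,S)} \le \negl(|x|)$. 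Finally, there are only $\binom{N}{5} = \poly(|x|)$ such sets $S$ and each call $\Sim(x,S)$ runs in polynomial time, so all the $\rho(x,S)$ are produced in polynomial time from $x$, as the definition requires.

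For \textbf{soundness}, take $x \in \ano$. By the soundness of the $\QMA$ verification $\VerSimX$, every state is rejected with probability $\eps = 1 - \negl(|x|)$, so by \Cref{lem:kitaev} the ground energy of $H_{\VerSimX}$ is $\Omega(\eps / T^3) = \Omega(1/T^3)$. Therefore, for any $N$-qubit state $\tau$,
\[
\frac{1}{m}\sum_{c \in [m]} \Tr(\Pi_c\,\tau) \;=\; 1 - \frac{1}{m}\Tr(H_{\VerSimX}\,\tau) \;\le\; 1 - \Omega\!\left(\frac{1}{mT^3}\right),
\]
which has the form $1 - 1/\poly(|x|)$; this establishes soundness with $\beta(|x|) = 1 - \Omega\bigl(1/(mT^3)\bigr)$. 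The entire difficulty of producing the $5$-qubit marginals $\rho(x,S)$ efficiently and consistently is packaged inside \Cref{lem:gsy}, whose self-contained proof occupies \Cref{sec:simulatable-history}; the main obstacle is therefore located there, and what remains in this lemma is the routine bookkeeping of checking that the history-state register is the correct object to hand to the verifier, that $\Pi_c = I - H_c$ is a valid POVM element on at most $5$ qubits, and that the inverse-polynomial energy gap of \Cref{lem:kitaev} survives the $1/m$ averaging.
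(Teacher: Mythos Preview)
Your proposal is correct and follows essentially the same approach as the paper's (sketch of a) proof: invoke \Cref{lem:gsy} to get a verification circuit whose history state has efficiently computable $5$-qubit marginals, apply the circuit-to-Hamiltonian construction, take $\Pi_c = I - H_c$ (exactly as in \Cref{L:local-verification}), and read off completeness and soundness from \Cref{lem:kitaev}. The only cosmetic difference is that you instantiate \Cref{lem:gsy} with $s=1$ (so $3s+2=5$), whereas the paper, in its proof of \Cref{thm:CLDM-hard}, uses $s=5$; your choice is the minimal one and is perfectly consistent with the statement of \Cref{lem:gsy} as written.
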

\begin{proof}[Proof (Sketch).]
  We can consider the POVMs that arise from the verification of the Local
  Hamiltonian problem and the density matrices that are the simulation of the
  history state of the computation.
  From~\Cref{L:local-verification}  and \Cref{lem:gsy},
  the result follows.
\end{proof}

\section{Simulation of history states}
\label{sec:simulatable-history}
In this section, we prove \Cref{lem:gsy}. Technically, we achieve this result using a recent notion
defined by Grilo, Slofstra and Yuen~\cite{GSY19} called
\emph{simulatable codes}.  In a simulatable code,  any $s$-qubit reduced density matrix of
a codeword can be efficiently computed independently of the encoded state. In order to stress this independence of the encoded state, we say that the reduced density matrix is \emph{simulated}.
Furthermore, the reduced density states  of the intermediate steps of the physical computation corresponding to a {\em logical} gate on encoded data (either through transversal
gates or computation by teleportation with magic states) can also be
efficiently computed. We refer to Example~\ref{ex:simulatability} and \Cref{sec:techniques} for a more detailed overview of such a simulation.

\begin{definition}\label{D:simulatable}

  Let $\calC$ be a $[[N,1,D]]$-QECC  that allows universal quantum computation on the encoded data by applying logical gates from a universal gateset $\mathcal{G}$
  with transversal gates (and possibly with the help of magic states).
  Let $G \in \mathcal{G}$ be a logical gate acting on $k_G$
  qubits, $U^{(G)}_1, \ldots ,U^{(G)}_{\ell}$ be the
  transversal circuit that is applied to the physical qubits of the encoding of a $k_G$-qubit state and  logically applies $G$ on the data through $\ell = \poly(N)$ physical gates, with the help of an $m_G$-qubit magic state~$\tau_G$.
 We say that $\calC$ is
    \emph{$s$-simulatable} if there is a deterministic algorithm $\SimC$ that
    receives as input  $G \in \mathcal{G}$,  a value $0 \leq t \leq \ell$ and a
    subset~$S \subseteq [N(m_G + k_G)]$ with $|S| \leq s$ and $\SimC(G,t,S)$
    runs in time $\poly(2^N)$ and outputs the classical
    description~\footnote{The classical description of a density matrix consists
    of each entry of the matrix with $\poly(n)$ bits of precision.} of an $|S|$-qubit density matrix $\rho(G,t,S)$ such that for every $k_G$-qubit state $\sigma$
    \begin{equation*}
      \rho(G,t,S) = \Tr_{\overline{S}}\left((U^{(G)}_t \cdots U^{(G)}_1) \Enc(\sigma\otimes \tau_G)  (U^{(G)}_{t}
      \cdots U^{(G)}_1)^\dagger\right).
    \end{equation*}
\end{definition}
\begin{remark}\label{R:decomposition-gates}
    In this work, we consider mostly QECCs that admit transversal Clifford gates and then use magic state $\ket{\T}$ to compute $\T$-gates. Thus, for concreteness,
    we take that $\mathcal{G} = \{\CNOT,\Pg,\Had,\T\}$. In this case $k_G = 2$ for  $G = \CNOT$ gate and $k_G = 1$ for the other gates, and $\tau_{\T} = \kb{\T}$  and no magic state is needed for the other gates.
  Notice that the  gates $U^{(G)}_1,\ldots ,U^{(G)}_{\ell}$ are publicly known, and therefore they do not need to be a parameter for $\SimC$.
\end{remark}

\begin{remark}\label{R:weaker-simulatable}
    \Cref{D:simulatable} is slightly weaker from the one defined in \cite{GSY19}. There, the runtime of the simulator is $\poly(2^s)$ (whereas ours is $\poly(2^N)$). That is,  the runtime of their simulator depends on the number of qubits whose density matrix is simulated whereas in our case, our simulator depends on the  size of the codeword. However, we notice that in the applications of simulatable codes  in \cite{GSY19} as well as here, QECCs of constant size are considered, and therefore our weaker definition suffices.
    \end{remark}

We call these codes   {\em locally} simulatable codes in order to emphasize
that only small parts of codewords can be simulated. In particular,
in~\cite{GSY19}, it was shown that the concatenated Steane code is locally
simulatable for some~$s$.

\begin{lemma}[\cite{GSY19}]\label{lem:simulatable}
   For every $k >
  \log(s+3)$, the $k$-fold concatenated Steane code is $s$-simulatable.
\end{lemma}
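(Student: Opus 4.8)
The plan is to reduce the statement to a single \emph{local indistinguishability} principle for the $k$-fold concatenated Steane code $\calC$, and then take $\SimC$ to be brute-force simulation of the (constant-size) physical circuits that the principle certifies to be input-oblivious. Recall the standard facts about $\calC$: it is a $[[7^k,1,3^k]]$ CSS code on which every gate of $\{\CNOT,\Pg,\Had\}$ is transversal, and $\T$ is realised by a gadget that consumes one encoded copy of $\ket{\T}$ using only transversal $\CNOT$, a transversal single-qubit measurement of the magic block, and a classically controlled transversal Clifford correction (cf.~\Cref{R:decomposition-gates}); I will take these as given. The hypothesis $k>\log(s+3)$ guarantees $3^k>2^k>s+3$, so the distance $D=3^k$ exceeds $s$ by a margin of several qubits, and it is precisely this slack that the argument below consumes.

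The core claim to isolate is: for $\calC$, and for the two-block product code $\Enc\otimes\Enc$, and any region $S$ that touches each block in fewer than $D$ physical qubits, the reduced state $\Tr_{\overline S}(\Enc(\sigma))$ (resp.\ $\Tr_{\overline S}(\Enc\otimes\Enc(\rho))$) is the \emph{same} density matrix for every logical input $\sigma$ (resp.\ two-qubit $\rho$). This is the familiar Knill--Laflamme / quantum-secret-sharing fact: any region with fewer than $D$ qubits per block is correctable (here using the CSS structure and, for two blocks, that a product region is correctable iff each part is), and $\langle i_L|E^\dagger F|j_L\rangle=c_{EF}\delta_{ij}$ for Pauli $E,F$ supported on $S$ forces $\Tr_{\overline S}(\ketbra{i_L}{j_L})$ to vanish off the diagonal and to be $i$-independent on it. Hence $\Tr_{\overline S}(\Enc(\sigma))$ can be evaluated by substituting a fixed canonical input such as $\kb{0}$ and performing the encoding explicitly; since the codewords involved have $O(7^k)$ qubits, this costs $\poly(2^N)$ time, in line with \Cref{R:weaker-simulatable}.

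With this in hand the transversal gates $G\in\{\CNOT,\Pg,\Had\}$ are immediate: after $t$ physical steps the circuit has applied a product $W_t$ of single-qubit gates (for $\Pg,\Had$) or block-aligned two-qubit gates (for $\CNOT$). Whenever every such elementary gate lies entirely inside $S$ or entirely inside $\overline S$, $W_t$ has the form $A^{\reg S}\otimes B^{\reg{\overline S}}$, which commutes with $\Tr_{\overline S}$ up to conjugation by $A^{\reg S}$; so $\rho(G,t,S)=A^{\reg S}\,\Tr_{\overline S}(\Enc^{\otimes k_G}(\rho))\,(A^{\reg S})^\dagger$, evaluated at the canonical $\rho$. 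When a two-qubit gate of $W_t$ straddles $S$ and $\overline S$, I instead compute the $(|S|{+}1)$-qubit marginal on $S$ together with the straddling partner — still fewer than $D$ qubits per block, which is exactly where the margin in $k>\log(s+3)$ is needed — apply that one gate, and trace the partner out. Either way $\SimC$ simulates only a constant-size circuit on a canonical input, so it is efficient and its output is manifestly $\sigma$-independent.

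The main obstacle is the $\T$-gadget, where the data block becomes entangled with the magic block and a measurement intervenes; I expect verifying input-obliviousness here to be the crux. I would argue in three stages. First, the transversal $\CNOT$ coupling the blocks is block-aligned, so after any number of its physical steps the state is an $\Enc\otimes\Enc$-encoding of a two-qubit logical state; since $|S|\le s<D$ puts fewer than $D$ qubits of $S$ in each block, the product-code instance of the core claim makes the $S$-marginal independent of the data $\sigma$ (and of the, in any case fixed, magic state), computable canonically with the straddling-pair trick absorbing the extra qubit. Second, the classically controlled transversal Clifford correction is again block-aligned, so the analysis of the previous paragraph applies verbatim, treating the classical control as a mixture. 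Third, and most delicately, the transversal measurement of the magic block must be shown not to leak $\sigma$ into the $S$-marginal: one writes measure-and-correct as a single channel $\mathcal E$, verifies that the physical qubits $\mathcal E$ acts on, together with $S$, still form a region that is correctable block-by-block, and concludes that $\Tr_{\overline S}\circ\mathcal E$ maps $\sigma$-independent states to $\sigma$-independent states — whence the simulator evaluates $\mathcal E$ on the canonical input. Pinning down exactly which physical qubits the correction touches, so that no intermediate time $t$ and no $S$ of size $\le s$ ever exposes more than a correctable region, is the delicate point (and is where the bug of~\Cref{R:bug} must be fixed); it is also where the quantitative hypothesis $3^k>s+3$, rather than merely $3^k>s$, is genuinely used. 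Assembling the three stages gives a $\SimC$ running in $\poly(2^{7^k})=\poly(2^N)$ time that meets \Cref{D:simulatable}, proving $s$-simulatability of $\calC$ for every $k>\log(s+3)$.
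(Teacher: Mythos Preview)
Your three-part structure (codewords, transversal Cliffords, $\T$-gadget) matches the paper's decomposition into \Cref{L:part1}, \Cref{L:cliff}, \Cref{L:T-gadget}, and the codeword case is the same argument. For transversal Cliffords you take a different route: the paper expands $\Tr_{\overline S}(\cdot)$ in the Pauli basis and uses the Clifford property to pull each Pauli $P$ through $U_t\cdots U_1$ to a Pauli $P'$ supported on at most $2|S|$ qubits, then invokes \Cref{L:part1} on that larger region. Your ``enlarge $S$ to absorb straddling partners'' idea is also sound in principle, but note that among $U_1,\ldots,U_t$ there can be up to $|S|$ straddling two-qubit gates (one per qubit of $S$), not just one; so the required marginal has size up to $2|S|$, not $|S|{+}1$. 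Both approaches thus end up needing roughly $2s$ qubits to be information-hiding.

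The genuine gap is the $\T$-gadget. First, \Cref{D:simulatable} is stated for a \emph{unitary} sequence $U_1,\ldots,U_\ell$ (and the downstream application to history states requires this), so a measurement-based gadget does not instantiate the definition; the paper instead analyses the unitary three-block gadget of \Cref{fig:new-t-gadget}, which copies the data into an $\Enc(\ket{0})$ ancilla, \emph{decodes} the middle block, and uses the resulting physical qubit to coherently control transversal $\X$ and $\Pg$ on the third block. Second, your argument for the measurement phase is unsound as written: the measure-and-correct channel $\mathcal E$ you describe acts on all $N$ qubits of the measured block and, via the transversal correction, all $N$ qubits of the other block, so ``the qubits $\mathcal E$ acts on together with $S$'' is never a correctable region and input-obliviousness cannot be read off from Knill--Laflamme alone. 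The paper's actual argument is structural rather than region-counting: after the two logical $\CNOT$s, tracing out the ancilla block leaves the explicit classical mixture $\tfrac12\sum_b \kb{b}\otimes (\X\Pg^\dagger)^b\kb{\psi}(\Pg\X)^b$; \Cref{L:logical-trace} then shows that during the decode/re-encode of the middle block the $S$-marginal factors as $\tau_{S\cap E_1}\otimes(\text{explicitly computable})\otimes\tau_{S\cap E_3}$, and the controlled-Clifford phase reduces to the classically-controlled variant already handled in \Cref{L:cliff}. Finally, \Cref{R:bug} is unrelated to the $\T$-gadget: it concerns adding \textsf{ChkEnc} to the verification circuit $\VerSimX$ (and the one-time-pad layer $\VerOtpX$ that makes that step simulatable), not the proof of \Cref{lem:simulatable} itself.
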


The proof of \Cref{lem:simulatable} in \cite{GSY19} is somewhat involved, going through extensive calculations using the stabilizer formalism for QECCs. Here, as a side contribution of this work (see \Cref{sec:simulatable-codes}), we present a new, simpler, proof for \Cref{lem:simulatable}.\footnote{We remark, however, that our proof does not work for the stronger notion of simulatability that would come with different parameters of \Cref{D:simulatable} as discussed in \Cref{R:weaker-simulatable}. } This not only makes our contribution self-contained, but we believe it  might facilitate the use of such notions in other contexts.
Our proof relies  on the fact that quantum error correcting codes are good for  ``hiding secrets'': if we consider a subset of qubits of a quantum error correcting code {\em smaller} than the number or errors that can be corrected, then this reduced density matrix is independent of the encoded state. We push this observation further and show that this independence of the encoded state also works for intermediate steps of the computation on encoded data.  We describe the detailed connection between QECCs and secret sharing in \Cref{sec:secret-sharing} and provide our new proof of \Cref{lem:simulatable} in \Cref{sec:simulatable-codes} (and for an intuitive description, recall the discussion in \Cref{sec:intro-simulatable}).

In \Cref{sec:proof-gsy}, we prove \Cref{lem:gsy} assuming \Cref{lem:simulatable} (for pedagogical reasons, our proof of \Cref{lem:simulatable} is deferred to a later section, as decribed above).  Very roughly, the proof proceeds by lifting the simulation of QECC to the simulation of the history states of \QMA{} computation. More concretly, we start with a verification algorithm $V_x$ for some problem $A$ in \QMA{} and we transform it into another \QMA{} verification algorithm~$\VerSimX$ for~$A$ with the same completeness and soundness parameters, but which  allows us to perform the simulation. The verification circuit $\VerSimX$ expects an encoded version of the \QMA{} witness $\ket{\psi}$ of $V_x$, where each qubit of $\ket{\psi}$ is encoded using a simulatable code, and then performs $V_x$ on the encoded data using techniques from fault-tolerant quantum computation.
With this new encoded verification circuit for \QMA{}, along with the techniques from \cite{GSY19}, we are able to show how to compute reduced density matrices of the history state of~$\VerSimX$ when a good witness is provided. %
The details of how to construct $\VerSimX$ are presented in \Cref{S:new-verification} and we show in \Cref{S:simulatable-history} that the history state of $\VerSimX$ is efficiently simulatable.

\subsection{QECCs and secret sharing
}\label{sec:secret-sharing}

Here,  we provide a proof of what can be seen as a ``composable''\footnote{By composable, we mean that it also considers the purification of the encoded state and we make no claim regarding the notion of Universal Composability in cryptography.} statement to the fact that quantum error correcting codes can be used for secret sharing~\cite{CGL99}. This is used later in \Cref{sec:proof-gsy,sec:simulatable-codes}.

\begin{lemma}
\label{L:logical-trace}
  Let $\calC$ be an $[[N,1,D]]$-QECC, $A$ be a quantum register (possibly entangled with some environment register $E$), and $A'$ be the register that is the output of encoding a quantum system in register $A$ under $\calC$. Let also $S$ be a subset of size at most $(D-1)/2$ of the qubits in $A'$ and $\overline{S} = A' \setminus S$. Then we have that there exists some state $\tau_{S}$ such that for all $\ket{\psi}_{AE}$
  \[
    \Tr_{\overline{S}}((\Enc \otimes I)(\kb{\psi}_{AE})) = \tau_S \otimes \Tr_{A}(\kb{\psi}_{AE}),
  \]
  where the encoding on the LHS only acts on register $A$.
\end{lemma}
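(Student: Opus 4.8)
The plan is to prove the identity directly from the error-detection property of a distance-$D$ code, by testing the two sides against a spanning family of product observables.

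First I would fix notation: the encoding $\Enc$ (which, by the overloaded notation of \Cref{sec:codes}, acts on the $k$-qubit register $A$ as $\Enc^{\otimes k}$ and produces the $kN$-qubit register $A'$) is implemented by an isometry $V : A \to A'$ with $V^\dagger V = I_A$; write $\Pi := V V^\dagger$ for the projector onto the code space, so that $\Pi V = V$, $V^\dagger \Pi = V^\dagger$, and $(\Enc \otimes I_E)(\kb{\psi}_{AE}) = (V \otimes I_E)\kb{\psi}_{AE}(V^\dagger \otimes I_E)$. Since each of the $k$ blocks carries the same $[[N,1,D]]$ code, $V = V_1 \otimes \cdots \otimes V_k$ and $\Pi = \Pi_1 \otimes \cdots \otimes \Pi_k$.

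The structural input is the error-detection property of a code of distance $D$: for any operator $O$ supported on a set of at most $D-1$ physical qubits of one block, $\Pi_j (O \otimes I)\Pi_j = \lambda_j(O)\,\Pi_j$ for a scalar $\lambda_j(O)$ depending linearly on $O$ (immediate from the definition of distance applied to the Pauli decomposition of $O$). Applying this blockwise — using that $S$ meets each block $A'_j$ in a set of size at most $|S| \le (D-1)/2 \le D-1$ — and decomposing an arbitrary operator $O_S$ on $S$ as a sum of tensor products across the blocks, I obtain a scalar $\lambda(O_S)$, linear in $O_S$, with $\Pi(O_S \otimes I_{\overline{S}})\Pi = \lambda(O_S)\,\Pi$. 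I then define $\tau_S := \Tr_{\overline{S}}\big(\Enc(\kb{0}^{\otimes k})\big)$, which is manifestly a density matrix on $S$, and a one-line computation (pull $V$ through the trace, then use $\Pi V = V$ and $V^\dagger V = I_A$) gives $\Tr(O_S\,\tau_S) = \lambda(O_S)$ for every $O_S$ acting on $S$.

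Finally I would compare the two sides of the claimed identity on the spanning family of product observables $O_S \otimes M_E$, with $O_S$ on $S$ and $M_E$ on $E$. For the left-hand side, cyclicity of the trace together with $V^\dagger(O_S \otimes I_{\overline{S}})V = V^\dagger \Pi (O_S \otimes I_{\overline{S}})\Pi V = \lambda(O_S)\,V^\dagger V = \lambda(O_S) I_A$ gives $\Tr\big[(O_S \otimes I_{\overline{S}} \otimes M_E)(\Enc \otimes I_E)(\kb{\psi}_{AE})\big] = \lambda(O_S)\,\Tr\big[M_E\,\Tr_A(\kb{\psi}_{AE})\big]$, while the right-hand side gives $\Tr(O_S\tau_S)\cdot \Tr\big[M_E\,\Tr_A(\kb{\psi}_{AE})\big] = \lambda(O_S)\,\Tr\big[M_E\,\Tr_A(\kb{\psi}_{AE})\big]$; since such product observables span the operators on $S \otimes E$, the two operators coincide. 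The only point requiring care is the multi-block bookkeeping: $A$ may consist of several qubits, each encoded in its own block of $\calC$, and one must check that $S$ hits each block in fewer than $D$ qubits (which holds since $|S| \le (D-1)/2$) so that the detection property can be applied blockwise even though the logical state — and its purification in $E$ — may be entangled across blocks. Beyond this I expect no real obstacle; in particular the hypothesis $|S| \le (D-1)/2$ is comfortably stronger than the $|S| \le D-1$ that the argument actually uses.
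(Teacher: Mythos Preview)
Your proof is correct and takes a genuinely different route from the paper. The paper argues operationally: it builds a circuit that encodes $A$, swaps out the $S$ qubits for fresh $\ket{0}$'s, and then runs the error-correction procedure on the remainder; since $|S| \leq (D-1)/2$ the encoded state is recovered in full, so the swapped-out piece must be in tensor product with it, and a separate indistinguishability argument (iterating the circuit to get many copies of the $S$-marginal from a single copy of $\ket{\psi}$, then invoking optimal state discrimination) shows that this piece cannot depend on $\ket{\psi}$. You instead go straight through the Knill--Laflamme error-detection condition $\Pi(O_S \otimes I)\Pi = \lambda(O_S)\Pi$ and test both sides against product observables $O_S \otimes M_E$; this is more algebraic, avoids the roundabout cloning/distinguishability step, and --- as you correctly observe --- actually only needs $|S| \leq D-1$ rather than $|S| \leq (D-1)/2$, since you use detection rather than correction. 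Your explicit blockwise bookkeeping for multi-qubit $A$ is also cleaner than the paper's somewhat implicit treatment. The paper's approach has the advantage of using the QECC purely as a black box (encode/correct), whereas yours unpacks the code-space projector; but in terms of directness and strength of conclusion your argument is the better one.
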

\begin{proof}
  Let us consider the  circuit given in \Cref{Fig:secret-sharing-QECC} that receives $\ket{\psi}_{AE}$, encodes regiser $A$, swaps the qubits of~$S$ with $\ket{0}^{\otimes |S|}$ and finally applies the correction procedure of the QECC with the qubits in~$\overline{S}$ and the fresh auxiliary qubits.
 \begin{figure}
  \begin{center}
    \includegraphics{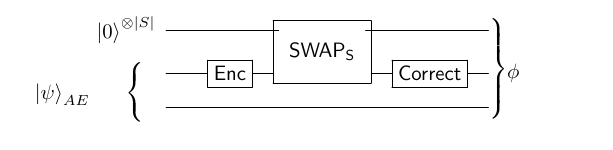}
  \end{center}
   \caption{Circuit that encodes subsystem $A$ of some input state, SWAPs a subregister $S$, $|S| \leq (D-1)/2$, of the encoding, and then applies the error correction procedure on the remaining state} \label{Fig:secret-sharing-QECC}
\end{figure}
Notice that since $|S| \leq (D-1)/2$, for any qubit $\rho$, we have that the error correction procedure maps $\kb{0}^{\otimes |S|}\otimes \Tr_{S}(\Enc(\rho))$ to $\Enc(\rho)$, and therefore
  we have that $\phi = \tau_{S,\ket{\psi}_{AE}} \otimes (\Enc \otimes I)(\kb{\psi}_{AE})$, where $\tau_{S,\ket{\psi}_{AE}} = \Tr_{\overline{S}}((\Enc \otimes I)(\kb{\psi}_{AE})$ and we have a tensor product structure, since we have that $\ket{\psi}$ is a pure state.
  We show now that  $\tau_{S,\ket{\psi}_{AE}}$ has to be independent of $\ket{\psi}_{AE}$, \emph{i.e.}, there exists some $\tau_S$ such that
  $\tau_{S,\ket{\sigma}_{AE}} = \tau_{S}$ for all $\ket{\sigma}_{AE}$.  Notice that this finishes the proof since it implies
  \[
    \Tr_{S}((\Enc \otimes I)(\kb{\psi}_{AE})) = \tau_S \otimes \Tr_{A}(\kb{\psi}_{AE}).
  \]

  Let us then prove the existence of $\tau_S$. Let us assume towards a contradiction that there exist some state $\ket{\rho}$, orthogonal to $\ket{\psi}$ such that $\tau_{S,\ket{\psi}} \ne \tau_{S,\ket{\rho}}$. By linearity, we have that $\tau_{S,\ket{\psi}} \ne \tau_{S,\frac{1}{\sqrt{2}}(\ket{\psi} + \ket{\rho})}$. Notice that we can repeat the above circuit as many times as we want and get
  $\tau_{S,\ket{\psi}}^{\otimes k}$ ($\tau_{S,\frac{1}{\sqrt{2}}(\ket{\psi} + \ket{\rho})}^{\otimes k}$) given a single copy of $\ket{\psi}$ (resp. $\frac{1}{\sqrt{2}}(\ket{\psi} + \ket{\rho})$). In particular if we pick sufficiently large $k$, we have a way to distinguish $\ket{\psi}$ from $\frac{1}{\sqrt{2}}(\ket{\psi} + \ket{\rho})$ with probability strictly larger than $\frac{1}{2}$~\cite{HW12}, but
this is a contradiction since the best success probability is $\left|\bra{\psi}\left(\frac{1}{\sqrt{2}}(\ket{\psi} + \ket{\rho})\right)\right|^2 = \frac{1}{2}$.
\end{proof}

The result in \cite{CGL99} (summarized in \Cref{C:logical-trace-one-qubit}) follows directly from \Cref{L:logical-trace} by considering a trivial system $E$.

\begin{corollary}\label{C:logical-trace-one-qubit}
  Let $\calC$ be an $[[N,1,D]]$-QECC. For any $S$ such that $|S| \leq (D-1)/2$, there exists some density matrix $\tau_S$ such that for all $\psi$
  \[
    \tau_S = \Tr_{\overline{S}}(Enc(\psi)).
  \]

\end{corollary}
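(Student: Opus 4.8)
The plan is to obtain \Cref{C:logical-trace-one-qubit} as the special case of \Cref{L:logical-trace} in which the environment register $E$ is taken to be trivial (one-dimensional). First I would observe that when $E$ carries no qubits, a generic state $\ket{\psi}_{AE}$ is simply a pure state $\ket{\psi}_A$ on the register $A$ that is being encoded, and the marginal $\Tr_A(\kb{\psi}_{AE})$ is then the scalar $1$. Substituting this into the conclusion of \Cref{L:logical-trace}, namely $\Tr_{\overline{S}}((\Enc\otimes I)(\kb{\psi}_{AE})) = \tau_S \otimes \Tr_A(\kb{\psi}_{AE})$, the right-hand tensor factor collapses and we get precisely $\Tr_{\overline{S}}(\Enc(\psi)) = \tau_S$ for every pure single-qubit state $\psi$, with the same $\tau_S$ furnished by \Cref{L:logical-trace} (which exists exactly because $|S| \leq (D-1)/2$).

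It then remains to pass from pure states to arbitrary density matrices $\psi$, which I would do by linearity: writing $\psi = \sum_i p_i \kb{\psi_i}$ as a convex combination of pure states and using that $\Enc$ is an isometry (hence linear) and that the partial trace is linear, we have $\Tr_{\overline{S}}(\Enc(\psi)) = \sum_i p_i \Tr_{\overline{S}}(\Enc(\psi_i)) = \sum_i p_i\, \tau_S = \tau_S$. Since the statement of the corollary quantifies over all single-qubit states $\psi$, this is exactly what is required.

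I do not expect any genuine obstacle here: the corollary is literally the $E$-trivial instantiation of the composable statement already established, and the only hypothesis that must be tracked through the reduction is the bound $|S|\leq(D-1)/2$, which is the same condition required to invoke \Cref{L:logical-trace} in the first place.
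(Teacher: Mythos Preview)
Your proposal is correct and matches the paper's approach exactly: the paper states that the corollary ``follows directly from \Cref{L:logical-trace} by considering a trivial system $E$.'' Your extra step of extending from pure to mixed states by convexity is a fine elaboration that the paper leaves implicit.
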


\subsection{Proof of  simulation of history states}
\label{sec:proof-gsy}
Here, we present the full proof of \Cref{lem:gsy} (as discussed, this assumes \Cref{lem:simulatable}, which is proved in a self-contained way later in \Cref{sec:simulatable-codes}). We start by defining the verification algorithm~$\VerSimX$ stated in \Cref{lem:gsy} (\Cref{S:new-verification}), and then in \Cref{S:simulatable-history}, we show that the history state of the computation of $\VerSimX$ on some good witness $\WitSim$ is simulatable.

\subsubsection{The circuit $\VerSimX$}
  \label{S:new-verification}

First, we present  the verification algorithm $\VerSimX$ stated in \Cref{lem:gsy}. For that, we start with an arbitrary uniform family of circuits $\{V_x\}$ that we have from the definition of \QMA{}. By definition,~$V_x$ receives as input a $p'(|x|)$-qubit quantum state, for some polynomial $p'$, and auxiliary qubits such that if  $x \in \ayes$ there exists an input state $\ket{\psi}$ such that $V_x$ outputs $1$ with probability $1 - \negl(|x|)$, whereas if $x \in \ano$, for all inputs $\ket{\psi}$,  $V_x$ outputs $1$ with probability $\negl(|x|)$. We depict such a circuit in  \Cref{fig:v1}.

\begin{figure}[h]
  \begin{center}
  \includegraphics{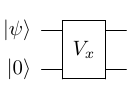}
  \end{center}
  \caption{Verification circuit for some problem $A = (\ayes,\ano)$ in \QMA{}}
  \label{fig:v1}
\end{figure}

 We have that, in order to efficiently compute reduced density matrices of the history state of~$V_x$ on a good witness~$\ket{\psi}$, it is necessary to efficiently compute reduced density matrices of $\ket{\psi}$, and in general, this is not known to be possible. Therefore, our approach here is to modify the
circuit~$V_x$ into~$\VerSimX$ such that completeness and soundness do not change, but such that we are able efficiently compute the reduced density matrices of a good witness $\WitSim$, of the snapshots of the $\VerSimX$ computation on $\WitSim$,  and of the history state of such a computation. We do this modification in two steps. In the first one, we consider~$\VerOtpX$ which is expected to receive a one-time pad of the witness, along with the one-time pad keys. $\VerOtpX$ then uncomputes the one-time pad encryption and performs the original computation. The verification circuit $\VerOtpX$ is depicted in \Cref{fig:v2}.
\begin{figure}[h]
  \begin{center}
  \includegraphics{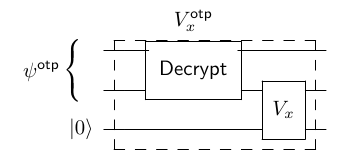}
  \end{center}
  \caption{Verification circuit $\VerOtpX$ for a \QMA{} problem. Here, the honest $\WitOtp = \frac{1}{2^{2p'}} \sum_{a,b} \kb{a,b} \otimes X^aZ^b \kb{\psi} Z^bX^a$, where $\ket{\psi}$ is a $p'$-qubit state. }
  \label{fig:v2}
\end{figure}

 It is easy to see that the completeness and soundness of $\VerOtpX$ are unchanged compared to~$V_x$.
 The reason why we perform this trivial modification is that, in the honest encrypted version,
  when we trace out all qubits of the witness but one, the remaining qubit is in the totally mixed state.
This helps us fix a small bug in \cite{GSY19}, as explained in \Cref{R:bug}. For some witness $\ket{\psi}$ of $V_x$ of size~$p'$, we define the associated witness for $\VerOtpX$ as $\WitOtp = \frac{1}{2^{2p'}}\sum_{a,b}\kb{a,b} \otimes X^aZ^b \kb{\psi} Z^bX^a$.

In the second step, we encode the witness for $\VerSimX$ with an $[[N,1,D]]$ quantum error correcting code  $\mathcal{C}$ that
is $(3s+2)$-simulatable, and
that the encoding, decoding and error detection procedures of $\calC$ have complexity $\poly(N)$.
 In this work, we set $\calC$ to be the
 $\log(3s+5)$-fold concatenated Steane code which yields a $(3s+2)$-simulatable
 code from \Cref{lem:simulatable}.

We define $\VerSimX$ as follows. It is supposed to receive the witness of $\VerOtpX$ encoded under $\calC$ and then $i)$~verifies if each qubit of the witness is correctly encoded under $\calC$, $ii)$~creates encodings of auxiliary $\ket{0}$ and $\ket{\T}$ under $\calC$, $iii)$~performs an encoded version of $\VerOtpX$, either using transversal Clifford gates or performing the $\T$-gadget described on \Cref{fig:new-t-gadget} (\Cref{sec:simulatable-codes}), and $iv)$~decodes the output of the computation. We describe $\VerSimX$ more formally in \Cref{fig:def-vstar} and depict it in \Cref{fig:v3}.

  \begin{figure}
    \hrule
  \begin{enumerate}
    \item Receive as witness the state $\WitSim = \Enc(\WitOtp) = \frac{1}{2^{2p'}} \sum_{a,b} \Enc(\kb{a,b} \otimes X^aZ^b\kb{\psi}Z^bX^a)$
    \item Run \textsf{ChkEnc}:
    \begin{enumerate}
      \item Check if each logical qubit of the witness is encoded under $\calC$, and reject if this is
      not the case \label{step:end-encoding}
    \end{enumerate}
    \item Run \textsf{ResGen}:
    \begin{enumerate}
    \item For each auxiliary qubit, encode $\ket{0}$ under $\calC$
    \item For every $\T$-gate of $V_x$, create $\ket{\T}$ and encode it under
      $\calC$
    \end{enumerate}
    \item Run $\Enc(\VerOtpX)$:
      \begin{enumerate}
    \item Undo the (encoded) one-time pad by transversally applying $\CNOT$ and $\CZ$ gates.
    \item Simulate each gate of $V_x$, either transversally, or
      using unitary $\T$-gadgets.
      \end{enumerate}
    \item Run \textsf{Dec}:
    \begin{enumerate}
      \item Decode the output bit, and accept or reject depending on its value.
      \label{step:decoding} \label{step:begin-decoding}
    \end{enumerate}
  \end{enumerate}
  \hrule
    \caption{Detailed description of $\VerSimX$}
    \label{fig:def-vstar}
  \end{figure}

\begin{figure}
  \begin{center}
  \includegraphics{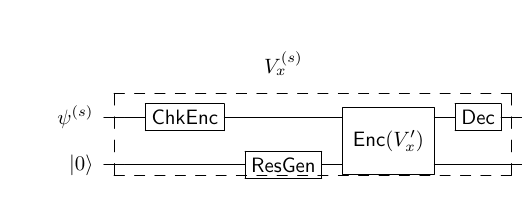}
  \end{center}
  \caption{Verification circuit $\VerSimX$ for the \QMA{} problem, where  $\WitSim = \Enc(\WitOtp) = \frac{1}{2^{2p'}} \sum_{a,b} \Enc(\kb{a,b} \otimes X^aZ^b\kb{\psi}Z^bX^a)$ and $\ket{\psi}$ is  a $p'$-qubit state.}
  \label{fig:v3}
\end{figure}

The completeness and soundness of $\VerSimX$ are straightforward: the acceptance probability of $\VerSimX$ on
\begin{align}\label{eq:structure-witness}
  \Enc(\WitOtp) = \frac{1}{2^{2p'}} \sum_{a,b}\kb{a,b} \otimes \Enc(X^aZ^b\kb{\psi}Z^bX^a)
\end{align}
is exactly the same as $\VerOtpX$ on $\WitOtp$; and  witnesses that are orthogonal to such states are rejected with probability $1$, since $\VerSimX$ first checks if the witness is correctly encoded.

Notice that the size of the circuit $\VerSimX$ is at most $\poly(|x|,N)$-times  bigger than $V_x$. Notice also that if we assume that $V_x$ contains only $\{\Had, \CNOT, \Pg, \T\}$ gates,  then all gates in $\VerSimX$ belong to the gateset $\{\X,\Z, \Pg, \Had, \CNOT, c(\Pg), \T\}$.\footnote{Here, we denote $c(\Pg)$ as the controlled-$\Pg$ gate.} This implies that each gate of $\VerSimX$ acts on at most $2$ qubits.

\begin{remark}\label{R:bug}
We note that in the proof of $\class{MIP}^* = \class{ZK}$-$\class{MIP}^*$ in \cite{GSY19}, when the encoded version of the protocol is defined (similar to \Cref{fig:v3}), the verifier does not check if the provers' answers lie in the codespace, which could potentially hurt soundness. We can easily address this issue by adding the procedure $\mathsf{ChkEnc}$ to check if the witness lies in the codespace. However, such a modification could harm the simulatability of the history state. In order to allow such a simulation (which will be proven in the next subsection), we added the intermediate circuit $\VerOtpX$ where we consider the quantum one-time padded version of the witness (along with the one-time pad keys). Such modifications can be easily incorporated in the context of \cite{GSY19}.
\end{remark}

\subsubsection{Simulation of $\VerSimX$}
\label{S:simulatable-history}

The goal of this section is prove \Cref{lem:gsy}, where the circuit $\VerSimX$ was presented in \Cref{S:new-verification}. Our final goal is to show that the reduced density matrices of the history state
\[\frac{1}{T+1} \sum_{t,t' \in [T+1]} \ketbra{\unary(t)}{\unary(t')} \otimes U_t \cdots U_1(\WitSim \otimes \kb{0}^{\otimes q}) U_1^\dagger \cdots U^\dagger_{t'}\]
of the computation of $\VerSimX$ on a  good witness $\WitSim$ can be simulated. We also show that the simulations have {\em low-energy} according to the local terms of the circuit-to-Hamiltonian construction.

In order to prove it, we first show similar properties (\emph{i.e.}~simulatability and low-energy) for every snapshot
\[U_t \cdots U_1 (\WitSim \otimes \kb{0}^{\otimes q}) U_1^\dagger \cdots U^\dagger_t\]
of the computation of $\VerSimX$ on a  good witness $\WitSim$ (\Cref{L:simulatable-snapshot}), and also for small intervals of the history state (\Cref{L:simulatable-interval}),\emph{i.e.},
\[\frac{1}{|I|} \sum_{t,t' \in I} \ketbra{\unary(t)}{\unary(t')} \otimes U_t \cdots U_1 (\WitSim \otimes \kb{0}^{\otimes q}) U_1^\dagger \cdots U^\dagger_{t'},\]
for $I = \{t_1, t_1+1,\ldots ,t_2\}$, $|I| \leq s+1$.

\begin{lemma}\label{L:simulatable-snapshot}
  Let   $A = (\ayes,\ano)$ be a problem in $\QMA$,  and $\VerSimX = U_T\cdots U_1$ be the verification algorithm described in \Cref{S:new-verification} for some input $x \in A$, where $\VerSimX$ acts on a $p(|x|)$-qubit witness and $q(|x|)$ auxiliary qubits.
  Then there exists a polynomial-time deterministic algorithm $\SimSnap$ that on input $x \in A$, $t \in [T+1]$
  and $Y \subseteq [T+p+q]$ with $|Y| \leq 3s+2$, $\SimSnap(x,t,Y)$ outputs the
  classical description of an $|Y|$-qubit  density matrix $\rho(x,t,Y)$
  with the following properties
  \begin{enumerate}
    \item If $x$ is a yes-instance, for any witness $\WitSim$ in the form of \Cref{eq:structure-witness} that makes $\VerSimX$ accept with probability $1 - \negl(n)$, we have that
      \[\trNorm{\rho(x,t,Y) - \Tr_{\overline{Y}}\left(U_t \cdots U_1(\WitSim \otimes \kb{0}^{\otimes q}) U_1^\dagger \cdots U^\dagger_t\right)} \leq \negl(n).\]
    \item For any auxiliary qubit $j \in Y$, we have that $\Tr_{\overline{\{j\}}}(\rho(x,0,Y)) = \kb{0}$.
    \item Let $t_d$ be the step just before decoding, $t \geq t_d$, and $E \subseteq Y$ be the set of qubits of the encoding of the output qubit in $Y$. We have that
      $\Tr_{\overline{E}}(\rho(x,t,Y)) = \Tr_{\overline{E}}(U_t \cdots U_{t_d+1}\Enc(\kb{1})U_{t_d+1}^\dagger \cdots U_t^\dagger)$.
  \end{enumerate}
\end{lemma}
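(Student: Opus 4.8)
The plan is to exploit the block structure of the circuit $\VerSimX$ built in \Cref{S:new-verification} (see \Cref{fig:def-vstar}): at every timestep the physical qubits split into a constant number of \emph{active} blocks --- those currently being touched by one logical-gate implementation inside $\Enc(\VerOtpX)$, or by one step of $\mathsf{ChkEnc}$, $\mathsf{ResGen}$ or $\mathsf{Dec}$ --- together with a collection of \emph{clean} blocks, each of which is the $\calC$-encoding of some (in general unknown) logical qubit, possibly entangled with the rest of the system. Since the code $\calC$ fixed in \Cref{S:new-verification} (the $\log(3s+5)$-fold concatenated Steane code) is $(3s+2)$-simulatable, has $N=O(1)$ physical qubits per block, and has distance $D$ with $(D-1)/2>3s+2$ (easily checked for these parameters), any set $Y$ of at most $3s+2$ qubit-positions meets each clean block in fewer than $(D-1)/2$ qubits and --- since only $O(1)$ blocks are active at any step --- meets only $O(1)$ active blocks. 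The simulator $\SimSnap(x,t,Y)$ first reads off, from the efficiently computable description of $\VerSimX$, which blocks are clean and which are active at step $t$, and which gadget is acting on the active blocks.

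For each clean block $B_j$ that $Y$ meets, $\SimSnap$ outputs the fixed reduced state $\tau_{Y\cap B_j}$ given by \Cref{C:logical-trace-one-qubit}, computable by brute force in time $\poly(2^N)=O(1)$. For the active blocks, $\SimSnap$ invokes $\SimC$ from \Cref{lem:simulatable} for the relevant gadget $G$ and intermediate step $t$, obtaining $\Tr_{\overline{Y\cap R}}$ of that gadget-prefix applied to $\Enc(\sigma\otimes\tau_G)$ (with $R$ the active physical qubits), which by \Cref{D:simulatable} is the same for every logical input $\sigma$; a step inside $\mathsf{ResGen}$ is handled directly since its output block is a known preparation of $\Enc(\ket{0})$ or $\Enc(\ket{\T})$, and a step inside $\mathsf{ChkEnc}$ is handled analogously, using that every few-qubit logical marginal of $\WitOtp$ is maximally mixed (this is exactly the role of the one-time-pad layer, and where the bug of \Cref{R:bug} is repaired). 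For $t\geq t_d$, i.e.\ inside $\mathsf{Dec}$, $\SimSnap$ sets the part of its output on the output-block qubits $E$ to $\Tr_{\overline{E}}$ of the decoding circuit applied to $\Enc(\kb{1})$, which gives item~3 by construction, and at $t=0$ it sets each auxiliary-qubit marginal to $\kb{0}$, giving item~2. Finally $\SimSnap$ outputs the tensor product of all these pieces over the blocks $Y$ meets; each operation is on $O(1)$ qubits and reading the circuit is $\poly(|x|)$, so $\SimSnap$ runs in polynomial time.

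For item~1, fix a witness $\WitSim$ of the form in \Cref{eq:structure-witness} making $\VerSimX$ accept with probability $1-\negl(n)$, so the underlying state $\ket{\psi}$ is a good witness for $V_x$. I would first check that for $t$ up to the start of $\mathsf{Dec}$ the snapshot state has precisely the block structure above --- in particular $\mathsf{ChkEnc}$ leaves the encoded witness undisturbed, since it already lies in the codespace --- and then peel off, one clean block at a time, the contribution of each clean block to $Y$ by repeated application of the composable secret-sharing statement \Cref{L:logical-trace}. This rewrites the marginal on $Y$ as $\bigl(\bigotimes_{j\text{ clean}}\tau_{Y\cap B_j}\bigr)\otimes \Tr_{R\setminus(Y\cap R)}(\sigma_R)$, where $\sigma_R$ is the reduced snapshot state on the active physical qubits. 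The remaining equality of $\Tr_{R\setminus(Y\cap R)}(\sigma_R)$ with the output of $\SimC$ needs a composable strengthening of \Cref{D:simulatable} --- that the few-qubit marginal of the gadget output does not depend on the logical input $\sigma$ even when $\sigma$ is entangled with the rest of the system --- which follows from the defining property of $\SimC$ by the same purification-and-linearity argument that proves \Cref{L:logical-trace}. For $t\geq t_d$ the only extra input is that a good witness forces the logical output qubit just before $\mathsf{Dec}$ to be $\negl(n)$-close to $\ket{1}$, hence the pre-decoding state to be $\negl(n)$-close to one whose output block equals $\Enc(\kb{1})$; since trace distance is non-increasing under the constant-size decoding circuit, this contributes only the stated $\negl(n)$ error.

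The main obstacle is controlling the entanglement among blocks: one must show that the marginal on $Y$ genuinely factorizes over the blocks it touches, with each factor independent of the unknown logical content, and this is precisely what \Cref{L:logical-trace} (for clean blocks) and its gadget analogue (for active blocks) deliver. A second, more bookkeeping-heavy, point is verifying that throughout all four phases of \Cref{fig:def-vstar} only $O(1)$ blocks ever fail to be cleanly encoded, so that the single-gadget simulator $\SimC$, together with the known-state preparations in $\mathsf{ResGen}$ and the one-time-pad trick for $\mathsf{ChkEnc}$, is enough; the $\mathsf{Dec}$ phase additionally relies on the ``good witness $\Rightarrow$ logical output $\approx\ket{1}$'' estimate to keep the introduced error negligible.
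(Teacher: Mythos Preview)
Your proposal is correct and follows essentially the same approach as the paper: the paper also splits the analysis according to the structure of $\VerSimX$ (there into three temporal phases rather than your active/clean block language), peels off the clean encoded blocks using \Cref{L:logical-trace} to get the fixed marginals $\tau_{Y\cap E_i}$, handles the active block in $\mathsf{ChkEnc}$ by the one-time-pad observation (each logical qubit of $\WitOtp$ is $I/2$), handles $\mathsf{ResGen}$ by direct computation on $\ket{0}$, invokes \Cref{lem:simulatable} for the middle of a logical gate, and for $t\geq t_d$ replaces the output block by $\Enc(\kb{1})$ with the $\negl(n)$ error coming from the good-witness assumption. One small simplification you can make: after you have peeled off the clean blocks via \Cref{L:logical-trace}, the remaining state $\sigma_R$ on the active blocks is already a bona fide reduced state of the form $\Enc(\Tr_C(\gamma))$ for some (mixed) logical marginal, so ordinary \Cref{D:simulatable} plus convexity suffices and no further ``composable strengthening'' of $\SimC$ is needed.
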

\begin{proof}
  We prove the result by showing how to compute the reduced state of
  \[\Delta_t = U_t \cdots U_1\left(\WitSim \otimes \kb{0}^{\otimes q}\right)U_1^\dagger \cdots U^\dagger_t,\]  where $\WitSim$ is a hypothetical good witness for $\VerSimX$.

\begin{figure}
  \begin{center}
  \includegraphics{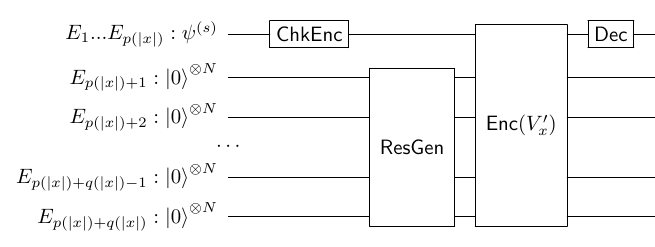}
  \end{center}
  \caption{Separation of registers in $\VerSimX$.}
  \label{fig:registers}
\end{figure}

  Let us denote by $E_i$ the set of qubits of the encoding of the $i$-th qubit (even if at step $t$, $i$-th logical qubit is  unencoded but later on the computation it will be so). We depict such regiset of qubits in \Cref{fig:registers}.
  We will split our analysis in three phases: before all the qubits are encoded,
  the logical computation and the decoding. For each of the phases, we have the
  following definitions:
  \begin{itemize}
    \item $t_0$ and $t_1$: For a fixed $t$, we define $t_0 \leq t \leq t_1$ such that at $t_0$
      and $t_1$, all qubits are all fully encoded or fully unencoded (i.e., at
      step $t_0$ and $t_1$ there are no operations such as performing a logical
      gate, encoding, decoding, etc.);
    \item $Q$:  for fixed $t_0,t_1$, we let $Q$ be the set of qubits on which
      operations $U_{t_0+1},\ldots,U_{t_1}$ act; and
    \item $\mathcal{U}$: we define $\mathcal{U}$ as the set logical qubits that
      are still unencoded by step $t_1$.
  \end{itemize}

  Let us consider the case before all qubits are encoded (i.e. until the last
  step of \textsf{ResGen}).  Let $t_0 \leq t \leq t_1$, where $t_0$ is the timestamp just before starting the operation of timestamp $t$ (which consists on either checking the encoding of a qubit of the witness or creating a resource state) and  $t_1$ be the timestamp after this operation is peformed.
     Let us assume, for simplicity of notation, that the qubits in $Q$ are the
     first $|Q|$ qubits of the state (and for the other cases follow
     analogously, but the states are permuted). Using \Cref{L:logical-trace}, we
     have that
  \begin{align} \label{eq:snapshot-first-phase}
    \xi_{Y,t_0} := \Tr_{\overline{Y \cup Q}}(\Delta_{t_0}) =
    \Tr_{\overline{Q \setminus Y}}(\Delta_{t_0}) \otimes \left(\bigotimes_{i \not\in
    \mathcal{U}} \tau_{(Y \setminus Q) \cap E_i}\right) \otimes \left(\bigotimes_{i \in
    \mathcal{U}} \kb{0}^{\otimes |Y \cap E_i|}\right),
\end{align}
  where we use the facts that
  every qubit in $Q$ is encoded at step $t_0$ (so by \Cref{L:logical-trace} we
  have the tensor product structure)
  and that for every $i \in \mathcal{U}$ we have that $E_i
  \cap Q = \emptyset$, since we know that the $i$-th qubit is still unencoded by
  step $t_1$.

  By \Cref{lem:simulatable}, $\SimC$ can compute $\tau_{Y \cap E_i}$ in time
  $\poly(2^N)$ without knowing $\Delta_{t_0}$.\footnote{In our self-contained
  proof, this appears in \Cref{L:part1}.} Therefore, if we can compute the
  $N$-qubit state $\Tr_{\overline{Q}}(\Delta_{t_0})$  in time $\poly(2^N)$,
  $\SimSnap$ can compute $\rho(x,t,Y)$ by computing
  $\Tr_{\overline{Q}}(\Delta_{t_0})$, classically simulating the unitaries
  $U_{t_0+1},\ldots,U_t$, tracing out the qubits in $Q \setminus Y$ and finally
  appending $\tau_{(Y \setminus Q) \cap E_i}$ and the auxiliary qubits. The runtime is $\poly(2^N,|x|)$, and
  \Cref{eq:snapshot-first-phase} implies that the first property of $\SimSnap$ holds for $t$ on the first phase.  Notice also that since the state~$\xi_{Y,0}$ has $\kb{0}$ on all auxiliary qubits, the second property of $\SimSnap$ holds.

  We describe now how to compute $\Tr_{\overline{Q}}(\Delta_{t_0})$ in time $\poly(2^N)$.
When $t$ lies in some intermediate step of \textsf{ChkEnc}, in other words, when $\VerSimX$ checks the encoding of qubit $q$,
  $\Tr_{\overline{Q}}(\Delta_{t_0}) = \Enc(\Tr_{\overline{q}}(\WitOtp)) = \Enc(\Tr_{\overline{q}}(I/2))$, since
  we defined the honest witness for $\WitOtp$, the $q$-th qubit is either a one-time padded state (without its one-time pad key) or it is the key for the one-time pad (and therefore it is also totally mixed). The result follows for the encode checking phase since the encoding of the totally mixed state can be trivially computed in time $\poly(2^N)$. We remark that this is the exact (and only) part of the proof where we need the properties of the intermediate verifier~$\VerOtpX$.
  If $t$ lies in some intermediate step of \textsf{ResGen},\emph{i.e.}, when $\VerSimX$ creates the encoding of the auxiliary $\ket{0}$ or $\ket{\T}$ qubits, we have that $\Tr_{\overline{Q}}(\Delta_{t_0}) = \kb{0}^{\otimes |Q|}$, whose description can be trivially computed.

\medskip

  We now consider the second phase, where the logical computation is  performed. Here, let $t_0$ be the timestamp at the beginning of the logical computation  performed at time $t$.
  In this case, we have that
\[
  \Delta_t = U_{t} \cdots U_{t_0+1} \Delta_{t_0}U_{t_0+1}^\dagger \cdots U_{t}^\dagger,  \]
and all qubits of $\Delta_{t_0}$ are  fully encoded. This
corresponds to the simulation in the middle of the application of a logical
gate, and by  \Cref{lem:simulatable}, $\Tr_{\overline{Y}}(\Delta_{t})$ can
  be efficiently computed.\footnote{In our self-contained proof, this appears in \Cref{L:cliff} for Clifford computations and in \Cref{L:T-gadget} for $\T$ computation with magic states.
  }

Finally, we reach the third phase and let $t_d$ be the timestep where the decoding
starts. Since every qubits at timestep $t_d$ is fully encoded, we can define a quantum state $\gamma$ such that  $\Delta_{t_d} = \Enc(\gamma)$ and then we define $\tilde{\Delta}_{t_d} = \frac{\Enc\left((\kb{1}\otimes I) \gamma \right)}{\Tr\left((\kb{1}\otimes I)\gamma\right)}$.
  As in \Cref{eq:snapshot-first-phase}, $\SimSnap$ can compute the reduced state on the encoding of qubits of $\tilde{\Delta_{t_d}}$ that are not the output by \Cref{lem:simulatable}, and for the output qubit, $\SimSnap$ can classically simulate $U_{t_d+1},\ldots,U_t$ on $\Enc(\kb{1})$ and then trace out the qubits not in $Y$. These operations can be performed in time $\poly(2^N,|x|)$. By construction we have the third property of $\SimSnap$.

Notice that if $\Delta_{t_d}$ is indeed the $t_d$-th step of the computation $\VerSimX$ on a good witness $\WitSim $, we have that $\Tr(\Delta_{t_0} - \tilde{\Delta}_{t_0}) \leq \negl(|x|)$, and the first property of $\SimSnap$ follows for $t \geq t_d$.
\end{proof}

We now show simulatability of intervals of the history state.

\begin{lemma}\label{L:simulatable-interval}
  Let   $A = (\ayes,\ano)$ be a problem in $\QMA$,  and $\VerSimX = U_T\cdots U_1$ be the verification algorithm described in \Cref{S:new-verification} for some input $x \in A$ and $s \geq 5$, where $\VerSimX$ acts on a $p(|x|)$ qubit witness and $q(|x|)$ auxiliary qubits.
  Then there exists a polynomial-time deterministic algorithm $\SimInt$ that on input $x \in A$,
  $I = \{t_1,t_1+1,\ldots,t_2\} \subseteq [T+1]$, $t_2 - t_1 \leq s+1$ and $S \subseteq [T + p + q]$, $|S| \leq s$, and  $\SimInt(x,I,S)$ runs in time $\poly(|x|,N)$ and outputs the
  classical description of an $|S|$-qubit  density matrix $\rho(x,I,S)$
  with the following properties
  \begin{enumerate}
    \item If $x$ is a yes-instance, then there exists a witness $\WitSim$ that makes $\VerSimX$ accept with probability at least $1 - \negl(n)$ such that
      $\trNorm{\rho(x,I,S) -
      \Tr_{\overline{S}}(\hist_I)} \leq \negl(n)$, where
      \[
        \hist_I = \frac{1}{|I|}  \sum_{t,t' \in [I]} \ketbra{\unary(t)}{\unary(t')} \otimes U_t\cdots U_1 \left(\WitSim \otimes \kb{0}^{\otimes q}\right)U_1^{\dagger}\ldots U_{t'}^\dagger,
      \] is an {\em interval} of the history state of $\VerSimX$ on the witness $\WitSim$.
      \item Let $H_i$ be a clock term ($H^{clock}_t$), initialization term ($H^{init}_j$) or the output term ($H^{out}$) from the circuit-to-Hamiltonian construction of $\VerSimX$\footnote{
        See \Cref{sec:circuit-to-ham}.}
        and $S_i$ be the set of qubits on which $H_i$ acts non-trivially. Then for every $x \in A$, we have $\Tr(H_i\rho(x,I,S_i)) = 0$.
    \item For any propagation term $H^{prop}_t$ from the circuit-to-Hamiltonian construction of $\VerSimX$ and the corresponding set of qubits $S_t$, and for every $x \in A$ and $I$ such that $t,t+1 \in I$ or $t,t+1 \not\in I$, we have $\Tr(H^{prop}_t\rho(x,I,S_t)) = 0$.
  \end{enumerate}
\end{lemma}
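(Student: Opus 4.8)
The plan is to prove \Cref{L:simulatable-interval} by reduction to the snapshot-simulation result \Cref{L:simulatable-snapshot}, exploiting the fact that a short interval of the history state is built out of the snapshots $\Delta_t = U_t\cdots U_1(\WitSim\otimes\kb{0}^{\otimes q})U_1^\dagger\cdots U_t^\dagger$ together with their ``off-diagonal'' cross terms $U_{t}\cdots U_1(\WitSim\otimes\kb{0}^{\otimes q})U_1^\dagger\cdots U_{t'}^\dagger$. First I would observe that, since the interval has length at most $s+1$, the clock register restricted to the snapshots $t\in I$ lives on a constant-size subsystem, and so $\hist_I$ can be written as a block matrix indexed by $(t,t')\in I\times I$ whose $(t,t')$ block is $U_{t}\cdots U_{t'+1}\Delta_{t'}$ (for $t\geq t'$, and the adjoint otherwise). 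Tracing out all but $S$ leaves a state that is a fixed linear combination of the operators $\Tr_{\overline{S}}\!\big(\ket{\unary(t)}\!\bra{\unary(t')}\otimes U_t\cdots U_{t'+1}\Delta_{t'}\big)$. I would then show each such cross-term can be computed in $\poly(|x|,N)$ time: since $t - t' \leq s$ is constant, it suffices to first compute $\Tr_{\overline{Y}}(\Delta_{t'})$ for a suitably enlarged set $Y$ of at most $3s+2$ qubits (namely $S$ together with the qubits that $U_{t'+1},\dots,U_t$ act on non-trivially — at most $2$ new qubits per step, hence at most $2s$ extra, and one checks the encoding-level bookkeeping gives $|Y|\leq 3s+2$), invoke $\SimSnap$ from \Cref{L:simulatable-snapshot} to get this reduced snapshot, then classically apply $U_{t'+1},\dots,U_t$ and trace out $Y\setminus S$. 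The output of $\SimInt$ is the corresponding linear combination of these. The closeness-to-$\hist_I$ guarantee (property 1) follows directly from the $\negl(n)$ guarantee of $\SimSnap$ propagated through a constant-size linear combination, using that trace norm is non-increasing under partial trace and conjugation by unitaries.

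For property 2, the clock terms $H^{clock}_t=\kb{01}_{t,t+1}$, initialization terms $H^{init}_j$, and the output term $H^{out}$ act only on the clock register and/or a single auxiliary/output qubit, and they are ``diagonal'' consistency checks: the honest history state always satisfies $\Tr(H_i\hist)=0$ by construction. I would argue the simulated $\rho(x,I,S_i)$ inherits this exactly (not just approximately) by using the structural facts already established in \Cref{L:simulatable-snapshot}: property 2 there says auxiliary qubits of $\rho(x,0,Y)$ are $\kb{0}$, which handles $H^{init}_j$; property 3 there pins down the output-qubit marginal after decoding, which (combined with the explicit form of the decoding circuit) handles $H^{out}$; and the clock-consistency terms vanish because $\SimInt$ by construction only places weight on clock basis states $\ket{\unary(t)}$ with $t\in I$ forming a contiguous valid unary encoding, so $\kb{01}_{t,t+1}$ picks up zero amplitude. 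The point is that these are all properties that hold at the level of the explicit ansatz we write down for $\rho(x,I,S_i)$, so they are verified by inspection rather than by an approximation argument.

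Property 3 is the genuinely delicate part and I expect it to be the main obstacle. For a propagation term $H^{prop}_t$, which couples clock positions $t-1,t,t+1$ with the qubits $J_t$ that gate $U_t$ acts on, we need $\Tr(H^{prop}_t\rho(x,I,S_t))=0$ in the two regimes: either both $t,t+1\in I$, or both $t,t+1\notin I$. In the first regime the interval contains both the ``before'' and ``after'' snapshots of the gate $U_t$, and the honest history state satisfies the propagation constraint locally because the off-diagonal block relating clock $t$ and clock $t+1$ is exactly $U_t$ times the $t$-block; I would verify that $\SimInt$'s output reproduces this exact algebraic relation — the cross term it computes for $(t,t+1)$ is literally $U_t$ applied to the snapshot at $t$, by the construction above, so the $\ketbra{110}{100}\otimes U_t$ and $\ketbra{100}{110}\otimes U_t^\dagger$ pieces of $H^{prop}_t$ cancel the diagonal $\kb{100}+\kb{110}$ pieces. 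In the second regime, $t,t+1\notin I$ means the clock register of $\hist_I$ has no support on positions $t$ or $t+1$ at all, so $H^{prop}_t$ trivially evaluates to zero on the clock part; $\SimInt$ must be designed so its output also has no clock support outside $I$ (equivalently outside a window around $I$ determined by which clock qubits in $S_t$ it even touches), which I would ensure by the explicit block-matrix construction. The boundary subtlety — the excluded case where exactly one of $t,t+1$ lies in $I$ — is deliberately not claimed, and I would simply note that this is why property 3 is stated with the disjunction; in the eventual energy bound for \Cref{lem:gsy} these boundary propagation terms are the ones absorbed into the $O(1/T^4)$ slack rather than forced to zero. The care needed here is making the ansatz for $\rho(x,I,S_t)$ match the honest history state's matrix entries \emph{exactly} on the relevant clock block, which requires threading the constant-size-interval hypothesis $t_2-t_1\leq s+1$ through so that all relevant cross terms fit inside the $3s+2$-qubit budget of $\SimSnap$.
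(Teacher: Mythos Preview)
Your approach is essentially the paper's: reduce to $\SimSnap$ on an enlarged qubit set and assemble the interval from the resulting reduced snapshot(s). But there is a real gap in how you assemble the blocks, and it breaks the exactness of property~3.

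You propose computing each cross term $(t,t')$ from its own call $\SimSnap(x,t',Y_{t,t'})$, with the base time $t'$ and the set $Y$ varying across the interval. The paper instead makes a \emph{single} call $\tilde\rho = \SimSnap(x,t_1,Y)$ at the left endpoint, with one fixed $Y = (S\cap W)\cup G$ where $G$ collects all qubits touched by $U_{t_1+1},\dots,U_{t_2}$, and then builds every block $(t,t')$ as $\Tr_{G\setminus S}\bigl(U_t\cdots U_{t_1+1}\,\tilde\rho\,U_{t_1+1}^\dagger\cdots U_{t'}^\dagger\bigr)$. With a single base, the output is \emph{structurally} an interval-history state for $\tilde\rho$ --- whatever $\tilde\rho$ happens to be --- and any such state annihilates every propagation term whose two time steps both lie in $I$. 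This holds identically for every $x$, yes- or no-instance, which is exactly what property~3 demands.

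Your version instead needs $\SimSnap(x,\tau,\cdot) = U_\tau\,\SimSnap(x,\tau-1,\cdot)\,U_\tau^\dagger$ to hold exactly, since the $(\tau,\tau)$ and $(\tau-1,\tau-1)$ diagonal blocks come from separate $\SimSnap$ calls and $H^{prop}_\tau$ tests precisely this relation. That consistency is not part of the statement of \Cref{L:simulatable-snapshot}, and at the boundary with the decoding phase --- where $\SimSnap$ substitutes the output-$1$-projected $\tilde\Delta_{t_d}$ for the true snapshot --- it fails. Your phrase ``match the honest history state's matrix entries exactly'' flags the underlying misconception: $\SimSnap$ never matches the honest state exactly, only up to $\negl(n)$ and only for yes-instances. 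What delivers exactness in properties~2 and~3 is the internal history-structure of the simulated output, not its fidelity to $\hist_I$. The fix is simply to anchor all blocks to the single base at $t_1$. (As an aside, your guess that the excluded boundary case of property~3 is ``absorbed into the $O(1/T^4)$ slack'' downstream is also off: in the proof of \Cref{lem:gsy} the intervals are constructed from the traced-out clock qubits, and since the clock qubits of $H^{prop}_t$ are always in $S_t$ and hence never traced out, the excluded case simply never occurs.)
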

\begin{proof}
  For simplicity, let $\Delta_{t,t'} = U_t\ldots U_1(\WitSim \otimes \kb{0}^{\otimes q})U_1^\dagger \ldots U_{t'}^\dagger$.

  Let $C$ be the set of clock qubits and~$W$ be the set of working qubits.  We have that
  \begin{align}
    &\Tr_{\overline{S}}(\Phi_I) \nonumber \\
    &= \frac{1}{|I|} \sum_{t,t' \in I}
  \Tr_{\overline{S \cap C}}(\ketbra{\unary(t)}{\unary(t')})
  \otimes \Tr_{\comp{S \cap W}} \Paren{ \Delta_{t,t'}} \nonumber \\
    &=
 \frac{1}{|I|} \sum_{t,t' \in I}
  \Tr_{\overline{S \cap C}}(\ketbra{\unary(t)}{\unary(t')})
  \otimes \Tr_{\comp{S \cap W}} \Paren{ U_{t} \ldots U_{t_1+1}\Delta_{t_1,t_1}U_{t_1+1}^\dagger \ldots U^\dagger_{t'}} \nonumber \\
    &=
 \frac{1}{|I|} \sum_{t,t' \in I}
  \Tr_{\overline{S \cap C}}(\ketbra{\unary(t)}{\unary(t')})
    \otimes \Tr_{G \setminus S} \Paren{ U_{t} \ldots U_{t_1+1}\Tr_{\comp{(S \cap W) \cup G}}\left(\Delta_{t_1,t_1}\right)U_{t_1+1}^\dagger \ldots U^\dagger_{t'}} \nonumber
  ,
  \end{align}
  where $G$ is the set of qubits on which $U_{t_1}, \ldots ,U_{t_2}$ act.
Notice that since $t_2 - t_1 \leq |S|+1$, and each~$U_j$ acts on at most $2$ qubits, so we have $|G| \leq 2(|S|+1)$ and  $|S \cup G| \leq 3s+2$.

  Let $Y = (S \cap W) \cup G$. $\SimInt(x,I,S)$ starts by running $\SimSnap(x,t_1,Y)$ from \Cref{L:simulatable-snapshot} to compute the state $\tilde{\rho}(x,t_1,Y)$ such that $\norm{\tilde{\rho}(x,t_1,Y) - \Tr_{\comp{Y}}(\Delta_{t_1,t_1})} \leq \negl(|x|)$ for some yes-instance~$x$. Then $\SimInt(x,t_1,Y)$  computes
\begin{align}\label{eq:state-interval}
  \rho(x,I,S) = \Tr_{\overline{S \cap C}}(\ketbra{\unary(t)}{\unary(t')}) \otimes \Tr_{G \setminus S} \Paren{ U_{t} \ldots U_{t_1+1}\Tr_{\comp{(S \cap W) \cup G}}\tilde{\rho}(x,t_1,Y)U_{t_1+1}^\dagger \ldots U^\dagger_{t'}}
\end{align}
  from  $\tilde{\rho}(x,t_1,Y)$  in time $\poly(2^N,|x|)$ and it follows that for a yes-instance $x$ and a good witness $\WitSim$,
\[\trNorm{\Tr_{\overline{S}}(\Phi_I) -  \rho(x,I,S)} \leq \negl(n),\]
which proves the runtime and first property of $\SimInt$.

We show now that the output of $\SimInt$ has energy $0$ with respect to the clock, initalization and output terms of $H_{\VerSimX}$, proving the second property of $\SimInt$. For that, we consider a $5$-local term~$H_i$ and the set of qubits $S_i$ on which $H_i$ acts non-trivially. We prove the property for each type of local terms.

\begin{itemize}
  \item If $H_i$ is a clock constraint, since $\rho(x,I,S_i)$ always output reduced density matrices on clock registers that are consistent with valid unary encodings, $\Tr(H_i\rho(x,I,S_i)) = 0$ for every $I$.
  \item If $H_i$ is a initialization constraint $H^{init}_j = \kb{01}$ and $S_i$ consists of the first clock qubit and the $j$-th auxiliary qubit. We consider two subcases: if $0 \not\in I$, $\rho(x,I,S_i)$ has energy $0$ because the content in the clock qubit is $\kb{1}$ and $\Tr(\kb{01} (\kb{1} \otimes \gamma)) =0$; if $0 \in I$, then $t_1 = 0$ and by the second property of $\SimSnap$, it follows that $\tilde{\rho}(x,0,\{j\}) = \kb{0}$, and therefore $\Tr(\kb{01}\rho(x,I,S_i)) = \Tr(\kb{1}\tilde{\rho}(x,0,\{j\})) = 0 $.
  \item If $H_i$ is the constraint $H^{out} = \kb{10}$ and $S_i$ consists of the last clock qubit and the output qubit. We again have two cases: if $T \not\in I$, $\rho(x,I,S_i)$  has energy~$0$ because of the clock qubit (as in the previous cases); if $T \in I$, then since $|S_i| \leq s$, by the third property of $\SimSnap$, it follows that $\rho(x,t_0,S_i) =  \Tr_{\overline{E}}(U_t \cdots U_{t_d+1}\Enc(\kb{1})U_{t_d+1}^\dagger \dots U_t^\dagger)$, where $t_d$ is the step just before the decoding. This implies that
    \[\Tr(\kb{10} \rho(x,I,S_i)) = \Tr(\kb{10} (\kb{1} \otimes U_T\tilde{\rho}(x,T-1,S_i)U_T^\dagger)) =  \Tr(\kb{10} \kb{11}) = 0. \]
\end{itemize}

For the propagation term $H^{prop}_t$, we have again two subcases: if $t,t+1 \not\in I$, $\rho(x,I,S_i)$  has energy~$0$ because of the clock qubits (as in the previous cases); if $t,t+1 \in I$, notice that any state
    \[\sum_{t_1+1 \leq t,t' \leq t_2} \ketbra{\unary(t-t_1)}{\unary(t'-t_1)} \otimes U_{t_2} \cdots U_{t_1+1}\sigma U_{t_1+1}^\dagger \cdots  U_{t_2}^\dagger\]
    has energy~$0$,\footnote{Here we use again the notation $t_1 = \min(I)$ and $t_2 = \max(I)$.} since it has the correct propagation of the unitary $U_t$ at step $t$. This proves the third property of $\SimInt$.
\end{proof}

We finally prove \Cref{lem:gsy}.

\newtheorem*{thm:repeat-gsy}{\Cref{lem:gsy} (restated)}
\begin{thm:repeat-gsy}
  \bodygsy
\end{thm:repeat-gsy}
\begin{proof}
  We let $\VerSimX$ be the circuit for $A$ as defined in \Cref{S:new-verification} and
  we follow the notation of \Cref{L:simulatable-interval} with $\Delta_{t,t'} = U_t\ldots U_1(\WitSim \otimes \kb{0}^{\otimes q})U_1^\dagger \cdots U_{t'}^\dagger$.
\begin{figure}
  \begin{center}
  \begin{tikzpicture}[box/.style={rectangle,draw=black,thick, minimum size=1cm}]
\node[rectangle,draw=black,dashed,minimum size=1cm] at (0,0){$0$};
\node[box,fill=blue!40] at (1,0){$1$};
\node[box,fill=blue!40] at (2,0){$2$};
\node[box,fill=red!40] at (3,0){$3$};
\node[box,fill=red!40] at (4,0){$4$};
\node[box,fill=red!40] at (5,0){$5$};
\node[box,fill=blue!40] at (6,0){$6$};
\node[box,fill=blue!40] at (7,0){$7$};
\node[box,fill=red!40] at (8,0){$8$};
\node[box,fill=blue!40] at (9,0){$9$};
\node[box,fill=blue!40] at (10,0){$10$};
\draw [decorate,decoration={brace,amplitude=4pt},xshift=-3pt,yshift=0pt] (0.5,-0.8) -- (-0.30,-0.8) node [black,midway,yshift=-1em] {\footnotesize $I_1$} ;
\draw [decorate,decoration={brace,amplitude=4pt},xshift=-3pt,yshift=0pt] (1.5,-0.8) -- (0.70,-0.8) node [black,midway,yshift=-1em] {\footnotesize $I_2$};
\draw [decorate,decoration={brace,amplitude=7pt},xshift=-3pt,yshift=0pt] (5.5,-0.8) -- (1.70,-0.8) node [black,midway,yshift=-1em] {\footnotesize $I_3$};
\draw [decorate,decoration={brace,amplitude=4pt},xshift=-3pt,yshift=0pt] (6.5,-0.8) -- (5.70,-0.8) node [black,midway,yshift=-1em] {\footnotesize $I_4$};
\draw [decorate,decoration={brace,amplitude=4pt},xshift=-3pt,yshift=0pt] (8.5,-0.8) -- (6.70,-0.8) node [black,midway,yshift=-1em] {\footnotesize $I_5$};
\draw [decorate,decoration={brace,amplitude=4pt},xshift=-3pt,yshift=0pt] (9.5,-0.8) -- (8.70,-0.8) node [black,midway,yshift=-1em] {\footnotesize $I_6$};
\draw [decorate,decoration={brace,amplitude=4pt},xshift=-3pt,yshift=0pt] (10.5,-0.8) -- (9.70,-0.8) node [black,midway,yshift=-1em] {\footnotesize $I_7$};
\end{tikzpicture}
  \end{center}
  \caption{An example of how to create the intervals based on the traced out clock qubits. The filled boxes represent both the clock qubits and the corresponding timestamp in the interval. The empty box represent the timestamp $0$ (which does not have an associated clock qubit). The filled boxes in blue represent the traced out qubits and the ones in red represent the qubits that are not traced out. Finally, the interval $I_j$ contains the timestamps which are strictly smaller than the $i$-th traced out qubit and at least the $(i-1)$-st traced out qubit (or $0$ when $i = 1$).}
  \label{fig:intervals}
\end{figure}

  Let
  $C_{tr} = \{i_1,\ldots ,i_{|C_{tr}|+1}\} \subseteq \{1,\ldots ,T\}$ denote the set of clock qubits that are not in $S$, and let $i_1 < i_2 < \ldots  < i_{|C_{tr}|}$. Let us partition $[T+1]$ into the
  intervals $I_1, \ldots ,I_{|C_{tr}|+1}$ such that $I_j$ contains the timestamps which are strictly smaller than the $i_j$ and at least the $i_{j-1}$ (or $0$ when $i = 1$). More formally, we have
  \begin{itemize}
    \item $I_1 = \{ t : t \in [T+1] \text{ and } t < i_1\}$;
    \item For $2 \leq j < |C_{tr}|$, $I_j = \{ t : t \in [T+1] \text{ and } i_{j-1} \leq t < i_j\}$;
    \item $I_{|C_{tr}|} =  \{ t : t \in [T+1] \text{ and } t \geq i_{|C_{tr}|}\}$.
  \end{itemize}
  We depict such intervals in \Cref{fig:intervals}.
  Notice that since $|C_{tr}| \geq |T| - |S|$, it follows that $|I_j| \leq |S|+1$.

  We have that
  \begin{align*}
    \Tr_{i}(\ketbra{\unary(t)}{\unary(t')}) &=
    \Tr_{i}\left(
     \kb{1}^{\otimes t} \otimes \ketbra{1}{0}^{\otimes (t'-t)} \otimes \kb{0}^{T-t'}
    \right) \\
    &= \Tr(\ketbra{\unary(t)_i}{\unary(t')_i}) \bigotimes_{\substack{j \in \{1,\ldots ,T\} \\ j \ne i}} \ketbra{t_j}{t'_j}
  \end{align*}
  vanishes if and only if $\Tr(\ketbra{\unary(t)_i}{\unary(t')_i}) = 0$, \emph{i.e.}, when $t+1 \leq i \leq t'$.
  Using this argument for each $i \in C_{tr}$, it follows that  $t,t' \in I_j$ if and only if
  $\Tr_{C_{tr}}(\ketbra{\unary(t)}{\unary(t')} \otimes \Delta_{t,t'}) \ne 0$. We have then that
  \begin{align}
  \Tr_{C_{tr}}(\Phi) &=
    \frac{1}{T+1}\sum_{t,t' \in [T+1]} \Tr_{C_{tr}}(\ketbra{\unary(t)}{\unary(t')} \otimes \Delta_{t,t'}) \nonumber \\
  &=
    \frac{1}{T+1}\sum_{t,t' \in I_j} \Tr_{C_{tr}}(\ketbra{\unary(t)}{\unary(t')} \otimes \Delta_{t,t'})  \nonumber\\
  &=
\sum_{j=1}^\ell
    \frac{|I_j|}{T+1} \Tr_{C_{tr}} (\Phi_{I_j}), \label{eq:intervals-history}
  \end{align}
where in the second equality we remove the crossterms that vanish and in the third equality we regroup the terms in states with the form
        $\hist_I = \frac{1}{|I|}  \sum_{t,t' \in [I]} \ketbra{\unary(t)}{\unary(t')} \otimes \Delta_{t,t'}$.

 We can then  define $\Sim(x,S)$ as follows:
  \begin{enumerate}
    \item Compute the set $C_{tr}$;
    \item Compute the intervals $I_1,\ldots,I_{|C_{tr}|+1}$;
    \item Run $\SimInt$ from \Cref{L:simulatable-interval} to compute $\rho(x,I_j,S) = \SimInt(x,I_j,S)$
    \item Output $ \rho(x,S) = \sum_{j=1}^{|C_{tr}|}
      \frac{|I_j|}{T+1} \rho(x,I_j,S)$.
  \end{enumerate}

  Since $\SimInt$ runs in time $\poly(2^N,|x|)$, so does $\Sim$.
  From \Cref{eq:intervals-history} and the fact that
  $\trNorm{\rho(x,I_j,S) - \Tr_{\overline{S}}(\hist_{I_j})} \leq \negl(n)$, by \Cref{L:simulatable-interval},  it follows that
  $\trNorm{\rho(x,S) - \Tr_{\overline{S}}(\hist)} \leq \negl(n)$.

 Notice that a propagation term $H^{prop}_t$ acts on the clock qubits $t$ and $t+1$ and therefore, when we simulate its corresponding set of qubits, $t,t+1 \not\in C_{tr}$. It follows  by the definition of the intervals $I_j$ that in this case either $t,t+1 \in I_j$ or $t,t+1 \not\in I_j$. Thus, we have by the second and third properties of $\SimInt$ of \Cref{L:simulatable-interval} that for all intervals $I_j$ and local terms $H_i$ of the circuit-to-Hamiltonian construction of $\VerSimX$ with corresponding set of qubits $S_i$, $\Tr(H_i\rho(x,I_j,S_i)) = 0$ , and the same holds for $\rho(x,S_i)$ by convexity.
\end{proof}

\subsection{New proof for locally simulatable codes}
\label{sec:simulatable-codes}

Finally, we now provide our new (simpler) proof for \Cref{lem:simulatable}. We split our proof in three parts: in \Cref{L:part1}, we show that codewords are simulatable; then in \Cref{L:cliff} we show that intermediate steps of transversal Clifford gates are simulatable; finally in \Cref{L:T-gadget} we prove that intermediate steps of $\T$-gadgets are simulatable. These three parts together  prove \Cref{lem:simulatable} in a straightforward way. Note that our proofs are applicable for any  QECC that admits transversal Clifford gates and Clifford gadgets for non-Clifford gates (such as $\T$) with magic states.

Let us start by showing how to compute  reduced density matrices on a small set of qubits of a codeword  of a QECC.
\begin{lemma}
\label{L:part1}
  Let $\calC$ be an $[[N,1,D]]$-QECC whose encoding procedure $\Enc$ has complexity $\poly(N)$ and let $\rho$ be a qubit.
  Then there exists a classical algorithm  $\SimC^{\mathsf{CW}}$ that on input $S \subseteq [N]$,  $|S|\leq(D-1)/2$,  runs in time $poly(2^N)$ and outputs the classical description of $\rho(S)$ such that for every qubit $\sigma$, we have
    \begin{align}\label{eq:trace-out-encoding}
     \rho(S) =  \Tr_{\overline{S}}\left(Enc(\sigma)\right).
    \end{align}
\end{lemma}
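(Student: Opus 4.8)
The plan is to derive \Cref{L:part1} almost immediately from \Cref{C:logical-trace-one-qubit}; the only substantive addition is the observation that the (state-independent) reduced matrix can simply be written down by brute-force classical simulation of the encoding circuit in time $\poly(2^N)$.

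First I would invoke \Cref{C:logical-trace-one-qubit}: since $|S| \le (D-1)/2$, there is a fixed $|S|$-qubit density matrix $\tau_S$ with $\tau_S = \Tr_{\overline S}(\Enc(\ket{\psi}))$ for every pure single-qubit state $\ket{\psi}$. I would then promote this to an arbitrary (possibly mixed) qubit $\sigma$: writing $\sigma = \sum_i p_i \kb{\psi_i}$ and using that the encoding $\Enc$ (viewed as a linear map on operators) and the partial trace are both linear, $\Tr_{\overline S}(\Enc(\sigma)) = \sum_i p_i \Tr_{\overline S}(\Enc(\kb{\psi_i})) = \sum_i p_i \tau_S = \tau_S$. (Equivalently, one could apply \Cref{L:logical-trace} directly, taking $A$ to be the encoded qubit and $E$ a purifying register for $\sigma$, and then trace out $E$.) In particular the right-hand side of \Cref{eq:trace-out-encoding} does not depend on $\sigma$ and equals $\tau_S$.

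Next I would describe $\SimC^{\mathsf{CW}}$: on input $S$, it classically simulates the encoding circuit of $\calC$ — which by hypothesis consists of $\poly(N)$ elementary gates — on the fixed reference input $\kb 0$, maintaining the state as an explicit $2^N \times 2^N$ matrix, and then carries out the partial trace over $\overline S$ to obtain $\rho(S) := \Tr_{\overline S}(\Enc(\kb 0))$. Each gate is applied by matrix multiplication on a $2^N$-dimensional space, and the partial trace is a sum of $2^{N-|S|}$ blocks, so the whole computation runs in time $\poly(2^N)$. By the previous paragraph, $\rho(S) = \tau_S = \Tr_{\overline S}(\Enc(\sigma))$ for every qubit $\sigma$, which is exactly \Cref{eq:trace-out-encoding}.

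I do not anticipate any real obstacle here; the only points that need a little care are (i) fixing the reference input ($\kb 0$, say) \emph{before} appealing to state-independence, and (ii) handling mixed $\sigma$, which is dealt with by the linearity argument above. If one wants an exact representation of the entries of $\rho(S)$ rather than a finite-precision one, it suffices to note that for the concatenated Steane code the encoding circuit uses only $\Had$ and $\CNOT$ gates, so all amplitudes lie in $\mathbb{Z}[1/\sqrt 2]$ and admit descriptions of size $\poly(2^N)$; in any case $N$ is a constant in all of our applications of this lemma, so this is immaterial.
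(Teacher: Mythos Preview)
Your proposal is correct and follows essentially the same approach as the paper: invoke \Cref{C:logical-trace-one-qubit} to get state-independence of the reduced density matrix, then compute $\Tr_{\overline S}(\Enc(\kb 0))$ by brute-force classical simulation in time $\poly(2^N)$. Your write-up is slightly more detailed (explicit extension to mixed $\sigma$ by linearity, explicit runtime accounting, the remark on exact representability), but these are elaborations rather than a different argument.
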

\begin{proof}
  From \Cref{C:logical-trace-one-qubit}, we have that $|S| \leq (D-1)/2$ implies that there exists some state $\tau_S$ such that for all $\psi$
  \[
    \tau_S = \Tr_{\overline{S}}(\Enc(\psi)).
  \]

   In this case, the algorithm $\SimC^{\mathsf{CW}}$ can then output \Cref{eq:trace-out-encoding} by classically  computing the classical description of  $\Enc(\kb{0})$ and then computing
    $\rho(S) = \Tr_{\overline{S}}(\Enc(\kb{0})) = \Tr_{\overline{S}}(\Enc(\sigma))$ in time $\poly(2^N)$.
\end{proof}

We now apply the \Cref{L:part1} to show how to compute the reduced density matrix on a state in the intermediate steps of transversal Clifford computation on encoded data.
\begin{lemma}\label{L:cliff}
  Let $\calC$ be  the  $[[N,1,D]]$-QECC obtained by the  $k$-fold concatenation of the Steane code.   Let $G \in \{\Had,\CNOT,\Pg\}$, $m_G \in \{1,2\}$ be the number of qubits on which $G$ act and $U_1,\ldots, U_{N}$  be the physical gates that implement $G$ transversally on the encoding  of $m_G$ qubits under $\calC$, {\em i.e.}, $U_i$ applies~$G$ on the $i$-th qubit of the encoding of each qubit of an $m_G$-qubit system.
  Then there exists a classical algorithm  $\SimC^{\mathsf{Cliff}}$ that on  input $G$, $0 \leq t \leq N$ and subset $S$, $|S|\leq(D-1)/4$, we have that $\SimC^{\mathsf{Cliff}}(G,t,S)$ runs in time $poly(2^N)$ and outputs the classical description of a  state $\rho(G,t,S)$ such that for every $m_G$-qubit state $\sigma$
    \begin{align}%
    \rho(G,t,S) = \Tr_{\overline{S}}\left((U_t \cdots U_1) \Enc(\sigma) (U_{t} \cdots U_1)^\dagger\right).
    \end{align}
       Moreover, $\SimC^{\mathsf{Cliff}}$ can be modified to receive a classical bit $b$ and $\SimC^{\mathsf{Cliff}}(G,t,S,b)$ outputs the classical description of a state $\rho(G,t,S,b)$ such that
     \[\rho(G,t,S,b) = \Tr_{\overline{S}}\left((U_t^b \cdots U_1^b) ) \Enc(\sigma)(U_{t}^b \cdots U_1^b)^\dagger\right).\]
\end{lemma}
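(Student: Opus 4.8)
The plan is to reduce the statement to the facts already established for codewords — that small reduced density matrices of encoded states are simulatable (\Cref{L:part1}, and its ``composable'' strengthening \Cref{L:logical-trace}) — by exploiting that the partial‑transversal circuit $V := U_t \cdots U_1$ acts position‑by‑position on the code blocks.

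First I would record the tensor structure of $V$. Think of the $m_G$ copies of $\calC$ as columns indexed by a physical position $i \in [N]$. By hypothesis $U_i$ applies $G$ to the $i$-th physical qubit of each of the $m_G$ blocks (for $G=\CNOT$, the $\CNOT$ between position $i$ of the two blocks), so each $U_i$ is supported on column $i$, $V$ is a product of gates on distinct columns, and hence $V$ factors as $V = V_T \otimes V_{\overline{T}}$ across \emph{any} column‑closed set $T$. Given $S$ with $|S| \le (D-1)/4$, let $P\subseteq[N]$ collect the positions touched by $S$ and let $S'\supseteq S$ be the column‑closure of $S$ (all qubits lying in a column of $P$, across all $m_G$ blocks); then each block‑component $S'_j$ has $|S'_j| = |P| \le |S| \le (D-1)/4$ and $|S'| = m_G|P|\le 2|S|$. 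Writing $V = V_{S'} \otimes V_{\overline{S'}}$ with $V_{S'}$ a fixed Clifford supported on $S'$, and using that $V_{\overline{S'}}$ is a unitary acting entirely inside the traced‑out register, together with $\overline{S} = (S'\setminus S)\sqcup\overline{S'}$, one gets
\[
  \Tr_{\overline{S}}\!\bigl(V\,\Enc(\sigma)\,V^\dagger\bigr) \;=\; \Tr_{S'\setminus S}\!\Bigl(V_{S'}\,\Tr_{\overline{S'}}\!\bigl(\Enc(\sigma)\bigr)\,V_{S'}^\dagger\Bigr).
\]

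Next I would compute $\Tr_{\overline{S'}}(\Enc(\sigma))$ and show it is independent of $\sigma$, by peeling off one code block at a time. For $m_G = 1$ this is exactly \Cref{L:part1}, yielding $\tau_{S'}$. For $m_G = 2$ (the $\CNOT$ case), purify $\sigma$ on the two logical registers by a reference $R$, apply \Cref{L:logical-trace} to the encoding of block $1$ with environment the second logical register together with $R$ — legitimate since $|S'_1|\le (D-1)/2$ — to obtain $\Tr_{\overline{S'_1}}(\cdot) = \tau_{S'_1}\otimes(\text{a state on block $2$'s logical register and }R)$, and then apply \Cref{L:logical-trace} (with a further purifying register) or \Cref{L:part1} to block $2$ and trace out the reference systems, giving $\Tr_{\overline{S'}}(\Enc^{\otimes 2}(\sigma)) = \bigotimes_{j=1}^{m_G}\tau_{S'_j}$, independent of $\sigma$. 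Each $\tau_{S'_j}$ is produced in time $\poly(2^N)$ by classically simulating $\Enc(\kb{0})$ and tracing, so $\SimC^{\mathsf{Cliff}}(G,t,S)$ outputs $\Tr_{S'\setminus S}\bigl(V_{S'}(\bigotimes_j\tau_{S'_j})V_{S'}^\dagger\bigr)$ — a computation on $O(N)$ qubits, hence in time $\poly(2^N)$ — which by the displayed identity equals $\Tr_{\overline{S}}(V\Enc(\sigma)V^\dagger)$ for every $\sigma$. For the ``moreover'' part, $\SimC^{\mathsf{Cliff}}(G,t,S,b)$ runs the same procedure with $V$ replaced by $U_t^b\cdots U_1^b$: for $b=1$ nothing changes, and for $b=0$ it is the case $V=I$, where the same peeling argument directly gives $\Tr_{\overline{S}}(\Enc(\sigma)) = \bigotimes_j\tau_{S_j}$.

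The only genuinely delicate point is the bookkeeping in the $\CNOT$ case: one must keep each block‑component of the column‑closure below the distance threshold (this is precisely why $(D-1)/4$ rather than $(D-1)/2$ is assumed, since closing under columns can double the size of $S$), and apply \Cref{L:logical-trace} to the two blocks in a fixed order, so that after peeling block $1$ the residual (non‑codeword) state lives on block $2$'s logical register plus references and does not interfere with the already‑fixed tensor factor $\tau_{S'_1}$ when block $2$ is peeled. Everything else is a routine composition of the partial‑trace identities with the tensor factorization of $V$, and all objects are density matrices on $O(N)$ qubits produced from $\poly(N)$-size circuits, so the runtime is $\poly(2^N)$ throughout.
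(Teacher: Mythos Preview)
Your argument is correct, but it takes a different route from the paper's. The paper expands the target state in the Pauli basis on $S$, uses that $V=U_t\cdots U_1$ is Clifford to conjugate each Pauli $P$ on $S$ to a Pauli $P'$ supported on at most $2|S|$ qubits, and then evaluates $\Tr(\Enc(\sigma)P')=\Tr_R(\Tr_{\overline R}(\Enc(\sigma))\,P')$ via $\SimC^{\mathsf{CW}}$ (\Cref{L:part1}). You instead exploit the column-wise tensor structure of $V$ directly: close $S$ to a column-closed $S'$, factor $V=V_{S'}\otimes V_{\overline{S'}}$, drop the unitary on the traced-out part, and reduce everything to computing $\Tr_{\overline{S'}}(\Enc(\sigma))$, which you handle block-by-block with \Cref{L:logical-trace}. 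Both routes land on the same numerical constraint (the relevant subset has size at most $2|S|\le(D-1)/2$) and the same runtime. Your approach has the pleasant feature that it never uses that $G$ is Clifford --- only that it is transversal --- so it would apply verbatim to any transversal gate; the paper's argument genuinely needs the Clifford property to keep $P'$ a single Pauli of bounded weight. Conversely, the paper's Pauli-decomposition viewpoint is a bit more flexible when the circuit does \emph{not} factor along columns (as in the $\T$-gadget of \Cref{L:T-gadget}), though that is handled separately anyway. One small remark: the multi-block independence $\Tr_{\overline{S'}}(\Enc^{\otimes 2}(\sigma))=\tau_{S'_1}\otimes\tau_{S'_2}$, which you derive by peeling with \Cref{L:logical-trace}, is also what the paper's proof implicitly relies on when it invokes $\SimC^{\mathsf{CW}}$ on a set $R$ spanning two blocks; your write-up makes that step explicit.
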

\begin{proof}
  Since tensor products of one-qubit Pauli matrices form a basis for the space of all matrices,\footnote{In other words, for every $2^t \times 2^t$ complex matrix $M$, we have
    $M = \frac{1}{2^t} \sum_{P \in \mathcal{P}_{t}} \Tr(MP) \cdot P$.} we have that
    \begin{equation*}
     \Tr_{\overline{S}}\left((U_t \cdots U_1) \Enc(\sigma) (U_{t}
      \cdots U_1)^\dagger\right) =
      \frac{1}{2^{|S|}} \sum_{P \in \mathcal{P}_{|S|}} \Tr\left((U_t \cdots U_1) \Enc(\sigma) (U_{t}
      \cdots U_1)^\dagger P\right) \cdot P,
    \end{equation*}
    where we abuse  notation  and inside the trace, we extend $P$ to act on $mN$ qubits, acting non-trivially on the qubits in $S$ and trivially ({\em i.e.}, identity) on $\overline{S}$.  Considering a single term in the above sum, we have that
    \begin{align*}
      &\Tr\left((U_t \cdots U_1) \Enc(\sigma)  (U_{t}
      \cdots U_1)^\dagger P\right)
      \\
      &= \Tr\left((U_t \cdots U_1) \Enc(\sigma)  P' (U_{t}
      \cdots U_1)^\dagger \right) \\
      &= \Tr\left(\Enc(\sigma)  P'\right) \\
      &=
      \Tr_{R}\left(\Tr_{\overline{R}}\left(\Enc(\sigma)  \right)P'\right),
    \end{align*}
    where $P'$ is the unique tensor product of Pauli matrices defined by $PU_t \cdots U_1 = U_t \cdots U_1 P'$ (where we consider again $P$ acting on $mN$ qubits as discussed above),  the second equality holds from the cyclic property of trace, and  $R$ is the set of qubits on which $P'$ acts non-trivially ({\em i.e.}, not identity). Note that since each $U_i$ acts on a distinct set of $2$ physical qubits and~$P$ acts non-trivially on at most~$|S|$ qubits, it follows that $P'$ acts non-trivially on at most $2|S|$ qubits and therefore $|R| \leq 2|S|$.
    In this case, $\SimC^{\mathsf{Cliff}}$ can compute the classical description of $\rho(R) = \Tr_{\overline{R}}\left(Enc(\sigma)  \right)$ using $\SimC^{\mathsf{CW}}$, since $|R| \leq 2|S| \leq (D-1)/2$.  With the classical description of $\rho(R)$ computed by $\Sim^{\mathsf{CW}}$ from \Cref{L:part1}, $\SimC^{\mathsf{Cliff}}$  can easily compute
    \[
      \rho(G,t,S) =
  \frac{1}{2^{|S|}}     \sum_{P \in \mathcal{P}_{|S|}} \Tr(\rho(R)P') \cdot P
    =
     \Tr_{\overline{S}}\left((U_t \cdots U_1) \Enc(\sigma) (U_{t}
      \cdots U_1)^\dagger\right)
      ,
      \]
    by iterating over all $P \in \mathcal{P}_{|S|}$, computing $P'$ and $\Tr(\sigma P')$ and then summing up $\Tr(\sigma P') \cdot P$.

      Moreover, we can easily adapt the above method to the situation where $G$ is controlled by a bit~$b$: if
 $b = 0$, run $\SimC^{\mathsf{CW}}(S)$ on the qubits, since no computation is performed on the qubits; whereas if
 $b = 1$, run $\SimC^{\mathsf{Cliff}}(G,t,S)$, since $G$ is being performed on the qubits.
\end{proof}

\begin{figure}[h]
  \begin{subfigure}[b]{0.4\linewidth}
\begin{tikzpicture}
\node at (-.3,1) {$\ket{\psi}$};
\node at (-.5,0) {$\mathsf{T}\ket{+}$};

\draw(0,1)--(1,1); 	\draw (1.5,1.02)--(3.5,1.02);
				\draw (1.5,.98)--(3.5,.98);
\draw(0,0)--(3.5,0);

\draw (.5,0)--(.5,1.1);
\draw (.5,1) circle (.1);
\filldraw (.5,0) circle (.05);

\filldraw[fill=white] (1,.75) rectangle (1.5,1.25);
\draw (1.05,1) arc (120:60:.4);
\draw (1.25,.9) -- (1.4,1.1);

\filldraw (2,1) circle (.05);
\draw (1.98,.25)--(1.98,1);
\draw (2.02,.25)--(2.02,1);
\filldraw[fill=white] (1.75,-.25) rectangle (2.25,.25);
\node at (2,0) {$\mathsf{X}^c$};

\filldraw (2.75,1) circle (.05);
\draw (2.73,.25)--(2.73,1);
\draw (2.77,.25)--(2.77,1);
\filldraw[fill=white] (2.5,-.25) rectangle (3,.25);
\node at (2.75,0) {$\mathsf{P}^c$};

\node at (3.75,1) {$c$};
\node at (4,0) {$\mathsf{T}\ket{\psi}$};
\end{tikzpicture}
  \caption{}
  \label{fig:t-gadget}
  \end{subfigure}~
  \begin{subfigure}[b]{0.4\linewidth}
\begin{tikzpicture}
  \node at (-.8,1) {$Enc(\ket{\psi})$};
  \node at (-1,0) {$Enc(\mathsf{T}\ket{+})$};

\draw(0,1)--(1,1); 	\draw (1.5,1.02)--(2,1.02);
				\draw (1.5,.98)--(2,.98);
\draw(0,0)--(4.5,0);

  \draw[red] (2.5,0.75) -- (4.5,0.75);
  \draw[red] (2.5,0.85) -- (4.5,0.85);
  \node at (3.4,1) {$\cdots$};
  \draw[red] (2.5,1.15) -- (4.5,1.15);
  \draw[red] (2.5,1.25) -- (4.5,1.25);

\filldraw[fill=white] (1.75,0.75) rectangle (2.65,1.25);
\node at (2.2,1) {$Dec$};

\draw (.5,0)--(.5,1.1);
\draw (.5,1) circle (.1);
\filldraw (.5,0) circle (.05);

\filldraw[fill=white] (1,.75) rectangle (1.5,1.25);
\draw (1.05,1) arc (120:60:.4);
\draw (1.25,.9) -- (1.4,1.1);

  \filldraw[red] (3,1.25) circle (.05);
  \draw[red] (2.98,.25)--(2.98,1.25);
  \draw[red] (3.02,.25)--(3.02,1.25);
\filldraw[fill=white] (2.75,-.25) rectangle (3.25,.25);
\node at (3,0) {$\mathsf{X}^c$};

  \filldraw[red] (3.75,1.25) circle (.05);
  \draw[red] (3.73,.25)--(3.73,1.25);
  \draw[red] (3.77,.25)--(3.77,1.25);
\filldraw[fill=white] (3.5,-.25) rectangle (4,.25);
\node at (3.75,0) {$\mathsf{P}^c$};

\node at (4.75,1.3) {$c$};
\node at (5,0) {$\mathsf{T}\ket{\psi}$};
\end{tikzpicture}
    \caption{}
\label{fig:encoded-t-gadget}
  \end{subfigure}
\caption{Gadget for performing $\T$-gate. In \Cref{fig:t-gadget}, we consider the gadget on unencoded qubits, whereas in \Cref{fig:encoded-t-gadget}, it is performed on encoded data. Note that in \Cref{fig:encoded-t-gadget}, the control qubit needs to be decoded in order to perform transversal $X^c$ and $P^c$ operations and we highlight the operations done at the decoded level in red.
}
\end{figure}

\begin{figure}[h]
\centering
\begin{tikzpicture}
  \node at (-1,2) {$\Enc(\ket{0})$};
  \node at (-1,1) {$\Enc(\ket{\psi})$};
  \node at (-1,0) {$\Enc(\mathsf{T}\ket{+})$};

  \draw[dashed] (1.25, 2.5) -- (1.25,-0.5);
  \draw(0,2) -- (6,2);
\draw(0,1)--(2.5,1);
\draw(4.5,1)--(6,1);
\draw(0,0)--(6,0);

\draw (.5,0)--(.5,1.1);
\draw (.5,1) circle (.1);
\filldraw (.5,0) circle (.05);

\draw (1,1)--(1,2.1);
\draw (1,2) circle (.1);
\filldraw (1,1) circle (.05);

\filldraw[fill=white] (1.5,0.75) rectangle (2.5,1.25);
\node at (2,1) {$\Dec$};

  \draw[red] (2.5,0.75) -- (4.5,0.75);
  \draw[red] (2.5,0.85) -- (4.5,0.85);
  \node at (3.4,1) {$\cdots$};
  \draw[red] (2.5,1.15) -- (4.5,1.15);
  \draw[red] (2.5,1.25) -- (4.5,1.25);

  \filldraw[red] (3,1.25) circle (.05);
  \draw[red] (3,1.25)--(3,0);
\filldraw[fill=white] (2.75,-.25) rectangle (3.25,.25);
\node at (3,0) {$\mathsf{X}$};

  \filldraw[red] (3.75,1.25) circle (.05);
  \draw[red] (3.75,1.25)--(3.75,0);
\filldraw[fill=white] (3.5,-.25) rectangle (4,.25);
\node at (3.75,0) {$\mathsf{P}$};

\filldraw[fill=white] (4.5,0.75) rectangle (5.5,1.25);
\node at (5,1) {$Enc$};

  \node at (7,0) {$\Enc(\mathsf{T}\ket{\psi})$};
\end{tikzpicture}
  \caption{Unitary version of the $T$-gadget described in \Cref{fig:encoded-t-gadget}. We split the gadget into two phases (which is denoted by a dashed line). In the first phase, all computation happens at the logical level. In the second phase, one qubit is decoded and some operations now happen at the {\em physical level}, which is highlighted in red.}\label{fig:new-t-gadget}
\end{figure}

     We now address the remaining part: computing using magic state gadgets. Unfortunately, we cannot  consider the encoded version of the well-known gadget to compute the $\sf T$-gate using $\ket{\T}$ magic states (see \Cref{fig:t-gadget,fig:encoded-t-gadget}), since for our applications, we require that our circuit be {\em unitary}. For that, we use the unitary version of \Cref{fig:encoded-t-gadget}, described in \Cref{fig:new-t-gadget}.

The following lemma shows how to simulate the reduced density matrices on the computation of $\T$-gadgets.

\begin{lemma}\label{L:T-gadget}
  Let $\calC$ be a $[[N,1,D]]$-QECC be the $k$-fold concatenated Steane code.
  Let $U_1,\ldots,U_{\ell}$ be the unitary circuit for the $T$-gadget described in \Cref{fig:new-t-gadget}.  Then there exists a classical algorithm  $\SimC^T$ that on input $0 \leq t \leq \ell$ and subset $S$,  $|S|\leq(D-1)/4$, we have that $\SimC^T(t,S)$ runs in time $poly(2^N)$ and outputs the classical description of an $|S|$-qubit state $\rho(t,S)$ such that for every qubit $\sigma$
    \begin{align}\label{eq:simulation-t}
      \rho(t,S) =    \Tr_{\overline{S}}\left((U_t \cdots U_1) \Enc( \kb{0} \otimes \sigma \otimes \kb{\T}) (U_{t} \cdots U_1)^\dagger\right).
    \end{align}
\end{lemma}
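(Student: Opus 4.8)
The plan is to prove \Cref{L:T-gadget} by cutting the unitary $\T$-gadget of \Cref{fig:new-t-gadget} into blocks, computing the (very simple) form of the state at each block boundary, and handling a partial block either by direct simulation or by a Pauli-propagation argument. Write the three registers as $\sR_1=\Enc(\ket 0)$, $\sR_2=\Enc(\sigma)$, $\sR_3=\Enc(\T\ket +)$, and split $U_1,\dots,U_\ell$ into: $B_1$, the transversal $\CNOT$ from $\sR_3$ to $\sR_2$; $B_2$, the transversal $\CNOT$ from $\sR_2$ to $\sR_1$; $B_3$, $\Dec$ on $\sR_2$ (leaving a single decoded physical qubit $\mathsf c$ together with $N-1$ fresh $\ket 0$'s); $B_4$, the $\mathsf c$-controlled transversal logical $\X$ on $\sR_3$; $B_5$, the $\mathsf c$-controlled transversal logical $\Pg$ on $\sR_3$; and $B_6$, $\Enc$ on $\sR_2$. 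For $t$ inside $B_j$ I would write $\Delta_t=B_j^{(t)}\,\rho_{j-1}\,(B_j^{(t)})^\dagger$, with $\rho_{j-1}$ the state after $B_1\cdots B_{j-1}$ and $B_j^{(t)}$ the partial application of $B_j$, and show $\Tr_{\overline S}(\Delta_t)$ is independent of $\sigma$ and computable in time $\poly(2^N)$ for $|S|\le (D-1)/4$. Since $N$ is a constant in our application, ``$\poly(2^N)$'' just means that once $\sigma$-independence is known one evaluates $\Tr_{\overline S}(\Delta_t)$ by plugging in $\sigma=I/2$ and simulating the explicit, finite-size circuit; the content is the $\sigma$-independence.

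For phase $1$ ($t$ in $B_1$ or $B_2$) I would argue exactly as in the proof of \Cref{L:cliff}: back-propagate each $P\in\mathcal P_{|S|}$ through $B_j^{(t)}$ and the preceding full blocks to the input $\rho_0=\Enc(\ket 0)\otimes\Enc(\sigma)\otimes\Enc(\T\ket +)$, obtaining $\pm P'$ with $P'$ a tensor of Paulis. Each transversal $\CNOT$ acts index-by-index on $((\sR_a)_i,(\sR_b)_i)$, so this propagation keeps the support of $P'$ inside each of $\sR_1,\sR_2,\sR_3$ of size at most $2|S|\le (D-1)/2$; since $\rho_0$ is a tensor product of three codewords, $\Tr(P'\rho_0)$ factorizes, its $\sR_2$-factor $\Tr(\Enc(\sigma)P'_2)$ equals $\Tr(\tau_{R_2}P'_2)$ by \Cref{C:logical-trace-one-qubit} and is therefore $\sigma$-independent, and the other two factors are read off from the explicit matrices $\Enc(\ket 0)$ and $\Enc(\T\ket +)$.

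Phase $2$ ($t$ in $B_3,\dots,B_6$) is the crux, and the first move is to exploit that $\sR_1$ is \emph{never touched} after phase $1$, so \Cref{L:logical-trace} peels it off: $\Tr_{\overline S}(\Delta_t)=\tau_{S_1}\otimes\Tr_{\overline{S_{23}}}\big(M_t\big((\Enc\otimes\Enc)(\gamma_{23})\big)\big)$, where $S=S_1\sqcup S_{23}$, $M_t$ is the phase-$2$ circuit up to time $t$ (acting only on $\sR_2,\sR_3$), and $\gamma_{23}$ is the two-qubit logical state of $(\sR_2,\sR_3)$ at the start of phase $2$. A short calculation of the gadget (with $\ket\psi=\alpha\ket 0+\beta\ket 1$) gives the structural fact I need: $\gamma_{23}=\tfrac12\ketbra{0}{0}\otimes\ketbra{\mu_0}{\mu_0}+\tfrac12\ketbra{1}{1}\otimes\ketbra{\mu_1}{\mu_1}$, i.e.\ it is \emph{classical on the $\sR_2$ logical qubit}, with $\sR_2$-marginal $I/2$ and single-qubit (possibly $\sigma$-dependent) states $\ket{\mu_0},\ket{\mu_1}$ on $\sR_3$. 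This classical-on-$\sR_2$ form — which is exactly what the extra $\ket 0$ register of \Cref{fig:new-t-gadget} buys, decohering the ``measurement'' qubit so $\sR_2$ carries no information about $\sigma$ before it is decoded — together with the fact that the $\sR_2$ ancillas stay in product $\ket 0$'s, is preserved through $B_3,B_4,B_5$. Given this I would dispatch each sub-block: in $B_3$, within each classical branch $\sR_2,\sR_3$ are unentangled, the $\sR_3$ part is a codeword whose $|S_3|$-qubit marginal ($|S_3|\le(D-1)/4$) is $\sigma$-independent by \Cref{C:logical-trace-one-qubit}, while the $\sR_2$ part is $\Enc(\ket b)$ hit by a partial $\Dec$ and is simulated explicitly; in $B_4$ and $B_5$, $\sR_3$ becomes $\Enc(\ket{\mu_b})$ conjugated by a \emph{partial} transversal single-qubit Clifford ($\X^{\otimes j}$ or $\Pg^{\otimes j}$ on $j$ of the $N$ qubits), which is again a distance-$D$ encoding on $N$ qubits, so \Cref{C:logical-trace-one-qubit} still yields a $\sigma$-independent $|S_3|$-qubit marginal; in $B_6$, by then $\sR_2,\sR_3$ are unentangled, $\sR_3$ is a codeword with $\sigma$-independent marginal, and $\sR_2$ (the qubit $\mathsf c$ and ancillas hit by a partial $\Enc$) is manifestly $\sigma$-independent and explicitly computable.

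The main obstacle is precisely what forces this block-local bookkeeping: $\Dec$ and $\Enc$ each consist of $\poly(N)$ gates, so naively back-propagating a Pauli through a \emph{partial} $\Dec$ or $\Enc$ would spread its support far past $(D-1)/2$ and break the secret-sharing estimate, and likewise back-propagating past the $\mathsf c$-controlled transversal corrections of $B_4,B_5$ would smear support over up to $N$ qubits of $\sR_3$. The resolution is never to cross a whole block: one computes the boundary states $\rho_{j-1}$ directly — affordable because $N$ is constant, and harmless for $\sigma$-independence because at every boundary the $\sigma$-relevant part of $\sR_2$ is either $\Enc(I/2)$ or a classical mixture of $\Enc(\ket b)$'s — and only ever back-propagates through a partial block, where the gate count is $O(N)$ and, in $B_4,B_5$, the conjugation is weight-preserving. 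Collecting the three regimes yields $\SimC^T$; together with \Cref{L:part1} and \Cref{L:cliff} (covering $t=0$ and the transversal-Clifford steps) this then proves \Cref{lem:simulatable}.
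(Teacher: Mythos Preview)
Your proposal is correct and follows essentially the same approach as the paper: split the gadget into the two transversal $\CNOT$s (handled by \Cref{L:cliff}), peel off the first register via \Cref{L:logical-trace}, use that after the two $\CNOT$s the logical state on $(\sR_2,\sR_3)$ is a classical mixture over $b\in\{0,1\}$ with $\sR_2$-marginal $I/2$, and then treat the $\Dec$/$\Enc$ and controlled-Clifford sub-blocks branchwise. The only cosmetic difference is that for $B_4,B_5$ you argue directly that a partial transversal single-qubit Clifford applied to a codeword is itself a codeword of a distance-$D$ code (so \Cref{C:logical-trace-one-qubit} applies), whereas the paper packages the same observation by invoking the classically-controlled variant of \Cref{L:cliff}; both routes are equivalent.
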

\begin{proof}

  Notice that if $t$ lies in the first phase of \Cref{fig:new-t-gadget}, \emph{i.e.}, the transversal application of any of the first two $\CNOT$ gadgets, then the simulation  is already covered by $\SimC^{\mathsf{Cliff}}$ of \Cref{L:cliff}.
    The challenging part here is when $t$ lies in the second phase of the gadget since  $i)$ one qubit is completely decoded in this computation and $ii)$ we are now applying controlled Cliffords whose control-qubit is not classical.
  For simplicity, we assume $\sigma = \kb{\psi}$ and the extension to mixed states follows by convexity.

    For $i \in \{1,2,3\}$, let $E_i$ be the set of qubits of the encoding of the $i$-th logical qubit, ordered from top to bottom. Let $\ket{\phi} = \CNOT_{2,1}\CNOT_{3,2}\ket{0}\ket{\psi}\ket{\T}$.
  Notice  that
  \[\Tr_{1}(\kb{\phi}) = \frac{1}{2}(\kb{0} \otimes \kb{\psi} + \kb{1} \otimes XP^\dagger\kb{\psi}PX),\]
  and by \Cref{L:logical-trace}, we have that
  \begin{align*}
    \Tr_{\overline{S \cup E_2 \cup E_3}}(\Enc(\kb{\phi})) = \frac{1}{2} \sum_{b \in \{0,1\}} \tau_{S \cap E_1} \otimes \Enc(\kb{b} \otimes (XP^\dagger)^b\kb{\psi}(PX)^b)\,,
    \end{align*}
    for some $\tau_{S \cap E_1}$ independent of the encoded state.

    In order to prove the simulatability of the second phase of \Cref{fig:new-t-gadget}, we consider two subcases: when $t$ lies in the encoding/decoding of the second qubit; and during the transversal application of the controlled $X$ and $P$ gates.

   When $t$ lies within the decoding/encoding of the second qubit, notice that the third qubit is not touched by any operation and therefore the third qubit is a codeword (for some unknown logical qubit). Using \Cref{L:logical-trace} again, we have that
    \begin{align*}
      \Tr_{E_2 \setminus S}(\sigma) =  \frac{1}{2}\sum_{b \in \{0,1\}} \tau_{S \cap E_1} \otimes \Enc(\kb{b}) \otimes \tau_{S \cap E_3},
    \end{align*}
    Therefore, in order to compute \Cref{eq:simulation-t}, $\SimC^T$ can compute the classical description of the state $\frac{1}{2} \sum_{b \in \{0,1\}} Enc(\kb{b})$ at the decoding stage  corresponding to $t$ and then compute its reduced density matrix on the qubits $S \cap E_2$ and output it (along with
  $\tau_{S \cap E_1}$ and $\tau_{S \cap E_3}$).
   An analogous argument holds if $t$ lies within the re-encoding of the second qubit.

   Finally, when $t$ lies in the transversal application of $X^b$ or $P^b$, notice that right after the decoding in the circuit described in \Cref{fig:new-t-gadget}, we have the state
    \begin{align*}
      \Tr_{\overline{S \cup E_2 \cup E_3}}(\Enc(\kb{\phi})) = \frac{1}{2} \sum_{b \in \{0,1\}} \tau_{S \cap E_1} \otimes \kb{b} \otimes \kb{0}^{N-1} \otimes \Enc(X^b(P^\dagger)^b\kb{\psi}P^bX^b)\,,
    \end{align*}
    and we want to apply a classically-controlled transversal Clifford gate with a {\em known} control qubit which is chosen uniformly at random. This case is covered by the second part of \Cref{L:cliff}, finishing the proof.
\end{proof}

\section{Zero-knowledge $\Xi$-protocol for $\QMA$}
\label{sec:xizk-protocol}
In this section, we show that simulatable proof systems lead to a zero-knowledge
protocols with a very simple proof structure, which can be classified in the
``commit-challenge-response'' framework.
As mentioned in \Cref{sec:background}, when all the messages are classical, such type of protocols are called
$\Sigma$-protocols. We extend this definition to the quantum setting
 by allowing the
first message to be a quantum state and in this case we call it a
\emph{$\Xi$-protocol}.

\begin{definition}[$\Xi$-protocol]
An $\Xi$-protocol consists of a three-round protocol between a
  prover and a verifier and it takes the following form:
\begin{itemize}
  \item[] \textbf{Commitment:} In the first round, the prover sends some initial quantum
    state.
  \item[] \textbf{Challenge:} In the second round, the verifier sends a uniformly random
    challenge $c \in [m]$.
  \item[] \textbf{Open:} The prover answers the challenge $c$ with some classical value.
\end{itemize}
\end{definition}

For simplicity we denote $\XiQZK$ as the class of problems that have a $\Xi$
computational quantum zero-knowledge proof system.

\subsection{Protocol}

\begin{figure}[H]
\begin{center}
\begin{tabular}{l | p{12cm}}
 \textbf{Notation} & \textbf{Meaning} \\
\hline
  $n$   & Number of the qubits in the \SimQMA{} proof  \\
  $k$   & Locality parameter \\
  $\Pi_c$ & POVM corresponding to a check of $\SimQMA$ proof system \\
  $S_c$ & Set of qubits on which $\Pi_c$ acts non-trivially \\
 $m$       &  Number of different $\SimQMA$ checks \\
  $\rho_S$   & Reduced density matrix of the proof on set $S$ of qubits for $|S|
  = k$ \\
  $\tau$ & Quantum state that is supposed to pass the checks and be consistent with all local
  density matrices up to negligible error \\
  $\sigma(c)$ & $\rho_c^{\reg{S_c}} \otimes \kb{0}^{\reg{\overline{S_c}}}$ \\
  $\zeta$    & Side-information of a malicious verifier \\
  $\tilde{\phi}_{a,b}$  &  $\X^a\Z^b\phi \X^a\Z^b$, for a $q$-qubit quantum state $\phi$ and $a,b \in
  \01^q$
\end{tabular}
\end{center}
\caption{Notation reference}
\label{fig:notation}
\end{figure}

We describe in \Cref{fig:sigma-ZK} the zero knowledge $\Xi$ protocol for $\QMA$,
whose informal description was given in \Cref{sec:techniques}.

\newcommand{\definitionsProt}{
    Let $A = (\ayes,\ano)$ be  a problem in $k$-$\SimQMA$ with soundness $\delta$, $x \in A$,
    $\{\Pi_c\}$ be the set of POVMs for $x$, and $\tau$ be a (supposed)
    simulatable witness
    for $x$.
}
\newcommand{\definitionsPoQ}{
    Let $A = (\ayes,\ano)$ to be  a problem in $k$-$\SimQMA$, $x \in A$,
    $\{\Pi_c\}$ be the set of POVMs for $x$
    and $P^*$
    be a prover that makes the verifier accept with probability at least
    $\kappa(n) \geq 1 - \frac{1}{2m^2}$ in the $\Xi$-protocol of
    \Cref{fig:sigma-ZK}.
}
\newcommand{\definitionsPoQSeq}{
    Let $A = (\ayes,\ano)$ to be  a problem in $k$-$\SimQMA$, $x \in A$,
    $\{\Pi_c\}$ be the set of POVMs for $x$
    and $P^*$
    be a prover that makes the verifier accept with probability at least
    $\kappa(n) \geq \left(1 - \frac{1}{2m^2}\right)^{\ell}$ in the $\ell$-fold
    sequential repetition of $\Xi$-protocol of
    \Cref{fig:sigma-ZK}.
}

\newcommand{\definitionsSim}{
    Let $A = (\ayes,\ano)$ to be a problem in $k$-$\SimQMA$, $x \in \ayes$, and
    $\rho_c = \rho(x,S_c)$ be the local density matrix
    of a simulatable witness $\tau$ for $x$ on the qubits corresponding to $\Pi_c$.
}

\begin{figure}[H]
\rule[1ex]{\textwidth}{0.5pt}
\definitionsProt
\begin{enumerate}
  \item Prover picks $a,b \inr \01^{n}$ and $r \inr \cR$,
    $\cR$ is all of the possible randomness needed to commit to $2n$ bits.
  \item Prover sends $ \widetilde{\tau}_{a,b}\otimes\kb{\commit{a,b}{r}}$, where
    $\commit{a,b}{r}$ is the commitment to each bit of
    $a$ and~$b$.
  \item The verifier sends $c\inr[m]$.
  \item The prover opens the commitment for
    $a|_{S_c}$ and $b|_{S_c}$, where $S_c$ is the set of qubits on which $\Pi_c$
    acts non-trivially.
  \item If the commitments do not open, the verifier rejects.
  \item The verifier measures
    $\X^{a|_{S_c}}\Z^{a|_{S_c}}\tilde{\tau}_{a,b}\Z^{a|_{S_c}}\X^{a|_{S_c}}$ with
    POVMs $\{\Pi_c, \Id - \Pi_c\}$, and accepts if and only if the outcome is~$\Pi_c$.
\end{enumerate}
\rule[2ex]{\textwidth}{0.5pt}\vspace{-.5cm}
    \caption{Zero-knowledge $\Xi$-protocol for \SimQMA{}.}\label{fig:sigma-ZK}
\end{figure}

\subsection{Computational zero-knowledge proof for $\QMA$}

The goal of the section is to prove that every language in QMA has a
$\Xi$-protocol that is a quantum computational zero-knowledge proof system if we
assume that the commitment used in~\Cref{fig:sigma-ZK} is
computationally hiding and unconditionally binding.

We first state two lemmas that will be proved in
\Cref{sec:xi-compl-sound,sec:xi-zk}, respectively.

\newcommand{\bodyxicompl}{
  The protocol in \Cref{fig:sigma-ZK} has completeness $1 - \negl(n)$ and
  soundness $\delta$.}
\begin{lemma}
  \label{lem:completeness-soundness}
  \bodyxicompl
\end{lemma}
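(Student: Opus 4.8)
\textbf{Proof plan for Lemma~\ref{lem:completeness-soundness} (completeness $1-\negl(n)$ and soundness $\delta$ for the $\Xi$-protocol of \Cref{fig:sigma-ZK}).}

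The plan is to verify each property directly from the definition of $k$-$\SimQMA$ (\Cref{def:simulatable-proof}) together with the unconditional binding property of the commitment scheme. First I would treat \textbf{completeness}: fix $x\in\ayes$ and let $\tau$ be a simulatable witness for $x$, so that $\Tr(\Pi_c\tau)\ge 1-\negl(n)$ for every $c\in[m]$, and moreover $\trNorm{\Tr_{\overline{S_c}}(\tau)-\rho(x,S_c)}\le\negl(n)$. An honest prover picks $a,b\inr\01^n$, $r\inr\cR$, and sends $\widetilde\tau_{a,b}\otimes\kb{\commit{a,b}{r}}$. On challenge $c$ the prover opens the commitments to $a|_{S_c},b|_{S_c}$; these open correctly by the correctness of the commitment scheme. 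The verifier then applies the Paulis $\X^{a|_{S_c}}\Z^{b|_{S_c}}$ to the $S_c$-qubits of $\widetilde\tau_{a,b}$. The key observation is that the quantum one-time pad is applied qubit-wise, so conjugating $\widetilde\tau_{a,b}=\X^a\Z^b\tau\Z^b\X^a$ by $\X^{a|_{S_c}}\Z^{b|_{S_c}}$ on exactly the qubits of $S_c$ perfectly removes the pad \emph{on those qubits}: the state seen by $\Pi_c$ is $\Tr_{\overline{S_c}}(\tau)$ up to the untouched outer pad on $\overline{S_c}$, which is irrelevant since $\Pi_c$ acts only on $S_c$. Hence the probability the verifier's measurement yields $\Pi_c$ equals $\Tr(\Pi_c\tau)\ge 1-\negl(n)$, uniformly over $c$, so overall acceptance is $1-\negl(n)$.

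For \textbf{soundness}, fix $x\in\ano$ and an arbitrary (possibly unbounded, possibly entangled) prover $P^*$. Its first message consists of some quantum state $\xi$ on the message register together with a classical commitment string $\mathrm{comm}$. Here I would invoke the \emph{unconditional binding} property: for each committed bit there is at most one value that can ever be successfully opened, so $\mathrm{comm}$ determines (information-theoretically) unique strings $\hat a,\hat b\in\01^n$ such that the only openings $P^*$ can produce for $a|_{S_c},b|_{S_c}$ are $\hat a|_{S_c},\hat b|_{S_c}$. (If for some position no valid opening exists, the verifier rejects on any challenge touching that position, which only decreases acceptance.) Consequently, on challenge $c$, the verifier either rejects, or applies $\X^{\hat a|_{S_c}}\Z^{\hat b|_{S_c}}$ to the $S_c$-qubits of $\xi$ and measures $\{\Pi_c,\Id-\Pi_c\}$. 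Define $\tau^* := (\X^{\hat a}\Z^{\hat b})^\dagger \,\xi\, (\X^{\hat a}\Z^{\hat b})$, a fixed $n$-qubit state depending only on $P^*$'s first message. Then the acceptance probability on challenge $c$ is at most $\Tr\big(\Pi_c\,\Tr_{\overline{S_c}}(\tau^*)\big)=\Tr(\Pi_c\tau^*)$, again because the pad is qubit-local and $\Pi_c$ acts only on $S_c$ so the untouched outer pad on $\overline{S_c}$ drops out under the partial trace. Averaging over the uniformly random $c\in[m]$, the verifier accepts with probability at most $\frac1m\sum_{c\in[m]}\Tr(\Pi_c\tau^*)\le\delta(n)$ by the soundness clause of \Cref{def:simulatable-proof}. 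Since this bound holds for the state $\tau^*$ induced by any strategy of $P^*$, the protocol has soundness $\delta$.

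The main obstacle — and really the only non-trivial point — is making rigorous the claim that conjugating the one-time-padded state by the Paulis on the subset $S_c$ and then tracing out $\overline{S_c}$ yields exactly $\Tr_{\overline{S_c}}$ of the unpadded state. This is a straightforward computation using $\X^a\Z^b=\bigotimes_i \X^{a_i}\Z^{b_i}$ and the fact that $\Tr_{\overline{S_c}}\big((\bigotimes_{i\in\overline{S_c}}P_i)\,\omega\,(\bigotimes_{i\in\overline{S_c}}P_i)^\dagger\big)=\Tr_{\overline{S_c}}(\omega)$ for unitaries $P_i$, but it is the crux of why the protocol works and I would write it out carefully once and reuse it for both completeness and soundness. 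A secondary subtlety is the handling of ill-formed commitments in the soundness argument; I would dispatch this by noting that unopenable positions can only cause rejection, and that $\hat a,\hat b$ can be fixed arbitrarily on such positions since the corresponding challenges contribute $0$ to the acceptance sum.
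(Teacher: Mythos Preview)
Your proposal is correct and follows essentially the same approach as the paper's own proof: completeness via the simulatable witness and correctness of the commitment, and soundness via unconditional binding to extract unique $\hat a,\hat b$, define the ``undone'' state $\tau^*=\X^{\hat a}\Z^{\hat b}\xi\Z^{\hat b}\X^{\hat a}$, and bound the average acceptance by $\frac{1}{m}\sum_c\Tr(\Pi_c\tau^*)\le\delta$. The paper handles the same subtleties (measuring the commitment register to make it classical, setting $\hat a,\hat b$ to $0$ on defective positions, and the step from $\Pi_c$ acting on $S_c$ with partial pad to the full pad) in the same way, though slightly more tersely.
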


\newcommand{\bodyxizk}{
  The protocol in \Cref{fig:sigma-ZK} is computational zero-knowledge.}
\begin{lemma}
\label{lem:comput-ZK}
\bodyxizk
\end{lemma}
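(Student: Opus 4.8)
The plan is to prove \Cref{lem:comput-ZK} by constructing a simulator $\mathcal{S}_{V'}$ for any (possibly malicious) polynomial-time verifier $V'$, following the Watrous rewinding paradigm, but crucially exploiting the \emph{locally simulatable} structure of the $\SimQMA$ witness. First I would observe that the only message the prover sends before the challenge is $\widetilde{\tau}_{a,b} \otimes \kb{\commit{a,b}{r}}$, where $\widetilde{\tau}_{a,b} = \X^a\Z^b\tau\Z^b\X^a$ is a quantum one-time-padded version of the simulatable witness together with (hiding) commitments to the pad keys. The key point is that the honest prover's response to challenge $c$ reveals only $a|_{S_c}, b|_{S_c}$ together with the corresponding openings, and the verifier's check only touches the qubits in $S_c$, where $|S_c| \le k$. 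So the simulator never needs the whole witness: it only needs the reduced density matrix $\rho_c = \rho(x,S_c)$, which by the definition of $\SimQMA$ (\Cref{def:simulatable-proof}) is computable in polynomial time from $x$ alone.

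The simulator proceeds as follows. Guess a challenge $c' \inr [m]$. Sample $a,b \inr \01^n$ and randomness $r$. Prepare the state $\sigma(c') = \rho_{c'}^{\reg{S_{c'}}} \otimes \kb{0}^{\reg{\overline{S_{c'}}}}$ (using the simulation algorithm for the reduced density matrix on $S_{c'}$), apply the one-time pad $\X^a\Z^b$, and send $\X^a\Z^b\sigma(c')\Z^b\X^a \otimes \kb{\commit{a,b}{r}}$ to $V'$. Run $V'$ to obtain its challenge $c$; if $c = c'$, open $a|_{S_c}, b|_{S_c}$ with the openings from $r$, let $V'$ finish, and output its final state; if $c \ne c'$, abort this attempt and rewind. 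The two things to check are: (i) conditioned on $c = c'$, the simulated transcript is computationally indistinguishable from the real one; and (ii) the probability that $c = c'$ is close enough to $1/m$, uniformly over $V'$'s behavior, that Watrous's rewinding lemma (\Cref{lem:rewinding}) applies with $p_0 \approx 1/m$, yielding a polynomial-size simulator whose output is negligibly close to the true conditional state.

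For (i), the argument is: on the qubits in $S_c$, the simulator sent a one-time pad of $\rho_c$, and by the completeness property of $\SimQMA$, $\trNorm{\Tr_{\overline{S_c}}(\tau) - \rho_c} \le \negl(n)$, so after decryption the verifier sees a state negligibly close to what the honest prover would produce restricted to $S_c$ (which is all the check sees); on the qubits \emph{outside} $S_c$, in the real protocol the verifier sees $\X^{a|_{\overline{S_c}}}\Z^{b|_{\overline{S_c}}}\tau\Z^{b|_{\overline{S_c}}}\X^{a|_{\overline{S_c}}}$ with $a|_{\overline{S_c}}, b|_{\overline{S_c}}$ never revealed — this is the maximally mixed state, exactly as in the simulation where those qubits are $\kb{0}$ one-time-padded. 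The only gap is the commitments to the bits in $\overline{S_c}$, which in the real protocol commit to the true pad keys and in the simulation commit to the simulator's (equally random) pad keys — but since these are never opened, computational hiding of the commitment scheme makes the two computationally indistinguishable, and this is the only place the computational assumption enters. For (ii), the subtlety is that $V'$'s challenge distribution may depend on the first message, hence on the simulator's guess $c'$; however, by computational hiding the first message is computationally independent of $c'$, so $V'$ cannot bias its output toward (or away from) $c'$ by more than a negligible amount, giving $\Pr[c = c'] = \frac1m \pm \negl(n)$, which is within the regime required by \Cref{lem:rewinding}.

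The main obstacle — and the step to be most careful about — is making the rewinding analysis rigorous when the ``success probability'' $p(\psi)$ of the rewinding circuit (here the event $c = c'$, with the verifier's residual state carried along) is only guaranteed to be \emph{approximately} $1/m$ and, a priori, might depend mildly on $c'$ via the commitment. I would handle this exactly as Watrous does for \COL{}: define the rewinding circuit $Q$ to include the choice of $c'$, the preparation of the simulated message, and $V'$'s response, with the ``good'' flag being $[c = c']$; argue via hiding that $|p(\psi) - 1/m| \le \negl(n)$ uniformly, set $q = 1/m$ and $p_0 = \frac1m - \negl(n)$ so that $p_0(1-p_0) \le q(1-q)$ and $p_0 \le p(\psi)$, take $\eps = \negl(n)$, and conclude that $R$ has size $O(\poly(n) \cdot \log(1/\eps)\cdot m^2) = \poly(n)$ and produces the correct conditional state up to error $O(\eps \cdot m^4 \log^2(1/\eps)) = \negl(n)$. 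Combining this with the indistinguishability from (i) gives $(V' \leftrightarrows P) \approx_c \mathcal{S}_{V'}$, completing the proof of \Cref{lem:comput-ZK}.
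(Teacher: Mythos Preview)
Your proposal is correct and follows essentially the same approach as the paper: the simulator you describe is exactly the one in \Cref{fig:simulator-qzk}, your point (ii) is the content of \Cref{lem:independence-probability} (proved via the same hiding hybrid you sketch), your point (i) is \Cref{lem:honest-simulation}, and the final application of Watrous rewinding with $q=1/m$ is precisely how the paper concludes. The only cosmetic difference is ordering---the paper first replaces the unopened commitments via hiding and \emph{then} traces out $\overline{S_c}$, whereas you informally describe the one-time-pad-to-maximally-mixed step first and then patch the commitments---but the underlying hybrid is the same.
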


We can then state the main theorem of this section.

\begin{theorem}
 $\QMA \subseteq \XiQZK$.
\end{theorem}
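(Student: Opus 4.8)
The plan is to derive the theorem directly from the three ingredients assembled in this section. Given $A = (\ayes,\ano) \in \QMA{}$, \Cref{lem:simulatable-proof} puts $A$ in $5$-$\SimQMA{}$ with some soundness $\delta(n) \le 1 - 1/\poly(n)$, supplying the POVMs $\{\Pi_c\}$ and a simulatable witness for every yes-instance. Feeding these into the protocol of \Cref{fig:sigma-ZK}, instantiated with an unconditionally binding, computationally hiding commitment scheme (which exists under post-quantum one-way functions, e.g.\ from LPN as recalled in \Cref{sec:preliminaries}), \Cref{lem:completeness-soundness} gives completeness $1-\negl(n)$ and soundness $\delta(n)$, and \Cref{lem:comput-ZK} gives computational zero-knowledge; hence $A \in \XiQZK{}$. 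To meet the inverse-polynomial soundness bound in the definition of quantum zero-knowledge one amplifies in the standard way, by $\ell$-fold sequential repetition: this drives soundness to $\delta(n)^\ell = \negl(n)$ for $\ell = \poly(n)$, keeps completeness $1-\negl(n)$, and preserves computational zero-knowledge via the usual sequential-composition argument (the composite simulator calls the single-round simulator $\ell$ times, threading the malicious verifier's intermediate state forward, with indistinguishability following from a hybrid over the $\ell$ rounds, each step accumulating at most $\negl(n)$ error).

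It remains to indicate how I would establish the two component lemmas. For \Cref{lem:completeness-soundness}, completeness is immediate: the honest prover one-time-pads the simulatable witness $\tau$, commits bindingly to the pad keys, and on challenge $c$ opens the keys on $S_c$; correctness of the commitment together with $\Tr(\Pi_c \tau) \ge 1-\negl(n)$ gives acceptance probability $1-\negl(n)$. Soundness crucially uses \emph{unconditional} binding, which is what makes this a proof and not merely an argument: the first-message commitments information-theoretically fix strings $a,b$, hence fix the quantum state obtained by decrypting the message register under $\X^a\Z^b$; the verifier then draws $c \inr [m]$ and applies $\{\Pi_c, \Id - \Pi_c\}$ to this fixed state, so for $x \in \ano{}$ the soundness clause of \Cref{def:simulatable-proof} caps the acceptance probability by $\delta(n)$, no matter how the (possibly unbounded) prover chose its first message or its openings.

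The genuine technical content, and the step I expect to be the main obstacle, is \Cref{lem:comput-ZK}. The simulator guesses $c' \inr [m]$, uses \emph{simulatable completeness} to compute $\rho(x,S_{c'})$ \emph{without the witness}, prepares $\sigma(c') = \rho(x,S_{c'})^{\reg{S_{c'}}} \otimes \kb{0}^{\reg{\overline{S_{c'}}}}$, one-time-pads it, commits to the keys, runs the malicious verifier $V'$ (holding side-information $\zeta$) on this first message to get challenge $c$, and outputs the transcript if $c = c'$ while rewinding otherwise. Two points need care. First, Watrous's rewinding lemma (\Cref{lem:rewinding}) applies only when the success probability of the guess is negligibly close to a \emph{fixed} value $1/m$ independent of $\zeta$; this is exactly where computational hiding enters — a verifier that could bias its challenge toward $c'$ would break hiding — so one argues $p(\psi) = 1/m \pm \negl(n)$ computationally, whence the rewound state is negligibly close to the post-selected state. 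Second, conditioned on $c = c'$ the simulated transcript must be computationally indistinguishable from a real one, which I would show by a hybrid: (i) replace the un-opened commitments by commitments to $0$ (computational hiding); (ii) observe that one-time-padding the message register leaves any $n-k$ un-inspected qubits maximally mixed, so only the reduced state on $S_c$ is relevant; (iii) replace $\rho(x,S_c)$ by the true reduced density matrix $\Tr_{\overline{S_c}}(\tau)$ of the honest witness at cost $\negl(n)$, again by simulatable completeness. Tracking $\zeta$ faithfully through all these hybrids and through the rewinding, and verifying that the total error stays negligible after amplification, is the delicate part of the argument.
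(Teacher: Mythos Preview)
Your proposal is correct and follows the same route as the paper: the paper's proof is literally the one line ``Direct from \Cref{lem:simulatable-proof,lem:completeness-soundness,lem:comput-ZK}'', and your sketches of those two component lemmas match the paper's actual arguments (unconditional binding fixes $a,b$ so soundness reduces to the $\SimQMA$ bound; the simulator guesses the challenge, prepares $\rho(x,S_{c'})$ padded with junk, and uses hiding plus Watrous rewinding). The sequential-repetition amplification you add is treated by the paper only in a later subsection (\Cref{sec:parallel-repetition}) rather than in the theorem's proof itself; note, as a minor caveat, that repetition formally leaves the three-message $\Xi$ format, a definitional wrinkle the paper does not explicitly address either.
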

\begin{proof}
  Direct from \Cref{lem:simulatable-proof,lem:completeness-soundness,lem:comput-ZK}.
\end{proof}

\subsubsection{Proof of \Cref{lem:completeness-soundness}}
\label{sec:xi-compl-sound}
\newtheorem*{thm:repeat-xi-compl}{\Cref{lem:completeness-soundness} (restated)}
\begin{thm:repeat-xi-compl}
\bodyxicompl
\end{thm:repeat-xi-compl}
\begin{proof}
  By \Cref{def:simulatable-proof}, if $x \in \ayes$, the prover can follow the protocol honestly with some $\tau$ that is
  consistent with all the POVMs and the local density matrices. In this case, the acceptance probability is exponentially close to $1$.

  \medskip
  Let us now analyze the case for $x \in \ano$.
  Let $\psi \otimes \kb{z}$ be the state sent by the prover in the first
  message, where $\psi$ is supposed to be the copies of the
  one-time padded state that is consistent with the POVMs and the reduced
  density matrices, and
  $z$ is the commitment to the one-time pad keys.
  We assume, without loss of generality, that  $\ket{z}$ is a classical
  value, since the verifier can measure it as soon as she receives it, and the
  prover can send the $z$ that maximizes the acceptance probability.

  For challenge $c$, the prover answers with $\ket{w_c}$, where again we assume
  to be a classical value for the same reasons as above.
  Since the commitment scheme is unconditionally binding, we can define the
  strings
  $a, b \in \01^{n}$ to be the string
  containing the {\em unique} bits that could be open for the corrected committed
  bits, or $0$ if the commitment is defective.

  Let $S_c \subseteq [n]$ be defined as in \Cref{fig:sigma-ZK}.
  Notice that $\ket{w_c}$ is supposed to be the opening of bits
  of $a$ and~$b$ in the subset $S_c$.
  Let $D_c$ be the event that $w_c$ is the correct opening
  for {\em all} of such bits, and~$\textbf{1}_{D_c}$  be the
  indicator variable for such event.

  We have then that the acceptance probability is
  \begin{align}
    &\frac{1}{m}\sum_{c \in [m]} \textbf{1}_{D_c} \tr{\Pi_c
    \X^{a|_{S_c}}\Z^{b|_{S_c}}\psi
    \Z^{b|_{S_c}}\X^{a|_{S_c}}} \label{eq:soundness-first}
    \\
    & \leq
    \frac{1}{m}\sum_{c \in [m]} \tr{\Pi_c
    \X^{a|_{S_c}}\Z^{b|_{S_c}}\psi
    \Z^{b|_{S_c}}\X^{a|_{S_c}}} \nonumber \\
    &  =
    \frac{1}{m}\sum_{c \in [m]} \tr{\Pi_c
    \X^{a}\Z^{b}\psi \Z^{b}
    \X^{a}}\label{eq:soundness-last}
    \\
    & \leq
    \max_{\phi} \frac{1}{m} \sum_c \tr{\Pi_c \phi} \nonumber \\
    &\leq \delta. \nonumber
  \end{align}
  where in the equality  we use the fact that $\Pi_c$ only acts on the qubits in
  $S_c$, and the last inequality follows since $x \in \ano$
  and from \Cref{def:simulatable-proof} the $\SimQMA$ protocol has soundness $\delta$.
\end{proof}

\subsubsection{Proof of \Cref{lem:comput-ZK}}
\label{sec:xi-zk}
We prove now the zero knowledge property of the protocol.

Before presenting the simulator, let us analyze how the verification algorithm
behaves. We can assume, without loss of generality that the verifier is composed
of two verification algorithms $\hat{V}_1$ and $\hat{V}_2$.

For $\hat{V_1}$, since the classical part of the message can be
copied and the challenge sent by the verifier is measured by the prover, we can assume
$\hat{V}_1$ acts like the following
\begin{align}
  & \sum_{a,b,r} \hat{V}_1\left(\tilde{\rho}_{a,b} \otimes \kb{\commit{a,b}{r}}
  \otimes \zeta\right)\hat{V}_1^\dagger \\
  & = \sum_{a,b,r,c}p_{\rho,a,b,c,r} \phi_{\rho,a,b,c,r} \otimes \kb{\commit{a,b}{r}}\otimes \kb{c}
  ,\label{eq:structure-malicious}
\end{align}
   where $\sum_{a, b, c,r} p_{\rho, a,b,c,r} = 1$ and we have traced-out the
   copy of $\kb{c}$ that was sent to the prover (and measured).

   The message $\ket{c}$ is sent to the prover, who answers then with some value
   $\ket{o_c}$, \emph{i.e.}, the opening of the commitments corresponding to the
   challenge $c$.

   The verifier then outputs
\begin{align}
  \sum_{a,b,r,c}p_{\rho,a,b,c,r}
  \hat{V}_2\left(\phi_{\rho,a,b,c,r} \otimes
  \kb{\commit{a,b}{r}}\otimes \kb{c} \otimes \kb{o_c} \right)\hat{V}_2^\dagger.
  \label{eq:structure-malicious-2}
\end{align}

\begin{figure}[H]
\rule[1ex]{\textwidth}{0.5pt}
\definitionsSim
  \begin{enumerate}
    \item Pick $c\inr [m]$, $a,b \inr \01^{n}$, $r \inr \cR$
    \item Create the state
$\tilde{\sigma(c)}_{a,b} \otimes
      \kb{\commit{a,b}{r}} \otimes \zeta$,
      where   $\sigma(c) = \rho_c^{\reg{S_c}}\otimes \kb{0}^{\reg{\overline{S_c}}}$
    \item Run $\hat{V}_1$ on  $\tilde{\sigma(c)}_{a,b} \otimes
      \kb{\commit{a,b}{r}} \otimes \zeta$
    \item  Measure the last register in the computational basis and abort
      if it is not $\ket{c}$
    \item Otherwise, append the register $\ket{o_c}$, apply $\hat{V}_2$ and
      output the result.
  \end{enumerate}
\rule[2ex]{\textwidth}{0.5pt}\vspace{-.5cm}
  \caption{Simulator $\SimZK$ for ZK $\Xi$-protocol for QMA}
\label{fig:simulator-qzk}
\end{figure}

\medskip
In order to show zero-knowledge, we start by showing that
for a fixed $c$,  $\sum_{a, b, r} p_{\sigma,
a,b,c,r}$ is independent of~$\sigma$, up to negligible  factors, if the commitment
scheme is hiding. We denote $R = |\mathcal{R}|$.

\begin{lemma}\label{lem:independence-probability}
  Let $p_c = \frac{1}{R}\sum_{r} p_{I, 0,0,c,r}$. Then for any $\sigma$, we have
  that
  \[\left|\frac{1}{2^{2n} R} \sum_{a, b, r} p_{\sigma, a,b,c,r} -  p_c\right|
  \leq \negl(n),\]
  where the probabilities are defined as \Cref{eq:structure-malicious} for the
  polynomial-time adversary $\hat{V}_1$.
\end{lemma}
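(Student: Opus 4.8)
The plan is to show that the distribution of the challenge produced by $\hat V_1$ is essentially insensitive to the state $\sigma$ that is one‑time padded and committed to, by combining two facts: (i) the hiding property of the commitment lets us replace the commitment $\commit{a,b}{r}$ by the commitment $\commit{0,0}{r}$ to the all‑zero string at the cost of a negligible change in every relevant probability; and (ii) once the commitment carries no information about $(a,b)$, averaging the one‑time padded state $\X^a\Z^b\sigma\Z^b\X^a$ over uniformly random $(a,b)$ yields the $n$‑qubit maximally mixed state $\frac{1}{2^n}\Id^{\otimes n}$, which does not depend on $\sigma$. First I would fix the reading of $p_{\sigma,a,b,c,r}$ as the probability that, running $\hat V_1$ on $\X^a\Z^b\sigma\Z^b\X^a\otimes\kb{\commit{a,b}{r}}\otimes\zeta$ and measuring the challenge register in the computational basis, one obtains $c$, so that $\frac{1}{2^{2n}R}\sum_{a,b,r}p_{\sigma,a,b,c,r}$ is exactly the probability that the challenge is $c$ in the real interaction when the (supposed) simulatable witness is $\sigma$.

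For step (i): for fixed $a,b,r$, let $\tilde p_{\sigma,a,b,c,r}$ be the same probability but with $\kb{\commit{a,b}{r}}$ replaced by $\kb{\commit{0,0}{r}}$. I would prove $|p_{\sigma,a,b,c,r}-\tilde p_{\sigma,a,b,c,r}|\le\negl(n)$ by a hybrid argument over the $2n$ individual bit commitments making up $\commit{a,b}{r}$, switching the $i$‑th committed bit from its value in $(a,b)$ to $0$ in the $i$‑th hybrid step. If some step changed the output‑$c$ probability by more than $\negl(n)$, I would build a polynomial‑time distinguisher contradicting the hiding property: on input a challenge commitment (to either the true bit or $0$) it generates the remaining $2n-1$ commitments itself, prepares $\X^a\Z^b\sigma\Z^b\X^a$ and $\zeta$ from non‑uniform advice, assembles the first‑message state, runs $\hat V_1$, measures the challenge register, and outputs $1$ iff the outcome is $c$; since for bit commitments the guessing formulation of hiding in \Cref{def:comp-hiding-commitment} is equivalent to indistinguishability of commitments to $0$ and to $1$, this forces each hybrid step, and hence the telescoping sum, to be negligible. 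Averaging over $a,b,r$ gives $\bigl|\frac{1}{2^{2n}R}\sum_{a,b,r}(p_{\sigma,a,b,c,r}-\tilde p_{\sigma,a,b,c,r})\bigr|\le\negl(n)$.

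For step (ii): now the commitment register is $\kb{\commit{0,0}{r}}$ for every $(a,b)$, so by linearity of $\hat V_1$ and of the final measurement, $\frac{1}{2^{2n}}\sum_{a,b}\tilde p_{\sigma,a,b,c,r}$ equals the probability of obtaining challenge $c$ when $\hat V_1$ is fed $\bigl(\frac{1}{2^{2n}}\sum_{a,b}\X^a\Z^b\sigma\Z^b\X^a\bigr)\otimes\kb{\commit{0,0}{r}}\otimes\zeta$, which equals $\frac{1}{2^n}\Id^{\otimes n}\otimes\kb{\commit{0,0}{r}}\otimes\zeta$ by the standard identity that the quantum one‑time pad with a uniform key sends any $n$‑qubit state to the maximally mixed state. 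This input, and therefore this probability, is independent of $\sigma$, and it equals $p_{I,0,0,c,r}$ (the maximally mixed state being the state denoted $I$, for which $\X^0\Z^0 I\Z^0\X^0=I$). Averaging over $r\in\cR$ gives $\frac{1}{2^{2n}R}\sum_{a,b,r}\tilde p_{\sigma,a,b,c,r}=\frac{1}{R}\sum_r p_{I,0,0,c,r}=p_c$, and combining with step (i) via the triangle inequality yields the claimed bound.

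The main obstacle is making step (i) rigorous: one must verify that the hybrid distinguisher is genuinely polynomial‑time — it runs $\hat V_1$, which is part of a polynomial‑time malicious verifier, together with a computational‑basis measurement and a comparison to $c$ — and that feeding it the quantum advice $\X^a\Z^b\sigma\Z^b\X^a$ and $\zeta$ is legitimate; this is fine because the indistinguishability used here is against non‑uniform polynomial‑time adversaries, so advice states are allowed, and in the applications of this lemma $\sigma$ is a fixed state. The rest is bookkeeping: keeping track that $\hat V_1$ acts as in \eqref{eq:structure-malicious}, that the challenge is read off by a projective measurement, and that the one‑time‑pad averaging identity holds exactly.
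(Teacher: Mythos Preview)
Your approach is essentially the same as the paper's: both use the hiding property of the commitment to decouple the committed keys from the one-time padded register, and then the one-time-pad averaging identity to collapse that register to the maximally mixed state. The paper phrases this contrapositively (assume a non-negligible gap, and note that $\hat V_1$ followed by a measurement of the challenge register would then distinguish $\frac{1}{2^{2n}R}\sum_{a,b,r}\tilde\sigma_{a,b}\otimes\kb{\commit{a,b}{r}}$ from $\frac{1}{R}\sum_r \Id\otimes\kb{\commit{0,0}{r}}$, contradicting hiding), whereas you unwind it as a direct hybrid plus averaging; the content is the same.

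There is one small slip in your step~(i). The \emph{pointwise} claim $|p_{\sigma,a,b,c,r}-\tilde p_{\sigma,a,b,c,r}|\le\negl(n)$ for fixed $r$ is not what your hybrid argument establishes, and in general it is false: for a fixed choice of commitment randomness, $\commit{a,b}{r}$ and $\commit{0,0}{r}$ are two specific classical strings and may be trivially distinguishable. The reduction you describe --- the distinguisher receives a single challenge commitment with hidden fresh randomness and samples the other $2n-1$ commitments itself --- inherently averages over the commitment randomness, so what it actually yields is the averaged bound $\bigl|\frac{1}{2^{2n}R}\sum_{a,b,r}(p_{\sigma,a,b,c,r}-\tilde p_{\sigma,a,b,c,r})\bigr|\le\negl(n)$ directly. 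That averaged bound is all you need for step~(ii), so the fix is simply to drop the pointwise statement and run the hybrid at the level of the averaged quantities from the outset.
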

\begin{proof}
  Let us suppose that there exist some state $\rho$, a challenge $c$ and a
  polynomial $q$ such that
  \begin{align}
  \label{eq:difference-probability}
    \left|p_c - \left(\frac{1}{2^{2n}
    R} \sum_{a, b, r} p_{\sigma, a,b,c,r}\right)\right| \geq q(n).
  \end{align}
  Then it is possible to distinguish the states
  \[\frac{1}{2^{2n}
    R}
    \sum_{a,b,r} \tilde{\sigma}_{a,b}  \otimes \kb{\commit{a,b}{r}} \text{\quad and
  \quad }
  \frac{1}{R} \sum_{r} \Id \otimes \kb{\commit{0,0}{r}}\]
  by appending
  $\zeta$, applying $\hat{V}_1$ and measuring the challenge
  register in the computational basis.

  However, since the commitment scheme is computationally hiding, we have that
\begin{align}
  & \frac{1}{2^{2n} R} \sum_{a,b,r} \tilde{\sigma}_{a,b} \otimes
  \kb{\commit{a,b}{r}} \nonumber \\
  & \approx_c \frac{1}{2^{2n} R} \sum_{a,b,r} \Id \otimes \kb{\commit{a,b}{r}}
  \label{eq:approx-comp-1} \\
  & \approx_c \frac{1}{R} \sum_{r} \Id \otimes \kb{\commit{0,0}{r}}
  \label{eq:approx-comp-2}.
\end{align}
and therefore these states are indistinguishable. We conclude that the assumption in
  \Cref{eq:difference-probability} is false.
  \end{proof}

\begin{lemma}\label{lem:simulator}
  The Simulator  described in \Cref{fig:simulator-qzk} does not abort
  with
  probability at least $\frac{1}{m} - \negl(n)$. In this case, its output is
  $\negl(n)$-close to
  \[\frac{1}{2^{2n}R}\sum_{a,b,c,r}\hat{V}_2\left(\left(I\otimes
  \kb{c}\right)\hat{V}_1\left(\tilde{\sigma(c)}_{a,b}
 \otimes \kb{\commit{a,b}{r}} \otimes
  \zeta\right)\hat{V}_1^\dagger(I\otimes \kb{c})  \otimes \kb{o_{c}}\right)\hat{V}_2^\dagger.\]
\end{lemma}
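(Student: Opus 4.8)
The plan is to analyze the simulator in \Cref{fig:simulator-qzk} by tracking what happens in two stages: first bounding the probability that the simulator does not abort (i.e.\ that the measurement of the challenge register yields $\ket{c}$), and second showing that conditioned on not aborting, the output is close to the honest verifier's view. First I would observe that the simulator picks $c$ uniformly and then runs $\hat V_1$ on the state $\tilde{\sigma(c)}_{a,b} \otimes \kb{\commit{a,b}{r}} \otimes \zeta$, rather than on $\tilde\tau_{a,b} \otimes \kb{\commit{a,b}{r}} \otimes \zeta$. By the simulatable-completeness property of \Cref{def:simulatable-proof}, the reduced density matrix of $\tau$ on $S_c$ is $\negl(n)$-close to $\rho_c$, so $\tilde{\sigma(c)}_{a,b}$ and $\tilde\tau_{a,b}$ have $\negl(n)$-close reduced states on the qubits $S_c$ that the verifier ever interacts with after learning $c$; but since $\hat V_1$ acts before the challenge is even fixed, I instead use \Cref{lem:independence-probability} directly: the probability that $\hat V_1$ (run on \emph{any} state of the one-time-padded form with committed keys) outputs a particular challenge $c'$ is, up to negligible error, $p_{c'}$, independent of the encoded state. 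Hence when the simulator runs $\hat V_1$ on its state built from $\sigma(c)$ and then measures the challenge register, it obtains $c' = c$ with probability $p_c \pm \negl(n)$.

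Next I would sum over the uniform choice of $c$: the non-abort probability is $\frac{1}{m}\sum_{c}(p_c \pm \negl(n)) = \frac{1}{m}\sum_c p_c \pm \negl(n) = \frac{1}{m} \pm \negl(n)$, using $\sum_c p_c = 1$ (which follows from \Cref{eq:structure-malicious} since the $p_{I,0,0,c,r}$ form a distribution over $c$ and $r$, averaged appropriately over $r$). This gives the claimed lower bound $\frac 1m - \negl(n)$ on the non-abort probability.

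For the conditional output, I would argue as follows. Conditioned on the measurement yielding $\ket c$, the post-measurement state of the simulator is (proportional to) $(I \otimes \kb c)\hat V_1(\tilde{\sigma(c)}_{a,b} \otimes \kb{\commit{a,b}{r}} \otimes \zeta)\hat V_1^\dagger (I \otimes \kb c)$, averaged over $a,b,r$; the simulator then appends $\ket{o_c}$ — which is exactly the (deterministic, classical) correct opening of the commitments to $a|_{S_c}, b|_{S_c}$, computable from $a,b,r$ — and applies $\hat V_2$. This is precisely the state written in the statement of \Cref{lem:simulator}, up to the $\negl(n)$ error incurred by the fact that the non-abort event has probability $\frac 1m \pm \negl(n)$ rather than exactly the value used in the renormalization (so the trace-normalization introduces only negligible trace-distance error), and I would make this quantitative with a short triangle-inequality / Bayes-rule computation: if an unnormalized subnormalized state has trace $p = \frac 1m \pm \negl(n)$ and we renormalize by $\frac 1m$ instead of $\frac 1p$, the resulting trace distance to the true normalized state is $O(m\cdot\negl(n)) = \negl(n)$.

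The main obstacle I expect is bookkeeping the error propagation carefully: one must be sure that the $\negl(n)$ from \Cref{lem:independence-probability} (which is about the \emph{probability} of producing challenge $c$, using only the hiding property) composes correctly with the re-normalization step, and that replacing $\tilde\tau$ by $\tilde{\sigma(c)}$ is legitimate — here the key point is that the statement of the lemma is phrased entirely in terms of $\sigma(c)$, so no comparison with $\tau$ is needed inside \Cref{lem:simulator} itself; the comparison to the real prover's view (which uses $\tau$) is deferred to the proof of \Cref{lem:comput-ZK}. Thus the proof of \Cref{lem:simulator} is essentially: (i) invoke \Cref{lem:independence-probability} to get the non-abort probability $\frac 1m \pm \negl(n)$; (ii) write down the conditional state explicitly and identify it with the displayed expression; (iii) absorb the renormalization discrepancy into a $\negl(n)$ trace-distance bound.
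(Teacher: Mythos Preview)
Your proposal is correct and follows essentially the same approach as the paper: invoke \Cref{lem:independence-probability} to show the non-abort probability is $\frac{1}{m}\sum_c p_c \pm \negl(n) = \frac{1}{m} \pm \negl(n)$, then post-select on $c=c'$, write down the resulting state, and absorb the renormalization discrepancy into a $\negl(n)$ error. Your observation that the comparison with the real witness $\tau$ is deferred to a later lemma (the paper's \Cref{lem:honest-simulation}) is also exactly right.
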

\begin{proof}
  The state of $\SimZK$ after step $2$ is
  \begin{align}\label{eq:sim-first-step}
    \frac{1}{2^{2n} R m}\sum_{a,b,c,r}
\tilde{\sigma(c)}_{a,b}
\otimes
  \kb{\commit{a,b}{r}}
\otimes \zeta  \otimes \kb{c}
  \end{align}
   $\SimZK$ runs $\hat{V}_1$ on the first three register of the state in  \Cref{eq:sim-first-step}, resulting in
  \begin{align}
    &\frac{1}{2^{2n} R m}\sum_{a,b,c,r} \hat{V}_1\left(
\tilde{\sigma(c)}_{a,b}
    \otimes \kb{\commit{a,b}{r}}  \otimes \zeta\right)
    \hat{V}_1^\dagger \otimes \kb{c} \nonumber
\\
    &=\frac{1}{2^{2n} R m}
    \sum_{a,b,c,c',r} p_{\sigma(c),
    a,b,c',r} \phi_{\sigma(c),a,b,c',r} \otimes
  \kb{\commit{a,b}{r}} \otimes
  \kb{c'} \otimes
  \kb{c}\label{eq:middle-simulator}
  \end{align}

  Notice that $\SimZK$ does not abort when $c =
  c'$, and this event happens with probability
  \begin{align}\label{eq:approximation}
    \frac{1}{m} \sum_{c}
  \frac{1}{2^{2n} R}\sum_{a,b,r}p_{\sigma(c), a,b,c,r}
    \geq \frac{1}{m} \sum_{c} (p_{c} - \negl(n)) =\frac{1}{m} - \negl(n),
  \end{align}
where the inequality follows from  \Cref{lem:independence-probability} and the
  equality from the fact that $\sum_{c}
  p_{c} = 1$.

In order to provide the output of the simulator, conditioned that he did not
  abort, we
  post-select in \Cref{eq:middle-simulator} the event that $c = c'$, which gives
  us
  \begin{align}
    &\frac{1}{2^{2n} R m p_{succ}}\sum_{a,b,c,r}
  (\Id \otimes \kb{c}) \hat{V}_1\left(
\tilde{\sigma(c)}_{a,b}
    \otimes \kb{\commit{a,b}{r}}  \otimes \zeta\right)
    \hat{V}_1^\dagger (\Id \otimes \kb{c}) \otimes \kb{c} \nonumber \\
    & \approx_c
    \frac{1}{2^{2n} R}\sum_{a,b,c,r}
  (\Id \otimes \kb{c}) \hat{V}_1\left(
\tilde{\sigma(c)}_{a,b}
    \otimes \kb{\commit{a,b}{r}}  \otimes \zeta\right)
    \hat{V}_1^\dagger (\Id \otimes \kb{c}) \otimes \kb{c},
    \label{eq:approx-comp-3}
  \end{align}
  where
  we set $p_{succ} =
  \frac{1}{2^{2n}Rm}\sum_{a,b,c,r}p_{\sigma(c), a,b,c,r}$ to be
  the probability that $\SimZK$ does not abort and the
  approximation holds by \Cref{eq:approximation}.

  Finally, $\SimZK$ only needs to append the last register with $\ket{o_c}$,
  which can be performed efficiently given $a$, $b$ and $c$, and apply
  $\hat{V}_2$.
\end{proof}

\begin{lemma}\label{lem:honest-simulation}
  The output of a  simulator that does not abort is computationally indistinguishable from
  the output of the malicious verifier in the protocol in \Cref{fig:sigma-ZK}.
\end{lemma}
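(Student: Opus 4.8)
The plan is to observe that, by \Cref{lem:simulator}, the non-aborting simulator's output and the real output are states of the same shape — an expansion over $(a,b,c,r)$ of $\hat{V}_2$ applied to $\hat{V}_1$'s output with the challenge register projected onto $\ket{c}$ — that differ only in the ``witness slot'' fed into $\hat{V}_1$. First I would unroll the protocol of \Cref{fig:sigma-ZK} against a malicious verifier $V' = (\hat{V}_1,\hat{V}_2)$, keeping a register with $(a,b,r)$ so that the opening $\ket{o_c}$ — which decommits \emph{only} to $a|_{S_c},b|_{S_c}$, and is therefore independent both of the witness and of $a|_{\overline{S_c}},b|_{\overline{S_c}}$ — can be produced. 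This gives
\[
  \rho_{\mathrm{real}} = \frac{1}{2^{2n}R}\sum_{a,b,c,r}\hat{V}_2\Bigl((\Id\otimes\kb{c})\,\hat{V}_1\bigl(\widetilde{\tau}_{a,b}\otimes\kb{\commit{a,b}{r}}\otimes\zeta\bigr)\hat{V}_1^\dagger\,(\Id\otimes\kb{c})\otimes\kb{o_c}\Bigr)\hat{V}_2^\dagger ,
\]
and by \Cref{lem:simulator} the non-aborting simulator's output is $\negl(n)$-close to the same expression with $\widetilde{\tau}_{a,b}$ replaced by $\widetilde{\sigma(c)}_{a,b}$, where $\sigma(c) = \rho_c^{\reg{S_c}}\otimes\kb{0}^{\reg{\overline{S_c}}}$; call this state $\rho_{\mathrm{sim}}$. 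So it is enough to prove $\rho_{\mathrm{real}}\approx_c\rho_{\mathrm{sim}}$. Writing $\mathsf{term}_c(\phi)$ for the summand above with $\phi$ in place of $\tau$ — so that $\rho_{\mathrm{real}} = \sum_c\mathsf{term}_c(\tau)$ and $\rho_{\mathrm{sim}} = \sum_c\mathsf{term}_c(\sigma(c))$ — I note each $\phi\mapsto\mathsf{term}_c(\phi)$ is a trace-non-increasing completely positive map, hence contractive in trace norm.

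Next I would introduce the hybrid state $\rho_{\mathrm{hyb}} = \sum_c\mathsf{term}_c\bigl(\Tr_{\overline{S_c}}(\tau)^{\reg{S_c}}\otimes\kb{0}^{\reg{\overline{S_c}}}\bigr)$. The step $\rho_{\mathrm{hyb}}\approx_s\rho_{\mathrm{sim}}$ is statistical and immediate: by simulatable completeness (\Cref{def:simulatable-proof}) one has $\trNorm{\Tr_{\overline{S_c}}(\tau)-\rho_c}\le\negl(n)$ for every $c$, so by contractivity of $\mathsf{term}_c$ and the triangle inequality over the $m = \poly(n)$ terms, $\trNorm{\rho_{\mathrm{hyb}}-\rho_{\mathrm{sim}}}\le\negl(n)$.

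The hard part will be $\rho_{\mathrm{real}}\approx_c\rho_{\mathrm{hyb}}$, which I would prove by a hybrid over the $m$ challenges. Put $\rho^{(j)} = \sum_{c<j}\mathsf{term}_c\bigl(\Tr_{\overline{S_c}}(\tau)\otimes\kb{0}\bigr) + \sum_{c\ge j}\mathsf{term}_c(\tau)$, so $\rho^{(0)} = \rho_{\mathrm{real}}$, $\rho^{(m)} = \rho_{\mathrm{hyb}}$, and $\rho^{(j)},\rho^{(j+1)}$ differ only in the $j$-th summand: $\mathsf{term}_j(\tau)$ versus $\mathsf{term}_j\bigl(\Tr_{\overline{S_j}}(\tau)\otimes\kb{0}\bigr)$. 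Splitting the pad keys as $a=(a|_{S_j},a')$, $b=(b|_{S_j},b')$ and the commitment string accordingly, and using that $a',b'$ are never decommitted, $\mathsf{term}_j(\phi)$ depends on $\phi$ only through the state $\Psi(\phi)$ obtained by applying a fresh quantum one-time pad to the $\overline{S_j}$-register of $\phi$ and appending commitments to the pad keys, followed by a \emph{fixed} quantum operation $\mathcal{T}$ that applies the $S_j$-pad, appends the $S_j$-commitments and $\zeta$, runs $\hat{V}_1$, projects the challenge onto $\ket{j}$, appends $\ket{o_j}$, runs $\hat{V}_2$, and averages over the $S_j$-keys; moreover $\mathsf{term}_j\bigl(\Tr_{\overline{S_j}}(\tau)\otimes\kb{0}\bigr) = \mathcal{T}\bigl(\Psi'(\tau)\bigr)$ with $\Psi'(\phi) := \Psi\bigl(\Tr_{\overline{S_j}}(\phi)^{\reg{S_j}}\otimes\kb{0}^{\reg{\overline{S_j}}}\bigr)$. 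So it suffices to show $\Psi\approx_c\Psi'$ as quantum channels, which is exactly the computational-hiding argument already carried out in \Cref{lem:independence-probability}: by computational hiding the key-commitments in $\Psi(\phi)$ can be swapped for commitments to $0$; a Pauli twirl then collapses the $\overline{S_j}$-register to the maximally mixed state and leaves exactly $\Tr_{\overline{S_j}}(\phi)$, which equals $\Tr_{\overline{S_j}}\bigl(\Tr_{\overline{S_j}}(\phi)\otimes\kb{0}\bigr)$; reversing the two steps produces $\Psi'(\phi)$. Feeding $\tau$ and then applying $\mathcal{T}$ (and any would-be distinguisher, which may also hold $\zeta$) gives $\mathsf{term}_j(\tau)\approx_c\mathsf{term}_j\bigl(\Tr_{\overline{S_j}}(\tau)\otimes\kb{0}\bigr)$, hence $\rho^{(j)}\approx_c\rho^{(j+1)}$; chaining the $m$ steps yields $\rho_{\mathrm{real}}\approx_c\rho_{\mathrm{hyb}}$, and together with the previous paragraph $\rho_{\mathrm{real}}\approx_c\rho_{\mathrm{sim}}$.

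The main obstacle, as indicated, is the hiding step $\Psi\approx_c\Psi'$: it requires computational hiding in ``channel'' form — indistinguishability of commitments to uniformly random strings from commitments to $0$ even when the distinguisher also holds the quantum state one-time-padded by those very keys, plus arbitrary correlated side information $\zeta$. This is sound because the one-time pad makes the underlying state information-theoretically independent of the padded register, leaving only the (computationally negligible) correlation between the keys and their commitments, but it must be threaded carefully exactly as in \Cref{lem:independence-probability}. I would also take care to verify that the opening register $\ket{o_c}$ carries no information about the part of the witness being changed — which it does not, since it decommits only to $a|_{S_c},b|_{S_c}$ — so that $\mathcal{T}$ is genuinely $\phi$-independent.
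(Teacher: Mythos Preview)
Your proposal is correct and follows essentially the same approach as the paper: unroll the real protocol, use computational hiding of the never-opened $\overline{S_c}$-commitments to replace the $\overline{S_c}$-part of the witness (the paper writes this as a single $\approx_c$ step to $\Tr_{\overline{S_c}}(\tau)^{\reg{S_c}}\otimes I^{\reg{\overline{S_c}}}$, you go to $\kb{0}^{\reg{\overline{S_c}}}$, which is equivalent after the twirl), and then use simulatable completeness for the statistical step $\Tr_{\overline{S_c}}(\tau)\approx_s\rho_c$. Your explicit hybrid over the $m$ challenges and the channel-level formulation of hiding make rigorous what the paper leaves implicit in its two-line chain of approximations, but the underlying argument is the same.
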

\begin{proof}
    In the real protocol, the first message sent by the prover is
   \[\frac{1}{2^{2n} R} \sum_{a,b,r} \tilde{\tau}_{a,b} \otimes
   \kb{\commit{a,b}{r}},\]
   and the verifier applies $\hat{V}_1$ on the message sent by the prover and
   the side-information $\zeta$, resulting in the state
   \[\frac{1}{2^{2n} R} \sum_{a,b,c,r}
  (\Id \otimes \kb{c})\hat{V}_1 \left(
\tilde{\tau}_{a,b} \otimes
  \kb{\commit{a,b}{r}} \otimes \zeta\right) \hat{V}_1^\dagger
(\Id \otimes \kb{c}).\]

On challenge $\ket{c}$, the prover then answers with $\ket{o_c}$,
   the openings of the corresponding commitments.
   The  verifier then applies $\hat{V}_2$, and outputs
   \[\frac{1}{2^{2n} R} \sum_{a,b,c,r}
  \hat{V}_2\left((\Id \otimes \kb{c})\hat{V}_1 (
\tilde{\tau}_{a,b} \otimes \kb{\commit{a,b}{r}} \otimes \zeta) \hat{V}_1^\dagger
  (\Id \otimes \kb{c}) \otimes \kb{o_c}\right)\hat{V}_2.\]

  We show now that this state is indistinguishable from the state that is output
 by the simulator, proved in \Cref{lem:simulator}. For simplicity, let
  $\xi_{a,b,r}= \kb{\commit{a,b}{r}} \otimes \zeta$. Up
  to normalization factors, we have that

  \begin{align}
    & \sum_{a,b,c,r} \hat{V}_2\left((\Id \otimes \kb{c})
    \hat{V}_1 \left(
\tilde{\tau}_{a,b} \otimes \xi_{a,b,r}
    \right) \hat{V}_1^\dagger (\Id \otimes \kb{c})
 \kb{o_c}\right)\hat{V}_2^\dagger \nonumber\\
    & \approx_c
\sum_{a,b,c,r} \hat{V}_2\left((\Id \otimes \kb{c}) \hat{V}_1 \left(
    \widetilde{\left(\Tr_{\overline{S_c}}(\tau)^{\reg{S_c}}\otimes
    I^{\reg{\overline{S_c}}}\right)}_{a,b} \otimes
 \xi_{a,b,r}
    \right) \hat{V}_1^\dagger(\Id \otimes \kb{c}) \otimes
    \kb{o_c}\right)\hat{V}_2^\dagger \label{eq:approx-comp-4}\\
    &\approx_{s} \sum_{a,b,c,r}\hat{V}_2\left((\Id \otimes \kb{c})\hat{V}_1
    \left(\tilde{\sigma(c)}_{a,b} \otimes
 \xi_{a,b,r}\right) \hat{V}_1^\dagger (\Id \otimes \kb{c})\otimes
    \kb{o_c}\right)\hat{V}_2^\dagger
  \label{eq:final-hybrid}
  \end{align}
  where in the first approximation we use the fact that the commitments of
  $a|_{\overline{S_c}}$
  and
  $b|_{\overline{S_c}}$ are never revealed and that the commitment is
  computationally hiding, and  the second approximation holds since
  we assume that $\trNorm{\Tr_{\overline{S_c}}(\sigma) - \rho_c} \leq \negl(n)$ by \Cref{def:simulatable-proof}.

  By the definition of
  $\xi_{a,b,r}$ and \Cref{lem:simulator}, the state of \Cref{eq:final-hybrid} is
  $\negl(n)$-close to the output of the simulator.
\end{proof}

We are finally ready to prove \Cref{lem:comput-ZK}.
\newtheorem*{thm:repeat-xi-zk}{\Cref{lem:comput-ZK} (restated)}
\begin{thm:repeat-xi-zk}
\bodyxizk
\end{thm:repeat-xi-zk}
\begin{proof}
  Notice that from \Cref{lem:independence-probability,lem:rewinding}, there
  exists a quantum algorithm $\SimZK'$ that runs in time
  \[O\left(m \poly(n) \left(time\left(\hat{V}_1\right)+time\left(\hat{V}_2\right)\right))\right)\]
  whose output is $\negl(n)$-close to the output of $\SimZK$, conditioned on
  not aborting.
  From \Cref{lem:honest-simulation}, the output of $\SimZK'$ is computationally
  indistinguishable from the run of the real protocol, and therefore it can be
  used as the simulator, finishing our proof.
\end{proof}

\subsection{Decreasing the soundness error}
\label{sec:parallel-repetition}

We remark that unlike the protocol in ~\cite{GSY19}, we do not know how to
show parallel repetition for our protocol.
The problem here is that for an $\ell$-fold version of our protocol, the simulator,
as in~\Cref{fig:simulator-qzk}, would need to correctly answer the question for each
one of these $\ell$ copies of the game, what would happen with probability
$\frac{1}{m^\ell}$, and therefore the rewinding technique would have an exponential
cost in~$\ell$.

However, as in the classical case, we can show that our protocol accepts {\em
sequential repetition}, since the guess for each of the iterations is performed
independently.

\begin{lemma}
  Consider  the $\ell$-fold sequential repetition of the QZK $\Xi$-protocol, where
  $1 \leq \ell = poly(n)$ and
  the verifier accepts if and only if each sequential run  accepts.
  Then this is a quantum zero-knowledge protocol for \SimQMA with completeness $1-\negl(|x|)$ and soundness
  $O\left(\delta(|x|)^\ell\right)$.
\end{lemma}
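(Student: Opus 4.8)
The plan is to treat \Cref{lem:completeness-soundness} and \Cref{lem:comput-ZK} as black boxes and argue completeness, soundness, and zero-knowledge of the $\ell$-fold protocol separately. \emph{Completeness}: for $x\in\ayes$ the honest prover runs each of the $\ell$ rounds independently, each time applying a fresh one-time pad to a fresh copy of the simulatable witness; by \Cref{lem:completeness-soundness} a single round accepts with probability $1-\negl(|x|)$, so a union bound over $\ell=\poly(n)$ rounds gives acceptance probability $1-\ell\cdot\negl(|x|)=1-\negl(|x|)$. \emph{Soundness}: fix $x\in\ano$ and any (unbounded) prover $P^*$ for the $\ell$-fold protocol. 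The key observation is that the bound in the proof of \Cref{lem:completeness-soundness} is uniform over the prover's state: once $P^*$ has sent the first message of a round (a quantum register and commitments), unconditional binding fixes a unique pair $a,b$, and \Cref{def:simulatable-proof} bounds the challenge-averaged acceptance of that round by $\delta(|x|)$ no matter what side state $P^*$ carries over from earlier rounds. Writing $X_i$ for the event ``round $i$ accepts'', this gives $\Pr[X_i\mid\text{state entering round }i]\le\delta(|x|)$ for every fixing of the earlier transcript, so iterating the expectation over $i=\ell,\ell-1,\dots,1$ yields
\[
\Pr[X_1\wedge\cdots\wedge X_\ell]\;\le\;\delta(|x|)^\ell\;=\;O\!\left(\delta(|x|)^\ell\right).
\]

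For zero-knowledge I would run the standard sequential-composition argument, whose key enabler is that the single-round simulator can be made to never abort. Recall from the proof of \Cref{lem:comput-ZK} that combining \Cref{lem:independence-probability} with Watrous rewinding (\Cref{lem:rewinding}) produces a polynomial-time simulator $\SimZK'$ that never aborts and whose output on side state $\zeta$ is $\negl(n)$-close to the conditional (non-aborting) output of $\SimZK$, which is itself computationally indistinguishable, as a channel in $\zeta$, from a single real execution against the malicious verifier. Given a malicious verifier $V'$ for the $\ell$-fold protocol with internal register $\mathsf{Z}$, define the $\ell$-fold simulator $\mathcal{S}$ to initialize $\mathsf{Z}$ with its side input and then, for $i=1,\dots,\ell$, run $\SimZK'$ with the round-$i$ portion of $V'$ (acting on $\mathsf{Z}$ and the message register) playing the role of the single-round malicious verifier, updating $\mathsf{Z}$ in place; finally $\mathcal{S}$ outputs whatever $V'$ outputs. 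Since $\SimZK'$ runs in polynomial time and never aborts, $\mathcal{S}$ is polynomial time.

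To prove $(V'\leftrightarrows P)\approx_c\mathcal{S}$ I would use the hybrids $H_j$ that run rounds $1,\dots,j$ with the honest prover and rounds $j+1,\dots,\ell$ with $\SimZK'$, so that $H_\ell=(V'\leftrightarrows P)$ and $H_0=\mathcal{S}$ up to a $\negl$ correction from the non-abort step. Consecutive hybrids agree on every round except round $j$; writing $H_j=\mathcal{B}_j\circ(\text{real round }j)\circ\mathcal{A}_{j-1}$ and $H_{j-1}=\mathcal{B}_j\circ(\text{simulated round }j)\circ\mathcal{A}_{j-1}$ for the fixed channels $\mathcal{A}_{j-1},\mathcal{B}_j$ implementing the other rounds together with the final readout, indistinguishability of the single-round real and simulated channels from \Cref{lem:comput-ZK} (applied with side state $\mathcal{A}_{j-1}(\zeta)$, which the channel-indistinguishability definition of \Cref{sec:preliminaries} allows to be an arbitrary, not necessarily efficiently preparable, state), combined with closure of $\approx_c$ under post-composition with an arbitrary channel and under pre-composition with an efficient surrounding circuit, gives $H_{j-1}\approx_c H_j$. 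Summing the $\ell=\poly(n)$ hybrid gaps gives $(V'\leftrightarrows P)\approx_c\mathcal{S}$. Re-running the identical argument with $\approx_s$ in place of $\approx_c$ shows that in the unconditionally-hiding (argument) regime of \Cref{sec:ZKarguments} the repetition is statistical zero-knowledge.

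The main obstacle is precisely this zero-knowledge step: sequential composition is only routine because \Cref{lem:rewinding} upgrades the aborting single-round simulator (which succeeds only with probability roughly $1/m$) to a non-aborting one, so the per-round simulators can be chained without the success probability decaying like $m^{-\ell}$ — which is exactly the obstruction to \emph{parallel} repetition flagged just before this lemma. A secondary point to handle carefully is that the hybrid reduction must treat the verifier's accumulated state as an arbitrary side input, which is why the channel formulation of zero-knowledge in \Cref{sec:preliminaries} (quantified over all input states, including inefficiently preparable ones) is the right notion to invoke; completeness and soundness, by contrast, are immediate from the single-round analysis.
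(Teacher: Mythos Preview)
Your proposal is correct and follows the same approach as the paper: completeness and soundness are handled by the single-round bounds (the paper simply calls them ``trivial'', whereas you spell out the union bound and the iterated-conditioning argument), and zero-knowledge is obtained by running the non-aborting Watrous-rewound simulator round by round, feeding each round's output in as the next round's side information. Your hybrid argument is exactly the formalization the paper leaves implicit, and your observation that the channel formulation of $\approx_c$ quantifies over arbitrary (not necessarily efficient) side states is precisely what makes the reduction at each hybrid step go through.
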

\begin{proof}
  The completeness and soundness properties hold trivially.

  Let us now argue about the zero-knowledge property. Notice that an honest
  prover $P$ has an $\ell$-fold tensor product of the honest witness, and uses a single copy per iteration.
  In this case, we can run the simulator $\ell$ times {\em sequentially}, using the
  output of the $i^\text{th}$ run as the side-information of the $i+1^\text{st}$ run.
\end{proof}

\section{Proofs of quantum knowledge}
\label{sec:PoQ}
In this section, we define a \emph{Proof of Quantum Knowledge} (\Cref{sec:def-poq}) and
and then prove that the
Zero-knowledge protocols presented in the previous sections
(\Cref{sec:proof-poq}) satisfy this new definition.

\subsection{Definition}
\label{sec:def-poq}
The content of this subsection was written in collaboration with Andrea
Coladangelo, Thomas Vidick and Tina Zhang and a similar version of it also appears in their
concurrent and independent work~\cite{CVZ19}.

\medskip

A \emph{Proof of Knowledge (PoK)} is an interactive proof system
for some relation $R$ such that if
the verifier accepts some input $x$ with high enough
probability, then she is convinced that the prover knows some witness $w$ such
that $(x,w) \in R$.
This notion is formalized by requiring the existence of an efficient \emph{extractor}~$K$
that is able to output a witness for $x$ when $K$ is given oracle access to
the prover (and is able to rewind his actions).

\begin{definition}[Classical Proof of Knowledge~\cite{BG93}]
  \label{def:pok}
  Let $R \subseteq \mathcal{X} \times \mathcal{Y}$ be a relation.  A proof
  system $(P,V)$ for $R$ is a Proof of Knowledge for $R$ with knowledge error $\kappa$ if there
  exists a polynomial $p > 0$ and a polynomial-time machine $K$
  such that for any classical interactive machine $P^*$
  that makes $V$ accept some instance $x$ of size $n$ with probability at least
  $\eps > \kappa(n)$, we have
  \[Pr\left[\left(x, K^{P^*(x,y)}(x)\right) \in R \right] \geq p\left((\eps - \kappa(n)),
  \frac{1}{n}\right). \]
\end{definition}
In the definition, $y$ corresponds to the side-information that $P^*$ has,
possibly including some $w$ such that $(x,w) \in R$.

\medskip

PoKs were originally defined only considering classical adversaries, and
this notion was first studied in the quantum setting by Unruh~\cite{Unr12}.
The first issue that arises in the quantum setting is which type of query $K$ could be able
to perform. To solve this, we
assume that $P^*$ always performs some unitary operation $U$. Notice that this can
be done without loss of generality since
\begin{enumerate*}[label=(\roman*)]
\item  we can consider the
purification of the prover,
\item  all the measurements can be
performed coherently, and
\item  $P^*$ can keep track of the round of communication
in some internal register and $U$ implicitly controls on this value.
\end{enumerate*}
Then,
the quantum extractor $K$ has oracle access to $P^*$ by
performing $U$ and $U^\dagger$ on the message register and private register
of $P^*$, but $K$ has no direct access to the latter.
We denote the extractor
$K$ with such an oracle access to $P^*$ as $K^{\ket{P^*(x,\rho)}}$, where here
$\rho$ is the (quantum) side-information held by $P^*$.

\begin{definition}[Quantum Proof of Knowledge~\cite{Unr12}]
  \label{def:qpok}
  Let $R \subseteq \mathcal{X} \times \mathcal{Y}$ be a relation.  A proof
  system $(P,V)$ for $R$ is a Quantum Proof of Knowledge for $R$ with knowledge error $\kappa$
  if there
  exists a polynomial $p > 0$ and a quantum polynomial-time machine $K$
  such that for any quantum interactive machine  $P^*$
  that makes $V$ accept some instance $x$ of size $n$ with probability at least
  $\eps > \kappa(n)$, then
  \[Pr\left[\left(x, K^{\ket{P^*(x, \rho)}}(x)\right) \in R \right] \geq p\left((\eps - \kappa(n)),
  \frac{1}{n}\right). \]
\end{definition}

\begin{remark}\label{rem:repetition-extractor}
  In the fully classical case of \Cref{def:pok}, the extractor could repeat the procedure
  $\poly((\eps - \kappa(n))$ times in order to increase the success probability.
  We notice that this is not known to be possible for a general quantum $P^*$, since the final measurement to extract the witness would possibly
  disturb the internal state of $P^*$, making it impossible to simulate the
  side-information that $P^*$ had originally in the subsequent simulations.
\end{remark}

\medskip
We finally move on to the full quantum setting, where we want a {\em Proof of
Quantum Knowledge} (PoQ). Here, at the end of the protocol, we want the verifier
to be convinced that the prover has a {\em quantum witness} for the input $x$.

The first challenge is defining the notion a ``relation'' between the input $x$
and some quantum state $\ket{\psi}$. Classically, we implicitly consider relations for \NP
languages by fixing some verification algorithm $V$ for it, and
defining $(x,w) \in R$ if and only if $V$ accepts the input $x$ with the witness $w$.

Quantumly, the situation is a bit more delicate, since a witness $\ket{\psi}$
leads to acceptance probability $Pr[Q(x, \ket{\psi})=1]$. This issue also
appears with probabilistic complexity classes such as MA, and
we can solve it
by fixing some parameter $\gamma$ and defining the relation to contain $(x,
\ket{\psi})$ for all quantum states $\ket{\psi}$
that lead to acceptance probability at least $\gamma$. The difference here is
that we need also to consider the  {\em mixture} of such quantum states, since they are
also valid witnesses in a QMA protocol.
Therefore, fixing some quantum verifier $Q$ and
$\alpha$,  we define a {\em quantum relation} as  follows
\[R_{Q,\gamma} = \{(x,\sigma) :  Q\text{ accepts } (x, \sigma) \text{ with probability at least } \gamma\}.
\]
Notice that with $R_{Q,\gamma}$, we implicitly define subspaces
$\{\mathcal{S}_x\}_x$ such that
$(x,\sigma) \in R_{Q,\gamma}$ if and only if $\sigma \in \mathcal{S}_x$.

With this in hand, we can define a $\QMA$-relation.

\begin{definition}[$\QMA$-relation]
  \label{def:qma-relation}
  Let $A = (\ayes,\ano)$ be a problem in \QMA{}, and let $Q$ be an associated
  quantum polynomial-time verification algorithm (which takes as input an
  instance and a witness), with completeness $\alpha$ and soundness $\beta$.
  Then, we say that $R_{Q, \gamma}$ is a \QMA{}-relation with completeness $\alpha$ and soundness $\beta$ for the problem $A$.
  In particular, we have that for $x \in \ayes$, there exists some $\ket{\psi}$ such that $(x,\ket{\psi}) \in R_{Q,\alpha}$ and for $x \in \ano$, for every $\rho$ and  $\eps > 0$ it holds that $(x,\rho) \not\in R_{Q,\beta+\eps}$.
  \end{definition}

We can finally define a Proof of Quantum Knowledge.

\begin{definition}[Proof of Quantum Knowledge]
  \label{def:poq-single}
  Let $R_{Q,\gamma}$ be a \QMA{} relation.
  A proof
  system $(P,V)$ is a Proof of Quantum Knowledge for $R_{Q,\gamma}$ with knowledge
  error
  $\kappa(n) > 0$ and quality $q$, if there exists a polynomial $p > 0$ and a  polynomial-time machine $K$ such that for any quantum interactive machine $P^*$ that
  makes $V$ accept some instance $x$ of size $n$ with probability at least
  $\eps > \kappa(n)$, we have
  \begin{enumerate}
    \item $K^{\ket{P^*(x, \rho)}}(x)$ aborts with probability at most $p\left((\eps - \kappa(n)),
  \frac{1}{n}\right)$, and
    \item if $K^{\ket{P^*(x, \rho)}}(x)$ does not abort, it outputs the quantum
      state $\phi$ such  that $\left(x, \phi \right) \in R_{Q,q(\eps,\frac{1}{n})}$.
  \end{enumerate}
\end{definition}
 
\subsection{Proof of quantum-knowledge for our $\Xi$-protocol}
\label{sec:proof-poq}

We show  now that the $\Xi$-protocol of \Cref{fig:sigma-ZK} is a Proof of Quantum Knowledge with
knowledge error inverse polynomially close to~$1$.

\begin{figure}[H]
\rule[1ex]{\textwidth}{0.5pt}
  \definitionsPoQ
  \begin{enumerate}
    \item Run $P^*$ and store the first message $\psi \otimes \kb{z}$
    \item For every challenge $c$
      \begin{enumerate}
    \item Simulate $P^*$ on challenge $c$
    \item Check (coherently) if the answer correctly opens the committed value, if not abort
    \item Copy the opening of the committed values
    \item Run $P^*$ backwards on challenge $c$
    \end{enumerate}
  \item Let $a, b \in \01^{n}$ be the opened strings
    \item Output $\X^a\Z^b \psi \Z^b\X^a$
\end{enumerate}
\rule[2ex]{\textwidth}{0.5pt}\vspace{-.5cm}
\caption{Single-shot Knowledge extractor $K$}
  \label{fig:extractor}
\end{figure}

\begin{lemma}\label{lem:single-shot}
  Let $A$ and $\{\Pi_c\}$ be as defined in \Cref{fig:extractor} and
  $Q$ be the verification algorithm for $A$ that consists of
  picking $c \in [m]$ uniformly at random and measuring the
  provided witness with $\Pi_c$.
  Let $\kappa(n) = 1 - \frac{1}{2m^2}$ and $K$ be the $\poly(n)$-time extractor defined in
  \Cref{fig:extractor}. If
  a quantum interactive machine  $P^*$
  makes $V$ accept the instance $x$ of size $n$ with probability at least
  $\eps := 1 - \delta> \kappa(n)$, then we have that
  \begin{enumerate}
    \item $K^{\ket{P^*(x, \rho)}}(x)$ aborts with probability at most $1 - m^2\delta$, and
    \item if $K^{\ket{P^*(x, \rho)}}(x)$ does not abort, it outputs the quantum
      state $\phi$ such  that $\left(x, \phi \right) \in R_{Q,1-\delta -
  m^2\delta}$.
  \end{enumerate}
\end{lemma}
\begin{proof}
  Let $a, b \in \01^{2 n}$ be the unique values that can be
  opened by the classical value $\ket{z}$ sent by the prover (or zeroes if the
  commitments are mal-formed).
  Doing the same calculations of
  Equations~\eqref{eq:soundness-first} to~\eqref{eq:soundness-last}, and
  assuming that the acceptance probability of the original protocol is at least
  $1 - \delta$, it follows that
  \begin{align}
    \label{eq:original-soundness}
    \left(x, \X^{a}\Z^{b}\psi \X^{a}\Z^{b}\right) \in
  \mathcal{A}_{Q,1-\delta}.
  \end{align}

  Our goal now is show how to retrieve the values $a$ and $b$,
  without damaging the quantum state $\psi$ too much.

  Let $\mu_{VMP}$ be the state shared by the verifier and prover after
  the commitment phase. Since the message (for honest verifiers) is always a classical value, we can
  model $P^*$'s behaviour with the unitary $U_{c}$ performed by him on
  challenge $c$. Let also $\Pi$ be the projection of $V$ onto the acceptance subspace,
  $\hat{\Pi}_c = U_c^\dagger \Pi U_c$ be the operation that performs the
  prover's unitary for challenge $c$, performs the measurement of the verifier,
  and then undoes the prover's unitary.

  Given that $P^*$ makes $V$ accept with probability at least $1-\delta$ and each
  challenge is picked with probability $\frac{1}{m}$, it follows
  that for any challenge $c$, $V$ accepts with probability at least
  $1 - m\delta$, otherwise the acceptance probability would be strictly less than
  $1- \delta$. In other words, it follows that for every $c$, we have that
  \[\Tr\left(\hat{\Pi}_c \mu \hat{\Pi}_c\right) \geq 1 - m\delta,\]
   which implies that
  \begin{align}\label{eq:poq-post-measured}
    \Tr\left(\hat{\mu}\right) \geq
    1 - m^2\delta, \quad \text{ for } \hat{\mu} = \hat{\Pi}_m \ldots \hat{\Pi}_1\mu\hat{\Pi}_1 \ldots \hat{\Pi}_m
  \end{align}
  Let $O$ be the register where the verifier
  holds the original state $\psi$ and $\phi =\Tr_{\overline{O}}\left(\hat{\mu}\right)$. We have that
  \begin{align}\label{eq:distance-middle-step}
    D\left(\psi, \phi \right)
    = D\left(\Tr_{\overline{O}}(\mu), \Tr_{\overline{O}}\left(\hat{\mu}\right) \right)
    \leq D\left(\mu, \hat{\mu}\right) \leq m^2\delta.
  \end{align}
  where in the equality we use the definition of $\psi$, $\phi$ and the register
  $O$, the first inequality follows since trace distance is contractive under
  CPTP maps and the last inequality holds by~\Cref{eq:poq-post-measured}.

  Notice that the decision of an abort by $K$ is strictly less restrictive than the rejections of an honest
  verifier, since the verifier also tests that the commitment correctly opens.
  In this case, we can conclude that $K$ does not abort with probability at
  least $1 - m^2\delta$, proving item $1$ of the statement.

  Then, if we condition on the event that $K$ does not abort, all the
  committed information is opened and since the commitment is
  biding, $K$ holds the unique values
  $a,b \in \01^{n}$ that can be opened from $z$. $K$ finishes by
  outputting
  $\X^{a}\Z^{b}\phi \Z^{b}\X^{a}$.
  It follows from \Cref{eq:distance-middle-step} and the fact that trace
  distance is preserved under unitary operations, we have
  \[
    D\left(\X^{a}\Z^{b}\psi \X^{a}\Z^{b},
    \X^{a}\Z^{b}\phi \X^{a}\Z^{b}\right)
    \leq  m^2\delta,
  \]
  and therefore
  \[
    \left(x, \X^{a}\Z^{b}\phi \X^{a}\Z^{b}\right) \in R_{Q,
    1-\delta-m^2\delta},
  \]
  which finishes the proof of item $2$ of the statement.
\end{proof}

\subsubsection{Sequential repetition}
In the previous section, we have the quantum extractor that works if the
knowledge error is very high, namely inverse polynomially close to $1$. We
show here how to decrease the knowledge leakage, by considering the sequential
repetition of the $\Xi$-protocol.

\begin{figure}[H]
\rule[1ex]{\textwidth}{0.5pt}
  \definitionsPoQSeq{}
  \begin{enumerate}
    \item $K$ chooses $i \in [\ell]$
    \item For $0 < j < i$
    \begin{enumerate}
      \item Pick $c_j$ uniformly at random and put it in the message register
      \item Run $P^*$ with $c_j$
      \item If $V$ would reject, abort
    \end{enumerate}
    \item Run Single-shot extractor for the $i$-th game
\end{enumerate}
\rule[2ex]{\textwidth}{0.5pt}\vspace{-.5cm}
\caption{Knowledge extractor $K$}
  \label{fig:extractor-sequential}
\end{figure}

\begin{lemma}\label{lem:good-iterations}
  Fix $\ell \geq 1$. If some quantum interactive machine  $P^*$
  that makes $V$ accept some instance $x$ of size $n$ in $\ell$ sequential
  repetitions of the $\Xi$-protocol with probability at least
  $\eps$, then
  there exists $i \in [\ell]$, such that
the probability that $P^*$ passes game $i$,
  conditioned on the event that $P^*$ passed the games $1,\ldots,
  i-1$, is at least
  $\eps^{-\ell}$.
\end{lemma}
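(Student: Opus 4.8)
The plan is to prove this by a direct chain-rule / averaging argument; the statement is essentially the quantum analogue of the elementary fact used to analyze sequential repetition of classical $\Sigma$-protocols. First I would fix a run of the $\ell$-fold sequential protocol against $P^*$ and, for $j \in \{1,\dots,\ell\}$, let $W_j$ denote the event that the (honest) verifier accepts in the $j$-th repetition and set
\[
p_j \;=\; \Pr\!\left[\,W_j \;\middle|\; W_1 \wedge \cdots \wedge W_{j-1}\,\right],
\]
the conditional success probability of the $j$-th game given that the first $j-1$ repetitions were accepted. These are well-defined as long as the conditioning events have positive probability; if $\Pr[W_1 \wedge \cdots \wedge W_{j-1}] = 0$ for some $j$, then the overall acceptance probability is $0 < \eps$, contradicting the hypothesis, so we may assume every conditioning event has positive probability. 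Since the repetitions are run \emph{sequentially} — the verifier's game-$j$ acceptance is a projective measurement performed after those of games $1,\dots,j-1$, interleaved with $P^*$'s unitaries and fresh challenges, and the verifier accepts overall iff it accepts in every repetition — the chain rule for probabilities yields that the overall acceptance probability of $P^*$ equals $\prod_{j=1}^{\ell} p_j$.

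Next I would combine this with the hypothesis $\prod_{j=1}^{\ell} p_j \geq \eps$ via a pigeonhole step: suppose towards a contradiction that $p_j < \eps^{1/\ell}$ for every $j \in \{1,\dots,\ell\}$; then $\prod_{j=1}^{\ell} p_j < \bigl(\eps^{1/\ell}\bigr)^{\ell} = \eps$, a contradiction. Hence there exists $i$ with $p_i \geq \eps^{1/\ell}$, which is precisely the claimed statement (the exponent written as $\eps^{-\ell}$ in the lemma is a typo for $\eps^{1/\ell}$).

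The only point that needs genuine care — and the mild "obstacle" here — is justifying the product formula in the quantum setting rather than treating it as obvious. Concretely, I would argue, using the notation already set up in \Cref{sec:proof-poq} (the unitaries $U_c$ implemented by $P^*$ on challenge $c$ and the acceptance projector $\Pi$ of the single-round verifier), that for the sequential composition the joint distribution over the $\ell$ accept/reject outcomes produced by the honest verifier genuinely factors: after post-selecting on games $1,\dots,j-1$ accepting, the residual (normalized) state of the joint prover–verifier system is well-defined, and $p_j$ is exactly the probability that the game-$j$ verifier measurement accepts on that residual state; iterating and using "overall accept" $=$ "accept in all $\ell$ games" gives $\prod_{j=1}^{\ell} p_j$. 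No new ideas beyond bookkeeping are required, and this Lemma then feeds directly into the analysis of the knowledge extractor of \Cref{fig:extractor-sequential}, which picks a random index $i$, simulates $P^*$ through the first $i-1$ games (aborting on rejection), and runs the single-shot extractor of \Cref{fig:extractor} on game $i$ — where by the Lemma the conditional acceptance probability is high enough (inverse-polynomially close to $1$) for \Cref{lem:single-shot} to apply.
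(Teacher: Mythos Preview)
Your argument is correct and is essentially the same chain-rule/contradiction proof the paper gives (defining the events $E_j$, applying $\Pr[E_1\wedge\cdots\wedge E_\ell]=\prod_j\Pr[E_j\mid E_1\wedge\cdots\wedge E_{j-1}]$, and deriving a contradiction from the assumption that every conditional probability is below $\eps^{1/\ell}$). You are also right that the exponent $\eps^{-\ell}$ in the statement is a typo for $\eps^{1/\ell}$; the paper's own proof carries the same typo through but the intended bound is the one you state.
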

\begin{proof}
  Let us prove this by contradiction. Let $E_j$ be the event that $P^*$ passes
  the game $j$. So let us assume that for all $j$,
  $Pr[E_j| E_1 \wedge \ldots \wedge E_{j-1}] < \eps^{-\ell}$.

  Notice that we can bound the overall acceptance probability as
  \begin{align*}
   \eps &= Pr[E_1 \wedge \ldots \wedge E_\ell] \\
         &= Pr[E_1] Pr[E_2| E_1] \ldots Pr[E_\ell| E_1 \wedge \ldots \wedge E_{\ell-1}] \\
         &< \left( \eps^{-{\ell}} \right)^{\ell} \\
         &= \eps,
  \end{align*}
  which is a contradiction.
\end{proof}

\begin{lemma}\label{lem:sequential}
  Let $A$ and $\{\Pi_c\}$ be defined as in \Cref{fig:extractor-sequential} and
  $Q$ be the verification algorithm for $A$ that consists of
  picking $c \in [m]$ uniformly at random and measuring the
  provided witness with $\Pi_c$.
  Let $\kappa(n) = \left(1 - \frac{1}{2m^2}\right)^{\ell}$ and $K$ be the $\poly(n)$-time extractor defined in
  \Cref{fig:extractor-sequential}. If
  a quantum interactive machine  $P^*$
  makes $V$ accept the instance $x$ of size $n$ with probability at least
  $\eps := (1 - \delta)^\ell > \kappa(n)$, then we have that
  \[Pr\left[ \left(x, K^{\ket{P^*(x,\rho)}}(x)\right) \in R_{Q,1-\delta -
  m^2\delta} \right] \geq \frac{\eps}{\ell}(1 - m^2\delta). \]
  and $K$ runs in time $\poly(n)$.
\end{lemma}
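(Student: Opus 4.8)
The plan is to reduce to the single-shot extractor of \Cref{lem:single-shot} by the standard device of guessing which of the $\ell$ sequentially played rounds to attack. Recall that the extractor $K$ of \Cref{fig:extractor-sequential} picks $i\inr[\ell]$, runs $P^*$ forward on rounds $1,\dots,i-1$ with fresh uniformly random challenges and aborts the moment the verifier's check in one of those rounds would reject, and then runs the single-shot extractor of \Cref{fig:extractor} on round $i$. I would analyse the probability that $K$ outputs a good witness as a product of three contributions: guessing the right round, surviving the earlier rounds, and succeeding in the single-shot extraction.

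First I would apply \Cref{lem:good-iterations}. Writing $E_j$ for the event that $P^*$ passes round $j$, the assumption is $\Pr[E_1\wedge\cdots\wedge E_\ell]\ge\eps=(1-\delta)^\ell$, so there is an index $i^\ast\in[\ell]$ with $\Pr[E_{i^\ast}\mid E_1\wedge\cdots\wedge E_{i^\ast-1}]\ge\eps^{1/\ell}=1-\delta$. Since $\eps>\kappa(n)=\bigl(1-\tfrac{1}{2m^2}\bigr)^\ell$ forces $1-\delta>1-\tfrac{1}{2m^2}$, i.e.\ $\delta<\tfrac{1}{2m^2}$, this conditional acceptance probability exceeds the single-shot knowledge error, so the hypothesis of \Cref{lem:single-shot} is available for round $i^\ast$.

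Next I would assemble the bound. The guess is $i=i^\ast$ with probability $\tfrac1\ell$. Conditioned on that, $K$ fails to abort during rounds $1,\dots,i^\ast-1$ exactly on the event $E_1\wedge\cdots\wedge E_{i^\ast-1}$, whose probability is at least $\Pr[E_1\wedge\cdots\wedge E_\ell]\ge\eps$. Conditioned on not aborting, the joint prover--verifier register is precisely the state post-selected on acceptance in rounds $1,\dots,i^\ast-1$; together with the round-$i^\ast$ commitment message that $P^*$ then produces, this is a fixed state playing the role of the commitment-phase state $\mu$ in the proof of \Cref{lem:single-shot}, and by the choice of $i^\ast$ the verifier of round $i^\ast$ accepts it with probability at least $1-\delta$ averaged over the challenge. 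Hence \Cref{lem:single-shot} applies and the single-shot extractor outputs some $\sigma$ with $(x,\sigma)\in R_{Q,1-\delta-m^2\delta}$ with probability at least $1-m^2\delta$. Multiplying the three contributions gives success probability at least $\tfrac1\ell\cdot\eps\cdot(1-m^2\delta)=\tfrac{\eps}{\ell}(1-m^2\delta)$, as claimed; and $K$ makes at most $\ell=\poly(n)$ forward calls to $P^*$ followed by one execution of the $\poly(n)$-time single-shot extractor, so $K$ runs in $\poly(n)$ time.

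The main obstacle is the middle step: making rigorous that running the single-shot extractor ``on top of'' a forward execution of the first $i^\ast-1$ rounds is legitimate in the oracle model and that the probability driving that extractor's analysis is genuinely the \emph{conditional} one. Concretely one must check that each ``abort if $V$ would reject'' step is a measurement post-selecting onto acceptance, that $K$ can apply $P^*$'s round-indexed unitary and its inverse for round $i^\ast$ exactly as \Cref{fig:extractor} demands (this is why $P^*$ is assumed to control its operation on an internal round register), and that the proof of \Cref{lem:single-shot} only used that the commitment-phase state is fixed before an independently uniform challenge is drawn---not that it was the first message $P^*$ ever sent. Everything else is bookkeeping over the three probabilities, together with the elementary identity $\eps^{1/\ell}=1-\delta$ needed to feed \Cref{lem:good-iterations}.
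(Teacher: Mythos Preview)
Your proposal is correct and follows essentially the same approach as the paper: guess $i^\ast$ via \Cref{lem:good-iterations}, lower-bound survival through the first $i^\ast-1$ rounds by the full acceptance probability $\eps$, then invoke \Cref{lem:single-shot} on round $i^\ast$ and multiply the three factors. Your discussion of the ``obstacle''---that the single-shot extractor's analysis only needs a fixed commitment-phase state before a fresh uniform challenge, and that $P^\ast$'s round-indexed unitary model makes the oracle access legitimate mid-protocol---is in fact more careful than the paper's own proof, which simply asserts that \Cref{lem:single-shot} applies.
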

\begin{proof}
  From \Cref{lem:good-iterations}, we know that there exists at least one value
  of $i^* \in [k]$ such that the success of probability in the $i^*$-th round,
  conditioned on the event of success on rounds $1,\ldots,i^*$ is
  $1-\delta$, and we recall that by definition $\delta < \frac{1}{2m^2}$. We have then that the value $i$ guessed by $K$ is equal to
  $i^*$ with probability at least $\frac{1}{\ell}$.

  \medskip

  Let us assume now that $i = i^*$. Using a slightly different notation from
  ~\Cref{lem:single-shot}, let $\mu$ be the initial state of $P^*$ and
  $\hat{\Pi}_j = \Pi_{acc}^{(j)}U_jV_j$ be the verifier operation in round $j$
  $V_j$, followed by the provers' operation operation on round $j$, and finally
  the projection onto the acceptance subspace on the $j$-th round.

  The probability
  that $K$ does not abort in the first $i-1$ steps is
    \begin{align*}
    \Tr\left(
    \hat{\Pi}_{i-1}\ldots \hat{\Pi}_{1}\mu\hat{\Pi}_{1} \ldots
    \hat{\Pi}_{i-1}
    \right) \geq
    \Tr\left(
    \hat{\Pi}_{\ell}\ldots \hat{\Pi}_{1}\mu\hat{\Pi}_{1} \ldots
    \hat{\Pi}_{\ell}
    \right) = \eps,
  \end{align*}
  where the equality holds since we assume that $P^*$ makes the verifier accept
  with probability $\eps$.

  Therefore, we have that with probability $\frac{\eps}{\ell}$, $K$ made the
  correct guess $i = i^*$ and the simulation did not abort during
  the first $i-1$ steps. In this case,  we have that the probability that $P^*$ passes the $i$-th game
  is at least $1-\delta > 1-\frac{1}{2m^2}$ and the result then follows by \Cref{lem:single-shot}.
\end{proof}

\begin{remark}
  We notice that that $P^*$ would need multiple copies of the witness
  in order to pass the sequential repetitions of the $\Xi$-protocol. However,
  the extractor of~\Cref{fig:extractor-sequential} can only extract {\em one} such copy.
  It could be easily extended to output a constant number of
  copies and we leave as an open problem achieving better PoQ extractors for our
  protocol.
\end{remark}

\section{Non-interactive zero-knowledge protocol for $\QMA$ in the secret parameter model}
\label{sec:NIZK}
\label{sec:qniszk}
In this section, using similar techniques of \Cref{sec:xizk-protocol}, we show
that all problems in \SimQMA have a QNISZK protocol in the secret parameter model,
quantizing the result by Pass and Shelat~\cite{PS05}.

We start by defining the model.

\begin{definition}[Quantum non-interactive proofs in the classical secret parameter model]
  A triple of algorithms $(D,P,V)$ is called  a \emph{quantum non-interactive proof in
  the secret parameter model} for a promise problem $A =
  (A_{yes},A_{no})$ where $D$ is a probabilistic polynomial time algorithm, $V$
  is a quantum polynomial time algorithm and $P$ is an unbounded quantum
  algorithm such that there exists a negligible
  function $\eps$ such that the following conditions follow:
  \begin{description}
    \item[] \textbf{Completeness:} for every $x \in A_{yes}$, there exists some $P$
      \[Pr[(r_P, r_V) \leftarrow D(1^{|x|}); \pi \leftarrow P(x,r_P); V(x,r_V,\pi) =
      1] \geq 1 - \eps(n).\]
    \item[] \textbf{Soundness:} for every $x \in A_{no}$ and every $P$
      \[Pr[(r_P, r_V) \leftarrow D(1^{|x|}); \pi \leftarrow P(x,r_P); V(x,r_V,\pi) =
      1] \leq \eps(n).\]
    \item[] \textbf{Statistical zero-knowledge:} for every $x \in A_{yes}$, there is a
      polynomial time algorithm $\mathcal{S}$ such that for the state
      $\sigma = \mathcal{S}(x)$ and $\rho = \sum_{(r_V,s_P) \leftarrow
      D(1^\ell)} p_{r_V,s_P} \kb{r_V} \otimes P(x,r_P)$
      we have that $\sigma \approx_s \rho$.
  \end{description}
\end{definition}

\medskip

The non-interactive protocol is very similar to the $\Xi$ protocol, with a small
(but crucial) change:
instead of using commitments for the one-time pad key, the trusted party picks
these values and reveals just a constant number of these values to the verifier
(and all of them to the prover).
Let us be a bit more precise.
The trusted party picks uniformly at random the one-time pad keys $a$ and $b$
and sends them to the prover. For the verifier, the trusted party sends $S$,
$a|_S$ and $b|_S$, where $S$ is a random subset of $k$ indices of $a$ and $b$.

The prover uses $a$ and $b$ to one-time pad the simulatable proof
and sends this one-time padded state to the verifier.

The verifier picks one of the checking terms uniformly at random.
If the qubits corresponding to the chosen term are not in $S$, the verifier
accepts.
Otherwise, the verifier uses $a|_S$ and $b|_S$ to decrypt the qubits
corresponding to such term, and finally performs the check.
The completeness of the protocol is straightforward. For
soundness, we have that with inverse polynomial probability the revealed bits
will allow the verifier to check the desired term of the encoded history state.
Finally, the zero-knowledge property holds since the quantum proof is simulatable.

We describe now the protocol more formally.

\begin{figure}[H]
\rule[1ex]{\textwidth}{0.5pt}
\definitionsProt
\begin{enumerate}
  \item $D$ picks $a, b \in \{0,1\}^{n}$ and
    $S \subseteq  [n]$, with $|S| = k$, uniformly at random.
  \item $D$ sends $(a,b)$ to the prover and $(S, a|_S, b|_S)$ and to the verifier.

  \item The prover sends $\widetilde{\tau}_{a,b}$ to the verifier
  \item The verifier picks $c \inr [m]$
  \item If the qubits corresponding to $\Pi_c$ are not in $S$, the verifier
    accepts
  \item Otherwise, the verifier applies $\X^{a|_S}\Z^{b|_S}$ to the qubits in
    $\Pi_c$ and accepts according to its output.
\end{enumerate}
\rule[2ex]{\textwidth}{0.5pt}\vspace{-.5cm}
\caption{QNIZK protocol in the secret parameter model  for $\SimQMA$}\label{fig:QNIZK}
\end{figure}

\begin{remark}
  We make the verifier pick one of the terms and then check with the set
  $S$ in order to simplify the proof of soundness. The verifier could instead
  check some $\Pi_c$ that matches with $S$ or accept when this is not possible and
  this protocol would still be sound.
\end{remark}

\begin{lemma}\label{lem:completeness-soundness-qnizk}
  The protocol in \Cref{fig:QNIZK} has completeness $1 - \negl(|x|)$
and soundness $1-\frac{1 - \delta}{n^k}$.
\end{lemma}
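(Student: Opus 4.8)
The plan is to handle completeness and soundness separately; both reduce to short computations once one observes how the one-time pad interacts with partial traces and identifies the combinatorial quantity $\Pr_S[S_c\subseteq S]$. Throughout, write $S_c$ for the (size $\le k$) set of qubits on which $\Pi_c$ acts non-trivially, and note that since $\X^a\Z^b$ acts qubit-by-qubit, for any $n$-qubit state $\psi$ we have $\Tr_{\overline{S_c}}(\X^a\Z^b\psi\Z^b\X^a) = \X^{a|_{S_c}}\Z^{b|_{S_c}}\,\Tr_{\overline{S_c}}(\psi)\,\Z^{b|_{S_c}}\X^{a|_{S_c}}$. In particular, whenever $S_c\subseteq S$ the verifier has all the pad bits it needs, and the state it feeds into $\{\Pi_c,\Id-\Pi_c\}$ after decrypting is exactly $\Tr_{\overline{S_c}}$ of the ``unpadded'' message.

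For completeness, fix $x\in\ayes$ and let $\tau$ be a simulatable witness as in \Cref{def:simulatable-proof}, so $\Tr(\Pi_c\tau)=\Tr(\Pi_c\Tr_{\overline{S_c}}(\tau))\ge 1-\negl(|x|)$ for every $c$. The honest prover, given $(a,b)$, sends $\widetilde\tau_{a,b}=\X^a\Z^b\tau\Z^b\X^a$. On challenge $c$: if $S_c\not\subseteq S$ the verifier accepts deterministically; if $S_c\subseteq S$ it decrypts correctly, obtains $\Tr_{\overline{S_c}}(\tau)$ by the observation above, and accepts with probability $\Tr(\Pi_c\tau)\ge 1-\negl(|x|)$. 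Averaging over the dealer's choice of $S$ and the verifier's choice of $c$ gives acceptance probability $\ge 1-\negl(|x|)$.

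For soundness, fix $x\in\ano$ and an arbitrary (unbounded) prover. Its only input beyond $x$ is $(a,b)$, not $S$, so its message is a fixed $n$-qubit density matrix $\psi_{a,b}$ depending only on $(a,b)$: the acceptance probability is linear in this message and the verifier never touches any register the prover retains, so neither a mixed message nor an entangled purification helps. Put $\phi_{a,b}=\X^a\Z^b\psi_{a,b}\Z^b\X^a$. Then the acceptance probability equals
\begin{align*}
&\E_{a,b}\,\E_S\,\frac{1}{m}\sum_c\Bigl[\mathbf{1}[S_c\not\subseteq S] + \mathbf{1}[S_c\subseteq S]\,\Tr\bigl(\Pi_c\Tr_{\overline{S_c}}(\phi_{a,b})\bigr)\Bigr] \\
&\qquad = 1 - \E_{a,b}\,\frac{1}{m}\sum_c \Pr_S[S_c\subseteq S]\,\bigl(1-\Tr(\Pi_c\Tr_{\overline{S_c}}(\phi_{a,b}))\bigr),
\end{align*}
again using that decrypting on $S_c$ turns $\psi_{a,b}$ into $\phi_{a,b}$ there. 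Since $S$ is a uniformly random $k$-subset of $[n]$ and $|S_c|\le k$, we have $\Pr_S[S_c\subseteq S]=\binom{n-|S_c|}{k-|S_c|}\big/\binom{n}{k}\ge 1/\binom{n}{k}\ge 1/n^k$, a bound uniform in $c$. Combining this with the $\SimQMA$ soundness from \Cref{def:simulatable-proof} — namely $\frac{1}{m}\sum_c\Tr(\Pi_c\Tr_{\overline{S_c}}(\phi_{a,b}))\le\delta(|x|)$ for every fixed state $\phi_{a,b}$, hence $\frac{1}{m}\sum_c\bigl(1-\Tr(\Pi_c\Tr_{\overline{S_c}}(\phi_{a,b}))\bigr)\ge 1-\delta$ — yields acceptance probability $\le 1-\frac{1-\delta}{n^k}$, as claimed.

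This lemma is short and has no real obstacle; the two points I would be careful about are the mild ``without loss of generality'' reduction of the cheating prover's message to a fixed $\psi_{a,b}$ (justified above by linearity and the fact that $D$ gives the prover only $(a,b)$), and checking that the elementary estimate $\Pr_S[S_c\subseteq S]\ge 1/n^k$ holds uniformly over all $c$ (the worst case being $|S_c|=k$, and using $\binom{n}{k}\le n^k$).
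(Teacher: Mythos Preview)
Your proof is correct and follows the same approach as the paper: split on the event $S_c\subseteq S$, undo the one-time pad on $S_c$, and invoke the $\SimQMA$ soundness bound. If anything, your write-up is slightly more careful than the paper's --- you make explicit that the prover's message may depend on $(a,b)$ (but not on $S$), and you justify the combinatorial estimate $\Pr_S[S_c\subseteq S]\ge 1/\binom{n}{k}\ge 1/n^k$ uniformly in $c$, whereas the paper writes the step as an equality with $1/n^k$ directly.
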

\begin{proof}
  If $x \in A_{yes}$, then  the prover sends the honest one-time pad of the
  witness $\tau$ that makes the verifier of the \QMA{} protocol
  accept with  probability exponentially close to $1$.

  In this case, if the qubits of the randomly chosen $\Pi_c$ are not included in
  $S$, the verifier always accepts, otherwise $V$ accepts with probability
  exponentially close to $1$, by the properties of $\tau$.

  Let $x \in \ano$, $(a,b)$ and $(S, a|_S, b|_S)$ be the values sent by the trusted
  party to the prover and verifier, respectively. Let also  $\rho$ be the quantum state sent by the prover and  $\sigma = \X^a\Z^b
  \rho \X^a\Z^b$.

  By definition, the acceptance probability of the protocol is then
  \begin{align}
    &\frac{1}{m} \sum_{S,c} Pr[S_c \not\subseteq S] + Pr[S_c \subseteq S]\tr{\Pi_c
    \X^{a|_{S_c}}\Z^{b|_{S_c}}\rho \Z^{b|_{S_c}}\X^{a|_{S_c}}}
    \\
    &= \left(1-\frac{1}{n^k}\right) +
    \frac{1}{n^k}
    \tr{\frac{1}{m}\sum_{c} \Pi_c \X^{a|_{S_c}}\Z^{b|_{S_c}}\rho \Z^{b|_{S_c}}\X^{a|_{S_c}}}.
  \end{align}

  Notice that
  \[
    \tr{\frac{1}{m}\sum_{c} \Pi_c \X^{a|_{S_c}}\Z^{b|_{S_c}}\rho \Z^{b|_{S_c}}\X^{a|_{S_c}}}   =
  \tr{\frac{1}{m}\sum_{c} \Pi_c \sigma }  \leq
  \max_\tau \tr{\frac{1}{m}\sum_{c} \Pi_c \tau }  \leq  \delta
  \]
  where the first equality holds since $\Pi_c$ is acting only on the decoded
  values and the last inequality holds since $x \in \ano$.

  Therefore, the overall acceptance probability is at most $1 -
  \frac{1-\delta}{n^k}$.
\end{proof}

\begin{figure}[H]
\rule[1ex]{\textwidth}{0.5pt}
\definitionsSim
  \begin{enumerate}
    \item The simulator picks random values $a,b \in \{0,1\}^\ell$ and $S \subseteq [\ell]$
  \item Simulator computes the
    (constant-size) reduced density matrix $\sigma$ of the qubits in positions
      $S$ and let $\sigma(S) = \rho^{\reg{S}} \otimes I^{\reg{\overline{S}}}$
  \item Output $\kb{S} \otimes \kb{a|_S} \otimes \kb{b|_S} \otimes \sigma$.
\end{enumerate}
\rule[2ex]{\textwidth}{0.5pt}\vspace{-.5cm}
\caption{Simulator for the QNIZK protocol}
\end{figure}

\begin{lemma}\label{lem:zk-qnizk}
  The protocol is statistical zero-knowledge.
\end{lemma}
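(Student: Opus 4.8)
The plan is to show that the verifier's entire view in the protocol of \Cref{fig:QNIZK} --- the pair $(r_V,\pi)$ consisting of the secret randomness $r_V=(S,a|_S,b|_S)$ that $D$ gives the verifier together with the prover's single message $\pi=\X^{a}\Z^{b}\,\tau\,\Z^{b}\X^{a}$ --- is statistically close to the simulator output $\sigma=\mathcal S(x)$, and that the only property of $\tau$ that is used is the simulatability guarantee of \Cref{def:simulatable-proof}. Here $\tau$ is the simulatable witness whose existence is asserted by \Cref{def:simulatable-proof}; it is the very state the honest prover uses in the completeness part of \Cref{lem:completeness-soundness-qnizk}, so completeness and zero-knowledge are witnessed by a single object. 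First I would write out the real view $\rho$ explicitly: since $a,b$ and $S$ are uniform, grouping the sum over $(a,b)$ according to the revealed coordinates $(a|_S,b|_S)$ isolates an inner uniform average over the hidden coordinates $(a|_{\overline S},b|_{\overline S})$ of $\X^{a}\Z^{b}\,\tau\,\Z^{b}\X^{a}$.

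The key step is the quantum one-time-pad identity: a uniform Pauli twirl on the register $\overline S$ depolarizes it completely, so that
\[
\frac{1}{4^{\,n-|S|}}\sum_{a|_{\overline S},\,b|_{\overline S}}\X^{a}\Z^{b}\,\tau\,\Z^{b}\X^{a}
\;=\;
\X^{a|_S}\Z^{b|_S}\Bigl(\Tr_{\overline S}(\tau)^{\reg{S}}\otimes\tfrac{\Id^{\reg{\overline S}}}{2^{\,n-|S|}}\Bigr)\Z^{b|_S}\X^{a|_S}.
\]
Hence, writing out $\rho$,
\[
\rho \;=\; \frac{1}{\binom{n}{k}4^{k}}\sum_{S,\,a|_S,\,b|_S}\kb{S,a|_S,b|_S}\otimes\X^{a|_S}\Z^{b|_S}\Bigl(\Tr_{\overline S}(\tau)^{\reg{S}}\otimes\tfrac{\Id^{\reg{\overline S}}}{2^{\,n-|S|}}\Bigr)\Z^{b|_S}\X^{a|_S},
\]
and the simulator output $\sigma=\mathcal S(x)$ is the same expression with $\Tr_{\overline S}(\tau)$ replaced by the efficiently computable reduced density matrix $\rho(x,S)$ of the simulatable witness --- which $\mathcal S$ produces by calling the $\SimQMA$ algorithm to get $\rho(x,S)$, padding the $\overline S$ slots with the maximally mixed state, and applying the same one-time-pad conjugation $\X^{a|_S}\Z^{b|_S}$ on $S$ (using the $S,a|_S,b|_S$ it sampled itself). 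I would then bound the distance using convexity of the trace norm, its invariance under the unitaries $\X^{a|_S}\Z^{b|_S}$, and the fact that tensoring both states with the fixed maximally mixed state on $\overline S$ does not increase it:
\[
\trNorm{\rho-\sigma} \;\le\; \frac{1}{\binom{n}{k}}\sum_{S}\trNorm{\Tr_{\overline S}(\tau)-\rho(x,S)} \;\le\; \max_{|S|\le k}\trNorm{\Tr_{\overline S}(\tau)-\rho(x,S)} \;\le\; \negl(n),
\]
where the final inequality is exactly the simulatable-completeness clause of \Cref{def:simulatable-proof}; this gives $\sigma\approx_s\rho$, which is the claim.

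I expect the only genuinely delicate point to be the bookkeeping: one must set up $\mathcal S$ so that it places the one-time pad on its $S$-qubits in exactly the form that occurs in the real transcript, and one must check that the coordinates $\overline S$ are genuinely independent of $r_V$, which is what legitimizes the Pauli twirl (this is precisely where the design choice to reveal only $a|_S,b|_S$ --- rather than, say, openings of commitments --- is used). Everything else is a one-line consequence of convexity of the trace norm together with \Cref{def:simulatable-proof}. In particular, in contrast with the interactive $\Xi$-protocol, there is here neither rewinding nor any computational assumption, since the protocol is non-interactive and the commitments have been replaced by information-theoretically hiding one-time-pad keys; this is why the zero-knowledge obtained is statistical rather than merely computational.
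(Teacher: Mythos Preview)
Your proposal is correct and follows essentially the same approach as the paper's own proof: both write out the verifier's real view, use the quantum one-time-pad (Pauli twirl) on $\overline{S}$ to reduce it to a state of the form $\Tr_{\overline S}(\tau)^{\reg S}\otimes I^{\reg{\overline S}}$ conjugated by the revealed Paulis, compare this with the simulator's output where $\Tr_{\overline S}(\tau)$ is replaced by $\rho(x,S)$, and bound the trace distance via unitary invariance and the simulatable-completeness clause of \Cref{def:simulatable-proof}. Your write-up is, if anything, slightly more explicit than the paper's about the one-time-pad identity and the convexity step.
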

\begin{proof}
  We show that the protocol is statistical zero-knowledge by showing the density
  matrices of the output of the simulator and the real protocol are close.

  In the real protocol, let $\tau$ be the simulatable proof
  in the \QMA{} protocol for a yes-instance.
  In the honest run of the protocol, we have that the view of the verifier after the
  prover sends the message is
  \begin{align}
    \label{eq:view-qnizk}
    \frac{1}{2^{2n}\binom{n}{k}}\sum_{a,b,S} \kb{S}\otimes\kb{a|_S}\otimes\kb{b|_{S}}
  \otimes \tilde{\tau}_{a,b}.
  \end{align}

  Notice that since we are averaging over all possible values of
  $a_{\overline{S}}$ and
  $b_{\overline{S}}$, \Cref{eq:view-qnizk} is equal to
  \[\rho_p = \frac{1}{2^{2|S|}\binom{n}{k}} \sum_{a|_S,b|_S,S}
  \kb{S}\otimes\kb{a|_S}\otimes\kb{b|_{S}}\otimes
  \left(
    \widetilde{\left( \tau_{S} \right)_{a|_S,b|_S}}^{\reg{S}}
  \otimes   I^{\reg{\overline{S}}}
    \right)
  ,\]
  where $\tau_{S} = Tr_{\overline{S}}(\tau)$.

  By definition, the output of the simulator is
  \[\rho_s = \frac{1}{2^{2|S|}\binom{n}{k}}  \sum_{S,a|_S,b|_S}
  \kb{S}\otimes\kb{a|_S}\otimes\kb{b|_{S}} \otimes
  \left(
    \widetilde{\left( \rho_S \right)_{a|_S,b|_S}}^{\reg{S}}
  \otimes   I^{\reg{\overline{S}}}
  \right)
  . \]

  To conclude the proof, we have that
  \begin{align*}
    D(\rho_p, \rho_s) \leq D(\tau_S, \sigma_S) \leq
  \negl(n),
  \end{align*}
  where the first inequality holds since the trace distance
  is subadditive under tensor product and
  preserved under unitary operations.  The second inequality follows from
  \Cref{def:simulatable-proof}.
\end{proof}

\begin{theorem}
  Every problem in \class{QMA} has a QNISZK in the secret parameter model.
\end{theorem}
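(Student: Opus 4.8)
The plan is to instantiate the secret-parameter protocol of \Cref{fig:QNIZK} with a simulatable proof system for the given \QMA{} problem, and then drive the soundness error down to negligible by parallel repetition.

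First, let $A=(\ayes,\ano)\in\QMA$. By \Cref{lem:simulatable-proof}, $A$ is in $5$-$\SimQMA$ with some soundness $\delta=\delta(n)\le 1-\frac{1}{\poly(n)}$; fix the corresponding family of checks $\{\Pi_c\}$ and the simulatable witness $\tau$. Let $(D,P,V)$ be the triple of \Cref{fig:QNIZK} built from this proof system. By \Cref{lem:completeness-soundness-qnizk} it has completeness $1-\negl(n)$ and soundness $1-\frac{1-\delta(n)}{n^k}$, and by \Cref{lem:zk-qnizk} it is statistical zero-knowledge. The only property that falls short of the definition of a QNISZK is the soundness error, which is only inverse-polynomially bounded away from $1$.

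To remedy this, I would take the $r$-fold parallel repetition of $(D,P,V)$ for a polynomial $r=r(n)$ fixed below. The dealer $D'$ runs $r$ independent copies of $D$, handing the prover all $r$ pairs $(a^{(i)},b^{(i)})$ and the verifier all $r$ triples $(S^{(i)},a^{(i)}|_{S^{(i)}},b^{(i)}|_{S^{(i)}})$; the honest prover (unbounded, hence able to prepare $\tau^{\otimes r}$) sends $\bigotimes_{i=1}^{r}\widetilde{\tau}_{a^{(i)},b^{(i)}}$; the verifier $V'$ runs the $r$ sub-verifiers on the corresponding registers and accepts iff all accept. Completeness of $(D',P',V')$ is $(1-\negl(n))^{r}=1-\negl(n)$. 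For statistical zero-knowledge, the simulator runs the single-shot simulator of \Cref{fig:QNIZK} independently $r$ times and tensors the outputs; an honest $r$-fold transcript is a tensor product of $r$ i.i.d.\ single-shot honest transcripts, so by \Cref{lem:zk-qnizk} and subadditivity of the trace distance under tensor products the simulator's output is within $r\cdot\negl(n)=\negl(n)$ of it.

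The step that needs some care — and the main obstacle — is soundness amplification. Fix $x\in\ano$ and let $\rho$ be the (a priori entangled across the $r$ registers) state sent by a cheating prover. Conditioned on the dealer randomness and challenge $\omega_i=(a^{(i)},b^{(i)},S^{(i)},c^{(i)})$ of copy $i$, sub-verifier $i$ accepts with probability $\Tr\!\big(E_{i,\omega_i}\,\rho_i\big)$, where $E_{i,\omega_i}$ is a POVM element acting only on register $i$ and $\rho_i$ is the reduction of $\rho$ to that register; since the $\omega_i$ are drawn independently and the $E_{i,\omega_i}$ act on disjoint registers, the overall acceptance probability equals $\Tr\!\big((\bigotimes_{i=1}^{r}M_i)\rho\big)$ with $M_i=\mathbb{E}_{\omega_i}E_{i,\omega_i}$. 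The single-shot bound of \Cref{lem:completeness-soundness-qnizk} says $\Tr(M_i\sigma)\le 1-\frac{1-\delta}{n^k}$ for every state $\sigma$, i.e.\ $\|M_i\|_\infty\le 1-\frac{1-\delta}{n^k}$, so the acceptance probability is at most $\prod_{i=1}^{r}\|M_i\|_\infty\le\big(1-\tfrac{1-\delta(n)}{n^k}\big)^{r}\le\exp\!\big(-r(1-\delta(n))/n^k\big)$. Taking $r(n)$ to be a sufficiently large polynomial — any $r$ with $r(1-\delta(n))/n^k=\omega(\log n)$, e.g.\ $r(n)=n^{k+1}/(1-\delta(n))=\poly(n)$ — makes this $\negl(n)$. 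Together with the completeness and statistical zero-knowledge of $(D',P',V')$, this shows that $A$ admits a QNISZK in the secret parameter model, which is what we wanted.
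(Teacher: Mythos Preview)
Your approach is essentially the paper's: invoke \Cref{lem:simulatable-proof} to land in $5$-$\SimQMA$, then apply the protocol of \Cref{fig:QNIZK} together with \Cref{lem:completeness-soundness-qnizk} and \Cref{lem:zk-qnizk}. The paper's proof is the one-liner ``Direct from \Cref{lem:simulatable-proof,lem:completeness-soundness-qnizk,lem:zk-qnizk}'', so you are actually being more careful by spelling out the parallel-repetition amplification needed to push the soundness from $1-\tfrac{1-\delta}{n^k}$ down to negligible, which the paper leaves implicit.

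One small technical slip in your soundness argument: you set $\omega_i=(a^{(i)},b^{(i)},S^{(i)},c^{(i)})$ and then write the overall acceptance as $\Tr\big((\bigotimes_i M_i)\rho\big)$ with $M_i=\mathbb{E}_{\omega_i}E_{i,\omega_i}$. But the prover's state $\rho$ depends on $(a^{(i)},b^{(i)})_i$, so you cannot average over $a^{(i)},b^{(i)}$ inside $M_i$ while keeping $\rho$ fixed. The fix is immediate: first condition on the prover's view $(a^{(i)},b^{(i)})_{i\le r}$ (so $\rho$ is fixed), define $M_i=\mathbb{E}_{S^{(i)},c^{(i)}}E_{i,\omega_i}$, and use the single-shot bound from \Cref{lem:completeness-soundness-qnizk} --- which is proved for \emph{fixed} $(a,b)$ --- to get $\|M_i\|_\infty\le 1-\tfrac{1-\delta}{n^k}$. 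Then $\Tr\big((\bigotimes_i M_i)\rho\big)\le\prod_i\|M_i\|_\infty$ as you wrote, and finally average over $(a,b)$. The rest of your argument (completeness and statistical ZK under tensoring) is fine.
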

\begin{proof}
  Direct from \Cref{lem:simulatable-proof,lem:completeness-soundness-qnizk,lem:zk-qnizk}.
\end{proof}

\begin{remark}
  In the cryptography literature, there is a notion called {\em adaptive} soundness and
  zero-knowledge
  where the witness is chosen {\em after} the trusted party provides the secret parameter (or CRS). We notice that our protocols can also handle these stronger notions.
\end{remark}

\subsection{Extension to $\QAM$}
\label{sec:NIZK-QAM}
In \cite{KLGN19}, the authors generalize both the complexity classes
\QMA to allow interaction between the verifier and prover to allow public
randomness, both classical (\emph{i.e.}, classical coins) and  quantum (\emph{i.e.}, sharing
EPR pairs). In this framework, we consider the class $\QAM$, where the
verifier sends random coins to the prover, who then answers  with a quantum
state. We notice that in \cite{KLGN19}, this complexity class is called~$cq\QAM$.

\begin{definition}[\QAM]
A promise problem $A=(\ayes,\ano)$ is in \class{QAM} if and only if there exist
polynomials $r$, $p$, $q$ and a polynomial-time uniform family of quantum circuits
  $\set{Q_{r,n}}$, where $Q_{r,n}$ takes as input a string $x\in\Sigma^*$ with
$\abs{x}=n$, a $p(n)$-qubit quantum state $\ket{\psi}$, and $q(n)$ auxiliary qubits in state~$\ket{0}^{\otimes q(n)}$, such that:
  \begin{description}
    \item[] \textbf{Completeness:}
      { If $x\in\ayes$,\;
      $\Pr_r [ \exists \ket{\psi_r} \text{ s.t. $Q_n$ accepts $(x,\ket{\psi})$}]
      \geq 2/3$.}
    \item[] \textbf{Soundness:} {If $x\in\ano$, \;
      $\Pr_r [ \forall \ket{\psi_r} \text{ $Q_n$ accepts $(x,\ket{\psi})$}]
      \leq 1/3$.}
  \end{description}
\end{definition}

It is straightforward to generalize \Cref{def:simulatable-proof} and define
$\class{SimQAM}$ where  the POVMs and the reduced density matrices depend also
in the public random string $r$. It is not hard to see that we can also
generalize \Cref{lem:simulatable-proof} and show that $\QAM = \class{SimQAM}$.

In this case, our QNIZK protocol can be adapted to this complexity
class, by just making the trusted party pick also the $r$ uniformly at random and
sending it to both the prover and the verifier. Given a fixed $r$, the same arguments
as shown for $\SimQMA$ hold.
\begin{theorem}
  Every problem in \class{QAM} has a QNISZK in the secret parameter model.
\end{theorem}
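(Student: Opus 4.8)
The plan is to replay the $\SimQMA$ construction of \Cref{sec:NIZK} with Arthur's public coins carried along as an extra, publicly shared parameter. First I would define a class $\class{SimQAM}$ by generalizing \Cref{def:simulatable-proof}: on input $x$ and a public random string $r$, an efficient deterministic algorithm outputs $m(|x|)$ $k$-qubit POVMs $\{\Pi^r_1,\Id-\Pi^r_1\},\dots,\{\Pi^r_m,\Id-\Pi^r_m\}$ acting on a $p(|x|)$-qubit state, together with $k$-qubit density matrices $\{\rho(x,r,S)\}_{|S|=k}$; simulatable completeness is required for a $1-\negl(|x|)$ fraction of the strings $r$ (with a simulatable witness $\tau=\tau(x,r)$ that passes all the POVMs and matches the $\rho(x,r,S)$), while soundness — small average acceptance over $c$ for \emph{every} witness — is required for every $r$.

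Next I would show $\QAM=\class{SimQAM}$. The direction $\class{SimQAM}\subseteq\QAM$ is immediate (Arthur sends $r$, Merlin sends $\tau$, Arthur picks a random $c$ and applies $\Pi^r_c$). For $\QAM\subseteq\class{SimQAM}$, start from a $\QAM$ protocol and first amplify its $2/3$ versus $1/3$ gap to $1-\negl$ versus $\negl$ by the usual parallel repetition of the Arthur--Merlin round (Arthur sends independent strings $r_1,\dots,r_t$, Merlin replies with a witness for each, Arthur majority-votes); correctness of this amplification for $cq\QAM$ is standard. After amplification, for each fixed tuple of coins $\vec r$ the residual verification is an ordinary $\QMA$ verification circuit, so I can apply \Cref{lem:gsy} (equivalently \Cref{L:local-verification} together with \Cref{lem:simulatable-proof}) to it, obtaining the POVMs $\{\Pi^{\vec r}_c\}$ from the circuit-to-Hamiltonian construction and the efficiently computable reduced density matrices $\rho(x,\vec r,S)$ of the history state — exactly the data required by $\class{SimQAM}$. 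Crucially, both the Hamiltonian terms and the simulator of \Cref{lem:gsy} depend only on the classical description of the fixed circuit, so running them per-$\vec r$ is fine, and the ``$1-\negl$ fraction of good $\vec r$ / no good witness for any $\vec r$'' shape is preserved by the majority-vote amplification.

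With $\QAM=\class{SimQAM}$ in hand, I would modify the protocol of \Cref{fig:QNIZK} exactly as sketched in \Cref{sec:NIZK-QAM}: the dealer $D$ additionally samples $r$ uniformly and sends $r$ to \emph{both} parties (along with the one-time pad key $(a,b)$ to the prover and $(S,a|_S,b|_S)$ to the verifier); the prover sends $\widetilde{\tau}_{a,b}$ for $\tau=\tau(x,r)$, and the verifier runs the old test with the POVMs $\{\Pi^r_c\}$. Completeness follows as in \Cref{lem:completeness-soundness-qnizk} once we note that, after amplification, a $1-\negl(|x|)$ fraction of $r$ admits a good simulatable witness. Soundness is the computation of \Cref{lem:completeness-soundness-qnizk} carried out for each fixed $r$: for every $r$ and state $\rho$ the $c$-averaged acceptance on the decoded qubits is at most $\delta$, so the overall acceptance is at most $1-\frac{1-\delta}{n^k}$, which is brought below any negligible function by repetition of the whole protocol with independent dealer randomness and fresh witness copies, just as for the $\QMA$ statement. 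Statistical zero-knowledge uses the simulator of \Cref{lem:zk-qnizk} with the extra preliminary step of sampling $r$ uniformly and outputting $\kb{r}$: since $D$'s marginal on $r$ is uniform and independent of the keys, the rest of the simulator's output — built from the reduced density matrices $\rho(x,r,S)$ — is $\negl(|x|)$-close to the verifier's view by \Cref{def:simulatable-proof} applied to $\class{SimQAM}$.

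The main obstacle I expect is the amplification step rather than anything in the simulatable-proof machinery: one must verify that repeating the Arthur--Merlin round for $cq\QAM$ genuinely amplifies the $2/3$/$1/3$ gap to negligible error while preserving the structural property — ``for a $1-\negl$ fraction of Arthur's coins there is a good witness, and for no choice of coins does any witness pass'' — on which the reduction to $\QMA$ per fixed $\vec r$ relies, and similarly that the outer repetition driving soundness to negligible does not disturb the simulatability used for the zero-knowledge argument. Once those points are settled, everything else is a routine re-run of the $\SimQMA$ arguments of \Cref{sec:NIZK} with $r$ shared publicly by the dealer.
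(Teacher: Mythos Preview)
Your proposal is correct and follows essentially the same approach as the paper: define $\class{SimQAM}$ by letting the POVMs and simulated marginals depend on the public coins $r$, observe $\QAM=\class{SimQAM}$ by applying the circuit-to-Hamiltonian and history-state simulation per fixed $r$, and have the dealer additionally sample and broadcast $r$ so that the $\SimQMA$ analysis of \Cref{sec:NIZK} goes through verbatim. One minor caveat: after amplification you will only get soundness for a $1-\negl$ fraction of $r$ rather than literally every $r$, but this is all that is needed for the dealer-averaged soundness of the QNISZK protocol, so your argument still goes through.
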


\bibliographystyle{bst/alphaarxiv}
\bibliography{bib/full,bib/quantum,bib/more}

\newcommand{\etalchar}[1]{$^{#1}$}
\begin{thebibliography}{BOGG{\etalchar{+}}88}

\bibitem[Aar09]{Aar09b}
S.~Aaronson.
\newblock On perfect completeness for {QMA}.
\newblock {\em Quantum Information {\&} Computation}, 9(1): 81--89, 2009.

\bibitem[AC12]{AC12}
S.~Aaronson and P.~Christiano.
\newblock Quantum money from hidden subspaces.
\newblock In {\em 44th Annual ACM Symposium on Theory of Computing---STOC
  2012}, pages 41--60, 2012.
\newblock \\
  \texttt{DOI:\,\href{http://dx.doi.org/10.1145/2213977.2213983}{10.1145/2213977.2213983}}.

\bibitem[ACGH20]{ACGH20}
G.~Alagic, A.~M. Childs, A.~B. Grilo, and S.-H. Hung.
\newblock Non-interactive classical verification of quantum computation, 2020.
\newblock Available at \url{https://arxiv.org/abs/1911.08101}.

\bibitem[AK07]{AK07}
S.~Aaronson and G.~Kuperberg.
\newblock Quantum versus classical proofs and advice.
\newblock In {\em 22nd Annual Conference on Computational Complexity---CCC
  2007}, pages 115--128, 2007.
\newblock \\
  \texttt{DOI:\,\href{http://dx.doi.org/10.1109/CCC.2007.27}{10.1109/CCC.2007.27}}.

\bibitem[AR03]{AR03}
D.~Aharonov and O.~Regev.
\newblock A lattice problem in quantum {NP}.
\newblock In {\em 44th Annual Symposium on Foundations of Computer
  Science---FOCS 2003}, pages 210--219, 2003.
\newblock \\
  \texttt{DOI:\,\href{http://dx.doi.org/10.1109/SFCS.2003.1238195}{10.1109/SFCS.2003.1238195}}.

\bibitem[BC18]{BC18}
J.~Bausch and E.~Crosson.
\newblock Analysis and limitations of modified circuit-to-{H}amiltonian
  constructions.
\newblock {\em Quantum}, 2: 94, 2018.
\newblock \\
  \texttt{DOI:\,\href{http://dx.doi.org/10.22331/q-2018-09-19-94}{10.22331/q-2018-09-19-94}}.

\bibitem[BFM88]{BFM88}
M.~Blum, P.~Feldman, and S.~Micali.
\newblock Non-interactive zero-knowledge and its applications.
\newblock In {\em 20th Annual ACM Symposium on Theory of Computing---STOC
  1988}, pages 103--112, 1988.
\newblock \\
  \texttt{DOI:\,\href{http://dx.doi.org/10.1145/62212.62222}{10.1145/62212.62222}}.

\bibitem[BG93]{BG93}
M.~Bellare and O.~Goldreich.
\newblock On defining proofs of knowledge.
\newblock In {\em Advances in Cryptology---CRYPTO 1992}, pages 390--420, 1993.
\newblock \\
  \texttt{DOI:\,\href{http://dx.doi.org/10.1007/3-540-48071-4\_28}{10.1007/3-540-48071-4\_28}}.

\bibitem[BGKW88]{BGKW88}
M.~{Ben-Or}, S.~Goldwasser, J.~Kilian, and A.~Widgerson.
\newblock Multi-prover interactive proofs: how to remove intractability
  assumptions.
\newblock In {\em 20th Annual ACM Symposium on Theory of Computing---STOC
  1988}, pages 113--131, 1988.
\newblock \\
  \texttt{DOI:\,\href{http://dx.doi.org/10.1145/62212.62223}{10.1145/62212.62223}}.

\bibitem[BJM19]{BJM19iacr}
C.~Badertscher, D.~Jost, and U.~Maurer.
\newblock Agree-and-prove: Generalized proofs of knowledge and applications,
  2019.
\newblock Available at \url{https://eprint.iacr.org/2019/662}.

\bibitem[BJSW16]{BJSW16}
A.~Broadbent, Z.~Ji, F.~Song, and J.~Watrous.
\newblock Zero-knowledge proof systems for {QMA}.
\newblock In {\em 57th Annual Symposium on Foundations of Computer
  Science---FOCS 2016}, pages 31--40, 2016.
\newblock \\
  \texttt{DOI:\,\href{http://dx.doi.org/10.1109/FOCS.2016.13}{10.1109/FOCS.2016.13}}.

\bibitem[BJSW20]{BJSW20}
A.~Broadbent, Z.~Ji, F.~Song, and J.~Watrous.
\newblock Zero-knowledge proof systems for {QMA}.
\newblock {\em SIAM Journal on Computing}, 49(2): 245--283, 2020.
\newblock \\
  \texttt{DOI:\,\href{http://dx.doi.org/10.1137/18M1193530}{10.1137/18M1193530}}.

\bibitem[BKLP15]{BKLP15}
F.~Benhamouda, S.~Krenn, V.~Lyubashevsky, and K.~Pietrzak.
\newblock Efficient zero-knowledge proofs for commitments from learning with
  errors over rings.
\newblock In {\em 20th European Symposium on Research in Computer
  Security---ESORICS 2015}, volume~I, pages 305--325, 2015.
\newblock \\
  \texttt{DOI:\,\href{http://dx.doi.org/10.1007/978-3-319-24174-6\_16}{10.1007/978-3-319-24174-6\_16}}.

\bibitem[BOGG{\etalchar{+}}88]{BOGG88}
M.~Ben-Or, O.~Goldreich, S.~Goldwasser, J.~H{\aa}stad, J.~Kilian, S.~Micali,
  and P.~Rogaway.
\newblock Everything provable is provable in zero-knowledge.
\newblock In {\em Advances in Cryptology---CRYPTO 1988}, pages 37--56, 1988.
\newblock \\
  \texttt{DOI:\,\href{http://dx.doi.org/10.1007/0-387-34799-2\_4}{10.1007/0-387-34799-2\_4}}.

\bibitem[Boo14]{Boo14}
A.~D. Bookatz.
\newblock {QMA}-complete problems.
\newblock {\em Quantum Information {\&} Computation}, 14(5\&6): 361--383, 2014.

\bibitem[BS20]{BS19}
N.~Bitansky and O.~Shmueli.
\newblock Post-quantum zero knowledge in constant rounds.
\newblock In {\em 52nd Annual ACM Symposium on Theory of Computing---STOC
  2020}, pages 269--279, 2020.
\newblock \\
  \texttt{DOI:\,\href{http://dx.doi.org/10.1145/3357713.3384324}{10.1145/3357713.3384324}}.

\bibitem[BSCG{\etalchar{+}}14]{BSCG+14}
E.~Ben-Sasson, A.~Chiesa, C.~Garman, M.~Green, I.~Miers, E.~Tromer, and
  M.~Virza.
\newblock Zerocash: Decentralized anonymous payments from {B}itcoin.
\newblock In {\em 2014 IEEE Symposium on Security and Privacy}, pages 459--474,
  2014.
\newblock \\
  \texttt{DOI:\,\href{http://dx.doi.org/10.1109/SP.2014.36}{10.1109/SP.2014.36}}.

\bibitem[BSCR{\etalchar{+}}19]{BSCR+19}
E.~Ben-Sasson, A.~Chiesa, M.~Riabzev, N.~Spooner, M.~Virza, and N.~P. Ward.
\newblock Aurora: Transparent succinct arguments for {R1CS}.
\newblock In {\em Advances in Cryptology---EUROCRYPT 2019}, volume~I, pages
  103--128, 2019.
\newblock \\
  \texttt{DOI:\,\href{http://dx.doi.org/10.1007/978-3-030-17653-2\_4}{10.1007/978-3-030-17653-2\_4}}.

\bibitem[CCKV08]{CCKV08}
A.~Chailloux, D.~F. Ciocan, I.~Kerenidis, and S.~Vadhan.
\newblock Interactive and noninteractive zero knowledge are equivalent in the
  help model.
\newblock In {\em 5th Theory of Cryptography Conference---TCC 2008}, pages
  501--534, 2008.
\newblock \\
  \texttt{DOI:\,\href{http://dx.doi.org/10.1007/978-3-540-78524-8\_28}{10.1007/978-3-540-78524-8\_28}}.

\bibitem[CDS94]{CDS94}
R.~C. Cramer, I.~Damg{\aa}rd, and B.~Schoenmakers.
\newblock Proofs of partial knowledge and simplified design of witness hiding
  protocols.
\newblock In {\em Advances in Cryptology---CRYPTO 1994}, pages 174--187, 1994.
\newblock \\
  \texttt{DOI:\,\href{http://dx.doi.org/10.1007/3-540-48658-5\_19}{10.1007/3-540-48658-5\_19}}.

\bibitem[CGJV19]{CGJV19}
A.~Coladangelo, A.~B. Grilo, S.~Jeffery, and T.~Vidick.
\newblock Verifier-on-a-leash: New schemes for verifiable delegated quantum
  computation, with quasilinear resources.
\newblock In {\em Advances in Cryptology---EUROCRYPT 2019}, volume III, pages
  247--277, 2019.
\newblock \\
  \texttt{DOI:\,\href{http://dx.doi.org/10.1007/978-3-030-17659-4\_9}{10.1007/978-3-030-17659-4\_9}}.

\bibitem[CGL99]{CGL99}
R.~Cleve, D.~Gottesman, and H.-K. Lo.
\newblock How to share a quantum secret.
\newblock {\em Physical Review Letters}, 83(3): 648--651, 1999.
\newblock \\
  \texttt{DOI:\,\href{http://dx.doi.org/10.1103/PhysRevLett.83.648}{10.1103/PhysRevLett.83.648}}.

\bibitem[Cha83]{Cha83}
D.~Chaum.
\newblock Blind signatures for untraceable payments.
\newblock In {\em Advances in Cryptology---CRYPTO 1982}, pages 199--203, 1983.
\newblock \\
  \texttt{DOI:\,\href{http://dx.doi.org/10.1007/978-1-4757-0602-4\_18}{10.1007/978-1-4757-0602-4\_18}}.

\bibitem[Cha19]{Cha19iacr}
A.~Chailloux.
\newblock Quantum security of the {F}iat-{S}hamir transform of commit and open
  protocols, 2019.
\newblock Available at \url{https://eprint.iacr.org/2019/699}.

\bibitem[CM14]{CM14}
T.~S. Cubitt and A.~Montanaro.
\newblock Complexity classification of local {H}amiltonian problems.
\newblock In {\em 55th Annual Symposium on Foundations of Computer
  Science---FOCS 2014}, pages 120--129, 2014.
\newblock \\
  \texttt{DOI:\,\href{http://dx.doi.org/10.1109/FOCS.2014.21}{10.1109/FOCS.2014.21}}.

\bibitem[CVZ20]{CVZ19}
A.~Coladangelo, T.~Vidick, and T.~Zhang.
\newblock Non-interactive zero-knowledge arguments for {QMA}, with
  preprocessing.
\newblock In {\em Advances in Cryptology---CRYPTO 2020}, pages 799--828, 2020.
\newblock \\
  \texttt{DOI:\,\href{http://dx.doi.org/10.1007/978-3-030-56877-1\_28}{10.1007/978-3-030-56877-1\_28}}.

\bibitem[DFMS19]{DFMS19}
J.~Don, S.~Fehr, C.~M. Majenz, and C.~Schaffner.
\newblock Security of the {F}iat-{S}hamir transformation in the quantum
  random-oracle model.
\newblock In {\em Advances in Cryptology---CRYPTO 2019}, volume~II, pages
  356--383, 2019.
\newblock \\
  \texttt{DOI:\,\href{http://dx.doi.org/10.1007/978-3-030-26951-7\_13}{10.1007/978-3-030-26951-7\_13}}.

\bibitem[EK09]{EK09}
B.~Eastin and E.~Knill.
\newblock Restrictions on transversal encoded quantum gate sets.
\newblock {\em Physical Review Letters}, 102(11): 110502, 2009.
\newblock \\
  \texttt{DOI:\,\href{http://dx.doi.org/10.1103/PhysRevLett.102.110502}{10.1103/PhysRevLett.102.110502}}.

\bibitem[Fey82]{Fey82}
R.~P. Feynman.
\newblock Simulating physics with computers.
\newblock {\em International Journal of Theoretical Physics}, 21(6): 467--488,
  1982.
\newblock \\
  \texttt{DOI:\,\href{http://dx.doi.org/10.1007/BF02650179}{10.1007/BF02650179}}.

\bibitem[FHM18]{FHM18}
J.~F. Fitzsimons, M.~Hajdu{\v{s}}ek, and T.~Morimae.
\newblock Post hoc verification of quantum computation.
\newblock {\em Physical Review Letters}, 120(4): 040501, 2018.
\newblock \\
  \texttt{DOI:\,\href{http://dx.doi.org/PhysRevLett.120.040501}{PhysRevLett.120.040501}}.

\bibitem[FK18]{FK18}
B.~Fefferman and S.~Kimmel.
\newblock Quantum vs.~classical proofs and subset verification.
\newblock In {\em 43rd International Symposium on Mathematical Foundations of
  Computer Science---MFCS 2018}, page~22, 2018.
\newblock \\
  \texttt{DOI:\,\href{http://dx.doi.org/10.4230/LIPIcs.MFCS.2018.22}{10.4230/LIPIcs.MFCS.2018.22}}.

\bibitem[FS87]{FS87}
A.~Fiat and A.~Shamir.
\newblock How to prove yourself: Practical solutions to identification and
  signature problems.
\newblock In {\em Advances in Cryptology---CRYPTO 1986}, pages 186--194, 1987.
\newblock \\
  \texttt{DOI:\,\href{http://dx.doi.org/10.1007/3-540-47721-7\_12}{10.1007/3-540-47721-7\_12}}.

\bibitem[GJ90]{GJ90}
M.~R. Garey and D.~S. Johnson.
\newblock {\em Computers and Intractability; A Guide to the Theory of
  {NP}-Completeness}.
\newblock W. H. Freeman \& Co., 1990.

\bibitem[GKS16]{GKS16}
A.~B. Grilo, I.~Kerenidis, and J.~Sikora.
\newblock {\sf QMA} with subset state witnesses.
\newblock {\em Chicago Journal of Theoretical Computer Science}, 2016(4), 2016.
\newblock \\
  \texttt{DOI:\,\href{http://dx.doi.org/10.4086/cjtcs.2016.004}{10.4086/cjtcs.2016.004}}.

\bibitem[GMR89]{GMR89}
S.~Goldwasser, S.~Micali, and C.~Rackoff.
\newblock The knowledge complexity of interactive proof systems.
\newblock {\em SIAM Journal on Computing}, 18(1): 186--208, 1989.
\newblock \\
  \texttt{DOI:\,\href{http://dx.doi.org/10.1137/0218012}{10.1137/0218012}}.

\bibitem[GMW87]{GMW87}
O.~Goldreich, S.~Micali, and A.~Wigderson.
\newblock How to play any mental game.
\newblock In {\em 19th Annual ACM Symposium on Theory of Computing---STOC
  1987}, pages 218--229, 1987.
\newblock \\
  \texttt{DOI:\,\href{http://dx.doi.org/10.1145/28395.28420}{10.1145/28395.28420}}.

\bibitem[GMW91]{GMW91}
O.~Goldreich, S.~Micali, and A.~Wigderson.
\newblock Proofs that yield nothing but their validity or all languages in {NP}
  have zero-knowledge proof systems.
\newblock {\em Journal of the ACM}, 38(3): 690--728, 1991.
\newblock \\
  \texttt{DOI:\,\href{http://dx.doi.org/10.1145/116825.116852}{10.1145/116825.116852}}.

\bibitem[GO94]{GO94}
O.~Goldreich and Y.~Oren.
\newblock Definitions and properties of zero-knowledge proof systems.
\newblock {\em Journal of Cryptology}, 7(1): 1--32, 1994.
\newblock \\
  \texttt{DOI:\,\href{http://dx.doi.org/10.1007/BF00195207}{10.1007/BF00195207}}.

\bibitem[Gri19]{Gri19}
A.~B. Grilo.
\newblock A simple protocol for verifiable delegation of quantum computation in
  one round.
\newblock In {\em 46th International Colloquium on Automata, Languages, and
  Programming---ICALP 2019}, page~28, 2019.
\newblock \\
  \texttt{DOI:\,\href{http://dx.doi.org/10.4230/LIPIcs.ICALP.2019.28}{10.4230/LIPIcs.ICALP.2019.28}}.

\bibitem[GSY19]{GSY19}
A.~B. Grilo, W.~Slofstra, and H.~Yuen.
\newblock Perfect zero knowledge for quantum multiprover interactive proofs.
\newblock In {\em 60th Annual Symposium on Foundations of Computer
  Science---FOCS 2019}, pages 611--635, 2019.
\newblock \\
  \texttt{DOI:\,\href{http://dx.doi.org/10.1109/FOCS.2019.00044}{10.1109/FOCS.2019.00044}}.

\bibitem[HNN13]{HNN13}
S.~Hallgren, D.~Nagaj, and S.~Narayanaswami.
\newblock The local {H}amiltonian problem on a line with eight states is
  {QMA}-complete.
\newblock {\em Quantum Information {\&} Computation}, 13(9\&10): 721--750,
  2013.

\bibitem[HW12]{HW12}
A.~W. Harrow and A.~Winter.
\newblock How many copies are needed for state discrimination?
\newblock {\em IEEE Transactions on Information Theory}, 58(1): 1--2, 2012.
\newblock \\
  \texttt{DOI:\,\href{http://dx.doi.org/10.1109/TIT.2011.2169544}{10.1109/TIT.2011.2169544}}.

\bibitem[JKPT12]{JKPT12}
A.~Jain, S.~Krenn, K.~Pietrzak, and A.~Tentes.
\newblock Commitments and efficient zero-knowledge proofs from learning parity
  with noise.
\newblock In {\em Advances in Cryptology---ASIACRYPT 2012}, pages 663--680,
  2012.
\newblock \\
  \texttt{DOI:\,\href{http://dx.doi.org/10.1007/978-3-642-34961-4\_40}{10.1007/978-3-642-34961-4\_40}}.

\bibitem[Kar72]{Kar72}
R.~M. Karp.
\newblock Reducibility among combinatorial problems.
\newblock In {\em Complexity of Computer Computations}, pages 85--103, 1972.
\newblock \\
  \texttt{DOI:\,\href{http://dx.doi.org/10.1007/978-1-4684-2001-2\_9}{10.1007/978-1-4684-2001-2\_9}}.

\bibitem[KKR06]{KKR06}
J.~Kempe, A.~Kitaev, and O.~Regev.
\newblock The complexity of the local {H}amiltonian problem.
\newblock {\em SIAM Journal on Computing}, 35(5): 1070--1097, 2006.
\newblock \\
  \texttt{DOI:\,\href{http://dx.doi.org/10.1137/S0097539704445226}{10.1137/S0097539704445226}}.

\bibitem[KLGN19]{KLGN19}
H.~Kobayashi, F.~Le~Gall, and H.~Nishimura.
\newblock Generalized quantum {A}rthur-{M}erlin games.
\newblock {\em SIAM Journal on Computing}, 48(3): 865--902, 2019.
\newblock \\
  \texttt{DOI:\,\href{http://dx.doi.org/10.1137/17M1160173}{10.1137/17M1160173}}.

\bibitem[KMO89]{KMO89}
J.~Kilian, S.~Micali, and R.~Ostrovsky.
\newblock Minimum resource zero knowledge proofs.
\newblock In {\em 30th Annual Symposium on Foundations of Computer
  Science---FOCS 1989}, pages 474--479, 1989.
\newblock \\
  \texttt{DOI:\,\href{http://dx.doi.org/10.1109/SFCS.1989.63521}{10.1109/SFCS.1989.63521}}.

\bibitem[KR03]{KR03}
J.~Kempe and O.~Regev.
\newblock $3$-local {H}amiltonian is {QMA}-complete.
\newblock {\em Quantum Information {\&} Computation}, 3(3): 258--264, 2003.

\bibitem[KSV02]{KSV02}
A.~Y. Kitaev, A.~Shen, and M.~N. Vyalyi.
\newblock {\em Classical and Quantum Computation}.
\newblock American Mathematical Society, 2002.

\bibitem[KW00]{KW00}
A.~Kitaev and J.~Watrous.
\newblock Parallelization, amplification, and exponential time simulation of
  quantum interactive proof systems.
\newblock In {\em 32nd Annual ACM Symposium on Theory of Computing---STOC
  2000}, pages 608--617, 2000.
\newblock \\
  \texttt{DOI:\,\href{http://dx.doi.org/10.1145/335305.335387}{10.1145/335305.335387}}.

\bibitem[LCV07]{LCV07}
Y.-K. Liu, M.~Christandl, and F.~Verstraete.
\newblock Quantum computational complexity of the ${N}$-representability
  problem: {QMA} complete.
\newblock {\em Physical Review Letters}, 98(11): 110503, 2007.
\newblock \\
  \texttt{DOI:\,\href{http://dx.doi.org/10.1103/PhysRevLett.98.110503}{10.1103/PhysRevLett.98.110503}}.

\bibitem[Liu06]{Liu06}
Y.~Liu.
\newblock Consistency of local density matrices is {QMA}-complete.
\newblock In {\em {APPROX}/{RANDOM 2006}}, pages 438--449, 2006.
\newblock \\
  \texttt{DOI:\,\href{http://dx.doi.org/10.1007/11830924\_40}{10.1007/11830924\_40}}.

\bibitem[LZ19]{LZ19}
Q.~Liu and M.~Zhandry.
\newblock Revisiting post-quantum {F}iat-{S}hamir.
\newblock In {\em Advances in Cryptology---CRYPTO 2019}, volume~II, pages
  326--355, 2019.
\newblock \\
  \texttt{DOI:\,\href{http://dx.doi.org/10.1007/978-3-030-26951-7\_12}{10.1007/978-3-030-26951-7\_12}}.

\bibitem[Mah18]{Mah18}
U.~Mahadev.
\newblock Classical verification of quantum computations.
\newblock In {\em 59th Annual Symposium on Foundations of Computer
  Science---FOCS 2018}, pages 259--267, 2018.
\newblock \\
  \texttt{DOI:\,\href{http://dx.doi.org/10.1109/FOCS.2018.00033}{10.1109/FOCS.2018.00033}}.

\bibitem[MW05]{MW05}
C.~Marriott and J.~Watrous.
\newblock Quantum {A}rthur--{M}erlin games.
\newblock {\em Computational Complexity}, 14(2): 122--152, 2005.
\newblock \\
  \texttt{DOI:\,\href{http://dx.doi.org/10.1007/s00037-005-0194-x}{10.1007/s00037-005-0194-x}}.

\bibitem[NC00]{NC00}
M.~A. Nielsen and I.~L. Chuang.
\newblock {\em Quantum Computation and Quantum Information}.
\newblock Cambridge University Press, 2000.

\bibitem[NV17]{NV17}
A.~Natarajan and T.~Vidick.
\newblock A quantum linearity test for robustly verifying entanglement.
\newblock In {\em 49th Annual ACM Symposium on Theory of Computing---STOC
  2017}, pages 1003--1015, 2017.
\newblock \\
  \texttt{DOI:\,\href{http://dx.doi.org/10.1145/3055399.3055468}{10.1145/3055399.3055468}}.

\bibitem[NV18]{NV18}
A.~Natarajan and T.~Vidick.
\newblock Low-degree testing for quantum states, and a quantum entangled games
  {PCP} for {QMA}.
\newblock In {\em 59th Annual Symposium on Foundations of Computer
  Science---FOCS 2018}, pages 731--742, 2018.
\newblock \\
  \texttt{DOI:\,\href{http://dx.doi.org/10.1109/FOCS.2018.00075}{10.1109/FOCS.2018.00075}}.

\bibitem[OT08]{OT08}
R.~Oliveira and B.~M. Terhal.
\newblock The complexity of quantum spin systems on a two-dimensional square
  lattice.
\newblock {\em Quantum Information {\&} Computation}, 8(10): 900--924, 2008.

\bibitem[OV07]{OV07}
S.~J. Ong and S.~Vadhan.
\newblock Zero knowledge and soundness are symmetric.
\newblock In {\em Advances in Cryptology---EUROCRYPT 2007}, pages 187--209,
  2007.
\newblock \\
  \texttt{DOI:\,\href{http://dx.doi.org/10.1007/978-3-540-72540-4\_11}{10.1007/978-3-540-72540-4\_11}}.

\bibitem[PHGR16]{PHGR16}
B.~Parno, J.~Howell, C.~Gentry, and M.~Raykova.
\newblock Pinocchio: nearly practical verifiable computation.
\newblock {\em Communications of the ACM}, 59(2): 103--112, 2016.
\newblock \\
  \texttt{DOI:\,\href{http://dx.doi.org/10.1145/2856449}{10.1145/2856449}}.

\bibitem[Pit91]{Pit91}
I.~Pitowsky.
\newblock Correlation polytopes: their geometry and complexity.
\newblock {\em Mathematical Programming}, 50(1--3): 395--414, 1991.
\newblock \\
  \texttt{DOI:\,\href{http://dx.doi.org/10.1007/BF01594946}{10.1007/BF01594946}}.

\bibitem[PS05]{PS05}
R.~Pass and A.~Shelat.
\newblock Unconditional characterizations of non-interactive zero-knowledge.
\newblock In {\em Advances in Cryptology---CRYPTO 2005}, pages 118--134, 2005.
\newblock \\
  \texttt{DOI:\,\href{http://dx.doi.org/10.1007/11535218\_8}{10.1007/11535218\_8}}.

\bibitem[SV09]{SV09}
N.~Schuch and F.~Verstraete.
\newblock Computational complexity of interacting electrons and fundamental
  limitations of density functional theory.
\newblock {\em Nature Physics}, 5(10): 732, 2009.
\newblock \\
  \texttt{DOI:\,\href{http://dx.doi.org/doi.org/10.1038/nphys1370}{doi.org/10.1038/nphys1370}}.

\bibitem[Unr12]{Unr12}
D.~Unruh.
\newblock Quantum proofs of knowledge.
\newblock In {\em Advances in Cryptology---EUROCRYPT 2012}, pages 135--152,
  2012.
\newblock \\
  \texttt{DOI:\,\href{http://dx.doi.org/10.1007/978-3-642-29011-4\_10}{10.1007/978-3-642-29011-4\_10}}.

\bibitem[Vad07]{Vad07}
S.~Vadhan.
\newblock The complexity of zero knowledge.
\newblock In {\em 27th International Conference on Foundation of Software
  Technology and Theoretical Computer Science---FSTTCS 2007}, pages 52--70,
  2007.
\newblock \\
  \texttt{DOI:\,\href{http://dx.doi.org/10.1007/978-3-540-77050-3\_5}{10.1007/978-3-540-77050-3\_5}}.

\bibitem[VZ20]{VZ19arxiv}
T.~Vidick and T.~Zhang.
\newblock Classical zero-knowledge arguments for quantum computations.
\newblock {\em Quantum}, 4: 266, 2020.
\newblock \\
  \texttt{DOI:\,\href{http://dx.doi.org/10.22331/q-2020-05-14-266}{10.22331/q-2020-05-14-266}}.

\bibitem[Wat09]{Wat09b}
J.~Watrous.
\newblock Zero-knowledge against quantum attacks.
\newblock {\em SIAM Journal on Computing}, 39(1): 25--58, 2009.
\newblock \\
  \texttt{DOI:\,\href{http://dx.doi.org/10.1137/060670997}{10.1137/060670997}}.

\bibitem[WMN10]{WMN10}
T.-C. Wei, M.~Mosca, and A.~Nayak.
\newblock Interacting boson problems can be {QMA} hard.
\newblock {\em Physical Review Letters}, 104(4): 040501, 2010.
\newblock \\
  \texttt{DOI:\,\href{http://dx.doi.org/10.1103/PhysRevLett.104.040501}{10.1103/PhysRevLett.104.040501}}.

\bibitem[XXW13]{XXW13}
X.~Xie, R.~Xue, and M.~Wang.
\newblock Zero knowledge proofs from ring-{LWE}.
\newblock In {\em 12th International Conference on Cryptology and Network
  Security---CANS 2013}, pages 57--73, 2013.
\newblock \\
  \texttt{DOI:\,\href{http://dx.doi.org/10.1007/978-3-319-02937-5\_4}{10.1007/978-3-319-02937-5\_4}}.

\end{thebibliography}

\end{document}